\newcommand{\blind}{0}
\definecolor{mycolor}{cmyk}{0.85, 0.21, 0, 0.06}
\definecolor{Gray}{gray}{0.9}
\DeclareMathOperator*{\argmin}{argmin}
\DeclareMathOperator*{\bE}{\mathbb{E}}
\DeclareMathOperator*{\bP}{\mathbb{P}}
\DeclareMathOperator*{\Cov}{\mathrm{Cov}}
\DeclareMathOperator*{\Var}{\mathrm{Var}}
\DeclareMathOperator*{\CATT}{\mathrm{CATT}}
\DeclareMathOperator*{\IPW}{\mathrm{IPW}}
\DeclareMathOperator*{\OR}{\mathrm{OR}}
\DeclareMathOperator*{\DR}{\mathrm{DR}}
\DeclareMathOperator*{\nev}{\mathrm{nev}}
\DeclareMathOperator*{\ny}{\mathrm{ny}}
\DeclareMathOperator*{\sub}{\mathrm{sub}}
\DeclareMathOperator*{\SE}{\mathrm{SE}}
\DeclareMathOperator*{\US}{\mathrm{US}}
\DeclareMathOperator*{\es}{\mathrm{es}}
\DeclareMathOperator*{\sel}{\mathrm{sel}}
\DeclareMathOperator*{\cc}{\mathrm{c}}
\DeclareMathOperator*{\W}{\mathrm{W}}
\DeclareMathOperator*{\OO}{\mathrm{O}}
\DeclareMathOperator*{\OW}{\mathrm{OW}}
\DeclareMathOperator*{\diag}{\mathrm{diag}}
\DeclareMathOperator*{\en}{\textbf{--}}
\renewcommand{\hat}{\widehat}
\renewcommand{\tilde}{\widetilde}
\theoremstyle{definition}
\newtheorem{theorem}{Theorem}
\newtheorem{assumption}{Assumption}
\newtheorem{lemma}{Lemma}
\declaretheorem[style=definition]{remark}
\numberwithin{equation}{section}
\begin{document}

\def\spacingset#1{\renewcommand{\baselinestretch}%
{#1}\small\normalsize} \spacingset{1}


\if0\blind
{
  \title{\bf Doubly Robust Uniform Confidence Bands for Group-Time Conditional Average Treatment Effects in Difference-in-Differences}
  \author{Shunsuke Imai\vspace{-0.35cm}\thanks{Yoshida Honmachi, Sakyo, Kyoto, 606-8501, Japan. Email: \href{mailto:imai.shunsuke.57n@st.kyoto-u.ac.jp}{imai.shunsuke.57n@st.kyoto-u.ac.jp}}\\
    Graduate School of Economics, Kyoto University\\
    and \\
    Lei Qin\thanks{471 Ebigase, Higashi-ku, Niigata, 950-8680, Japan. Email: \href{mailto:qinlei6d@unii.ac.jp}{qinlei6d@unii.ac.jp}}\\
    Faculty of International Economic Studies, University of Niigata Prefecture\\
    and \\
    Takahide Yanagi\thanks{Corresponding author: Yoshida Honmachi, Sakyo, Kyoto, 606-8501, Japan. Email: \href{mailto:yanagi@econ.kyoto-u.ac.jp}{yanagi@econ.kyoto-u.ac.jp}}\\
    Graduate School of Economics, Kyoto University\\
    }
  \maketitle
} \fi

\if1\blind
{
  \bigskip
  \bigskip
  \bigskip
  \begin{center}
    {\LARGE\bf Doubly Robust Uniform Confidence Bands for Group-Time Conditional Average Treatment Effects in Difference-in-Differences}
\end{center}
  \medskip
} \fi

\bigskip 
\begin{abstract}
    We consider a panel data analysis to examine the heterogeneity in treatment effects with respect to groups, periods, and a pre-treatment covariate of interest in the staggered difference-in-differences setting of \cite{callaway2021difference}.
    Under standard identification conditions, a doubly robust estimand conditional on the covariate identifies the group-time conditional average treatment effect given the covariate.
    Focusing on the case of a continuous covariate, we propose a three-step estimation procedure based on nonparametric local polynomial regressions and parametric estimation methods.
    Using uniformly valid distributional approximation results for empirical processes and weighted/multiplier bootstrapping, we develop doubly robust inference methods to construct uniform confidence bands for the group-time conditional average treatment effect function and a variety of useful summary parameters.
    The accompanying R package {\ttfamily \href{https://tkhdyanagi.github.io/didhetero/}{didhetero}} allows for easy implementation of our methods.
\end{abstract}

\noindent%
{\it Keywords:} difference-in-differences, dynamic treatment effects, event study, treatment effect heterogeneity, panel data
\vfill

\newpage
\spacingset{1.8} 


\section{Introduction} \label{sec:introduction}

Difference-in-differences (DiD) is a powerful quasi-experimental approach to estimate meaningful treatment parameters.
The recent DiD literature predominantly contributes to the development of identification and estimation methods in the \textit{staggered adoption} case, where each unit continues to receive a binary treatment after the initial treatment receipt.
\cite{callaway2023difference}, \cite{de2023survey}, \cite{roth2023s}, and \cite{sun2022linear} review recent contributions.

\phantomsection\label{page:AE-2-1}\Copy{AE-2-1}{
    In empirical research using the DiD method, it is essential to understand the heterogeneity in treatment effects with respect to covariates, as well as ``groups'' and periods.
    As a concrete empirical example, suppose that we are interested in assessing whether and how minimum wage increases reduce poverty using county-level panel data.
    In this scenario, it is important to understand how the instantaneous and dynamic effects of minimum wage increases on the unemployment rate depend on the ``pre-treatment'' poverty rate.
    For example, if minimum wage increases result in significant lasting job losses but no substantial wage gains in high-poverty counties, then minimum wage increases would not effectively reduce poverty.
    In this case, policymakers should explore alternative policies for reducing poverty instead of relying heavily on the minimum wage policy.
}

\phantomsection\label{page:AE-3-1}\Copy{AE-3-1}{
    In this paper, we develop identification, estimation, and uniform inference methods to examine the treatment effect heterogeneity with respect to covariate values and other key variables (i.e., groups, periods, and treatment exposure time)  in the staggered DiD setting.
}
We build on the setup of \cite{callaway2021difference} and consider two types of target parameters: (i) the group-time conditional average treatment (CATT) function given a continuous pre-treatment covariate of interest and (ii) a variety of summary parameters that aggregate CATTs with certain estimable weights.
We begin by showing that, under essentially the same identification conditions as in \cite{callaway2021difference}, CATT is identifiable from a conditional version of the doubly robust (DR) estimand in \cite{callaway2021difference}.
Then, we propose three-step procedures for estimating CATT and the summary parameters:
the first stage is the same as the parametric estimation procedures for the outcome regression (OR) function and the generalized propensity score (GPS) in \cite{callaway2021difference};
the second and third stages comprise nonparametric local polynomial regressions (LPR) for estimating certain nuisance parameters and the conditional DR estimand. 
Lastly, to construct uniform confidence bands for the target parameters, we develop two uniform inference methods based on an analytical distributional approximation result and weighted/multiplier bootstrapping.

We investigate two statistical properties of our methods under the asymptotic framework where the number of cross-sectional units is large and the length of the time series is small and fixed.
First, we derive asymptotic linear representations and asymptotic mean squared errors (MSEs) of our estimators, which are used for constructing standard errors and for choosing appropriate bandwidths.
This part of the asymptotic investigations builds on the theory of the LPR estimation (\citealp{fan1996local}).
Second, we prove uniformly valid distributional approximation results for studentized statistics and their bootstrap counterparts, which play an essential role in constructing asymptotically valid critical values for the uniform confidence bands.
This result extends \cite*{lee2017doubly} and \cite*{fan2022estimation}, who study the uniform confidence bands for the conditional average treatment effect (CATE) function in the so-called unconfoundedness setup, to the staggered DiD setting.
More precisely, similar to these prior studies, 
\phantomsection\label{page:AE-17-2}\Copy{AE-17-2}{
    we use approximation theorems for suprema of empirical processes and asymptotic theory for potentially non-Donsker empirical processes by \cite*{chernozhukov2014anti,chernozhukov2014gaussian} to prove a uniformly valid analytical distributional approximation result and the uniform validity of weighted/multiplier bootstrap inference.
}

Our key assumptions are, in line with \cite{callaway2021difference}, the staggered treatment adoption and the conditional parallel trends assumption for panel data, in contrast to the unconfoundedness assumption for cross-sectional data.
\phantomsection\label{page:AE-3-2}\Copy{AE-3-2}{
    Unlike the unconfoundedness approach proposed by \cite{lee2017doubly} and \cite{fan2022estimation}, our methods allow researchers to learn about the heterogeneity of treatment effects with respect to key variables specific to the staggered DiD, such as groups, calendar time, and elapsed treatment time, as well as covariate values.
}
This attractiveness of our proposal is achieved with identification, estimation, and uniform inference methods tailored to the staggered DiD, whose statistical properties are not trivial from the existing results.
\phantomsection\label{page:AE-1a-1}\Copy{AE-1a-1}{
    In particular, since our CATT is a causal parameter that captures the conditional average treatment effect on the treated, we need to estimate nonparametric nuisance parameters in the second-stage estimation, and the construction of our uniform confidence bands requires careful consideration of its impact on conditional DR estimates.
}
\phantomsection\label{page:AE-3-3}\Copy{AE-3-3}{
    In addition, we highlight the importance of selecting appropriate critical values and bandwidths to ensure the uniform validity not only over covariate values but also over other key variables.
}

\paragraph{Related Literature.}

This paper focuses on CATT and the conditional aggregated parameter in the staggered DiD setup as the target parameters, building directly on the focus on ATT and the unconditional aggregated parameter by \citet{callaway2021difference}.
In doing so, we develop uniformly valid inference for treatment effect heterogeneity with respect to covariate values and other key variables while taking advantage of \citet{callaway2021difference}, that is, multiple treatment timing, treatment effect heterogeneity across units and time, the useful aggregation, and the attractive DR property.
While all of these are empirically desirable, the DR property should be particularly important when performing uniform inference for treatment effect heterogeneity with respect to covariate values, as highlighted by \citet{lee2017doubly} and \citet{fan2022estimation} in the unconfoundedness setup.
This is the main reason why we build directly on \citet{callaway2021difference} rather than the other useful DiD methods (e.g., \citealp{sun2021estimating,wooldridge2021two,borusyak2024revisiting}).

In terms of developing uniform inference for treatment effect heterogeneity with respect to covariate values, we build on directly \citet{lee2017doubly} and \citet{fan2022estimation} in the unconfoundedness setup.
Compared to their cross-sectional data analyses, our panel data analysis allows for understanding both the static and dynamic nature of treatment effect heterogeneity, which should be important from an empirical point of view.
From a theoretical perspective, this advantage of our proposal stems from our DiD approach with (i) (possibly long) time differences of outcomes rather than their levels and (ii) novel DR estimators constructed with parametric and nonparametric nuisance function estimators and weighting schemes different from theirs.
In particular, estimating the nonparametric nuisance functions in our second-stage estimation has non-negligible (first-order) effects on the asymptotic properties of our DR estimator, as shown in Theorems \ref{thm:bias_variance} and \ref{thm:mb_linearize}, which are new insights in the literature.
\phantomsection\label{page:AE-1a-2}\Copy{AE-1a-2}{
    The issue of estimating nonparametric nuisance functions arises in our case because our CATT is a type of the conditional average treatment effect on the treated, as opposed to the focus on CATE in \citet{lee2017doubly} and \citet{fan2022estimation}.
}
After dealing with these considerations by building on the theory of the LPR estimation, our uniformly valid approximation results in Theorems \ref{thm:approx_us_distribution} and \ref{thm:mb_valid} are obtained as applications of empirical process techniques in the same manner as in these two previous studies.

This paper clearly builds on previous work in the DiD literature that developed methods to understand treatment effect heterogeneity arising from covariates.
For example, \cite{abadie2005semiparametric} proposed pointwise inference for the conditional average treatment effect on the treated given a covariate based on inverse probability weighting (IPW) and series approximations in the canonical two-periods and two-groups DiD setting.
His proposal is even applicable to the staggered adoption case by focusing on a subset of the original dataset consisting only of a treated group with a specific treatment timing and a comparison group.
Compared to his proposal, our methods have novelties in terms of the empirically desirable DR property, the kernel smoothing technique that facilitates tuning parameter selection, and the uniform validity over covariate values and other key variables proven by empirical process theory.

\paragraph{Paper Organization.}
The rest of the paper is organized as follows. 
Section \ref{sec:setup} introduces the setup and provides a non-technical roadmap for implementing our methods. 
Section \ref{sec:application} illustrates our methods in the context of the minimum wage.
Sections \ref{sec:CATTgt} and \ref{sec:Summary} discuss the identification, estimation, and uniform inference methods for CATT and the summary parameters, respectively.
The supplementary results are presented in the online appendix.
The accompanying R package {\ttfamily \href{https://tkhdyanagi.github.io/didhetero/}{didhetero}} is available from the authors' websites.

\section{Setup and Roadmap} \label{sec:setup}

Whenever possible, we use the same notation as in \cite{callaway2021difference}.
For each unit $i \in \{ 1, \dots, n \}$ and time period $t \in \{ 1, \dots, \mathcal{T} \}$, we observe a binary treatment $D_{i,t} \in \{ 0, 1 \}$, an outcome variable $Y_{i,t} \in \mathcal{Y} \subseteq \mathbb{R}$, and a vector of the pre-treatment covariates $X_i \in \mathcal{X} \subseteq \mathbb{R}^k$.
For notational simplicity, we often suppress the subscript $i$.

We consider the staggered adoption design, which includes the canonical two-periods and two-groups setting as a special case, under the random sampling scheme for balanced panel data.

\begin{assumption}[Staggered Treatment Adoption] \label{as:staggered}
	$D_1 = 0$ almost surely (a.s.).
	For any $t = 3, \dots, \mathcal{T}$, $D_{t-1} = 1$ implies that $D_t = 1$ a.s.
\end{assumption}

\begin{assumption}[Random Sampling] \label{as:iid}
	The panel data $\{ (Y_{i,t}, X_i, D_{i,t}): i = 1, \dots, n, t = 1, \dots, \mathcal{T} \}$ are independent and identically distributed (IID) across $i$.
\end{assumption}

Denote the time period when the unit becomes treated for the first time as $G \coloneqq \min\{ t : D_t = 1 \}$.
We set $G = \infty$ if the unit has never been treated.
We often refer to $G$ as the ``group'' to which the unit belongs.
In particular, we call the set of units with $G = g$ for $g \in \{ 2, \dots, \mathcal{T} \}$ as the ``not-yet-treated'' group in pre-treatment periods $t < g$ and that with $G = \infty$ as the ``never-treated'' group.
Assuming that $\bar g \coloneqq \max_{1 \le i \le n} G_i$ is known a priori, we write the set of realized treatment timings before $\bar g$ as $\mathcal{G} \coloneqq \mathrm{supp}(G) \setminus \{ \bar g \}$.
With an abuse of notation, we let $\bar g - 1 = \mathcal{T}$ if $\bar g = \infty$.

Under Assumption \ref{as:staggered}, the potential outcome given $G$ is well-defined.
Specifically, we write $Y_t(g)$ as the potential outcome in period $t$ given that the unit becomes treated at period $g \in \{ 2, \dots, \mathcal{T} \}$.
Meanwhile, we denote $Y_t(0)$ as the potential outcome in period $t$ when the unit belongs to the never-treated group (i.e., when $G = \infty$).
By construction, $Y_t = Y_t(0) + \sum_{g = 2}^{\mathcal{T}} [ Y_t(g) - Y_t(0) ] \cdot G_g$, where $G_g \coloneqq \bm{1}\{ G = g \}$.
Note that $Y_t(g) - Y_t(0)$ is the effect of receiving the treatment for the first time in period $g$ on the outcome in period $t$.

We aim to examine the extent to which the average treatment effect varies with groups, periods, and a single continuous covariate.
To be specific, suppose that $X$ can be decomposed into $X = (Z, X_{\sub}^\top)^\top$ with a scalar continuous covariate $Z$ and the other elements $X_{\sub}$.
The presence of $X_{\sub}$ should be important in typical DiD applications where the parallel trends assumption is more likely to hold only after conditioning on a number of covariates.
For some pre-specified real numbers $a$ and $b$ such that $a < b$, let $\mathcal{I} = [a, b]$ denote a proper closed subset of the support of $Z$.
As the first target parameter, we consider the group-time conditional average treatment effect (CATT) given $Z = z$ for $z \in \mathcal{I}$:
\begin{align} \label{eq:CATT}
    {\CATT}_{g,t}(z)
    \coloneqq \bE [ Y_t(g) - Y_t(0) \mid G_g = 1, Z = z ].
\end{align}
Estimating ${\CATT}_{g,t}(z)$ over $(g,t,z)$ is helpful in understanding the treatment effect heterogeneity with respect to group $g$, calendar time $t$, and covariate value $z$.

In Section \ref{sec:CATTgt}, we develop identification, estimation, and uniform inference methods for CATT.
We begin by introducing a conditional DR estimand $\DR_{g,t}(z)$, which is a conditional counterpart of the DR estimand in \citet{callaway2021difference}, by using the not-yet-treated group as the comparison group.
We then show that $\CATT_{g,t}(z)$ is identified by $\DR_{g,t}(z)$ for each $(g, t, z) \in \mathcal{A}$, where
\begin{align} \label{eq:mathcalA}
    \mathcal{A} 
    \coloneqq \{ (g, t, z) : g \in \mathcal{G}, t \in \{ 2, \dots, \mathcal{T} \}, g \le t < \bar g, z \in \mathcal{I} \}.
\end{align}
Given the identification result, we propose to construct a $(1 - \alpha)$ uniform confidence band for $\CATT_{g,t}(z)$ over $(g,t,z) \in \mathcal{A}$ by a family of intervals, denoted as $\mathcal{C} \coloneqq \{ \mathcal{C}_{g,t}(z): (g,t,z) \in \mathcal{A} \}$ with
\begin{align} \label{eq:UCB_CATT}
    \mathcal{C}_{g,t}(z)
    \coloneqq 
    \left[ 
    \hat{\DR}_{g,t}(z) - c(1 - \alpha) \cdot \hat{\SE}_{g,t}(z),
    \qquad 
    \hat{\DR}_{g,t}(z) + c(1 - \alpha) \cdot \hat{\SE}_{g,t}(z)
    \right],
\end{align}
where $\hat{\DR}_{g,t}(z)$ is a three-step estimator computed with certain parametric estimation procedures and nonparametric LPR estimation, $\hat{\SE}_{g,t}(z)$ is a pointwise standard error, and $c(1 - \alpha)$ is a uniform critical value obtained from an analytical method or weighted bootstrapping.
\phantomsection\label{page:AE-3-4}\Copy{AE-3-4}{
    Importantly, to ensure that the uniform confidence band $\mathcal{C}$ is uniformly valid over $(g, t, z) \in \mathcal{A}$, the critical value $c(1 - \alpha)$ must not depend on $(g, t, z)$ and is larger than the standard Wald-type pointwise critical value (i.e., the $(1 - \alpha/2)$ quantile of the standard normal distribution).
    As will be discussed later, the bandwidth used for the LPR estimation is crucial for constructing $c(1 - \alpha)$, and we recommend using a bandwidth that does not depend on $(g, t, z)$ for our uniform inference.
    See Remark \ref{remark:uniform}.
}

We can also consider a variety of useful summary parameters by aggregating $\CATT_{g, t}(z)$'s.
Specifically, building on the aggregation scheme of \cite{callaway2021difference}, we set the second target parameter to the aggregated parameter of the following form:
\begin{align} \label{eq:summary}
    \theta(z) 
    \coloneqq \sum_{g \in \mathcal{G}} \sum_{t=2}^{\mathcal{T}} w_{g,t}(z) \cdot {\CATT}_{g,t}(z),
\end{align}
where $w_{g,t}(z)$ is a known or estimable weighting function that determines the causal interpretation of $\theta(z)$.
For example, letting $e = t - g \ge 0$ denote elapsed treatment time, we can consider the ``event-study-type'' conditional average treatment effect:
\begin{align*}
    \theta_{\es}(e, z)
    & \coloneqq \bE[ Y_{i,G+e}(G) - Y_{i,G+e}(0) \mid G + e < \bar g, Z = z ] \\
    & = \sum_{g \in \mathcal{G}} \bm{1}\{ g + e < \bar g \} \cdot \Pr(G = g \mid G + e < \bar g, Z = z) \cdot {\CATT}_{g,g+e}(z).
\end{align*}
This is the conditional counterpart of the event-study-type summary parameter in \cite{callaway2021difference} and useful for understanding the treatment effect heterogeneity with respect to treatment exposure time $e$ and covariate value $z$.
Another useful example is the simple weighted conditional average treatment effect, which aggregates $\CATT_{g,t}(z)$'s into an overall effect as follows:
\begin{align*}
    \theta_{\W}^{\OO}(z)
    & \coloneqq \frac{1}{\kappa(z)} \sum_{t=2}^{\mathcal{T}} \bE[ Y_t(G) - Y_t(0) \mid G < \bar g, Z = z ] \\
    & = \frac{1}{\kappa(z)} \sum_{g \in \mathcal{G}} \sum_{t=2}^{\mathcal{T}} \bm{1}\{ g \le t < \bar g \} \cdot \Pr(G = g \mid G < \bar g, Z = z) \cdot {\CATT}_{g,t}(z),
\end{align*}
where $\kappa(z) \coloneqq \sum_{g \in \mathcal{G}} \sum_{t=2}^{\mathcal{T}} \bm{1}\{ g \le t < \bar g \} \cdot \Pr(G = g \mid G < \bar g, Z = z)$.
We can also consider other useful summary parameters by appropriately choosing different weights.
See Appendix \ref{sec:supp-summary}.

In Section \ref{sec:Summary} and Appendix \ref{sec:supp-summary}, we study how to perform the uniform inference for these summary parameters.
The proposed uniform confidence band for the aggregated parameter $\theta(z)$ has the same form as the uniform confidence band for CATT, namely $\mathcal{C}_{\theta} \coloneqq \{ \mathcal{C}_{\theta}(z) \}$ with
\begin{align} \label{eq:UCB_summary}
    \mathcal{C}_{\theta}(z)
    \coloneqq \left[ \hat{\theta}(z) - c_{\theta}(1 - \alpha) \cdot \hat{\SE}_{\theta}(z), \quad \hat{\theta}(z) + c_{\theta}(1 - \alpha) \cdot \hat{\SE}_{\theta}(z) \right],
\end{align}
where $\hat{\theta}(z)$ is an estimator obtained as an empirical analogue of \eqref{eq:summary}, $\hat{\SE}_{\theta}(z)$ is a pointwise standard error, and $c_{\theta}(1 - \alpha)$ is a uniform critical value via an analytical method or multiplier bootstrapping.
\phantomsection\label{page:AE-3-5}\Copy{AE-3-5}{
    Similar to the case of CATT, the uniform critical value $c_{\theta}(1 - \alpha)$ and the bandwidth should not depend on $z$ and the variable specific to the summary parameter of interest (e.g., treatment exposure time $e$).
}

\phantomsection\label{page:AE-16-1}\Copy{AE-16-1}{
    To justify the uniform confidence bands \eqref{eq:UCB_CATT} and \eqref{eq:UCB_summary}, we need to make bias arising from kernel smoothing asymptotically negligible.
    To this end, we propose an undersmoothing approach based on the insight of the simple robust bias-corrected (RBC) inference.
    Specifically, we consider estimating the target parameters by local quadratic regressions (LQR) based on integrated mean squared error (IMSE) optimal bandwidths for local linear regressions (LLR).
    See Section \ref{subsec:bandwidth}.
}

\phantomsection\label{page:AE-7-1}\Copy{AE-7-1}{
    In the same spirit of the focus on ``pre-trends'' in previous studies, it would be beneficial to assess the credibility of the identifying assumptions using our uniform inference.
    To this end, we focus on the following testable implications in the pre-treatment periods: $\CATT_{g,t}(z) = \theta_{\es}(e, z) = 0$ for all $g \in \mathcal{G}$, $t \ge 2$ such that $t \le g - 2$, $e \le -2$, and $z \in \mathcal{I}$.
    Note that we exclude $t = g - 1$ and $e = -1$ as base periods.
    If we find estimation and uniform inference results that are inconsistent with these testable implications, it suggests violations of the identifying assumptions.
    We discuss this type of simple diagnosis based on pre-trends in Appendix \ref{sec:pretrends}.
}

Throughout the paper, we focus on the treatment effect heterogeneity with respect to a single continuous covariate $Z$, rather than the full covariate $X$.
This is because focusing on a single continuous covariate of interest allows us to easily visualize and interpret the heterogeneity in instantaneous and dynamic treatment effects with respect to its values, as illustrated in Figure \ref{fig:boot} in the next section.

\section{Empirical Illustration} \label{sec:application}

In this section, we use our proposal to assess the heterogeneity in the effects of the minimum wage change on youth employment.
In doing so, we illustrate the empirical relevance of our methods using a real dataset before proceeding to the technical discussions in the following sections.

We use the same dataset as in \cite{callaway2021difference}, which includes county level minimum wages, county level teen employment, and other county characteristics for 2,284 U.S. counties in 2001-2007. 
The outcome variable $Y_{i,t}$ is the logarithm of teen employment in county $i$ at year $t$.
We define the group $G_i$ by considering 100, 223, and 584 counties that increased their minimum wages in 2004, 2006, and 2007, respectively, as the treated units.
This implies that the remaining 2,184, 1,961, and 1,377 counties in 2004, 2006, and 2007, respectively, are the not-yet-treated units in each year.
In line with the specification in the empirical analysis of \cite{callaway2021difference}, the pre-treatment covariates $X_i$ consist of county characteristics before 2000, including the poverty rate (i.e., the share of the population below the poverty line), the share of the white population, the share of the population of high school graduates, the regional dummy, the median income, the total population, and the squares of the median income and the total population.
To save space, we relegate more information about the dataset, summary statistics, and pre-trends to Appendix \ref{sec:empirical2}.

\phantomsection\label{page:AE-2-2}\Copy{AE-2-2}{
    Among the covariates, we focus on examining the treatment effect heterogeneity with respect to the poverty rate.
    Reducing poverty should be one of the main purposes of the minimum wage policy, but it is unclear a priori whether and how minimum wage increases reduce poverty.
    This is because the extent to which minimum wage increases reduce poverty depends on the structural relationship between wage gains and job losses at the bottom of the income distribution, as well as other factors that determine income, as discussed in \citet{dube2019minimum}.
    From this perspective, understanding how the impact of minimum wage increases on teen employment depends on the poverty rate should be useful for assessing whether the minimum wage policy alleviates poverty.
    For example, if we find that minimum wage increases significantly decrease teen employment in low-poverty counties, but have no significant effect in high-poverty counties, policymakers should emphasize the importance of the minimum wage policy for poverty reduction at least in high-poverty counties.
}

\phantomsection\label{page:AE-6-1}\Copy{AE-6-1}{
    In Figure \ref{fig:boot}, panel (a) shows the estimation and uniform inference results for ${\CATT}_{g,t}(z)$ for $(g, t) \in \{2004, 2006, 2007\}^2$.
    We restrict our focus to this set of $(g, t)$ for presentation purposes.
    Panels (b) and (c) depict the event-study-type conditional average treatment effect $\theta_{\es}(e, z)$ for $e \in \{ 0, 1, 2, 3 \}$ and the simple weighted conditional average treatment effect $\theta_{\W}^{\OO}(z)$, respectively.
    Panels (b) and (c) use data from all available groups and post-treatment periods, not just data from $(g, t) \in \{2004, 2006, 2007\}^2$.
}
In each panel, the horizontal axis corresponds to the interval $\mathcal{I}$ set as the interquartile range of the poverty rate, the solid line indicates the LQR estimates based on the IMSE-optimal bandwidth for the LLR estimation, and the gray area corresponds to the 95\% uniform confidence band via weighted/multiplier bootstrapping using \citeauthor{mammen1993bootstrap}'s (\citeyear{mammen1993bootstrap}) weights.
\phantomsection\label{page:AE-3-6}\Copy{AE-3-6}{
    Because the bandwidth used for the LQR estimation should not depend on the variables of interest (e.g., $(g, t, z)$ for $\CATT_{g,t}(z)$) for our uniform inference, we take the minimum of the integrated (over $z \in \mathcal{I}$) MSE-optimal bandwidths across the variables (e.g., groups $g$ and post-treatment periods $t$).
}
\phantomsection\label{page:AE-9-1}\Copy{AE-9-1}{
    We also found that the LLR-based inference methods (both analytical and bootstrap) and the LQR-based analytical method lead to almost the same empirical results as those presented here, but we suppress them to save space.
}

The main empirical findings can be summarized as follows.
First, the estimated CATT functions are nearly flat around zero for the 2004, 2006, and 2007 groups in 2004, 2006, and 2007, respectively, and the corresponding uniform confidence bands are not as wide.
This means that, with satisfactory precision in terms of uniform inference, minimum wage increases have on average almost no instantaneous effect on teen employment.
Second, we find negative but small CATT estimates with a small amount of treatment effect heterogeneity for the 2004 and 2006 groups in 2006-2007 and 2007, respectively, and the corresponding uniform confidence bands are wider than those for the instantaneous effects.
This result may suggest that there are small but not substantial dynamic effects of minimum wage increases on teen employment, but this may be due to the lack of precision of uniform inference for dynamic effects.
Third, the uniform inference results for the summary parameters also imply that there are no substantial effects of minimum wage increases on teen employment with modest treatment effect heterogeneity.
\phantomsection\label{page:AE-2-3}\Copy{AE-2-3}{
    In particular, the result for the simple weighted conditional average treatment effect indicates that there is almost no effect, especially at high poverty rates.
    Overall, our empirical results suggest that minimum wage increases do not substantially decrease teen employment and thus may be effective in reducing poverty, particularly in high-poverty counties.
}

\begin{figure}[ht]
    \centering
    \begin{minipage}[t]{\textwidth}
        \centering
        \includegraphics[width=\linewidth, bb=0 0 2160 1080]{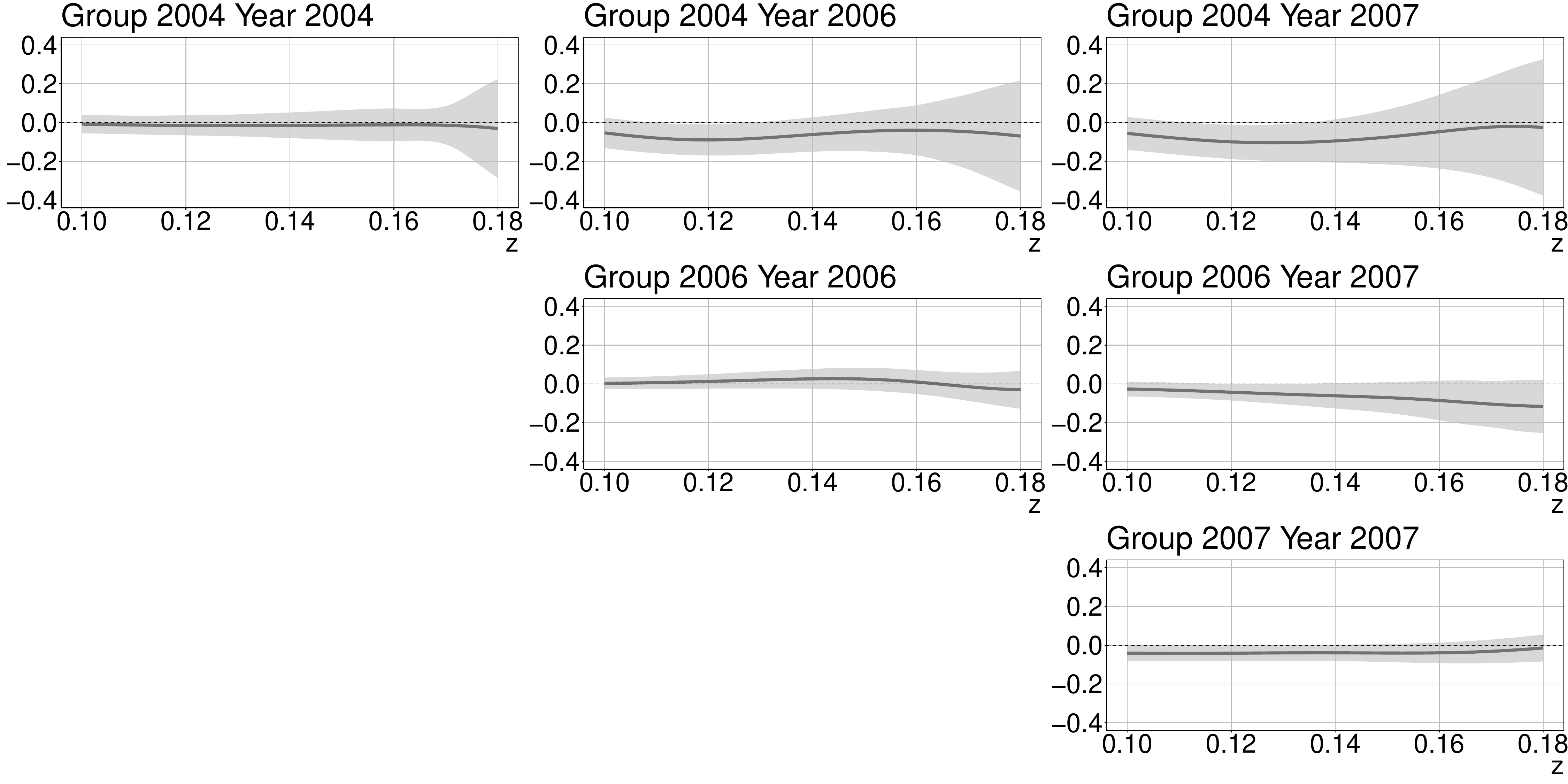}
        \subcaption{CATT}
    \end{minipage}
    
    \vspace{-5em} 
    
    \begin{minipage}[t]{\textwidth}
        \centering
        \includegraphics[width=\linewidth, bb=0 0 2160 1080]{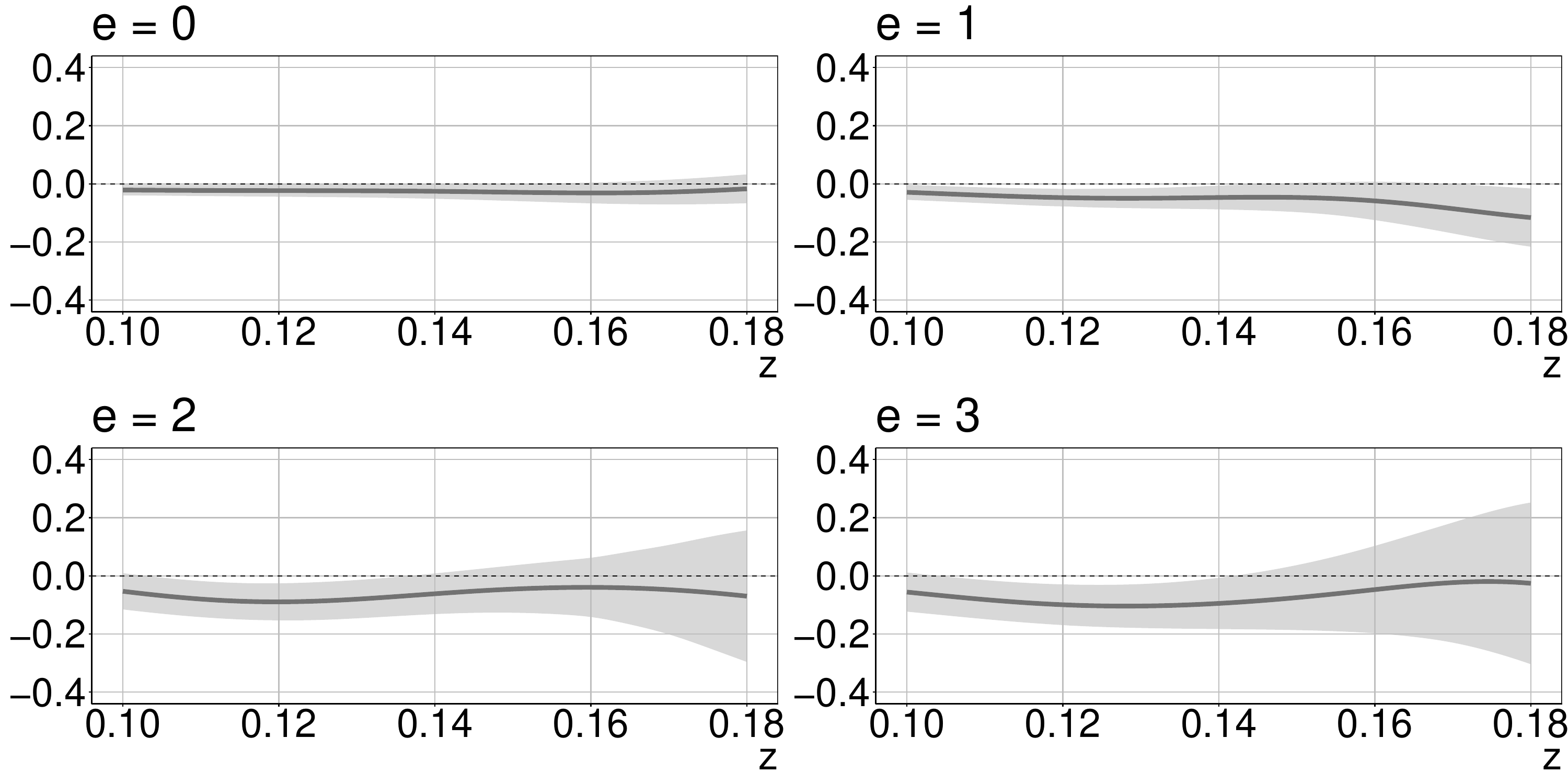}
        \subcaption{The event-study-type conditional average treatment effect}
    \end{minipage}
    
    \vspace{-12em} 
    
    \begin{minipage}[t]{\textwidth}
        \centering
        \includegraphics[width=\linewidth, bb=0 0 2160 1080]{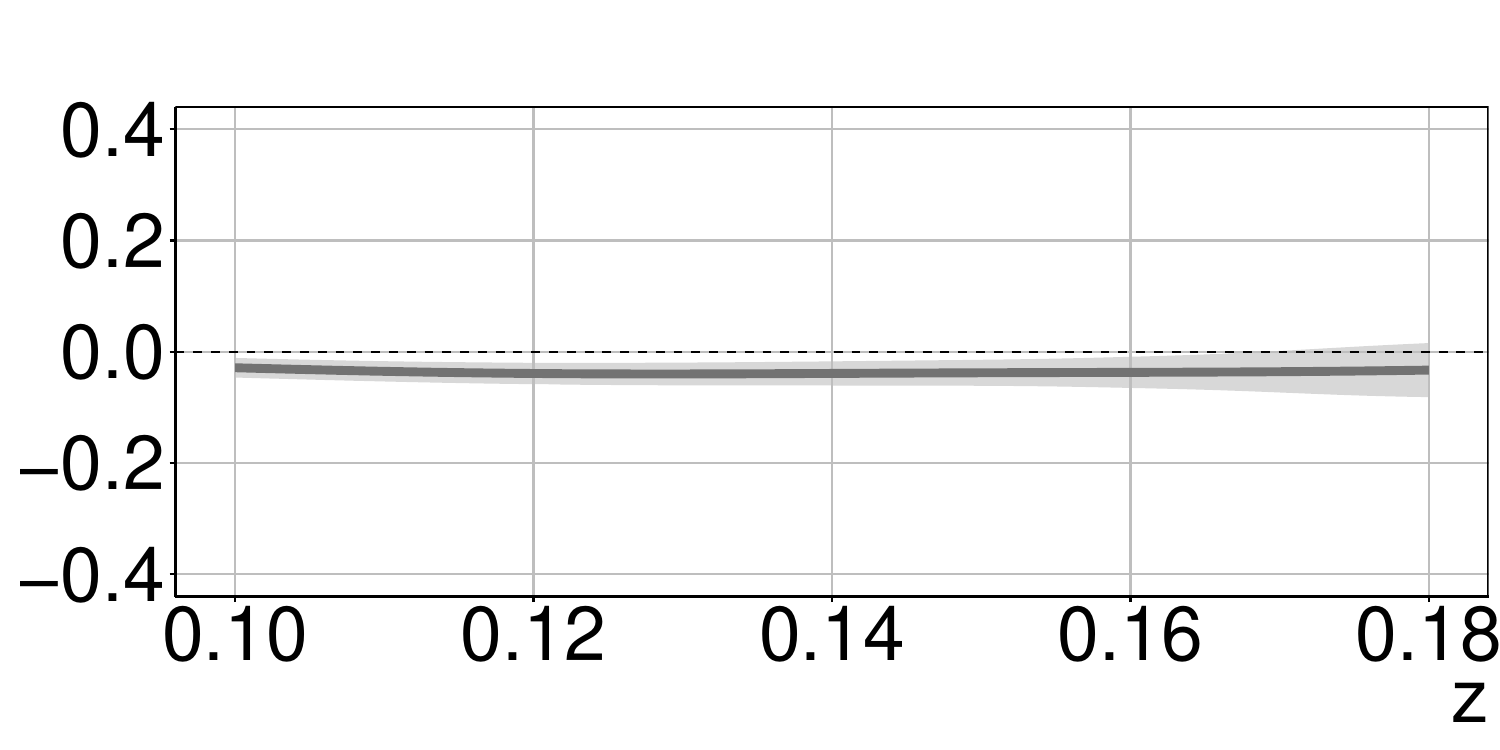}
        \subcaption{The simple weighted conditional average treatment effect}
    \end{minipage}
    \caption{The LQR estimates and 95\% uniform confidence bands constructed with weighted/multiplier bootstrapping in the post-treatment periods.}
    \label{fig:boot}
\end{figure}

\section{Inference for CATT} \label{sec:CATTgt}

In this section, we develop identification, estimation, and uniform inference methods for CATT defined in \eqref{eq:CATT}.
For this purpose, there are two options for the comparison group: the not-yet-treated group and the never-treated group.
For presentation purposes, the main body of the paper presents only the analysis using the not-yet-treated group.
The analysis using the never-treated group is relegated to Appendix \ref{sec:never-treated}.
Moreover, we can consider three types of estimands: OR, IPW, and DR estimands.
\phantomsection\label{page:AE-10}\Copy{AE-10}{
Throughout the paper, we focus on the DR estimand as it is more robust against model misspecifications.}

The following quantities play important roles in the analysis using the not-yet-treated group and the DR estimand.
For each $g$ and $t$, we define the generalized propensity score (GPS) and the OR function respectively by
\begin{align} \label{eq:GPS-OR}
\begin{split}
    p_{g,t}(X) 
    & \coloneqq \bP[G_g = 1 \mid X, G_g + (1 - D_t)(1 - G_g) = 1], \\
    m_{g,t}(X) 
	& \coloneqq \bE[ Y_t - Y_{g-1} \mid X, D_{t}=0, G_g=0 ]. 
\end{split}
\end{align}
Let $p_{g,t}(X; \pi_{g,t})$ and $m_{g,t}(X; \beta_{g,t})$ be parametric specifications for these quantities, where each function is known up to the corresponding finite-dimensional parameter.
Denote the corresponding parameter spaces as $\Pi_{g,t}$ and $\mathscr{B}_{g,t}$.

\subsection{Identification} \label{sec:identificaton}

We impose the following identification conditions, which are essentially the same as Assumptions 3, 4, 6, and 7(iii) in \cite{callaway2021difference}.

\begin{assumption}[No Treatment Anticipation] \label{as:anticipation-ny}
	It holds that 
	\begin{align*}
		\bE [ Y_t(g) \mid X, G_g = 1 ]
		= \bE [ Y_t(0) \mid X, G_g = 1 ] 
		\quad
		\text{a.s. for all $g \in \mathcal{G}$ and $t \in \{ 1, \dots, \mathcal{T} \}$ such that $t < g$}. 
	\end{align*}
\end{assumption}

\begin{assumption}[Conditional Parallel Trends Based on the ``Not-Yet-Treated'' Group] \label{as:parallel-ny}
	For each $g \in \mathcal{G}$ and each $(s, t) \in \{ 2, \dots, \mathcal{T} \} \times \{ 2, \dots, \mathcal{T} \}$ such that $t \ge g$ and $t  \le s < \bar g$,
	\begin{align*}
		\bE[ Y_t(0) - Y_{t-1}(0) \mid X, G_g = 1 ]
		= \bE[ Y_t(0) - Y_{t-1}(0) \mid X, D_s = 0, G_g = 0 ]
		\quad 
		\text{a.s.}
	\end{align*}
\end{assumption}

\begin{assumption}[Overlap] \label{as:overlap}
	For each $g \in \mathcal{G}$ and $t \in \{ 2, \dots, \mathcal{T} \}$, there exists $\varepsilon > 0$ such that $\bP( G_g = 1 \mid Z ) > \varepsilon$ and $p_{g,t}(X) < 1 - \varepsilon$ a.s.
\end{assumption}

\begin{assumption}[Parametric Models for the ``Not-Yet-Treated'' Group] \label{as:parametric-ny}
	For each $g \in \mathcal{G}$ and $t \in \{ 2, \dots, \mathcal{T}\}$ such that $t \ge g$, either condition is satisfied:
	\begin{enumerate}[(i)]
		\item There exists a unique $\beta_{g,t}^{*} \in \mathscr{B}_{g,t}$ such that $m_{g,t}(X) = m_{g,t}(X; \beta_{g,t}^{*})$ a.s.
		\item There exists a unique $\pi_{g,t}^* \in \Pi_{g,t}$ such that $p_{g,t}(X) = p_{g,t}(X; \pi_{g,t}^*)$ a.s.
	\end{enumerate}
\end{assumption}

The conditional DR estimand based on the not-yet-treated group is defined by
\begin{align} \label{eq:DRestimand}
\begin{split} 
    & {\DR}_{g,t}(Z; \beta_{g,t}, \pi_{g,t}) \\
    & \coloneqq \bE \left[ \left( \frac{G_g}{\bE[ G_g \mid Z]} - \frac{ R_{g,t}(W; \pi_{g,t})  }{\bE \left[ R_{g,t}(W; \pi_{g,t}) \; \middle| \; Z \right]} \right) \left( Y_t - Y_{g-1} - m_{g,t}(X; \beta_{g,t}) \right) \; \middle| \; Z \right],
\end{split}
\end{align}
where $W \coloneqq (Y_1, \dots, Y_\mathcal{T}, X^\top, D_1, \dots, D_\mathcal{T})^\top$ and
\begin{align} \label{eq:Rdef}
    R_{g,t}(W; \pi_{g,t}) 
    \coloneqq \frac{p_{g,t}(X; \pi_{g,t}) (1-D_t)(1-G_g)}{1 - p_{g,t}(X; \pi_{g,t})}.
\end{align}

As a building block of our uniform inference methods, the next lemma shows that our DR estimand identifies CATT if at least one of the GPS and OR function is specified correctly.
This result follows from almost the same arguments as in Theorem 1 of \cite{sant2020doubly} and Theorem 1 of \cite{callaway2021difference}.
In fact, the only difference is that our estimand is conditioned on a single covariate $Z$, while their estimands integrate over all covariates.

\begin{lemma} \label{lem:DR}
	Suppose that Assumptions \ref{as:staggered}--\ref{as:parametric-ny} hold.
	Fix arbitrary $(g, t, z) \in \mathcal{A}$, where $\mathcal{A}$ is defined in \eqref{eq:mathcalA}.
	\begin{enumerate}[(i)]
		\item Under Assumption \ref{as:parametric-ny}(i), ${\CATT}_{g,t}(z) = {\DR}_{g,t}(z; \beta_{g,t}^{*}, \pi_{g,t})$ for all $\pi_{g,t} \in \Pi_{g,t}$.
		\item Under Assumption \ref{as:parametric-ny}(ii), ${\CATT}_{g,t}(z) = {\DR}_{g,t}(z; \beta_{g,t}, \pi_{g,t}^*)$ for all $\beta_{g,t} \in \mathscr{B}_{g,t}$.
	\end{enumerate}
\end{lemma}

\begin{remark} \label{remark:anticipation}
    \cite{callaway2021difference} consider a more general anticipation assumption than Assumption \ref{as:anticipation-ny}, called limited treatment anticipation, which allows units to anticipate treatment by a known amount of time.
    While our analysis can be extended in this direction, for exposition purposes, we impose Assumption \ref{as:anticipation-ny} throughout the main text.
    See Appendix \ref{sec:not-yet-treated} for further discussion.
\end{remark}

\begin{remark}
\phantomsection\label{page:AE-11}\Copy{AE-11}{
    To be consistent with \cite{callaway2021difference}, we focus on the parametric approach to the GPS and OR function by assuming that there is a small number of covariates.
    This parametric first-stage estimation greatly simplifies our theory because we can treat the GPS and OR function as if we know them, given that the convergence rates of their parametric estimators are faster than those of the nonparametric estimators.
    However, if we would like to avoid parametric specifications or if there are many covariates, it might be desirable to rely on nonparametric and/or (double/de-biased) machine learning methods even in the first stage.
    Such a strategy would be possible but challenging to quantify the estimation effects arising from nonparametric and/or machine learning methods.
}
\end{remark}

\subsection{Estimation and Uniform Inference} \label{sec:estimation}

We develop estimation and uniform inference methods for CATT based on the identification result in Lemma \ref{lem:DR}.
With an abuse of notation, we write $m_{g,t} \coloneqq m_{g,t}(X; \beta_{g,t}^{*})$ and $R_{g,t} \coloneqq R_{g,t}(W; \pi_{g,t}^*)$.
Let
\begin{align} \label{eq:Adef}
    A_{g,t} \coloneqq \left( \frac{G_g}{\bE[ G_g \mid Z = z ]} - \frac{ R_{g,t} }{\bE \left[ R_{g,t} \mid Z = z \right]} \right) \left( Y_t - Y_{g-1} - m_{g,t} \right).
\end{align}
Note that $A_{g,t}$ depends on the covariate value $z$, but we suppress its dependence to simplify the exposition.
The goal is to construct the uniform confidence band for $\CATT_{g,t}(z)$, identified by ${\DR}_{g,t}(z) = \bE[ A_{g,t} \mid Z = z ]$.

For the subsequent discussion, it is convenient to introduce the following notation related to the LQR estimation.
For a generic variable $Q$ and a generic integer $\nu \ge 0$, let $\mu_Q^{(\nu)}(z) \coloneqq \bE^{(\nu)}[Q \mid Z = z]$ denote the $\nu$-th derivative with respect to $z$ of the conditional mean of $Q$ given $Z = z$.
As usual, we write $\mu_Q(z) = \mu_Q^{(0)}(z) = \bE[Q \mid Z = z]$.
The LQR estimator of $\mu_Q^{(\nu)}(z)$ for $\nu \in \{ 0, 1, 2 \}$ is defined by
\begin{align*}
	\hat{\mu}_Q^{(\nu)}(z) \coloneqq \nu! \bm{e}_\nu^\top \hat{\bm{\beta}}_Q (z)
	\quad \text{with} \quad
	\hat{\bm{\beta}}_Q (z) \coloneqq \argmin_{\bm{b} \in \mathbb{R}^{3}} \sum_{i=1}^n \left( Q_i - \bm{r}_2(Z_i - z)^\top \bm{b} \right)^2 K_Q \left( \frac{Z_i - z}{h_Q} \right),
\end{align*}
where $\bm{e}_\nu$ is the $3 \times 1$ vector in which the $(\nu + 1)$-th element is $1$ and the rest are $0$, $\bm{r}_2(u) \coloneqq (1, u, u^2)^\top$ is the $3 \times 1$ vector of second order polynomials, $K_Q$ is a kernel function, and $h_Q > 0$ is a bandwidth.

\subsubsection{Procedure} \label{subsec:procedure}

We explain how to obtain the conditional DR estimator $\hat{\DR}_{g,t}(z)$ that appeared in the proposed $(1 - \alpha)$ uniform confidence band for $\CATT_{g,t}(z)$ in \eqref{eq:UCB_CATT}.
Several more technical issues, including the formulas for the standard error and the critical value, are discussed in the following subsections.

We consider a three-step estimation procedure.
First, we estimate $\beta_{g,t}^{*}$ and $\pi_{g,t}^*$ via some parametric methods, such as the least squares method and the maximum likelihood estimation.
Using the resulting first-stage estimators $\hat \beta_{g,t}$ and $\hat \pi_{g,t}$, we compute $\hat R_{i,g,t} \coloneqq R_{g,t}(W_i; \hat \pi_{g,t})$ and $\hat m_{i,g,t} \coloneqq m_{g,t}(X_i; \hat \beta_{g,t})$ for each $i$.
Second, for each $i$, we compute 
\begin{align} \label{eq:A_igt}
    \hat A_{i,g,t}
    & \coloneqq \left( \frac{ G_{i,g} }{ \hat \mu_G(z) } - \frac{ \hat R_{i,g,t} }{ \hat \mu_{\hat R}(z) } \right) \left( Y_{i,t} - Y_{i,g-1} - \hat m_{i,g,t} \right),
\end{align}
where $\hat \mu_G(z)$ is the LQR estimator of $\mu_G(z) = \bE[ G_g \mid Z = z]$ using a bandwidth $h_G$ and a kernel function $K_G$, and $\hat \mu_{\hat R}(z)$ is the LQR estimator of $\mu_R(z) = \bE[ R_{g,t} \mid Z = z]$ using a bandwidth $h_R$ and a kernel function $K_R$.
Specifically, $\hat \mu_{\hat R}(z)$ is defined by
\begin{align*}
    \hat{\mu}_{\hat R}(z)
    \coloneqq \bm{e}_0^\top \hat{\bm{\beta}}_{\hat R}(z),
    \qquad
    \hat{\bm{\beta}}_{\hat R}(z) \coloneqq \argmin_{\bm{b} \in \mathbb{R}^{3}} \sum_{i=1}^n \left( \hat R_{i,g,t} - \bm{r}_{2}(Z_i - z)^\top \bm{b} \right)^2 K_R \left( \frac{Z_i - z}{h_R} \right),
\end{align*}
and the definition of $\hat \mu_G(z)$ is analogous.
Finally, we obtain the conditional DR estimator $\hat{\DR}_{g,t}(z)$ from the following LQR using a bandwidth $h_A$ and a kernel function $K_A$:
\begin{align} \label{eq:DRest-ny}
    \begin{split}
        \hat{\DR}_{g,t}(z) & \coloneqq \hat \mu_{\hat A}(z) \coloneqq \bm{e}_0^\top \hat{\bm{\beta}}_{\hat A}(z), \\ 
        \hat{\bm{\beta}}_{\hat A}(z) & \coloneqq \argmin_{\bm{b} \in \mathbb{R}^{3}} \sum_{i=1}^n \left( \hat A_{i,g,t} - \bm{r}_{2}(Z_i - z)^\top \bm{b} \right)^2 K_A \left( \frac{Z_i - z}{h_A} \right).
    \end{split}
\end{align}

We impose the next assumption on the bandwidths and the kernel functions.

\begin{assumption} \label{as:common}
    In the second- and third-stage estimation,
    \phantomsection\label{page:AE-3-7}\Copy{AE-3-7}{
        (i) $h_G = h_R = h_A = h$ for a common bandwidth $h > 0$ such that $h \to 0$ as $n \to \infty$ and $h$ does not depend on $(g, t, z) \in \mathcal{A}$}, and (ii) $K_G = K_R = K_A = K$ for a common kernel function $K$.
\end{assumption}

By condition (i), we require a common bandwidth $h$ over the estimation of $\mu_G(z)$, $\mu_R(z)$, and $\mu_A(z)$, which plays an essential role in constructing the critical value. 
The common bandwidth $h$ should not depend on $(g, t, z)$ for our uniform inference.
See Section \ref{subsec:bandwidth} and Remark \ref{remark:uniform} for further discussion of bandwidth selection.

By condition (ii), we require a common kernel function $K$ over the nonparametric regressions, which is not necessary but greatly facilitates both exposition and theoretical analysis.

We focus on the LQR estimation for the following two reasons.
First, the LQR estimation is a standard recommendation for estimating the nonparametric regression function in the kernel smoothing literature due to the boundary adaptive property (\citealp{fan1996local}).
Second, it is well known in the literature that, in combination with an appropriate choice of bandwidth, the inference based on the LQR estimation (without analytical bias correction) is numerically identical to the RBC inference based on the bias-corrected LLR estimation.
More precisely, the LQR estimator is numerically equivalent to the bias-corrected LLR estimator when the regression function estimation and the bias estimation are carried out with the same appropriate bandwidth (e.g., the IMSE-optimal bandwidth for the LLR estimation), and moreover, the asymptotic variances of the two estimators are identical.
This type of RBC inference is \textit{simple} in the sense that it does not require analytical bias correction, nor does it require adjustment of the standard error due to bias correction, unlike more sophisticated RBC inference.
In the literature on RBC inference in kernel smoothing estimation, to the best of our knowledge, only \cite{cattaneo2022boundary} consider uniform inference based on this simple RBC approach, while other previous studies focus on pointwise inference.
Building on their proposal for simple RBC inference, we propose to use the LQR estimation based on the IMSE-optimal bandwidth for the LLR estimation.
\phantomsection\label{page:AE-12-1}\Copy{AE-12-1}{
    This RBC approach could be generalized to other polynomial orders as in Section 3 of \cite{cattaneo2022boundary}, but we focus on the LQR-based inference throughout the paper.
}

\begin{remark} \label{rem:RBC}
	Another promising approach is to perform more sophisticated RBC inference based on the bias-corrected LPR estimator with analytically correcting for bias. 
	To do this, it is essential to derive the formula for the fixed-$n$ conditional variance of the bias-corrected LPR estimator (cf. \citealp{calonico2018effect}).
	However, the dependent variable in our situation, $\hat{A}_{i,g,t}$, depends on the entire sample $\{ W_i \}_{i=1}^n$, which substantially complicates the analysis for the fixed-$n$ conditional variance.
	\citet{cattaneo2022boundary} discuss an analogous problem in a different context and propose uniformly valid simple RBC inference as an alternative strategy.
\end{remark}

\begin{remark} \label{rem:secondthirdbandwidth}
    Assumption \ref{as:common} facilitates our analysis in both theory and implementation, but it may come at the expense of coverage accuracy and/or confidence interval length.
    For example, if the nuisance functions in the second-stage estimation are nearly flat over $z$, but CATT is more wavy in $z$, then it will be more desirable to use a smaller bandwidth to estimate CATT, rather than using the same bandwidth to estimate both curves.
    We do not pursue this direction further in view of the priority of theoretical and practical tractability.
\end{remark}

\subsubsection{Overview of asymptotic properties} \label{subsec:biascorrection}

We present an overview of several statistical properties of our estimator, which serve as the bases for the standard error and the critical value.
The formal results are relegated to Section \ref{subsec:asymptotic}.

We will show that the leading term of our estimator is given by 
\begin{align*}
	\hat{\DR}_{g,t}(z) - {\DR}_{g,t}(z)
	& \approx \frac{1}{f_Z(z)} \frac{1}{nh} \sum_{i=1}^n \Psi_{i,h} (B_{i,g,t} - \mu_B(Z_i)) K\left( \frac{Z_i - z}{h} \right) + \mathrm{Bias}\left[ \hat{\DR}_{g,t}(z) \; \middle| \; \bm{Z} \right],
\end{align*}
where $\bm{Z} \coloneqq (Z_1, \dots, Z_n)^\top$, $f_Z$ is the density of $Z$, $\Psi_{i,h} \coloneqq  ( I_{4,K} - u_{i,h}^2I_{2,K} ) / (I_{4,K} - I_{2,K}^2 ) $ with $u_{i,h} \coloneq (Z_i - z) / h$ and $I_{l,L} \coloneq \int u^l L(u)du$ for a non-negative integer $l$ and a function $L$, and we denote
\begin{align} \label{eq:B_E_F}
    \begin{split}
        B_{i,g,t}
        & \coloneqq A_{i,g,t} + \frac{ \mu_E(z) }{ \mu_R^2(z) } R_{i,g,t} - \frac{ \mu_F(z) }{ \mu_G^2(z) } G_{i,g}, \\
        E_{i,g,t}
        & \coloneqq R_{i,g,t} ( Y_{i,t} - Y_{i,g-1} - m_{i,g,t} ), \\
        F_{i,g,t} 
        & \coloneqq G_{i,g} ( Y_{i,t} - Y_{i,g-1} - m_{i,g,t} ).
    \end{split}
\end{align}
The second and third terms in $B_{i,g,t}$ originate from the fact that we estimate the nonparametric nuisance parameters $\mu_R(z)$ and $\mu_G(z)$ in the second-stage estimation.
Note that the effect of the first-stage estimation does not appear in this (first-order) asymptotic representation because the convergence rates of the first-stage parametric estimators are faster than the nonparametric rate.

Using this asymptotic linear representation, we can derive the asymptotic bias and variance of our estimator.
Specifically, we will show that 
\begin{align} \label{eq:bias_var_nev_2}
	\mathrm{Bias}\left[ \hat{\DR}_{g,t}(z) \;\middle|\; \bm{Z} \right] \approx h^4 \mathcal{B}_{g,t}(z),
	\qquad
	\mathrm{Var}\left[ \hat{\DR}_{g,t}(z) \;\middle|\; \bm{Z} \right] \approx \frac{1}{nh} \mathcal{V}_{g,t}(z),
\end{align}
where
\begin{align} \label{eq:bias_nev_2}
	\mathcal{B}_{g,t}(z)
	\coloneqq \frac{1}{24 f_Z(z)}\left(2\mu_B^{(3)}(z)f_Z^{(1)}(z) + \mu_B^{(4)}(z)f_Z(z)\right)  \left( \frac{I_{4,K}^2 - I_{2,K}I_{6,K}}{I_{4,K} - I_{2,K}^2} \right)
\end{align}
and
\begin{align} \label{eq:var_nev_2}
	\mathcal{V}_{g,t}(z)
	\coloneqq \frac{\sigma_B^2(z)}{f_Z(z)} \left( \frac{ I_{4,K}^2 I_{0,K^2} - 2I_{2,K}I_{4,K}I_{2,K^2} + I_{2,K}^2I_{4,K^2} }{ ( I_{4,K} - I_{2,K}^2 )^2 } \right)
\end{align}
with denoting $\sigma_B^2(z) \coloneqq \Var[B_{i,g,t} \mid Z_i = z]$ and $\mu_B^{(\nu)}(z) \coloneqq \mu_A^{(\nu)}(z) + [ \mu_E(z) / \mu_R^2(z) ] \mu_R^{(\nu)}(z) - [ \mu_F(z) / \mu_G^2(z) ] \mu_G^{(\nu)}(z)$.

\subsubsection{Standard error} \label{subsec:se}

We compute the standard error of our estimator as follows.
We start by estimating the density $f_Z(z)$ by some nonparametric method, and let $\hat f_Z(z)$ denote the resulting estimator.
Next, we compute the following variables:
\begin{align*}
    \hat B_{i,g,t}
    & \coloneqq \hat A_{i,g,t} + \frac{ \hat \mu_{\hat E}(z) }{ \hat \mu_{\hat R}^2(z) } \hat R_{i,g,t} - \frac{ \hat \mu_{\hat F}(z) }{ \hat \mu_G^2(z) } G_{i,g},\\
    \hat E_{i,g,t} 
    & \coloneqq \hat R_{i,g,t} ( Y_{i,t} - Y_{i,g-1} - \hat m_{i,g,t} ), \\ 
    \hat F_{i,g,t} 
    & \coloneqq G_{i,g} ( Y_{i,t} - Y_{i,g-1} - \hat m_{i,g,t} ),
\end{align*}
where $\hat \mu_{\hat E}(z)$ and $\hat \mu_{\hat F}(z)$ denote the LLR estimators of $\mu_E(z)$ and $\mu_F(z)$, respectively.
Using these variables, we estimate the conditional variance $\sigma_B^2(z)$ by the following LLR:
\begin{align*}
	\hat \sigma_{\hat B}^2(z)
	\coloneqq \bm{e}_0^\top \hat{\bm{\beta}}_{\hat U^2}(z),
	\qquad 
	\hat{\bm{\beta}}_{\hat U^2}(z)
	\coloneqq \argmin_{\bm{b} \in \mathbb{R}^{2}} \sum_{i=1}^n \left( \hat U_{i,g,t}^2 - \bm{r}_1(Z_i - z)^\top \bm{b} \right)^2 K_{\sigma} \left( \frac{Z_i - z}{h_{\sigma}} \right),
\end{align*}
where $\hat U_{i,g,t} \coloneqq \hat B_{i,g,t} - \hat \mu_{\hat B}(Z_i)$ and $\bm{r}_1(u) \coloneqq (1,u)^\top$.
Here, the bandwidth and the kernel function can be different from those used for the second- and third-stage estimation.
Then, we compute
\begin{align*}
    \hat{\mathcal{V}}_{g,t}(z)
    \coloneqq \frac{ \hat \sigma_{\hat B}^2(z) }{ \hat f_Z(z) } \left( \frac{ I_{4,K}^2 I_{0,K^2} - 2I_{2,K}I_{4,K}I_{2,K^2} + I_{2,K}^2I_{4,K^2} }{ ( I_{4,K} - I_{2,K}^2 )^2 } \right).
\end{align*}
The asymptotic variance in \eqref{eq:bias_var_nev_2} can be estimated by $\hat{\mathcal{V}}_{g,t}(z) / (nh)$, which leads to the following standard error of $\hat{\DR}_{g,t}(z)$: $\hat{{\SE}}_{g,t}(z) \coloneqq \sqrt{ \hat{\mathcal{V}}_{g,t}(z) / (nh) }$.

\subsubsection{Critical value} \label{subsec:UCB}

We consider two methods for constructing the critical value: 
(i) an analytical method and (ii) weighted bootstrapping.

\paragraph{Analytical method.}

We will show in Section \ref{subsec:asymptotic} that the same type of distributional approximation result as in \cite{lee2017doubly} holds even in our situation, which is based on the approximation of suprema of empirical processes by suprema of Gaussian processes \citep{chernozhukov2014gaussian} and several approximation results for suprema of Gaussian processes (\citealp{piterbarg1996asymptotic}; \citealp{ghosal2000testing}).
This in turn implies that we can compute the critical value by the same analytical method as in \cite{lee2017doubly}.
\phantomsection\label{page:AE-3-8}\Copy{AE-3-8}{
    Specifically, we consider the following critical value for the two-sided symmetric uniform confidence band
    \begin{align} \label{eq:analytical-critical}
	   \hat c(1 - \alpha)
	   \coloneqq \left( a_n^2 - 2 \log \left( \log \left( \frac{1}{\sqrt{1 - \alpha}} \right) \right) \right)^{1/2},
    \end{align}
    where
    \begin{align} \label{eq:lambda}
	   a_n^2
	   \coloneqq 2 \log \left( \frac{ b - a }{ h } \right) + 2 \log \frac{ \sqrt{\lambda} }{ 2 \pi },
	   \qquad
	   \lambda
	   \coloneqq - \frac{ \int K(u) K^{(2)}(u) du }{ \int K^2(u) du }, 
    \end{align}
    with $b - a$ corresponding to the length of the interval $\mathcal{I} = [a, b]$.
    Note that the common bandwidth condition in Assumption \ref{as:common}(i) ensures that $\hat c(1 - \alpha)$ and $a_n^2$ do not depend on $(g, t, z)$.
    The proposed $(1 - \alpha)$ uniform confidence band over $(g,t,z) \in \mathcal{A}$ is $\hat{\mathcal{C}} \coloneqq \{ \hat{\mathcal{C}}_{g,t}(z): (g,t,z) \in \mathcal{A} \}$, where
    \begin{align} \label{eq:analytical_UCB}
	   \hat{\mathcal{C}}_{g,t}(z)
	   \coloneqq 
	   \left[ 
	   \hat{\DR}_{g,t}(z) - \hat c(1 - \alpha) \cdot \hat{{\SE}}_{g,t}(z),
	   \qquad 
	   \hat{\DR}_{g,t}(z) + \hat c(1 - \alpha) \cdot \hat{{\SE}}_{g,t}(z)
	   \right].
    \end{align}
}

\paragraph{Weighted bootstrapping.}

\phantomsection\label{page:AE-17-1}\Copy{AE-17-1}{
    As an alternative to the analytical method, we can consider weighted bootstrap inference. 
    Building on \cite{ma2005robust}, \cite{chen2009efficient}, and \cite{fan2022estimation}, we propose the following algorithm.
}
For each $b = 1, \dots, B$, we generate a set of IID bootstrap weights $\{V_i^{\star,b}\}_{i=1}^n$ independently of $\{ W_i \}_{i=1}^n$, such that $\bE[V_i^{\star,b}] = 1$, $\Var[V_i^{\star,b}] = 1$, and its distribution has sub-exponential tails.
Common choices include a normal random variable with unit mean and unit variance and
\phantomsection\label{page:AE-8-1}\Copy{AE-8-1}{
    \citeauthor{mammen1993bootstrap}'s (\citeyear{mammen1993bootstrap}) wild bootstrap weights such that $\mathbb{P}(V_i^{\star,b} = 2 - c_v) = c_v / \sqrt{5}$ and $\mathbb{P}(V_i^{\star,b} = 1 + c_v) = 1 - c_v / \sqrt{5}$ with $c_v = (\sqrt{5} + 1) / 2$.
}
\phantomsection\label{page:AE-3-9}\Copy{AE-3-9}{
    In each bootstrap repetition, we compute the bootstrapped LQR estimator
    \begin{align*}
	   \hat{\DR}^{\star,b}_{g,t}(z) \coloneqq \hat \mu_{\hat A}^{\star,b}(z) \coloneqq \bm{e}_0^\top \hat{\bm{\beta}}^{\star,b}_{\hat A}(z),
    	\quad 
	   \hat{\bm{\beta}}_{\hat A}^{\star,b}(z) & \coloneqq \argmin_{\bm{b} \in \mathbb{R}^{3}} \sum_{i = 1}^n V_i^{\star,b} \left( \hat A_{i,g,t} - \bm{r}_2(Z_i - z)^\top \bm{b} \right)^2 K \left( \frac{Z_i - z}{h} \right),
    \end{align*}
    and the supremum of the bootstrap counterpart of the studentized statistic
    \begin{align} \label{eq:Mb}
	   M^{\star,b}
	   \coloneqq \sup_{(g,t,z) \in \mathcal{A}} \frac{ \left| \hat{\DR}^{\star,b}_{g,t}(z) - \hat{\DR}_{g,t}(z)\right|}{\hat{\SE}_{g,t}(z)}.
    \end{align}
    Here, we should use the same $\hat A_{i,g,t}$, the same bandwidth $h$, and the same kernel function $K$ as the original estimator $\hat{\DR}_{g,t}(z)$.
    Let $ \tilde c(1 - \alpha)$ be \text{the empirical $(1 - \alpha)$ quantile of $\{ M^{\star,b} \}_{b = 1}^B$.}
    Note that $\tilde c(1 - \alpha)$ and $M^{\star,b}$ do not depend on $(g, t, z)$ due to the supremum taken in the definition of $M^{\star,b}$.
    The $(1 - \alpha)$ uniform confidence band over $(g,t,z) \in \mathcal{A}$ is $\tilde{\mathcal{C}} \coloneqq \{ \tilde{\mathcal{C}}_{g,t}(z): (g,t,z) \in \mathcal{A} \}$, where
    \begin{align}\label{eq:mb_UCB}
	   \tilde{\mathcal{C}}_{g,t}(z) 
	   \coloneqq 
	   \left[ \hat{\DR}_{g,t} (z) - \tilde{c}(1 - \alpha) \cdot \hat{{\SE}}_{g,t}(z), 
	   \qquad 
	   \hat{\DR}_{g,t} (z) + \tilde{c}(1 - \alpha) \cdot \hat{{\SE}}_{g,t}(z) \right].
    \end{align}
}

\begin{remark} \label{remark:pointwise}
    If our goal is to construct a pointwise confidence interval of $\CATT_{g,t}(z)$ for a given $(g, t, z)$, we can use standard Wald-type inference.
    For the analytical method, the pointwise confidence interval can be obtained by replacing $\hat c(1 - \alpha)$ in \eqref{eq:analytical_UCB} with the $(1 - \alpha/2)$ quantile of the standard normal distribution, which is theoretically justified by the asymptotic linearity in Section \ref{subsec:asymptotic} and the Lyapunov central limit theorem.
    For weighted bootstrapping, we can get the pointwise critical value simply by not taking the supremum in \eqref{eq:Mb}, and the validity of the resulting pointwise confidence interval can be shown in the same way as in Theorem \ref{thm:mb_valid}.
\end{remark}

\begin{remark} \label{remark:refinement} 
    We compare the analytical method with weighted bootstrapping by Monte Carlo experiments and find that 
    \phantomsection\label{page:AE-18-1}\Copy{AE-18-1}{
        bootstrapping generally leads to better uniform coverage, suggesting that bootstrap inference may have some asymptotic refinements.
    }
    See Appendix \ref{sec:simulation} for the simulation experiments.
\end{remark}

\subsubsection{Theoretical justifications} \label{subsec:asymptotic}

This subsection presents theoretical justifications for the proposed methods.

We impose the following set of mild regularity conditions.
Hereafter, to simplify the exposition, we often write ``for $(g, t)$'' to refer to a generic $(g, t)$ such that $(g, t, z) \in \mathcal{A}$ for $z \in \mathcal{I}$.

\begin{assumption} \label{as:asymptotic1}
	\hfill 
	\begin{enumerate}[(i)]
		\item The distribution of $Z$ has a Lebesgue density $f_Z$ that is five-times continuously differentiable and bounded above and away from zero on $\mathcal{I}$.
		\item The kernel function $K$ is compactly supported, symmetric around zero, and six-times differentiable.
		\item The bandwidth $h$ satisfies $C n^{- 1/2 + \varepsilon} \le h \le C n^{ - 1/9 - \varepsilon }$ for some positive constants $C$ and $\varepsilon$.
		\item Let $Q$ be a generic notation for $A_{g,t}$, $E_{g,t}$, $F_{g,t}$, $G_g$, or $R_{g,t}$.
		Denote $\mu_Q(z) = \bE[ Q \mid Z = z ]$.
		\begin{enumerate}[(a)]
			\item For all $(g, t)$, $\mu_Q$ is five-times continuously differentiable on $\mathcal{I}$.
			\item $\displaystyle \sup_{(g,t,z) \in \mathcal{A}} \bE[ Q^4 \mid Z = z ] < \infty$.
		\end{enumerate}
		\item The first-stage parametric estimators are $1 / \sqrt{n}$-consistent so that, for all $(g, t)$, 
		\begin{align*}
			& \max_{1 \le i \le n} | \hat R_{i,g,t} - R_{i,g,t} | = O_{\bP} \left( \frac{1}{\sqrt{n}} \right),
			& & \max_{1 \le i \le n} | \hat m_{i,g,t} - m_{i,g,t} | = O_{\bP} \left( \frac{1}{\sqrt{n}} \right), \\
			& \max_{1 \le i \le n} | \hat E_{i,g,t} - E_{i,g,t} | = O_{\bP} \left( \frac{1}{\sqrt{n}} \right),
			& & \max_{1 \le i \le n} | \hat F_{i,g,t} - F_{i,g,t} | = O_{\bP} \left( \frac{1}{\sqrt{n}} \right) .
		\end{align*}
		\item For all $(g, t)$, the conditional variance $\sigma_B^2$ is continuously differentiable on $\mathcal{I}$.
	\end{enumerate}
\end{assumption}

Among these assumptions, the undersmoothing condition on the common bandwidth $h$ in Assumption \ref{as:asymptotic1}(iii) is particularly important for our analysis.
This assumption ensures that the asymptotic bias $h^4 \mathcal{B}_{g,t}(z)$ is asymptotically negligible when constructing the uniform confidence band.
As such, the assumption rules out, for example, computing the LQR estimator $\hat{\DR}_{g,t}(z)$ using the IMSE-optimal bandwidth for the LQR estimation, which is of order $O(n^{-1 /9})$.
To fulfill the assumption, we propose using the IMSE-optimal bandwidth for the LLR estimation, not for the LQR estimation, based on the insight of the simple RBC.
See Section \ref{subsec:bandwidth} for details.

\phantomsection\label{page:AE-21-2}\Copy{AE-21-2}{
    Note that our undersmoothing condition accommodates any undersmoothing bandwidth, not limited to our proposal based on the simple RBC approach, as long as the rate required in Assumption \ref{as:asymptotic1}(iii) is satisfied.
    For example, as in \cite{lee2017doubly} and \cite{fan2022estimation}, we can consider a rule-of-thumb adjustment that achieves undersmoothing by shrinking the IMSE-optimal bandwidth for the LQR estimation obtained from the plug-in or cross-validation method by $n^{-\varepsilon}$ for some appropriate $\varepsilon > 0$.
    However, we prefer the simple RBC approach to the rule-of-thumb adjustment to follow the recent literature that demonstrates the desirable performance of the RBC approach.
}

To facilitate our theoretical investigations, Assumption \ref{as:asymptotic1} contains several high-level conditions that can actually be replaced by less restrictive but more complicated conditions.
For example, the compact support condition in Assumption \ref{as:asymptotic1}(ii) can be replaced by other conditions that guarantee the existence of technical moments related to the kernel function at the expense of complicated proofs.
This in turn implies that commonly used kernel functions (e.g., the Gaussian kernel) should be permissible for use with our analysis.

The next theorem formalizes the asymptotic linear representation and the asymptotic bias and variance formulas described in Section \ref{subsec:biascorrection}.
Hereafter, with an abuse of notation, we often write $o_{\bP}( \sqrt{ (\log n) / (nh) } )$ to indicate $O_{\bP}( \sqrt{ (\log n) / (n^{1 + \varepsilon} h) } )$ for some $\varepsilon > 0$ to simplify notation.
In addition, 
\phantomsection\label{page:AE-21-3}\Copy{AE-21-3}{
    the theorems presented below treat the bandwidth $h$ as a deterministic sequence, as with many prior studies in the kernel smoothing literature.
    While investigating the effects of using a data-driven stochastic bandwidth would be interesting, it is beyond the scope of this paper to develop the theory to handle stochastic bandwidths. 
}

\begin{theorem} \label{thm:bias_variance}
    Suppose that Assumptions \ref{as:staggered}--\ref{as:asymptotic1} hold.
    When $n \to \infty$, we have
    \begin{align*}
        & \hat{\DR}_{g,t}(z) - {\DR}_{g,t}(z) \\
	& = \frac{1}{f_Z(z)} \frac{1}{nh} \sum_{i=1}^n  \Psi_{i,h} (B_{i,g,t} - \mu_B(Z_i)) K\left( \frac{Z_i - z}{h} \right) + \mathrm{Bias}\left[ \hat{\DR}_{g,t}(z) \; \middle| \; \bm{Z} \right] + o_{\bP}\left( h^4 \right) + o_{\bP}\left( \sqrt{\frac{\log n}{nh}} \right),
    \end{align*}
    and
    \begin{align*}
        \mathrm{Bias}\left[ \hat{\DR}_{g,t}(z) \; \middle| \; \bm{Z} \right]
	= h^4 \mathcal{B}_{g,t}(z) + o_{\bP} \left( h^4\right) 
	= o_{\bP} \left( \frac{1}{nh} \right),
        \quad 
        \Var\left[ \hat{\DR}_{g,t}(z) \; \middle| \; \bm{Z} \right]
	= \frac{1}{nh} \mathcal{V}_{g,t}(z) + o_{\bP} \left( \frac{1}{nh} \right),
    \end{align*}
    where $\mathcal{B}_{g,t}(z)$ and $\mathcal{V}_{g,t}(z)$ are defined in \eqref{eq:bias_nev_2} and \eqref{eq:var_nev_2} and the convergence rates of the remainder terms hold uniformly in $(g, t, z) \in \mathcal{A}$. 
\end{theorem}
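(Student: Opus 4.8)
The plan is to reduce the three-stage estimator to a single-stage LPR of a feasible regressor onto $z$, and then apply the standard fixed-design asymptotics for local polynomial regression (as in \citealp{fan1996local}) uniformly in $(g,t,z)$. I would proceed in four steps.

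\textbf{Step 1: Linearizing the effect of the first-stage parametric estimation.} First I would show that replacing $(\hat\beta_{g,t,\delta}^{\nev},\hat\pi_g)$ by their probability limits $(\beta_{g,t,\delta}^{\nev*},\pi_g^*)$ in $\hat A_{i,g,t,\delta}^{\nev}$ introduces only an $o_{\bP}(h^{\zeta(p)}) + o_{\bP}(\sqrt{\log n/(nh)})$ error. The key inputs are the $1/\sqrt n$-consistency in Assumption \ref{as:asymptotic1}(v) together with the smoothness of the parametric maps in their arguments, which let me bound the resulting perturbation of the LPR coefficient vector uniformly: $\max_i|\hat A_{i}^{\nev} - \tilde A_i^{\nev}| = O_{\bP}(n^{-1/2})$ where $\tilde A_i^{\nev}$ uses the true first-stage parameters; a sandwich argument on $\hat{\bm\beta}_{\hat A}(z)$ using the invertibility of the (rescaled) design matrix $\bm{e}_0^\top (\sum_i \bm r_p \bm r_p^\top K)^{-1}$ uniformly in $z\in\mathcal I$ (standard under Assumption \ref{as:asymptotic1}(i)(ii)) then shows the contribution is $O_{\bP}(n^{-1/2}) = o_{\bP}(\sqrt{\log n/(nh)})$ given the lower bound on $h$ in (iii). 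This is the ``convergence rates of the first-stage estimators are faster'' remark made in Section \ref{subsec:biascorrection}.

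\textbf{Step 2: Linearizing the effect of the second-stage nonparametric nuisance estimation.} This is the crux. I would write
\[
\frac{G_{i,g}}{\hat\mu_G(z)} - \frac{\hat R_{i,g}}{\hat\mu_{\hat R}(z)}
= \frac{G_{i,g}}{\mu_G(z)} - \frac{R_{i,g}}{\mu_R(z)}
+ \Big(\text{linear terms in } \hat\mu_G(z)-\mu_G(z),\ \hat\mu_{\hat R}(z)-\mu_R(z)\Big) + (\text{remainder}),
\]
expanding $1/\hat\mu$ around $1/\mu$ to first order. Since $\hat\mu_G$ and $\hat\mu_{\hat R}$ use the \emph{same} polynomial order $p$, bandwidth $h$, and kernel $K$ as the third-stage regression, and since they satisfy $\sup_{z\in\mathcal I}|\hat\mu_\bullet(z)-\mu_\bullet(z)| = O_{\bP}(\sqrt{\log n/(nh)} + h^{\zeta(p)})$ with uniform-in-$z$ linear representations themselves, the first-order correction terms combine with $\hat\mu_{\hat A}(z)$ to produce precisely the $B_{i,g,t,\delta}^{\nev}$ regressor: the $\mu_E(z)/\mu_R^2(z)\,R_{i,g}$ and $\mu_F(z)/\mu_G^2(z)\,G_{i,g}$ adjustments arise from pushing the $(\hat\mu_{\hat R}-\mu_R)$ and $(\hat\mu_G-\mu_G)$ deviations through their own LPR linear representations and recognizing that the third-stage LPR of $A$ plus these equals the LPR of $B$, up to $o_{\bP}$ terms. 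The cross/quadratic remainders are $O_{\bP}(\log n/(nh)) = o_{\bP}(\sqrt{\log n/(nh)})$ by (iii). After this step, $\hat{\DR}_{p,g,t,\delta}^{\nev}(z)$ equals the $p$-th order LPR of the (infeasible) regressor $B_{i,g,t,\delta}^{\nev}$ onto $Z_i$ at $z$, plus negligible terms.

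\textbf{Step 3: Standard LPR asymptotics for the $B$-regression.} Now I invoke the classical local polynomial expansion: conditionally on $\bm Z$, the LPR estimator of $\mu_B(z)$ has the stochastic expansion $\frac{1}{f_Z(z)}\frac{1}{nh}\sum_i \Psi_{i,p,h}(B_i - \mu_B(Z_i))K((Z_i-z)/h)$ for its centered-stochastic part (with $\Psi_{i,p,h}$ the effective equivalent-kernel weight for $p\in\{1,2\}$, which is $1$ for $p=1$ and the stated ratio for $p=2$), plus the deterministic conditional bias. The bias computation is the usual Taylor expansion of $\mu_B$: for $p=1$ the leading term is $\frac12\mu_B^{(2)}(z)I_{2,K}h^2$; for $p=2$ the order-$h^4$ term is the stated combination involving $\mu_B^{(3)}, \mu_B^{(4)}$, $f_Z, f_Z^{(1)}$ and the kernel moment ratio — this uses $f_Z\in C^5$, $\mu_B\in C^5$ (which follows from Assumption \ref{as:asymptotic1}(iv)(a) applied to $A,E,F,G,R$ and smoothness of $\mu_E/\mu_R^2$, $\mu_F/\mu_G^2$), and compact symmetric $K$. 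The conditional variance is $\frac{1}{nh}\mathcal V_{p,g,t,\delta}^{\nev}(z) + o_{\bP}(1/(nh))$ by the usual second-moment calculation with $\sigma_B^2(z) = \Var[B\mid Z=z]$, using the fourth-moment bound in (iv)(b). The ``$\mathrm{Bias} = o_{\bP}(1/(nh))$'' claim is just $h^{\zeta(p)} = o(1/(nh)) = o(n^{-2\zeta(p)/(2\zeta(p)+1)})$, i.e.\ $nh^{\zeta(p)+1}\to 0$, which follows directly from the upper bound $h \le Cn^{-1/(2\zeta(p)+1)-\varepsilon}$ in (iii).

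\textbf{Step 4: Uniformity in $(g,t,z)\in\mathcal A_{\delta}^{\nev}$.} Since $g,t$ range over a \emph{finite} set, uniformity over $(g,t)$ is automatic by taking maxima over finitely many indices. Uniformity over $z\in\mathcal I$ requires that all the $O_{\bP}$ and $o_{\bP}$ statements hold uniformly in $z$; this is where I would lean on uniform-in-bandwidth / uniform-in-$z$ maximal inequalities for local polynomial estimators — e.g.\ the empirical-process bounds of \citealp{chernozhukov2014gaussian} or classical strong approximation results — to control $\sup_{z\in\mathcal I}$ of the various LPR error terms by $O_{\bP}(\sqrt{\log n/(nh)})$, which is exactly the remainder rate appearing in the theorem. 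The uniform invertibility of the design matrix over $z\in\mathcal I$ (from $f_Z$ bounded away from zero on $\mathcal I$ and compact $K$) underpins every sandwich bound.

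\textbf{Main obstacle.} The delicate step is Step 2: bookkeeping the interaction between the second-stage LPR estimators $\hat\mu_G,\hat\mu_{\hat R}$ (which are themselves noisy and biased, and are plugged in \emph{inside} the regressor that is then LPR-smoothed again) and the third-stage LPR. One must verify that because all three smoothers share the same $(p,h,K)$, the first-order corrections collapse exactly into the single clean regressor $B_{i,g,t,\delta}^{\nev}$ with the stated $\mu_E/\mu_R^2$ and $\mu_F/\mu_G^2$ weights — rather than leaving behind an extra bias or variance inflation term — and that all neglected pieces are genuinely $o_{\bP}(h^{\zeta(p)}) + o_{\bP}(\sqrt{\log n/(nh)})$ \emph{uniformly} in $z$. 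Getting the constants in the $p=2$ bias right (the $(I_{4,K}^2 - I_{2,K}I_{6,K})/(I_{4,K}-I_{2,K}^2)$ factor) and confirming the $\Psi_{i,2,h}$ weight matches the equivalent-kernel formula is the other source of tedium, though it is routine once the reduction in Steps 1--2 is in place.
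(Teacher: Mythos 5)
Your proposal follows essentially the same route as the paper's proof: the same three-part decomposition into the infeasible LPR stochastic term, the smoothing bias, and the correction coming from the first- and second-stage estimation; the same recombination of the $\hat\mu_{\hat R}$ and $\hat\mu_G$ linearizations into the single regressor $B_{i,g,t,\delta}^{\nev}$ (the paper does this through an exact algebraic identity for $\hat A_{i,g,t,\delta}^{\nev} - A_{i,g,t,\delta}^{\nev}$ rather than a first-order expansion of $1/\hat\mu$ around $1/\mu$, but the resulting terms are identical); and the same reliance on VC-type empirical-process maximal inequalities (Corollary 5.1 of CCK14b) to make every remainder uniform in $z\in\mathcal{I}$, with uniformity over the finitely many $(g,t)$ being automatic.

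One concrete slip: in Step 3 you assert that $\mathrm{Bias}=o_{\bP}(1/(nh))$ follows because $nh^{\zeta(p)+1}\to 0$ under $h\le Cn^{-1/(2\zeta(p)+1)-\varepsilon}$. That implication is false: the upper bound only gives $nh^{\zeta(p)+1}\lesssim n^{\zeta(p)/(2\zeta(p)+1)-(\zeta(p)+1)\varepsilon}$, which diverges for small $\varepsilon$, so $h^{\zeta(p)}$ is not $o(1/(nh))$ at the upper end of the admissible bandwidth range. What the undersmoothing condition actually delivers is $nh^{2\zeta(p)+1}\to 0$, i.e.\ $h^{\zeta(p)}=o\bigl((nh)^{-1/2}\bigr)$ --- bias negligible relative to the standard deviation, which is what is needed for the UCB and is presumably what the displayed $o_{\bP}(1/(nh))$ is intended to convey. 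The paper's own proof only establishes the $h^{\zeta(p)}\mathcal{B}_{p,g,t,\delta}^{\nev}(z)+o_{\bP}(h^{\zeta(p)})$ representation, so you should prove the (correct) $o_{\bP}((nh)^{-1/2})$ statement rather than claim the stronger rate follows ``directly'' from Assumption \ref{as:asymptotic1}(iii). The rest of the argument, including the identification of $\Psi_{i,2,h}$ with the equivalent kernel and the $p=2$ bias constant, matches the paper's computations.
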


\smallskip 

Next, we present a theoretical justification for the uniform confidence band constructed with the analytical method in Section \ref{subsec:UCB}.
For this purpose, we consider a uniformly valid distributional approximation result for the studentized statistic.
To proceed, we rewrite the standard error as $\hat{\SE}_{g,t}(z) = \hat{\mathcal{S}}_{g,t}(z) / \sqrt{nh}$, where we denote $\hat{\mathcal{S}}_{g,t}(z) \coloneqq \sqrt{\hat{\mathcal{V}}_{g,t}(z)}$ as the estimator of $\mathcal{S}_{g,t} \coloneq \sqrt{\mathcal{V}_{g,t}(z)}$.

We add the following regularity conditions, which is essentially the same as the conditions in Assumption 1 of \cite{lee2017doubly}.
Letting $U_{i,g,t} \coloneqq B_{i,g,t} - \mu_B(Z_i)$ denote the population counterpart of $\hat U_{i,g,t}$, we write the standard deviation of the $\sqrt{nh}$ times leading term in Theorem \ref{thm:bias_variance} as $\tilde{\mathcal{S}}_{g,t}(z) \coloneqq h^{-1/2} f_Z^{-1}(z) \sqrt{ \bE [ \Psi_{i,h}^2 U_{i,g,t}^2 K^2 ( (Z_i - z) / h ) ] }.$
\phantomsection\label{page:AE-20-1}\Copy{AE-20-1}{
    This quantity depends on $h$ and also on $n$ through $h$, but we suppress the dependence to simplify notation.
}

\begin{assumption} \label{as:asymptotic2}
	\hfill 
	\begin{enumerate}[(i)]
            \item $\inf_{n \in \mathbb{N}} \inf_{(g,t,z) \in \mathcal{A}} \tilde{\mathcal{S}}_{g,t}(z) > 0$ and $\tilde{\mathcal{S}}_{g,t}(z)$ is continuous in $z \in \mathcal{I}$ for any $(g, t)$ and $n \in \mathbb{N}$.
		\item $\sup_{(g,t,z) \in \mathcal{A}} \left| \hat{\mathcal{S}}_{g,t}(z) - \tilde{\mathcal{S}}_{g,t}(z) \right| = O_{\bP}(n^{-c})$ for some positive constant $c$.
		\item $\sup_{(g,t,z) \in \mathcal{A}} \bE[U_{i,g,t}^4 \mid Z_i=z] < \infty$.
		\item For all $(g, t)$, $\bE[U_{i,g,t}^2 \mid Z_i=z]f(z)$ is Lipschitz continuous in $z\in\mathcal{I}$.
	\end{enumerate}
\end{assumption}

The next theorem gives a uniformly valid distributional approximation result.

\begin{theorem} \label{thm:approx_us_distribution}
    Suppose that Assumptions \ref{as:staggered}--\ref{as:asymptotic2} hold.
    When $n \to \infty$, there exists $\kappa > 0$ such that, uniformly in $s$, on any finite interval:
    \begin{align*}
        \mathbb{P} \left( a_n \left[ \sup_{(g,t,z) \in \mathcal{A}} \left| \frac{ \hat{ \DR }_{ g,t}(z)  - { \DR }_{ g,t }(z) }{ \hat{{\SE}}_{ g,t }(z) } \right| - a_n \right] < s \right) = \exp\left( -2e^{-s-s^2/(2a_n^2)} \right) + O(n^{-\kappa}),
    \end{align*}
    where $a_n$ and $\lambda$ are defined in \eqref{eq:lambda}, and $a_n$ is the largest solution of $(b-a) (2\pi h)^{-1} \lambda^{1/2} \exp ( - a_n^2/2 ) = 1.$
\end{theorem}

This theorem in turn justifies the use of the analytical critical value $\hat c(1 - \alpha)$ defined in \eqref{eq:analytical-critical}.
To see this, denoting $s_n \coloneqq a_n + s / a_n$, observe that
\begin{align*}
    \mathbb{P} \left( \sup_{ (g,t,z) \in \mathcal{A} } \left| \frac{ \hat{ \DR }_{ g,t }(z)  - { \DR }_{ g,t }(z) }{ \hat{{\SE}}_{ g,t }(z) } \right| < s_n \right)
    \approx \exp\left( -2e^{(a_n^2 - s_n^2)/2} \right).
\end{align*}
In words, we can approximate the distribution function of the supremum of the studentized statistic by the right-hand side.
Then, it is easy to see that the critical value $\hat c(1 - \alpha)$ defined in \eqref{eq:analytical-critical} is the approximated $1 - \alpha$ quantile such that $1 - \alpha = \exp \left( - 2e^{(a_n^2 - \hat c(1 - \alpha)^2)/2} \right)$.
As a result, the $(1 - \alpha)$ uniform confidence band $\hat{\mathcal{C}}$ defined in \eqref{eq:analytical_UCB} has the desired coverage:
\begin{align*}
    \bP \left( {\CATT}_{g,t}(z) \in \hat{\mathcal{C}}_{g,t}(z) \; \text{for all $(g,t,z) \in \mathcal{A}$} \right)
    = \bP \left( \sup_{ (g,t,z) \in \mathcal{A} } \left| \frac{ \hat{ \DR }_{ g,t }(z)  - { \DR }_{ g,t }(z) }{ \hat{{\SE}}_{ g,t }(z) } \right| \le \hat c(1 - \alpha) \right)
    \approx 1 - \alpha.
\end{align*}

\smallskip 

Lastly, we study several theoretical properties for the uniform confidence band constructed with weighted bootstrapping in Section \ref{subsec:UCB}.
For this purpose, we make the following assumption on the bootstrap weight.
Hereafter, to simplify notation, we suppress the superscript $b$ indicating the bootstrap repetition.

\begin{assumption} \label{as:mb} 
	The random variable $\{V_i^\star\}_{i=1}^n$ is independent of $\{W_i\}_{i=1}^n$ and satisfies $\bE[V_i^\star] = 1$, $\Var[V_i^\star] = 1$, and $\bP(|V_i^\star| > x) \le C_1 \exp(-C_2x)$ for every $x$ and some constants $C_1$ and $C_2$.
\end{assumption}

The following theorem gives an asymptotic linear representation for the bootstrap estimator $\hat{\DR}_{g,t}^{\star}(z)$.
\phantomsection\label{page:AE-17-3}\Copy{AE-17-3}{
    It implies that our weighted bootstrap procedure for CATT is (first-order) asymptotically equivalent to the multiplier bootstrap procedure considered in previous studies (e.g., \citealp{callaway2021difference}).
}
The proof is almost the same as that of Theorem \ref{thm:bias_variance}, and is thus omitted.

\begin{theorem} \label{thm:mb_linearize}
	 Suppose that Assumptions \ref{as:staggered}--\ref{as:asymptotic1} and \ref{as:mb} hold.
	When $n \to \infty$,  we have
	\begin{align*}
		\hat{\DR}_{g,t}^{\star}(z) - \hat{\DR}_{g,t}(z)
		= \frac{1}{f_Z(z)} \frac{1}{nh} \sum_{i=1}^n (V_i^{\star} - 1) \Psi_{i,h} \big( B_{i,g,t} - \mu_B(Z_i) \big) K\left( \frac{Z_i - z}{h} \right) +  o_{\bP}\left( \sqrt{\frac{\log n}{nh}} \right),
	\end{align*}
	where the convergence rate of the remainder term holds uniformly in $(g, t, z) \in \mathcal{A}$.
\end{theorem}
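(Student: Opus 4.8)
The plan is to mimic the proof of Theorem~\ref{thm:bias_variance}, carrying the bootstrap weights $\{V_i^\star\}$ along through every step, and then to discard the bias term because the multiplier bootstrap only needs to reproduce the stochastic part of the estimator. The starting point is to write the bootstrap LPR estimator $\hat{\DR}_{p,g,t,\delta}^{\nev,\star}(z) = \bm{e}_0^\top \hat{\bm{\beta}}_{\hat A}^{\star}(z)$ explicitly in closed form. With the weighted design matrix $\bm{S}_n^\star(z) \coloneqq \frac{1}{nh}\sum_i V_i^\star \bm{r}_p(u_{i,h})\bm{r}_p(u_{i,h})^\top K(u_{i,h})$ and the weighted moment vector $\bm{T}_n^\star(z) \coloneqq \frac{1}{nh}\sum_i V_i^\star \bm{r}_p(u_{i,h}) \hat A_{i,g,t,\delta}^{\nev} K(u_{i,h})$, we have $\hat{\bm{\beta}}_{\hat A}^{\star}(z) = (\bm{D}_h \bm{S}_n^\star(z) \bm{D}_h)^{-1} \bm{D}_h \bm{T}_n^\star(z)$ up to the usual scaling matrix $\bm{D}_h = \diag(1,h,\dots,h^p)$. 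Since $\bE[V_i^\star]=1$ and the $V_i^\star$ are independent of $\{W_i\}$, a law of large numbers (uniform in $z$, via CCK14-type maximal inequalities, using Assumption~\ref{as:mb} for the exponential tails and Assumption~\ref{as:asymptotic1}(iv)(b) for the moments) gives $\bm{S}_n^\star(z) \to f_Z(z)\bm{S}$ uniformly on $\mathcal{I}$, where $\bm{S} \coloneqq (\int u^{j+k}K(u)\,du)_{j,k}$, exactly the same deterministic limit as in the unweighted case. So the leading Jacobian factor is unchanged and produces the $1/f_Z(z)$ and the $\Psi_{i,p,h}$ weights in the stated representation, these being determined purely by $\bm{S}^{-1}$ and the definition of $\zeta(p)$.

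Next I would substitute $\hat A_{i,g,t,\delta}^{\nev} = A_{i,g,t,\delta}^{\nev} + (\hat A_{i,g,t,\delta}^{\nev} - A_{i,g,t,\delta}^{\nev})$ and then, analogously to the proof of Theorem~\ref{thm:bias_variance}, Taylor-expand the second-stage plug-ins $\hat\mu_G(z)$, $\hat\mu_{\hat R}(z)$ around their population limits; this is where the $\mu_E(z)/\mu_R^2(z) \cdot R_{i,g}$ and $\mu_F(z)/\mu_G^2(z) \cdot G_{i,g}$ correction terms that assemble $B_{i,g,t,\delta}^{\nev}$ come from. The key observation that makes Theorem~\ref{thm:mb_linearize} hold with the same $B$ is that the second-stage LPR estimators $\hat\mu_G, \hat\mu_{\hat R}$ appearing inside $\hat A_{i,g,t,\delta}^{\nev}$ are \emph{not} re-weighted by $V_i^\star$ --- only the final third-stage regression is bootstrapped --- so the correction terms enter through the same channel as before, and the weights $V_i^\star$ multiply the already-assembled summand $\Psi_{i,p,h}(B_{i,g,t,\delta}^{\nev} - \mu_B(Z_i))K(u_{i,h})$. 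Taking the difference $\hat{\DR}^{\nev,\star} - \hat{\DR}^{\nev}$ then cancels the common (non-bootstrapped) pieces --- in particular the entire conditional bias $h^{\zeta(p)}\mathcal{B}_{p,g,t,\delta}^{\nev}(z)$, since it does not involve $V_i^\star$ --- leaving precisely $\frac{1}{f_Z(z)}\frac{1}{nh}\sum_i (V_i^\star - 1)\Psi_{i,p,h}(B_{i,g,t,\delta}^{\nev} - \mu_B(Z_i))K(u_{i,h})$ plus remainders. The first-stage parametric estimation error again contributes only $o_{\bP}$ terms because, by Assumption~\ref{as:asymptotic1}(v), $\max_i|\hat R_{i,g}-R_{i,g}|$ and $\max_i|\hat m_{i,g,t,\delta}^{\nev}-m_{i,g,t,\delta}^{\nev}|$ are $O_{\bP}(n^{-1/2})$, which multiplied by $\frac{1}{nh}\sum_i |V_i^\star| K(u_{i,h}) = O_{\bP}(1)$ (using $\bE|V_i^\star| < \infty$) is $o_{\bP}(\sqrt{\log n/(nh)})$ under the bandwidth range in Assumption~\ref{as:asymptotic1}(iii).

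The main obstacle is controlling all the error terms \emph{uniformly in $z \in \mathcal{I}$} (and in $(g,t)$) in the presence of the extra random multipliers: the naive moment arguments that suffice pointwise must be upgraded to uniform statements, and for this I would invoke the empirical-process maximal inequalities of CCK14a/CCK14b applied to the function classes indexed by $z$ (and by the finite set of $(g,t)$ pairs), treating $V_i^\star$ as an additional coordinate of the data. One must verify that the relevant classes --- e.g. $\{(w,v)\mapsto v\,\bm{r}_p((\zeta-z)/h)K((\zeta-z)/h) : z\in\mathcal{I}\}$ and products of these with the residual terms --- remain VC-type with the appropriate envelope, which they do because $K$ is compactly supported and six-times differentiable (Assumption~\ref{as:asymptotic1}(ii)) and $V_i^\star$ has exponential tails (Assumption~\ref{as:mb}); the exponential tail is exactly what keeps the envelope integrable enough for the CCK14 bounds to yield the $\sqrt{\log n/(nh)}$ rate rather than something worse. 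Since the structure is otherwise identical to Theorem~\ref{thm:bias_variance}, the authors' remark that the proof is ``almost the same'' is justified, and the only genuinely new bookkeeping is the insertion of $V_i^\star$ into the envelope and the use of $\bE[V_i^\star]=1$, $\Var[V_i^\star]=1$ to identify the centered summand $(V_i^\star-1)(\cdot)$.
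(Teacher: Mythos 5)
Your overall architecture---rerun the decomposition from Theorem \ref{thm:bias_variance} with the multipliers inserted, control everything uniformly via CCK14-type maximal inequalities over VC-type classes treating $V_i^\star$ as an extra coordinate, and note that the conditional bias cancels in the difference---is the right one and matches what the paper intends by ``almost the same as that of Theorem \ref{thm:bias_variance}.'' The problem is your ``key observation.'' You claim that because $\hat\mu_G(z)$ and $\hat\mu_{\hat R}(z)$ inside $\hat A_{i,g,t,\delta}^{\nev}$ are \emph{not} re-weighted, the correction terms assembling $B_{i,g,t,\delta}^{\nev}$ ``enter through the same channel as before'' and the weights multiply an already-assembled summand $\Psi_{i,p,h}(B_{i,g,t,\delta}^{\nev}-\mu_B(Z_i))K_{i,h}$. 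They cannot: in the proof of Theorem \ref{thm:bias_variance} the corrections $\frac{\mu_E(z)}{\mu_R^2(z)}R_{i,g}-\frac{\mu_F(z)}{\mu_G^2(z)}G_{i,g}$ arise from $(\mathbf{LLR.III\en I})$ and $(\mathbf{LLR.III\en II})$, i.e.\ from the fluctuations $\hat\mu_{\hat R}(z)-\mu_R(z)$ and $\hat\mu_G(z)-\mu_G(z)$, which are global quantities not indexed by $i$ and hence cannot be ``multiplied by $V_i^\star$'' observation-by-observation. If the second stage is held fixed across bootstrap draws, these fluctuations are common to $\hat{\DR}_{p,g,t,\delta}^{\nev,\star}(z)$ and $\hat{\DR}_{p,g,t,\delta}^{\nev}(z)$ and cancel in the difference up to an interaction of order $o_{\bP}(1)\times O_{\bP}(\sqrt{\log n/(nh)})$. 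Concretely, the first-order condition of the original fit gives
\begin{align*}
\hat{\DR}_{p,g,t,\delta}^{\nev,\star}(z)-\hat{\DR}_{p,g,t,\delta}^{\nev}(z)
=\bm{e}_0^\top\big(\tilde{\bm{\Gamma}}^{\star}_{(h,K,p)}\big)^{-1}\frac{1}{nh}\sum_{i=1}^n (V_i^{\star}-1)K_{i,h}\,\bm{r}_p(u_{i,h})\left(\hat A_{i,g,t,\delta}^{\nev}-\bm{r}_p(Z_i-z)^\top\hat{\bm{\beta}}_{\hat A}(z)\right),
\end{align*}
with $\tilde{\bm{\Gamma}}^{\star}_{(h,K,p)}$ the $V^\star$-weighted Gram matrix, and the residual in parentheses equals $A_{i,g,t,\delta}^{\nev}-\mu_A(Z_i)$ plus uniformly $o_{\bP}(1)$ terms. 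So under your reading the linearization comes out with $A_{i}-\mu_A(Z_i)$, not $B_{i}-\mu_B(Z_i)$. The discrepancy, $\frac{1}{f_Z(z)}\frac{1}{nh}\sum_i(V_i^{\star}-1)\Psi_{i,p,h}\big[\frac{\mu_E(z)}{\mu_R^2(z)}(R_{i,g}-\mu_R(Z_i))-\frac{\mu_F(z)}{\mu_G^2(z)}(G_{i,g}-\mu_G(Z_i))\big]K_{i,h}$, is a centered kernel process whose supremum over $\mathcal{I}$ is of \emph{exact} order $\sqrt{\log n/(nh)}$ whenever $\mu_E$ or $\mu_F$ is nonzero; it is not absorbed by the stated remainder, and it would change the limiting conditional variance from $\sigma_B^2(z)$ to $\Var[A^{\nev}_{g,t,\delta}\mid Z=z]$, breaking the match with ${\SE}_{p,g,t,\delta}^{\nev}(z)$ that Theorem \ref{thm:mb_valid} requires.

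To obtain the representation as stated, the bootstrap perturbation must propagate through the second stage: compute $\hat\mu_G^{\star}(z)$ and $\hat\mu_{\hat R}^{\star}(z)$ as $V^\star$-weighted LPRs, form $\hat A_{i,g,t,\delta}^{\nev,\star}$ with them, and run the weighted third-stage regression on those. Then $\hat\mu_{\hat R}^{\star}(z)-\hat\mu_{\hat R}(z)\approx \frac{1}{f_Z(z)}\frac{1}{nh}\sum_i(V_i^{\star}-1)\Psi_{i,p,h}(R_{i,g}-\mu_R(Z_i))K_{i,h}$, the analogues of $(\mathbf{LLR.III\en I})$ and $(\mathbf{LLR.III\en II})$ go through with $(V_i^{\star}-1)$ inserted, and the $B$-corrections appear exactly as claimed. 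You need to fix this step; the remaining ingredients of your write-up (VC-type verification with the multiplier as an extra coordinate, the exponential-tail envelope, negligibility of the first-stage parametric error) are sound.
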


Using this result and Corollary 3.1 of \cite{chernozhukov2014anti}, the next theorem proves the validity of the uniform confidence band constructed with the weighted bootstrap procedure in \eqref{eq:mb_UCB}.

\begin{theorem} \label{thm:mb_valid}
    Suppose that Assumptions \ref{as:staggered}--\ref{as:mb} hold.
    When $n \to \infty$, for some positive constants $c$ and $C$, we have $\bP ( {\CATT}_{g,t}(z) \in \tilde{\mathcal{C}}_{g,t}(z) \; \text{for all $(g,t,z) \in \mathcal{A}$} ) \ge (1 - \alpha) - C n^{-c}$.
\end{theorem}

\begin{remark} \label{remark:uniform}
\phantomsection\label{page:AE-21-1}\Copy{AE-21-1}{
    It is the key for Theorems \ref{thm:approx_us_distribution} and \ref{thm:mb_valid} that the critical values $\hat c(1 - \alpha)$ and $\tilde c(1 - \alpha)$ do not depend on the values of $(g, t, z)$ in $\mathcal{A}$.
    In other words, if we consider critical values that change depending on the values of $(g, t, z)$ or $(g, t)$, there is no guarantee that the uniform confidence bands obtained with those critical values are uniformly valid over $(g, t, z)$ (although they can achieve pointwise validity or uniform validity only over $z$).
    Since $\hat c(1 - \alpha)$ for the analytical method depends on $h$ through $a_n$, the validity of the analytical method relies on the use of a bandwidth that is independent of $(g, t, z)$.
    In contrast, $\tilde c(1 - \alpha)$ for the bootstrap inference does not depend on the values of $(g, t, z)$ even if we estimate $\CATT_{g,t}(z)$ by using bandwidths that vary with the values of $(g, t, z)$, which should be clear from the definition of $M^{\star,b}$ in \eqref{eq:Mb}.
    Nevertheless, we recommend using a common bandwidth that is independent of the values of $(g, t, z)$ for both the analytical and bootstrap methods.
    This is to prevent that, for each $(g, t, z) \in \mathcal{A}$, the point estimate of $\CATT_{g,t}(z)$ used for the analytical method differs from that used for the bootstrap inference, which would complicate the interpretation of the point estimates between the analytical and bootstrap methods.
}
\end{remark}

\begin{remark}
    Theorems \ref{thm:approx_us_distribution} and \ref{thm:mb_valid} show that the approximation errors for the analytical method and weighted bootstrapping decrease polynomially in the cross-sectional size $n$.
    This approximation accuracy is more desirable than the classical uniform confidence band constructed with the following critical value based on the Gumbel distribution: $\check c(1 - \alpha)
        \coloneqq a_n -a_n^{-1}\log\{ \log[(1 - \alpha)^{-1/2}] \}  $,
    where the definition of $a_n$ is the same as in \eqref{eq:lambda}.
    This critical value leads to the uniform confidence band with the approximation error of a logarithmic rate.
    See \cite{chernozhukov2014anti} and \cite{lee2017doubly} for further discussion.
\end{remark}

\begin{remark} \label{remark:chernozhukov2014anti} \phantomsection\label{page:AE-22}\Copy{AE-22}{
    Similar to Theorem 4.2 of \citet{fan2022estimation} in the unconfoundedness setup, Theorem \ref{thm:mb_valid} in this paper is a direct application of Corollary 3.1 and Theorem 3.2 of \citet{chernozhukov2014anti}.
    While \cite{chernozhukov2014anti} present their results through nonparametric density estimation, their results are applicable to general nonparametric problems, as demonstrated in the context of nonparametric regressions in \citet{fan2022estimation} and this paper.
    It would be worthwhile to expand this line of research to infer treatment effect heterogeneity in other research designs.
}
\end{remark}

\subsubsection{Bandwidth selection} \label{subsec:bandwidth}

Recall that the construction of our uniform confidence bands relies on the common bandwidth choice and the undersmoothing condition in Assumptions \ref{as:common}(i) and \ref{as:asymptotic1}(iii).
Here, we propose to use an optimal bandwidth in terms of IMSE for the LLR estimation, instead of that for the LQR estimation.
This proposal is based on the insight that the LQR estimation with the IMSE-optimal bandwidth for the LLR estimation can be interpreted as the simple RBC inference, as discussed in Section \ref{subsec:procedure}.

In line with Assumption \ref{as:common}(i), we consider the common bandwidth $h_\mathtt{LL} \coloneqq \min_{(g, t)} h_\mathtt{LL}(g, t)$, where $h_\mathtt{LL}(g, t)$ is the IMSE-optimal bandwidth for the LLR estimation for each $(g, t)$, defined as follows.
\phantomsection\label{page:AE-23-1}\Copy{AE-23-1}{
    Let $\hat{\DR}_{g,t}^{\mathtt{LL}}(z)$ denote the LLR estimator of the DR estimand, whose definition can be found in Appendix \ref{sec:notation}.
}
In Appendix \ref{subsec:LLR}, we show that its asymptotic bias and variance are
\begin{align} \label{eq:bias_var_nev_1}
	\mathrm{Bias}\left[ \hat{\DR}_{g,t}^{\mathtt{LL}}(z) \;\middle|\; \bm{Z} \right] \approx h^2 \frac{I_{2,K}}{2} \mu_B^{(2)}(z),
	\qquad
	\mathrm{Var}\left[ \hat{\DR}_{g,t}^{\mathtt{LL}}(z) \;\middle|\; \bm{Z} \right] \approx \frac{I_{0,K^2}}{nh} \frac{ \sigma_B^2(z) }{ f_Z(z) }.
\end{align}
Thus, 
it is easy to see that the (infeasible) IMSE-optimal bandwidth for the LLR estimator is given by
\begin{align} \label{eq:IMSE_bw}
	h_{\mathtt{LL}}(g,t)
	\coloneqq \left( \frac{ I_{0,K^2}\int_{z \in \mathcal{I}} f_Z^{-1}(z)\sigma
 _B^2(z)dz }{I_{2,K}^2 \int_{z \in \mathcal{I}} [\mu_B^{(2)}(z)]^2 dz } \right)^{1/5} n^{-1/5}.
\end{align}
Notice that $h_{\mathtt{LL}}(g,t)$ and thus $h_{\mathtt{LL}}$ are of order $n^{-1/5}$, which satisfies Assumption \ref{as:asymptotic1}(iii).

To construct feasible counterparts of $h_{\mathtt{LL}}$ and $h_{\mathtt{LL}}(g,t)$, we need to estimate the unknown quantities.
The estimators for the density $f_Z(z)$ and the conditional variance $\sigma_B^2(z)$ are already available as in Section \ref{subsec:se}.
The second order derivative $\mu_B^{(2)}(z)$ can be estimated by the $p_B$-th order LPR estimation with some $p_B \ge 2$, where the dependent variable is $\hat B_{i,g,t}$ and the regressors are the $p_B$-th order polynomials $\bm{r}_{p_B}(Z_i - z)$.
Let $\hat \mu_{\hat B}^{(2)}(z)$ denote this LPR estimator.
Then, we obtain the feasible bandwidths, say $\hat h_\mathtt{LL} \coloneqq \min_{(g, t)} \hat h_\mathtt{LL}(g, t)$ and $\hat h_{\mathtt{LL}}(g,t)$, as the estimated counterparts.
Note that using $\hat h_\mathtt{LL}$ implicitly assumes that all nonparametric functions to be estimated have the same smoothness over all $(g, t, z) \in \mathcal{A}$, which may be restrictive for the same reason as in Remark \ref{rem:secondthirdbandwidth}.

Since our undersmoothing approach is based on the insight of simple RBC, it should be preferable to the traditional rule-of-thumb undersmoothing strategy and the conventional MSE-optimal bandwidth-based inference in terms of achieving \textit{both} good coverage and a short confidence interval length.
In the literature on uniform inference for treatment effect heterogeneity, \citet{lee2017doubly} and \citet{fan2022estimation} propose the rule-of-thumb undersmoothing strategy for the LLR estimation, say $\hat h_{\US} \coloneqq \hat h_{\mathrm{LL}} \cdot n^{1/5} \cdot n^{-2/7}$ in our context.
In contrast, several previous studies in the kernel smoothing literature show in theory and numerical analysis that (simple) RBC inference generally leads to better coverage and shorter confidence interval length (e.g., \citealp{calonico2018effect} for pointwise inference in a general setting of kernel smoothing estimation).
Thus, we can expect our undersmoothing approach to have the same nice properties.
Since our proposal is not based on formal theory but on analogy with previous studies, we thoroughly investigate its performance through a series of Monte Carlo experiments in Appendix \ref{sec:simulation}, the results of which corroborate our discussion here.
We leave more sophisticated RBC inference and bandwidth choices for uniform inference as future topics.

\begin{remark} \label{remark:snooping}
\phantomsection\label{page:AE-24-1}\Copy{AE-24-1}{
    As shown in Section \ref{subsec:asymptotic}, our proposal is uniformly valid over $(g, t, z) \in \mathcal{A}$, but there is no guarantee of uniform validity across a set of bandwidths or data generating processes.
    One important potential consequence of this is the issue of \textit{bandwidth snooping} \citep{armstrong2018simple}: Our uniform confidence bands may have undesirable coverage if we perform sensitivity analyses or robustness checks involving multiple bandwidths, including non-data-driven ones.
    Developing inference procedures that are robust to bandwidth snooping and have uniform validity across data generating processes is an important area of future research.
}
\end{remark}

\section{Inference for Summary Parameters} \label{sec:Summary}

We turn to statistical inference for the aggregated parameter $\theta(z)$ defined in \eqref{eq:summary}.
Since the weighting function $w_{g,t}(z)$ is known or estimable, we can compute the aggregated estimator as follows:
\begin{align*}
    \hat{\theta}(z) 
    \coloneqq \sum_{g\in\mathcal{G}} \sum_{t=2}^{\mathcal{T}} \hat{w}_{g,t}(z) \cdot \hat{\DR}_{g,t}(z), 
\end{align*}
where $\hat{w}_{g,t}(z) = w_{g,t}(z)$ if $w_{g,t}(z)$ is known, otherwise $\hat{w}_{g,t}(z)$ is a nonparametric estimator constructed with certain LQR estimation depending on the form of $w_{g,t}(z)$.

To perform the uniform inference for the aggregated parameter $\theta(z)$, we can construct the standard error and the uniform critical value in the same manner as in the case of CATT.
To be specific, focusing on the case where $w_{g,t}(z)$ is unknown but estimable, suppose that its estimator satisfies
\begin{align*}
    & \hat{w}_{g,t}(z) - w_{g,t}(z) \\
    & = \frac{1}{f_Z(z)} \frac{1}{nh} \sum_{i=1}^n \Psi_{i,h} (\xi_{i,g,t} - \mu_{\xi}(Z_i)) K \left( \frac{Z_i - z}{h} \right) + \mathrm{Bias}\left[ \hat{w}_{g,t}(z) \; \middle| \; \bm{Z} \right] + o_{\bP}(h^4) +o_{\bP}\left( \sqrt{\frac{\log n}{nh}}\right),
\end{align*}
where $\xi_{i,g,t}$ is an estimable variable whose definition depends on the choice of $w_{g,t}(z)$.
In Appendix \ref{sec:supp-summary}, we show that this asymptotic linearity holds for a variety of weighting functions of interest.
Then, $\hat{\theta}(z)$ exhibits the same form of asymptotic linearity as in the case of CATT in Theorem \ref{thm:bias_variance}:
\begin{align} \label{eq:linear_summary}
\begin{split}
    & \hat{\theta}(z) - \theta(z) \\
    & = \frac{1}{f_Z(z)} \frac{1}{nh} \sum_{i=1}^n \Psi_{i,h} (J_i - \mu_J(Z_i)) K\left( \frac{Z_i - z}{h} \right) + \mathrm{Bias}\left[ \hat{\theta}(z) \; \middle| \; \bm{Z} \right] + o_{\bP}(h^4) + o_{\bP}\left( \sqrt{\frac{\log n}{nh}}\right),
\end{split}
\end{align}
where $J_i \coloneqq \sum_{g\in\mathcal{G}} \sum_{t=2}^{\mathcal{T}} [ w_{g,t}(z) \cdot B_{i,g,t} + {\DR}_{g,t}(z) \cdot \xi_{i,g,t} ]$.
Using this result, we can derive the asymptotic bias and variance formulas for the aggregated estimator $\hat{\theta}(z)$ in the same way as in Theorem \ref{thm:bias_variance}.
Moreover, the construction of the uniform critical value and the bandwidth selection are essentially the same as in the previous section.

To save space in the main text, we relegate the proof of the asymptotic linear representation in \eqref{eq:linear_summary} to Appendix \ref{sec:supp-summary}.
The appendix also presents the formulas for the standard error and the uniform critical values based on the analytical method and multiplier bootstrapping, and discusses how to implement the uniform inference for concrete summary parameters.

\if0\blind
{
	\begin{center}
		{\large\bf ACKNOWLEDGMENTS}
	\end{center}
    The authors appreciate the valuable comments of the coeditor (Ivan Canay), the associate editor, and two anonymous referees, which greatly improved the paper.
    The authors also thank Ryo Okui and Toshiki Tsuda for their helpful discussions.
	This work was supported by JSPS KAKENHI Grant Number 20K13469 and 24KJ1472.
} \fi
\if1\blind
{
} \fi 
\begin{center}
	{\large\bf DISCLOSURE STATEMENT}
\end{center}
The authors report there are no competing interests to declare.
\begin{center}
{\large\bf SUPPLEMENTARY MATERIAL}
\end{center}
\begin{description}
\item[Supplement:] The proofs of all technical results and the other supplementary results (PDF file).
\item[Replication:] The R codes to reproduce the numerical results (ZIP file).
\end{description}

\clearpage

\bibliographystyle{tandfx}
\bibliography{ref}

\clearpage

\appendix 

\renewcommand{\thepage}{S\arabic{page}}
\renewcommand{\thetable}{S\arabic{table}}
\renewcommand{\thefigure}{S\arabic{figure}}
\setcounter{page}{1}
\setcounter{table}{0}
\setcounter{figure}{0}

\if0\blind
{
	\noindent \title{\Large \bf Supplementary Appendix for \\
		``Doubly Robust Uniform Confidence Bands for Group-Time Conditional Average Treatment Effects in Difference-in-Differences''}
	\maketitle
} \fi

\if1\blind
{
	\bigskip
	\bigskip
	\bigskip
	\begin{center}
		{\LARGE\bf Supplementary Appendix for ``Doubly Robust Uniform Confidence Bands for Group-Time Conditional Average Treatment Effects in Difference-in-Differences''}
	\end{center}
	\medskip
} \fi

\bigskip
\begin{abstract}
	Appendix \ref{sec:notation} presents a list of the key symbols introduced in the main text.
	Appendices \ref{sec:proof} and \ref{sec:lemma} contain the proofs of the main theorems and lemmas.
	In Appendices \ref{sec:not-yet-treated} and \ref{sec:never-treated}, we consider the analysis under limited treatment anticipation and the analysis using the never-treated group as the comparison group, respectively. 
	Appendix \ref{sec:supp-summary} contains additional discussions of the inference methods for the summary parameters.
	In Appendix \ref{sec:pretrends}, we discuss the usefulness of our uniform inference methods for examining pre-trends.
	Appendix \ref{sec:simulation} presents the results of Monte Carlo experiments.
	Appendix \ref{sec:empirical2} contains additional details on the empirical analysis.
\end{abstract}

\newpage
\spacingset{1.8} 



\section{Notation} \label{sec:notation}

For easy reference, Table \ref{tab:notation} provides a list of the key symbols we introduce in the main text.

\begin{table}[h]
	\centering
	\caption{List of the key symbols introduced in the main text.}
	\label{tab:notation}
	\begin{tabular}{lll}
		\toprule
		Symbol & Description & Equation \\
		\midrule 
		$Y_{i,t}$ & The scalar outcome for unit $i$ in period $t$. & \\
		$D_{i,t}$ & The binary treatment. & \\
		$X_i = (Z_i, X_{i,\sub})$ & The vector of the pre-treatment covariates. & \\
		$W_i = \{ Y_{i,t}, D_{i,t}, X_i \}_{t=1}^{\mathcal{T}}$ & The vector of the observed variables for unit $i$. & \\
		$G_i = \min \{ t: D_{i,t} = 1 \}$ & The group. & \\
		$G_{i,g} = \bm{1}\{ G_i = g \}$ & The indicator for whether unit $i$ belongs to group $g$. & \\
		$Y_{i,t}(g)$ and $Y_{i,t}(0)$ & The potential outcomes. & \\
		$\mathcal{G} = \mathrm{supp}(G) \setminus \{ \bar g \}$ & The set of realized treatment timings before $\bar g = \max_{1 \le i \le n} G_i$. & \\
		$\mathcal{I} = [a, b]$ & A proper closed subset of the support of $Z$. & \\
		$\CATT_{g,t}(z)$ & The group-time conditional average treatment effect (CATT). & \eqref{eq:CATT} \\
		$\mathcal{A}$ & The set of $(g, t, z)$ on which CATT is identifiable. & \eqref{eq:mathcalA} \\
		$\mathcal{C}_{g,t} = \{ \mathcal{C}_{g,t}(z) \}$ & The uniform confidence band for CATT. & \eqref{eq:UCB_CATT} \\
		$\theta(z)$ and $w_{g,t}(z)$ & The aggregated parameter and corresponding weight. & \eqref{eq:summary} \\
		$\mathcal{C}_{\theta} = \{ \mathcal{C}_{\theta}(z) \}$ & The uniform confidence band for the aggregated parameter. & \eqref{eq:UCB_summary} \\
		$p_{g,t}(X_i; \pi_{g,t}^*)$ & The generalized propensity score (GPS). & \eqref{eq:GPS-OR} \\
		$m_{i,g,t} = m_{g,t}(X_i; \beta_{g,t}^*)$ & The outcome regression (OR) function. & \eqref{eq:GPS-OR} \\
		$\DR_{g,t}(z)$ & The conditional doubly robust (DR) estimand. & \eqref{eq:DRestimand} \\
		$R_{i,g,t} = R_{g,t}(W_i; \pi_{g,t}^*)$ & The technical variable. & \eqref{eq:Rdef} \\
		$A_{i,g,t}$ & The technical variable. & \eqref{eq:Adef} \\
		$B_{i,g,t}$, $E_{i,g,t}$, and $F_{i,g,t}$ & The technical variables. & \eqref{eq:B_E_F} \\
		$\mathcal{B}_{g,t}(z)$ and $\mathcal{V}_{g,t}(z)$ & The asymptotic bias and variance formulas for CATT. & \eqref{eq:bias_nev_2}, \eqref{eq:var_nev_2} \\
		$a_n$ and $\lambda$ & The technical variables. & \eqref{eq:lambda} \\
		\bottomrule
	\end{tabular}
\end{table}

In the proofs of Theorems \ref{thm:bias_variance} and \ref{thm:approx_us_distribution}, we use the following notation.
For a generic variable $Q$ and a generic integer $\nu \ge 0$, let $\mu_Q^{(\nu)}(z) \coloneqq {\bE}^{(\nu)}[ Q \mid Z = z ]$ denote the $\nu$-th derivative with respect to $z$ of the conditional mean of $Q$ given $Z = z$.
For example, we write $\mu_G^{(\nu)}(z) = {\bE}^{(\nu)}[ G_g \mid Z = z ]$, $\mu_{R}^{(\nu)}(z) = {\bE}^{(\nu)}[ R_{g,t} \mid Z = z ]$, and $\mu_{A}^{(\nu)}(z) = {\bE}^{(\nu)}[ A_{g,t} \mid Z = z ]$.
The $p$-th order LPR estimator of $\mu_Q^{(\nu)}(z)$ is defined as
\begin{align*}
	& \hat{\mu}_Q^{(\nu)}(z;h,K,p) \coloneqq \nu! \bm{e}_\nu^\top \hat{\bm{\beta}}_Q (z;h,K,p), \\
	& \hat{\bm{\beta}}_Q (z;h,K,p) \coloneqq \argmin_{\bm{b} \in \mathbb{R}^{p+1}} \sum_{i=1}^n \left( Q_i - \bm{r}_p(Z_i-z)^\top \bm{b} \right)^2 K \left(\frac{Z_i - z}{h}\right),
\end{align*}
where $p$ is a positive integer such that $p \ge \nu$, $\bm{e}_\nu$ is the $p + 1$ dimensional vector whose $(\nu + 1)$-th element is $1$ and the rest are 0, and $\bm{r}_p(u) \coloneqq ( 1, u, u^2, \dots, u^p )^\top$.
For notational simplicity, we often suppress the dependence on $h$, $K$, and $p$, unless there is confusion.
For example, we often write $\hat{\mu}_Q^{(\nu)}(z) = \hat{\mu}_Q^{(\nu)}(z; h, K, p)$ and $\hat{\bm{\beta}}_Q(z) = \hat{\bm{\beta}}_Q (z; h, K, p)$.

To simplify notation, we write
\begin{align*}
	u_{i,h} \coloneqq \frac{Z_i - z}{h},
	\qquad 
	K_{i,h} \coloneqq K( u_{i,h} ),
	\qquad 
	\mathcal{K}_h \coloneqq \mathrm{diag}[ K_{1,h}, \dots, K_{n,h} ].
\end{align*}
Define the following $n \times (p + 1)$ matrices:
\begin{align*}
	\mathcal{Z}
	\coloneqq \Big[ \bm{r}_p(Z_1 - z), \dots, \bm{r}_p(Z_n - z) \Big]^\top,
	\qquad 
	\check{\mathcal{Z}}
	\coloneqq \Big[ \bm{r}_p( u_{1,h} ), \dots, \bm{r}_p( u_{n,h}) \Big]^\top.
\end{align*}
Note that $\check{\mathcal{Z}} = \mathcal{Z} \bm{H}_p^{-1}$, where $\bm{H}_p \coloneqq \diag[ 1, h, h^2, \dots, h^p ]$.
Let
\begin{align*}
	\tilde{\bm{\Gamma}}_{(h,K,p)} 
	\coloneqq \frac{1}{nh} \check{\mathcal{Z}}^\top \mathcal{K}_h \check{\mathcal{Z}}
	= \frac{1}{nh} \sum_{i=1}^n K_{i,h} \bm{r}_p( u_{i,h} ) \bm{r}_p( u_{i,h} )^\top,
\end{align*}
and 
\begin{align*}
	\tilde{\bm{\Omega}}_{(h,K,p)} 
	\coloneqq \frac{1}{h} \check{\mathcal{Z}}^\top \mathcal{K}_h
	= \frac{1}{h} \Big[ K_{1,h} \bm{r}_p( u_{1,h} ), \dots, K_{n,h} \bm{r}_p( u_{n,h} ) \Big].
\end{align*}
Notice that the dimensions of $\tilde{\bm{\Gamma}}_{(h,K,p)}$ and $\tilde{\bm{\Omega}}_{(h,K,p)}$ are $(p + 1) \times (p + 1)$ and $(p + 1) \times n$, respectively.
Additionally, define the following $(p + 1) \times (p + 1)$ matrix:
\begin{align*}
	\bm{\Gamma}_{(K,p)} 
	\coloneqq f_Z(z) \left( \int K(u) \bm{r}_p(u) \bm{r}_p(u)^\top du \right).
\end{align*}

It is easy to see that $\hat{\bm{\beta}}_Q(z)$ obtained from the $p$-th order LPR estimation can be rewritten as follows.
Denoting $\bm{Q} = [ Q_1, \dots, Q_n ]^\top$, we have
\begin{align*}
	\bm{\hat{\beta}}_Q(z)
	& = ( \mathcal{Z}^\top \mathcal{K}_h \mathcal{Z} )^{-1} \mathcal{Z}^\top \mathcal{K}_h \bm{Q} \\
	& = \left( \left[ \mathcal{Z} \bm{H}_p^{-1} \bm{H}_p \right]^\top \mathcal{K}_h \left[ \mathcal{Z} \bm{H}_p^{-1} \bm{H}_p \right] \right)^{-1} \left[ \mathcal{Z} \bm{H}_p^{-1} \bm{H}_p \right]^\top \mathcal{K}_h \bm{Q} \\ 
	& = \bm{H}_p^{-1} ( \check{\mathcal{Z}}^\top \mathcal{K}_h \check{\mathcal{Z}} )^{-1} \check{\mathcal{Z}}^\top \mathcal{K}_h \bm{Q} \\
	& = \frac{1}{n} \bm{H}_p^{-1} \tilde{\bm{\Gamma}}_{(h,K,p)}^{-1} \tilde{\bm{\Omega}}_{(h,K,p)} \bm{Q}.
\end{align*}

\phantomsection\label{page:AE-23-supp}\Copy{AE-23-supp}{
	Because our bandwidth selection proposed in Section \ref{subsec:bandwidth} relies on the LLR estimator of the DR estimand, we clarify its definition here.
	The first-stage estimation is the same as in the main text: we estimate $\beta_{g,t}^{*}$ and $\pi_{g,t}^*$ via some parametric methods, which leads to $\hat R_{i,g,t} = R_{g,t}(W_i; \hat \pi_{g,t})$ and $\hat m_{i,g,t} = m_{g,t}(X_i; \hat \beta_{g,t})$ for each $i$.
	Next, we compute 
	\begin{align*}
		\hat A_{i,g,t}^{\mathtt{LL},\ny}
		& \coloneqq \left( \frac{ G_{i,g} }{ \hat \mu_G^{\mathtt{LL}}(z) } - \frac{ \hat R_{i,g,t} }{ \hat \mu_{\hat R}^{\mathtt{LL}}(z) } \right) \left( Y_{i,t} - Y_{i,g-1} - \hat m_{i,g,t} \right),
	\end{align*}
	where $\hat \mu_G^{\mathtt{LL}}(z)$ and $\hat \mu_{\hat R}^{\mathtt{LL}}(z)$ are the LLR estimators of $\mu_G(z)$ and $\mu_R(z)$, respectively.
	Specifically, $\hat \mu_{\hat R}^{\mathtt{LL}}(z)$ is defined by
	\begin{align*}
		\hat{\mu}_{\hat R}^{\mathtt{LL}}(z)
		\coloneqq \bm{e}_0^\top \hat{\bm{\beta}}_{\hat R}^{\mathtt{LL}}(z),
		\qquad
		\hat{\bm{\beta}}_{\hat R}^{\mathtt{LL}}(z) \coloneqq \argmin_{\bm{b} \in \mathbb{R}^{2}} \sum_{i=1}^n \left( \hat R_{i,g,t} - \bm{r}_{1}(Z_i - z)^\top \bm{b} \right)^2 K \left( \frac{Z_i - z}{h} \right),
	\end{align*}
	and the definition of $\hat \mu_G^{\mathtt{LL}}(z)$ is analogous.
	Finally, we obtain the LLR estimator $\hat{\DR}_{g,t}^{\mathtt{LL},\ny}(z)$ of the DR estimand as follows:
	\begin{align} \label{eq:DR-LL-ny}
		\begin{split}
			\hat{\DR}_{g,t}^{\mathtt{LL},\ny}(z) 
			& \coloneqq \hat \mu_{\hat A}^{\mathtt{LL}}(z) \coloneqq \bm{e}_0^\top \hat{\bm{\beta}}_{\hat A}^{\mathtt{LL}}(z), \\ 
			\hat{\bm{\beta}}_{\hat A}^{\mathtt{LL}}(z) & \coloneqq \argmin_{\bm{b} \in \mathbb{R}^{2}} \sum_{i=1}^n \left( \hat A_{i,g,t}^{\mathtt{LL},\ny} - \bm{r}_{1}(Z_i - z)^\top \bm{b} \right)^2 K \left( \frac{Z_i - z}{h} \right).
		\end{split}
	\end{align}
	Hereafter, for notational simplicity, we often suppress the superscripts ``${\mathtt{LL}}$'' and ``$\ny$'' unless there is confusion.
}

In the following analysis, unless otherwise noted, the convergence rates of the remainder terms hold uniformly in $(g,t,z) \in \mathcal{A}$.
We write $a \lesssim b$ if there exists a constant $C > 0$ such that $a \le C b$.
In addition, with an abuse of notation, we often write $o_{\bP}( \sqrt{ (\log n) / (nh) } )$ to indicate $O_{\bP}( \sqrt{ (\log n) / (n^{1 + \varepsilon} h) } )$ for some $\varepsilon > 0$ to simplify notation.

In Appendices \ref{sec:proof}, \ref{sec:lemma}, \ref{sec:not-yet-treated}, and \ref{sec:never-treated}, we add the subscripts or superscripts ``$\ny$'' and ``$\nev$'' to the quantities specific to the not-yet-treated group and the never-treated group, respectively.
This notation is useful in making clear the difference between the analyses that use the not-yet-treated group and the never-treated group as the comparison groups.
For example, in those appendices, we write the conditional DR estimand defined in \eqref{eq:DRestimand} as $\DR_{g,t}^{\ny}(Z)$ to emphasize that this estimand uses the not-yet-treated group as the comparison group.
In contrast, we suppress such subscripts and superscripts in Appendices \ref{sec:supp-summary}, \ref{sec:pretrends}, \ref{sec:simulation}, and \ref{sec:empirical2} for ease of exposition.

\section{Proofs} \label{sec:proof}

\subsection{Proof of Lemma \ref{lem:DR}}

We provide the proof of result (i) only, as result (ii) can be proved analogously.
Given Lemma \ref{lem:nonparametric}, the proof is completed if we show
\begin{align*}
	\bE \left[ \frac{ R_{g,t}(W; \pi_{g,t}) }{\bE [ R_{g,t}(W; \pi_{g,t}) \mid Z ]} \left( Y_t - Y_{g-1} - m_{g,t}^{\ny}(X; \beta_{g,t}^{\ny*}) \right) \; \middle| \; Z \right]
	= 0
	\quad 
	\text{for arbitrary $\pi_{g,t} \in \Pi_{g,t}$}.
\end{align*}
Recall that
\begin{align*}
	& R_{g,t}(W; \pi_{g,t})
	\coloneqq \frac{p_{g,t}(X; \pi_{g,t}) (1 - D_{t}) (1 - G_g)}{1 - p_{g,t}(X; \pi_{g,t})}, \\
	& m_{g,t}^{\ny}(X) 
	\coloneqq \bE[ Y_t - Y_{g-1} \mid X, D_{t} = 0, G_g = 0 ].
\end{align*}

From the law of iterated expectations, it suffices to show $\bE [ (1 - D_{t}) (1 - G_g) ( Y_t - Y_{g-1} - m_{g,t}^{\ny}(X; \beta_{g,t}^{\ny*}) ) \mid X ] = 0$ a.s.
Using Assumption \ref{as:parametric-ny}(i), we can observe that
\begin{align*}
	& \bE[ (1 - D_{t}) (1 - G_g) \left( Y_t - Y_{g-1} - m_{g,t}^{\ny}(X; \beta_{g,t}^{\ny*}) \right) \mid X ] \\
	& = \bE[ \bE[ (1 - D_{t}) (1 - G_g) \left( Y_t - Y_{g-1} - m_{g,t}^{\ny}(X; \beta_{g,t}^{\ny*}) \right) \mid X, D_t, G_g] \mid X ] \\
	& = \bE[(1 - D_{t}) (1 - G_g) \mid X ] \cdot m_{g,t}^{\ny}(X; \beta_{g,t}^{\ny*}) - \bE[(1 - D_{t}) (1 - G_g) \mid X ] \cdot m_{g,t}^{\ny}(X; \beta_{g,t}^{\ny*}) \\
	& = 0
	\quad \text{a.s.}
\end{align*}
\qed 

\subsection{Proof of Theorem \ref{thm:bias_variance}}

Observe that
\begin{align} \label{eq:linear_srbc1}
	\begin{split}
		\hat{\DR}_{g,t}^{\ny}(z) - {\DR}_{g,t}^{\ny}(z) 
		& = \hat{\mu}_{\hat{A}}(z) - \mu_A(z) \\
		& = \Big( \hat \mu_A(z) - \hat \mu_{\bE[A\mid Z]}(z) \Big) + \Big( \hat \mu_{\bE[A\mid Z]}(z) - \mu_A(z) \Big) + \Big( \hat \mu_{\hat A}(z) - \hat \mu_A(z) \Big) \\
		& \eqqcolon (\mathbf{LQR.I}) + (\mathbf{LQR.II}) + (\mathbf{LQR.III}),
	\end{split}
\end{align}
where each $\hat \mu$ in the right-hand side denotes the corresponding LQR estimator.
For example, $\hat \mu_{\bE[A \mid Z]}(z)$ denotes the LQR estimator for the conditional mean of $\bE[ A_{g,t}^{\ny} \mid Z ]$ at $Z = z$, that is,
\begin{align*}
	\hat{\mu}_{\bE[A \mid Z]}(z) \coloneqq \bm{e}_0^\top \hat{\bm{\beta}}_{\bE[A \mid Z]} (z)
	\quad \text{with} \quad
	\hat{\bm{\beta}}_{\bE[A \mid Z]} (z) \coloneqq \argmin_{\bm{b} \in \mathbb{R}^3} \sum_{i=1}^n \left( \bE[A_{i,g,t}^{\ny} \mid Z_i] - \bm{r}_2(Z_i - z)^\top \bm{b} \right)^2 K_{i,h}.
\end{align*}
Here, $(\mathbf{LQR.I})$ and $(\mathbf{LQR.II})$ correspond to the variance and bias terms, respectively, for the LQR estimation where the dependent variable is the ``true'' $A_{i,g,t}^{\ny}$, and $(\mathbf{LQR.III})$ accounts for the estimation error caused by the first- and second-stage estimation.

\subsubsection{Proof of the asymptotic linear representation} \label{subsubsec:proof-linear}

To proceed, we provide the following auxiliary results that we use repeatedly.
First, by Lemma \ref{lem:Gamma}, we can see that
\begin{align*}
	\tilde{\bm{\Gamma}}_{(h,K,2)} 
	& = \bm{\Gamma}_{(K,2)} + o_{\bP}(1) \\
	& =  f_Z(z) \left( \int K(u) \bm{r}_2(u) \bm{r}_2(u)^\top du \right) + o_{\bP}(1) \\
	& = f_Z(z)
	\begin{bmatrix}
		1 & 0 & I_{2,K} \\
		0 & I_{2,K} & 0 \\
		I_{2,K} & 0 & I_{4,K} 
	\end{bmatrix} + o_{\bP}(1).
\end{align*}
In conjunction with the continuous mapping theorem, this implies that
\begin{align*}
	\bm{e}_0^\top\tilde{\bm{\Gamma}}_{(h,K,2)}^{-1} 
	& = \bm{e}_0^\top \bm{\Gamma}_{(K,2)}^{-1} + o_{\bP}(1) \\
	& = \frac{1}{f_Z(z)} \frac{1}{I_{4,K} - I_{2,K}^2} \left( I_{4,K}, \; 0, \; I_{2,K} \right) + o_{\bP}(1).
\end{align*}
Thus, we have
\begin{align} \label{eq:e0_srbc}
	\bm{e}_0^\top \bm{\Gamma}_{(K,2)}^{-1} \bm{r}_2(u_{i,h})
	= \frac{1}{f_Z(z)} \left( \frac{ I_{4,K} - u_{i,h}^2 I_{2,K} }{ I_{4,K} - I_{2,K}^2 } \right)
	= \frac{1}{f_Z(z)} \Psi_{i,h}.
\end{align}

\paragraph{Evaluation on $(\mathbf{LQR.I})$.}
To evaluate the first term in \eqref{eq:linear_srbc1}, denoting $\bm{A}_{g,t}^{\ny} = (A_{1,g,t}^{\ny}, \dots, A_{n,g,t}^{\ny})^\top$, observe that
\begin{align*}
	\hat \mu_A(z) - \hat \mu_{\bE[A \mid Z]}(z) 
	&= \frac{1}{n} \bm{e}_0^\top \tilde{\bm{\Gamma}}_{(h,K,2)}^{-1} \tilde{\bm{\Omega}}_{(h,K,2)} \left( \bm{A}_{g,t}^{\ny} - \bE\left[ \bm{A}_{g,t}^{\ny} \; \middle| \; \bm{Z} \right] \right) \\
	&= \bm{e}_0^\top \tilde{\bm{\Gamma}}_{(h,K,2)}^{-1} \frac{1}{nh} \sum_{i=1}^n K_{i,h} \bm{r}_2(u_{i,h}) \left(  A_{i,g,t}^{\ny} - \mu_A(Z_i) \right) \\
	& = \left( \bm{e}_0^\top\bm{\Gamma}_{(K,2)}^{-1} + o_{\bP}(1) \right) \frac{1}{nh} \sum_{i=1}^n K_{i,h} \bm{r}_2(u_{i,h}) \left(  A_{i,g,t}^{\ny} - \mu_A(Z_i) \right) \\
	&  =  \frac{1}{nh} \frac{1}{f_Z(z)} \sum_{i=1}^n K_{i,h} \Psi_{i,h} (A_{i,g,t}^{\ny} - \mu_A(Z_i)) + o_{\bP}\left( \frac{1}{nh} \sum_{i=1}^n K_{i,h} \bm{r}_2(u_{i,h}) (A_{i,g,t}^{\ny} - \mu_A(Z_i)) \right),
\end{align*}
where the last equality follows from (\ref{eq:e0_srbc}).
From Lemma \ref{lem:eval_ka-mu}, which states that for any non-negative integer $q\in\mathbb{Z}_+$,
\begin{align*}
	\frac{1}{nh} \sum_{i=1}^n K_{i,h} (Z_i-z)^q \left(  A_{i,g,t}^{\ny} - \mu_A(Z_i) \right) = O_{\bP}\left( h^q \sqrt{\frac{\log n}{nh}}\right) 
\end{align*}
holds uniformly in $(g,t,z) \in \mathcal{A}^{\ny}$, we can see that
\begin{align*}
	\hat \mu_A(z) - \hat \mu_{\bE[A\mid Z]}(z) 
	= \frac{1}{nh} \frac{1}{f_Z(z)} \sum_{i=1}^n K_{i,h} \Psi_{i,h} (A_{i,g,t}^{\ny} - \mu_A(Z_i)) + o_{\bP}\left( \sqrt{\frac{\log n}{nh}} \right)
\end{align*}
holds uniformly in $(g,t,z) \in \mathcal{A}^{\ny}$.

\paragraph{Evaluation on $(\mathbf{LQR.II})$.}
To evaluate the second term in (\ref{eq:linear_srbc1}), with an abuse of notation, we denote $\bm{\mu}_{A,(h)} = (\mu_A(z), h \cdot \mu_A^{(1)}(z), h^2 \cdot \mu_A^{(2)}(z) / 2)^\top$.
Observe that
\begin{small}
	\begin{align*}
		& \hat \mu_{\bE[A\mid Z]}(z) - \mu_A(z) \\
		&= \frac{1}{n} \bm{e}_0^\top \tilde{\bm{\Gamma}}_{(h,K,2)}^{-1} \tilde{\bm{\Omega}}_{(h,K,2)}  \bE\left[ \bm{A}_{g,t}^{\ny} \; \middle| \; \bm{Z} \right] -  \bm{e}_0^\top \tilde{\bm{\Gamma}}_{(h,K,2)}^{-1}\tilde{\bm{\Gamma}}_{(h,K,2)}\bm{\mu}_{A,(h)}\\
		& = \bm{e}_0^\top \tilde{\bm{\Gamma}}_{(h,K,2)}^{-1}  \left( \frac{1}{nh} \sum_{i=1}^n K_{i,h} \bm{r}_2(u_{i,h})\left[ \mu_A(Z_i) - \mu_A(z) - \mu_A^{(1)}(z)(Z_i-z) - \frac{1}{2}\mu_A^{(2)}(z)(Z_i-z)^2 \right]\right) \\
		& = \left( \bm{e}_0^\top\bm{\Gamma}_{(K,2)}^{-1} + o_{\bP}(1) \right)  \left( \frac{1}{nh} \sum_{i=1}^n K_{i,h} \bm{r}_2(u_{i,h})\left[ \frac{1}{3!}\mu_A^{(3)}(z)(Z_i-z)^3 + \frac{1}{4!}\mu^{(4)}(Z_i)(Z_i-z)^4 + \frac{1}{5!}\mu^{(5)}(\Tilde{Z}_i)(Z_i-z)^5\right]\right) \\
		& = \frac{1}{nh} \sum_{i=1}^n K_{i,h} \Psi_{i,h} \left(  \frac{1}{3!}\mu_A^{(3)}(z)(Z_i-z)^3 + \frac{1}{4!}\mu^{(4)}(Z_i)(Z_i-z)^4 \right) \\
		&  \quad + o_{\bP}\left( \frac{1}{nh} \sum_{i=1}^n K_{i,h} \bm{r}_2(u_{i,h}) \left(  \frac{1}{3!}\mu_A^{(3)}(z)(Z_i-z)^3 + \frac{1}{4!}\mu^{(4)}(Z_i)(Z_i-z)^4 \right) \right) \\
		& = \frac{1}{24 f_Z(z)}\left(2\mu_A^{(3)}(z)f_Z^{(1)}(z) + \mu_A^{(4)}(z)f_Z(z)\right) h^4 \left( \frac{I_{4,K}^2 - I_{2,K}I_{6,K}}{I_{4,K} - I_{2,K}^2} \right) + o_{\bP}(h^4) + O_{\bP}\left( \sqrt{\frac{\log n}{nh} } \right),
	\end{align*}
\end{small}
holds uniformly in $(g,t,z) \in \mathcal{A}^{\ny}$, where the third equality follows from the Taylor expansion of $\mu_A(Z_i)$ around $Z_i=z$ and $\Tilde{Z}_i$ lies between $Z_i$ and $z$, and
the leading term can be derived from Lemma \ref{lem:eval_bckq}, which states that
\begin{align*}
	& \frac{1}{nh} \sum_{i=1}^n K_{i,h} \Psi_{i,h} (Z_i-z)^3 = f_Z^{(1)}(z) h^4 \left( \frac{I_{4,K}^2 - I_{2,K}I_{6,K}}{I_{4,K} - I_{2,K}^2} \right) + o(h^4) + O_{\bP}\left( \sqrt{\frac{\log n}{nh} } \right), \\
	& \frac{1}{nh} \sum_{i=1}^n K_{i,h} \Psi_{i,h} (Z_i-z)^4 = f_Z(z)h^4\left( \frac{I_{4,K}^2 - I_{2,K}I_{6,K}}{I_{4,K} - I_{2,K}^2} \right) + o(h^4) + O_{\bP}\left( \sqrt{\frac{\log n}{nh} } \right),
\end{align*}
holds uniformly in $(g,t,z) \in \mathcal{A}^{\ny}$, and the convergence rates of the remainder terms follow from Lemma \ref{lem:eval_kq}, which states that the following equations hold uniformly in $z \in \mathcal{I}$:
\begin{align*}
	\frac{1}{nh} \sum_{i=1}^n K_{i,h} (Z_i-z)^q 
	= 
	\begin{cases}
		\displaystyle O(h^4) + O_{\bP}\left( h^q \sqrt{\frac{\log n}{nh} } \right) & \text{for $q  = 3, 4$} \\
		\displaystyle o(h^4) + O_{\bP}\left( h^q \sqrt{\frac{\log n}{nh} } \right) & \text{for $q \ge 5$}.
	\end{cases}
\end{align*}

\paragraph{Evaluation on $(\mathbf{LQR.III})$.}
To examine the third term in \eqref{eq:linear_srbc1}, simple algebra can expand $\hat A_{i,g,t}^{\ny} - A_{i,g,t}^{\ny}$ as follows:
\begin{align*}
	\hat A_{i,g,t}^{\ny} - A_{i,g,t}^{\ny}
	& = \left( \frac{ \hat F_{i,g,t}^{\ny} }{ \hat{\mu}_G(z) } - \frac{ \hat E_{i,g,t}^{\ny} }{\hat{\mu}_{\hat{R}}(z)} \right) - \left( \frac{ F_{i,g,t}^{\ny} }{ \mu_G(z) } - \frac{ E_{i,g,t}^{\ny} }{\mu_R(z)} \right) \\
	& =  \frac{ E_{i,g,t}^{\ny} }{ \hat \mu_{\hat R}(z) \mu_R(z) } \Big( \hat \mu_{\hat R}(z) - \mu_R(z) \Big) - \frac{ F_{i,g,t}^{\ny} }{ \hat \mu_{G}(z) \mu_G(z) } \Big( \hat \mu_{G}(z) - \mu_G(z) \Big) \\
	& \quad + \left( \frac{ 1 }{ \hat \mu_G(z) } \left( \hat F_{i,g,t}^{\ny} - F_{i,g,t}^{\ny} \right) - \frac{ 1 }{ \hat \mu_{\hat R}(z) } \left( \hat E_{i,g,t}^{\ny} - E_{i,g,t}^{\ny} \right) \right).
\end{align*}
Then, using \eqref{eq:e0_srbc}, the third term in \eqref{eq:linear_srbc1} can be rewritten as
\begin{align*}
	& \hat \mu_{\hat A}(z) - \hat \mu_A(z) \\
	& = \frac{1}{\hat \mu_{\hat R}(z) \mu_R(z)} \bm{e}_0^\top \tilde{\bm{\Gamma}}_{(h,K,2)}^{-1} \left( \frac{1}{nh} \sum_{i=1}^n K_{i,h}  \bm{r}_2(u_{i,h}) E_{i,g,t}^{\ny} \right) \left( \hat \mu_{\hat R}(z) - \mu_R(z) \right) \\
	& \quad - \frac{1}{\hat \mu_G(z) \mu_G(z)} \bm{e}_0^\top \tilde{\bm{\Gamma}}_{(h,K,2)}^{-1} \left( \frac{1}{nh} \sum_{i=1}^n K_{i,h} \bm{r}_2(u_{i,h}) F_{i,g,t}^{\ny} \right) \left( \hat \mu_G(z) - \mu_G(z) \right) \\
	& \quad + \bm{e}_0^\top \tilde{\bm{\Gamma}}_{(h,K,2)}^{-1} \left( \frac{1}{nh} \sum_{i=1}^n K_{i,h} \bm{r}_2(u_{i,h}) \left( \frac{ 1 }{ \hat \mu_G(z) } \left( \hat F_{i,g,t}^{\ny} - F_{i,g,t}^{\ny} \right) - \frac{ 1 }{ \hat \mu_{\hat R}(z) } \left( \hat E_{i,g,t}^{\ny} - E_{i,g,t}^{\ny} \right) \right) \right) \\
	& \eqqcolon (\mathbf{LQR.III\en I}) - (\mathbf{LQR.III\en II}) + (\mathbf{LQR.III\en III}).
\end{align*}
In what follows, we examine each term.

For $(\mathbf{LQR.III\en  I})$, we have 
\begin{align*}
	(\mathbf{LQR.III \en I})
	& = \frac{1}{\hat \mu_{\hat R}(z) \mu_R(z)} \bm{e}_0^\top \tilde{\bm{\Gamma}}_{(h,K,2)}^{-1} \left( \frac{1}{nh} \sum_{i=1}^n K_{i,h} \bm{r}_2(u_{i,h}) E_{i,g,t}^{\ny} \right) \left( \hat \mu_{\hat R}(z) - \mu_R(z) \right) \\
	& = \frac{1}{\hat \mu_{\hat R}(z) \mu_R(z)} \left( \frac{1}{n} \bm{e}_0^\top \tilde{\bm{\Gamma}}_{(h,K,2)}^{-1} \tilde{\bm{\Omega}}_{(h,K,2)} \bm{E}_{g,t}^{\ny} \right) \left( \hat \mu_{\hat R}(z) - \mu_R(z) \right) \\
	& = \frac{1}{\hat \mu_{\hat R}(z) \mu_R(z)} \hat \mu_E(z) \left( \hat \mu_{\hat R}(z) - \mu_R(z) \right) \\
	& = \frac{\mu_E(z)}{\mu_R^2(z)} \left( \hat \mu_R(z) - \mu_R(z) \right) + o_{\bP}(h^4) + o_{\bP}\left( \sqrt{\frac{\log n}{nh}} \right),
\end{align*}
where we used Lemma \ref{lem:eval_mu_hatr}, which states that 
\begin{align*}
	\hat \mu_{\hat R}(z) = \hat \mu_R(z) + O_{\bP}\left( \frac{1}{\sqrt{n}} \right),
\end{align*}
and Lemma \ref{lem:eval_mu_bce}, which states that
\begin{align*}
	& \hat \mu_R(z) - \mu_R(z) = O(h^4) + O_{\bP}\left( \sqrt{\frac{\log n}{nh}} \right), \quad
	\hat \mu_E(z) - \mu_E(z) = O(h^4) + O_{\bP}\left( \sqrt{\frac{\log n}{nh}} \right),
\end{align*}
holds uniformly in $(g,t,z) \in \mathcal{A}^{\ny}$.
Furthermore, in the same manner as the evaluation of $\hat \mu_A(z) - \hat \mu_{\bE[ A \mid Z ]}(z)$ and $\hat \mu_{\bE[ A \mid Z ]}(z) - \mu_A(z)$, we can show that
\begin{align*}
	& \hat \mu_R(z) - \mu_R(z) \\
	& = \frac{1}{f_Z(z)} \frac{1}{nh} \sum_{i=1}^n K_{i,h} \Psi_{i,h} R_{i,g,t} \\
	&  \quad + \frac{1}{24 f_Z(z)}\left(2\mu_R^{(3)}(z)f_Z^{(1)}(z) + \mu_R^{(4)}(z)f_Z(z)\right) \left( \frac{I_{4,K}^2 - I_{2,K}I_{6,K}}{I_{4,K} - I_{2,K}^2} \right) h^4 + o_{\bP}(h^4) + o_{\bP}\left( \sqrt{\frac{\log n}{nh}} \right),
\end{align*}
uniformly in $(g,t,z) \in \mathcal{A}^{\ny}$.
Thus, we have shown that
\begin{align*}
	(\mathbf{LQR.III \en I})
	& = \frac{ \mu_E(z) }{ f_Z(z) \mu_R^2(z) } \frac{1}{nh} \sum_{i=1}^n K_{i,h} \Psi_{i,h} (R_{i,g,t} - \mu_R(Z_i)) \\
	& \quad + \frac{ \mu_E(z) }{ 24 f_Z(z) \mu_R^2(z) } \left(2\mu_R^{(3)}(z)f_Z^{(1)}(z) + \mu_R^{(4)}(z)f_Z(z)\right) \left( \frac{I_{4,K}^2 - I_{2,K}I_{6,K}}{I_{4,K} - I_{2,K}^2} \right) h^4 \\
	& \quad + o_{\bP}(h^4) + o_{\bP}\left( \sqrt{\frac{\log n}{nh}} \right),
\end{align*}
uniformly in $(g,t,z) \in \mathcal{A}^{\ny}$.

Similar to the case of $(\mathbf{LQR.III \en I})$, we can show that
\begin{align*}
	(\mathbf{LQR.III\en II})
	& = \frac{ \mu_F(z) }{ f_Z(z) \mu_G^2(z) } \frac{1}{nh} \sum_{i=1}^n K_{i,h} \Psi_{i,h} (G_{i,g} - \mu_G(Z_i)) \\
	& \quad + \frac{
		\mu_F(z)
	}{ 24 f_Z(z) \mu^2_G(z)} \left(2\mu_G^{(3)}(z)f_Z^{(1)}(z) + \mu_G^{(4)}(z)f_Z(z)\right) \left( \frac{I_{4,K}^2 - I_{2,K}I_{6,K}}{I_{4,K} - I_{2,K}^2} \right) h^4 \\
	& \quad + o_{\bP}(h^4) + o_{\bP}\left( \sqrt{\frac{\log n}{nh}} \right),
\end{align*}
uniformly in $(g,t,z) \in \mathcal{A}^{\ny}$.

For $(\mathbf{LQR.III\en III})$, the $1/\sqrt{n}$-consistency of the first-stage parametric estimation ensures that
\begin{align*}
	(\mathbf{LQR.III \en III})
	& = \bm{e}_0^\top \tilde{\bm{\Gamma}}_{(h,K,2)}^{-1} \left( \frac{1}{nh} \sum_{i=1}^n K_{i,h} \bm{r}_2(u_{i,h}) \left( \frac{ 1 }{ \hat \mu_G(z) } \left( \hat F_{i,g,t}^{\ny} - F_{i,g,t}^{\ny} \right) - \frac{ 1 }{ \hat \mu_{\hat R}(z) } \left( \hat E_{i,g,t}^{\ny} - E_{i,g,t}^{\ny} \right) \right) \right) \\
	& = O_{\bP}\left( \frac{1}{\sqrt{n}} \right),
\end{align*}
uniformly in $(g,t,z) \in \mathcal{A}^{\ny}$, 
where we used \eqref{eq:e0_srbc}, Assumption \ref{as:asymptotic1}(v), and Lemma \ref{lem:eval_kq}, which states that
\begin{align*}
	\frac{1}{nh} \sum_{i=1}^n K_{i,h}
	= O_{\bP}(1),
	\quad 
	\frac{1}{nh} \sum_{i=1}^n K_{i,h}(Z_i-z)
	= o_{\bP}(1), 
	\quad 
	\frac{1}{nh} \sum_{i=1}^n K_{i,h}(Z_i-z)^2
	= o_{\bP}(1),
\end{align*}
holds uniformly in $z \in \mathcal{I}$.

Summing up, we obtain the desired result.
\qed

\subsubsection{Proof of the asymptotic bias}
Using the evaluation on $(\mathbf{LQR.I})$, $(\mathbf{LQR.II})$, and $(\mathbf{LQR.III})$ in the previous subsection, the asymptotic linear representation in Theorem \ref{thm:bias_variance} can be reformulated as
\begin{align*}
	\hat{\DR}_{g,t}^{\ny}(z) - {\DR}_{g,t}^{\ny}(z)
	& = \frac{1}{f_Z(z)} \frac{1}{nh} \sum_{i=1}^n  \Psi_{i,h}(B_{i,g,t}^{\ny} - \mu_B(Z_i)) K_{i,h} \\
	& \quad + \frac{1}{24 f_Z(z)}\left(2\mu_B^{(3)}(z)f_Z^{(1)}(z) + \mu_B^{(4)}(z)f_Z(z)\right)\left( \frac{I_{4,K}^2 - I_{2,K}I_{6,K}}{I_{4,K} - I_{2,K}^2} \right) h^4 \\
	& \quad + o_{\bP}(h^4) + o_{\bP}\left( \sqrt{\frac{\log n}{nh}} \right).
\end{align*}
Since the expectation of the first term in the right-hand side is $0$, it is easy to see that the asymptotic bias is given by
\begin{align*}
	\bE\left[ \hat{\DR}_{g,t}^{\ny}(z) \;  \middle| \; \bm{Z} \right] - {\DR}_{g,t}^{\ny}(z)
	= h^4 \mathcal{B}_{g,t}^{\ny}(z) + o_{\bP}(h^4) + o_{\bP}\left( \sqrt{\frac{\log n}{nh}} \right).
\end{align*}
\qed

\subsubsection{Proof of the asymptotic variance}
Let $\Psi(u) \coloneqq ( I_{4,K} - u^2 I_{2,K} ) / ( I_{4,K} - I_{2,K}^2 )$.
Given the IID assumption in Assumption \ref{as:iid}, observe that
\begin{align*}
	\Var \left[ \frac{1}{f_Z(z)} \frac{1}{nh} \sum_{i=1}^n K_{i,h} \Psi_{i,h} B_{i,g,t}^{\ny} \; \middle| \; \bm{Z} \right]
	& = \frac{1}{f_Z^2(z)} \frac{1}{n^2 h^2} \sum_{i=1}^n K_{i,h}^2  \Psi_{i,h}^2 \sigma_B^2(Z_i) \\
	& = \frac{1}{nh} \frac{ \sigma_B^2(z) }{ f_Z(z) } \left( \int K^2(u)\Psi^2(u) du \right) + o_{\bP}\left( \frac{1}{nh} \right), 
\end{align*}
where the last line follows from that
\begin{align*}
	\bE\left[ \frac{1}{nh} \sum_{i=1}^n K_{i,h}^2 \sigma_B^2(Z_i) \right]
	& = \int K^2(u) \sigma_B^2(z + uh) f_Z(z + uh) du \\
	& = \sigma_B^2(z) f_Z(z) \left( \int K^2(u)\Psi^2(u) du \right) + o(1),
\end{align*}
and that
\begin{align*}
	\Var\left[ \frac{1}{nh} \sum_{i=1}^n K_{i,h}^2 \sigma_B^2(Z_i) \right]
	& \lesssim \frac{1}{n h^2} \bE[ K_{i,h}^4 \sigma_B^4(Z_i) ] \\
	& = \frac{1}{nh} \sigma_B^4(z) f_Z(z) \left( \int K^4(u)\Psi^4(u) du \right) + o\left( \frac{1}{nh} \right) \\
	& = o(1).
\end{align*}
This completes the proof.
\qed

\subsection{Proof of Theorem \ref{thm:approx_us_distribution}} \label{sec:proof:thm:approx_us_distribution}

Given Theorem \ref{thm:bias_variance}, Theorem \ref{thm:approx_us_distribution} follows from the same proof steps as Theorem 2 of \citet{lee2017doubly}.
Specifically, the proof consists of three steps: 
(i) we approximate the supremum of a linearized process by the supremum of a Gaussian process based on Proposition 3.2 of \citet{chernozhukov2014gaussian};
(ii) we prove certain approximation between the supremum of the Gaussian process and that of a stationary Gaussian process with a feasible covariance function, in a similar way to Lemma 3.4 of \cite{ghosal2000testing}; and 
(iii) we obtain the desired asymptotic expansion for the distribution function of the supremum of the stationary Gaussian process by using Theorem 14.3 of \cite{piterbarg1996asymptotic} and Theorem 3.2 of \cite{konakov1984convergence}.
The first and second steps are based on Lemmas \ref{lem:Approx_us_EP_GP} and \ref{lem:approx_us_GP_sGF} introduced below, whose proofs can be found in the supplementary material of \citet{lee2017doubly}.

\paragraph{Step (i).}
For each $(g, t)$ such that $(g, t, z) \in \mathcal{A}^{\ny}$ for $z \in \mathcal{I}$, we construct an empirical process based on the asymptotic linear representation in Theorem \ref{thm:bias_variance}. 
Define
\begin{align*}
	& \mathfrak{T}_{g,t}^{\ny}(z) 
	\coloneqq \frac{1}{ nh } \sum_{i=1}^n \Psi_{i,h} \big( B_{i,g,t}^{\ny} - \mu_B(Z_i) \big) K \left( \frac{Z_i - z}{h} \right), \\
	& \mathfrak{C}_{g,t}^{\ny}(z)
	\coloneqq \left\{  \frac{1}{h} \bE \left[ \left\{ \Psi_{i,h} \left( B_{i,g,t}^{\ny} - \mu_B(Z_i) \right) K \left( \frac{Z_i - z}{h} \right) \right\}^2 \right] \right\}^{-1/2}.
\end{align*}
The supremum of its normalized version is given by
\begin{align*}
	\mathfrak{W}_n 
	\coloneqq \sqrt{nh} \cdot \sup_{z \in \mathcal{I}} \left\{ \mathfrak{C}_{g,t}^{\ny}(z) \left( \mathfrak{T}_{g,t}^{\ny}(z) - \bE \left[ \mathfrak{T}_{g,t}^{\ny}(z) \right] \right) \right\}.
\end{align*}
For each $n \ge 1$, let $\tilde{\mathfrak{B}}_{n,1}$ denote a centered Gaussian process indexed by $\mathcal{I}$ with the following covariance function:
\begin{align}
	\bE \left[ \tilde{\mathfrak{B}}_{n,1}(z) \tilde{\mathfrak{B}}_{n,1}(z^\prime) \right] 
	= \frac{1}{h} \mathfrak{C}_{g,t}^{\ny}(z) \mathfrak{C}_{g,t}^{\ny}(z^\prime) \Cov \left[ u_{i,g,t}^{\ny} K \left( \frac{Z_i - z}{h} \right) K \left( \frac{Z_i - z^\prime}{h} \right) \right]. \label{eq:cv_us_covariance_1}
\end{align}
Then, using Proposition 3.2 of \citet{chernozhukov2014gaussian}, we can approximate the supremum of the linearized process, $\mathfrak{W}_n$, by the supremum of the Gaussian process $\tilde{\mathfrak{B}}_{n,1}$, as follows.

\begin{lemma} \label{lem:Approx_us_EP_GP}
	Suppose that Assumptions \ref{as:asymptotic1} and \ref{as:asymptotic2} hold.
	Then, for every $n \ge 1$, there is a tight Gaussian random variable $\tilde{\mathfrak{B}}_{n,1}$ in $\ell^{\infty}(\mathcal{I})$ with mean zero and the covariance function given in (\ref{eq:cv_us_covariance_1}), and there is a sequence $\tilde{\mathfrak{W}}_{n,1}$ of random variables such that $\tilde{\mathfrak{W}}_{n,1} \overset{d}{=} \sup_{ z \in \mathcal{I} } \tilde{\mathfrak{B}}_{n,1}(z)$ and, as $n \rightarrow \infty$,
	\begin{align*}
		|\mathfrak{W}_n - \tilde{\mathfrak{W}}_{n,1}| = O_{\bP} \Big\{ (nh)^{-1/6} (\log n) + (nh)^{-1/4} (\log n)^{5/4} + (n^{1/2}h)^{-1/2} (\log n)^{3/2} \Big\}.
	\end{align*}
\end{lemma}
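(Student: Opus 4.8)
The plan is to realize $\mathfrak{W}_n$ as the supremum of a normalized empirical process over a bandwidth-dependent function class and then invoke the Gaussian coupling for suprema of empirical processes (Proposition 3.2 of CCK14b). To set this up, put
\begin{align*}
    f_{n,z}(W_i) &\coloneqq \mathfrak{C}_{p,g,t,\delta}^{\nev}(z) \, h^{-1/2} \, \Psi_{i,p,h} \big( B_{i,g,t,\delta}^{\nev} - \mu_B(Z_i) \big) K\!\left( \frac{Z_i - z}{h} \right), \\
    \mathcal{F}_n &\coloneqq \{ f_{n,z} : z \in \mathcal{I} \},
\end{align*}
so that $\mathfrak{W}_n = \sup_{z \in \mathcal{I}} \sqrt{n}\,(\mathbb{P}_n - \mathbb{P}) f_{n,z}$. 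Since $\Psi_{i,p,h}$ and $K((Z_i-z)/h)$ are $\sigma(Z_i)$-measurable while $\mathbb{E}[ B_{i,g,t,\delta}^{\nev} - \mu_B(Z_i) \mid Z_i] = 0$, each $f_{n,z}$ is mean-zero, and $\mathfrak{C}_{p,g,t,\delta}^{\nev}(z)$ is normalized precisely so that $\Var(f_{n,z}) = \mathbb{E}[f_{n,z}^2] = 1$; hence $\mathcal{F}_n$ is a unit-variance class, and the coupling will deliver a centered Gaussian process $\tilde{\mathfrak{B}}_{n,1}$ with the covariance function of $(\mathbb{P}_n - \mathbb{P})$ over $\mathcal{F}_n$, which after unwinding the normalization is exactly \eqref{eq:cv_us_covariance_1}.

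Next I would verify the structural hypotheses of CCK14b. The collection $\{ z \mapsto K((\cdot - z)/h) : z \in \mathcal{I} \}$ is a VC-type class of uniformly bounded variation (translations and a fixed dilation of the compactly supported smooth kernel $K$ from Assumption \ref{as:asymptotic1}(ii)); multiplying by the $z$-free factor $B_{i,g,t,\delta}^{\nev} - \mu_B(Z_i)$ and, when $p = 2$, by the degree-two polynomial $\Psi_{i,2,h}$ in $(Z_i - z)/h$ preserves the VC-type property with characteristics that do not depend on $n$. For the envelope, compactness of $\mathrm{supp}\,K$ together with the uniform bounds on $\mathfrak{C}_{p,g,t,\delta}^{\nev}$ and on $\Psi_{i,p,h}$ restricted to $\mathrm{supp}\,K$ (Assumptions \ref{as:asymptotic1}(i),(iv) and \ref{as:asymptotic2}(i)) give $F_n(W_i) \lesssim h^{-1/2}\, |B_{i,g,t,\delta}^{\nev} - \mu_B(Z_i)| \cdot \bm{1}\{ \mathrm{dist}(Z_i, \mathcal{I}) \le C h \}$, whence $\mathbb{E}[F_n^q] \lesssim h^{-q/2} \cdot h \cdot \sup_{z \in \mathcal{I}} \mathbb{E}[ |B_{i,g,t,\delta}^{\nev} - \mu_B(Z_i)|^q \mid Z_i = z] = O(h^{1-q/2})$ for $q = 4$ by Assumption \ref{as:asymptotic1}(iv)(b). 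So $\mathcal{F}_n$ is VC-type with uniform $L^2$-size $\sigma^2 \asymp 1$ and envelope $L^4$-norm $\|F_n\|_{\mathbb{P},4} \asymp h^{-1/4}$.

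I would then apply the coupling inequality of CCK14b (Proposition 3.2 and its corollaries, in the $q$-th moment form since the envelope is not a.s.\ bounded): there is a version $\tilde{\mathfrak{W}}_{n,1} \overset{d}{=} \sup_{z \in \mathcal{I}} \tilde{\mathfrak{B}}_{n,1}(z)$ with $|\mathfrak{W}_n - \tilde{\mathfrak{W}}_{n,1}|$ bounded by CCK14b's coupling error evaluated at these characteristics. Substituting the effective sample size $nh$ (which appears because $\mathbb{E}[F_n^2] \asymp h^{-1}$ against $\sigma^2 \asymp 1$), the polylogarithmic entropy factor $\log(A n/\sigma) \asymp \log n$, and $\|F_n\|_{\mathbb{P},4} \asymp h^{-1/4}$, the three terms in their bound reduce to $(nh)^{-1/6}(\log n)$, $(nh)^{-1/4}(\log n)^{5/4}$, and $(n^{1/2}h)^{-1/2}(\log n)^{3/2}$, which is the claim; the bandwidth range in Assumption \ref{as:asymptotic1}(iii) makes all three $o(1)$, so the coupling is informative. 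This mirrors LOW17's argument, whose supplement verifies the same conditions for a kernel-weighted class.

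The main obstacle is the unbounded residual factor $B_{i,g,t,\delta}^{\nev} - \mu_B(Z_i)$: the sharpest coupling in CCK14b is stated for an a.s.\ bounded envelope, while here only four conditional moments are available. Following LOW17, the remedy is a truncation argument — split $f_{n,z}$ at a slowly growing threshold, apply the bounded-envelope coupling to the truncated part, and control the remainder through a maximal inequality in terms of $\|F_n\|_{\mathbb{P},4}$ — which is exactly what generates the third, moment-dependent error term $(n^{1/2}h)^{-1/2}(\log n)^{3/2}$. A secondary point, absent in LOW17, is that $B_{i,g,t,\delta}^{\nev}$ is itself a composite of $A_{i,g,t,\delta}^{\nev}$ and the nuisance-correction terms $\mu_E(z)\mu_R^{-2}(z) R_{i,g}$ and $\mu_F(z)\mu_G^{-2}(z) G_{i,g}$ (see \eqref{eq:B_E_F}); one must check that this composite still obeys the VC-type and envelope bookkeeping uniformly over $(g,t,z) \in \mathcal{A}_{\delta}^{\nev}$, which follows from Assumption \ref{as:asymptotic1}(iv) applied separately to $A_{g,t,\delta}^{\nev}$, $E_{g,t,\delta}^{\nev}$, $F_{g,t,\delta}^{\nev}$, $R_g$, and $G_g$, together with the boundedness and smoothness of the deterministic weights in $B_{i,g,t,\delta}^{\nev}$, which are guaranteed by Assumptions \ref{as:asymptotic1}(iv) and \ref{as:overlap}.
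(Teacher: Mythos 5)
Your proposal follows exactly the route the paper takes: the paper does not spell out a proof of this lemma but defers to Proposition 3.2 of CCK14b and the supplementary material of LOW17, and your construction of the normalized, unit-variance, VC-type kernel class, the envelope/moment bookkeeping under Assumption \ref{as:asymptotic1}(iv), and the truncation handling of the unbounded envelope is precisely that argument adapted to the composite variable $B_{i,g,t,\delta}^{\nev}$. The additional check you flag --- that the nuisance-correction terms in $B_{i,g,t,\delta}^{\nev}$ preserve the VC-type and moment conditions --- is the only genuinely new ingredient relative to LOW17, and you resolve it the same way the paper implicitly does via Assumption \ref{as:asymptotic1}(iv) and the overlap condition.
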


\paragraph{Step (ii).}

We approximate the supremum of the Gaussian process, $\tilde {\mathfrak{W}}_{n,1}$, by the supremum of a homogeneous Gaussian field with zero mean and a feasible covariance function that satisfies certain properties, as follows.

\begin{lemma} \label{lem:approx_us_GP_sGF}
	Suppose that Assumptions \ref{as:asymptotic1} and \ref{as:asymptotic2} hold.
	Then, for every $n \ge 1$ and for $s,s^\prime \in \mathcal{I}_n \coloneq h^{-1}\mathcal{I}$, there is a tight Gaussian variable $\tilde{\mathfrak{B}}_{n,2}$ in $\ell^\infty(\mathcal{I}_n)$ with mean zero and the covariance function 
	\begin{align*}
		\bE \left[ \tilde{\mathfrak{B}}_{n,2}(s) \tilde{\mathfrak{B}}_{n,2}(s^\prime) \right] 
		= \rho(s-s^\prime),
	\end{align*}
	with $\rho(s) \coloneqq [ \int K(u) K(u-s) du ] / [ \int K^2(u) du ]$,
	and there is a sequence of random variables such that $\tilde{\mathfrak{W}}_{n,2} \overset{d}{=} \sup_{z\in\mathcal{I}} \tilde{\mathfrak{B}}_{n,2}(h^{-1}s)$ and, as $n\rightarrow\infty$, 
	\begin{align*}
		|\tilde{\mathfrak{W}}_{n,1} - \tilde{\mathfrak{W}}_{n,2}| 
		= O_{\bP}\left( h \sqrt{\log (h^{-1})} \right).
	\end{align*}
\end{lemma}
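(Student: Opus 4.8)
The plan is to adapt the argument in the supplement of LOW17 (itself modeled on Lemma 3.4 of \cite{ghosal2000testing}): after rescaling the index by $z \mapsto s = z / h$, the covariance function of $\tilde{\mathfrak{B}}_{n,1}$ is asymptotically stationary and converges to the normalized autocorrelation function of the equivalent kernel, so that its supremum can be coupled to the supremum of the homogeneous Gaussian field $\tilde{\mathfrak{B}}_{n,2}$ carrying that limiting covariance.

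First, I would identify the limiting covariance by a change of variables. Let $\bar{K}_1 \coloneqq K$ and $\bar{K}_2(u) \coloneqq (I_{4,K} - I_{2,K} u^2) K(u) / (I_{4,K} - I_{2,K}^2)$ denote the ($p$-dependent) equivalent kernel --- both even and compactly supported --- and put $g(z) \coloneqq \sigma_B^2(z) f_Z(z)$. Writing the covariance in \eqref{eq:cv_us_covariance_1} at arguments $hs$ and $hs'$ and substituting $Z_i = hs + uh$, its numerator becomes $\int \bar{K}_p(u)\, \bar{K}_p(u + s - s')\, g(hs + uh)\, du$ and its two normalizing variances become $\int \bar{K}_p^2(u)\, g(hs + uh)\, du$ and $\int \bar{K}_p^2(u)\, g(hs' + uh)\, du$. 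Since $g$ is bounded, bounded away from zero, and Lipschitz on a neighbourhood of $\mathcal{I}$ (Assumptions \ref{as:asymptotic1}(i),(vi) and \ref{as:asymptotic2}(i),(iv)) and $\bar{K}_p$ has bounded support, a first-order expansion yields
\begin{align*}
    & \sup_{s, s' \in \mathcal{I}_n} \left| \bE \big[ \tilde{\mathfrak{B}}_{n,1}(hs)\, \tilde{\mathfrak{B}}_{n,1}(hs') \big] - \rho(s - s') \right| = O(h), \\
    & \text{where} \quad \rho(v) \coloneqq \frac{ \int \bar{K}_p(u)\, \bar{K}_p(u + v)\, du }{ \int \bar{K}_p^2(u)\, du },
\end{align*}
and the bound is uniform because both the difference and $\rho(v)$ vanish once $|v|$ exceeds the support width of $\bar{K}_p$. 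The function $\rho$ is even, continuous, positive definite (a normalized autocorrelation of a square-integrable function), satisfies $\rho(0) = 1$, and obeys $\rho(v) = 1 - \tfrac{1}{2} \lambda_p v^2 + o(v^2)$ with $\lambda_p = -\int \bar{K}_p(u) \bar{K}_p''(u)\, du / \int \bar{K}_p^2(u)\, du > 0$ (which reduces to the constant $\lambda$ of \eqref{eq:analytical-critical} when $p = 1$). These are exactly the properties one needs to construct a centered homogeneous Gaussian field $\tilde{\mathfrak{B}}_{n,2}$ with covariance $\rho(s - s')$ that has almost surely continuous paths, is tight in $\ell^\infty(\mathcal{I}_n)$, and later satisfies the hypotheses of Piterbarg's theorem in Step (iii).

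Next, I would couple $\tilde{\mathfrak{B}}_{n,1}(h\,\cdot)$ and $\tilde{\mathfrak{B}}_{n,2}$ on one probability space. A convenient device is the moving-average (white-noise) representation of the two fields: $\tilde{\mathfrak{B}}_{n,2}(s) = \| \bar{K}_p \|_2^{-1} \int \bar{K}_p(w - s)\, dW(w)$, while $\tilde{\mathfrak{B}}_{n,1}(hs)$ has the representation $\int \bar{K}_p(w - s)\, \mathfrak{C}_{p,g,t,\delta}^{\nev}(hs)\, \sqrt{g(hw)}\, dW(w)$ against the same white noise $W$. Because $g$ is Lipschitz and $\bar{K}_p$ is compactly supported, the factor $\mathfrak{C}_{p,g,t,\delta}^{\nev}(hs) \sqrt{g(hw)}$ differs from the constant $\| \bar{K}_p \|_2^{-1}$ by $O(h)$ uniformly on the support, with an $O(h)$-Lipschitz discrepancy; consequently the difference $D_n(s) \coloneqq \tilde{\mathfrak{B}}_{n,1}(hs) - \tilde{\mathfrak{B}}_{n,2}(s)$ is a centered Gaussian process on $\mathcal{I}_n$ with $\Var[D_n(s)] = O(h^2)$ and $\Var[D_n(s) - D_n(s')] = O\big( h^2 \min\{ |s - s'|^2, 1 \} \big)$, both uniformly. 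Since $\mathcal{I}_n$ has length $\asymp h^{-1}$, $D_n$ has intrinsic diameter $O(h)$ and covering number $O(\varepsilon^{-1})$ at scale $\varepsilon \le h$, so Dudley's entropy bound gives $\bE\big[ \sup_{s \in \mathcal{I}_n} |D_n(s)| \big] = O\big( h \sqrt{\log(h^{-1})} \big)$, and Markov's inequality upgrades this to $\sup_{s \in \mathcal{I}_n} |D_n(s)| = O_{\bP}\big( h \sqrt{\log(h^{-1})} \big)$. The lemma then follows from $|\tilde{\mathfrak{W}}_{n,1} - \tilde{\mathfrak{W}}_{n,2}| \le \sup_{s \in \mathcal{I}_n} |D_n(s)|$. (Alternatively, one could discretize $\mathcal{I}_n$ on a mesh of size $O(h)$ and combine a Gaussian comparison inequality, such as Corollary 3.1 of \cite{chernozhukov2014anti}, with the continuity moduli of the two fields to control the discretization error.)

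The delicate point is this last step, namely securing the sharp rate $h \sqrt{\log(h^{-1})}$ rather than the cruder $\sqrt{h \log(h^{-1})}$ that a naive Gaussian comparison --- using only the size-$O(h)$ covariance gap over an $O(h^{-1})$-point grid --- would deliver. The gain relies on exploiting that the covariance gap is not merely small but also smooth, so that the coupled difference process has standard deviation $O(h)$ \emph{and} increment standard deviation $O(h |s - s'|)$ at the kernel scale, which limits the chaining cost to a single $\sqrt{\log(h^{-1})}$ factor. The remaining work is the careful verification of these pointwise and increment variance estimates, together with a check that they --- and the Piterbarg-type regularity of $\rho$ --- hold uniformly in $(g, t, z) \in \mathcal{A}_{\delta}^{\nev}$ and for both $p \in \{1, 2\}$; the full argument is carried out in the supplement of LOW17.
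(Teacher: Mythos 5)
Your proposal is correct and follows essentially the same route the paper relies on: the paper does not prove this lemma itself but defers to the supplementary material of LOW17, whose argument (modeled on Lemma 3.4 of Ghosal, Sen, and van der Vaart) is exactly your white-noise coupling of the rescaled process with the homogeneous field having the equivalent-kernel autocorrelation $\rho$, followed by an entropy/maximal-inequality bound on the difference process exploiting both the $O(h)$ pointwise and $O(h|s-s'|)$ increment covariance gaps to obtain the $h\sqrt{\log(h^{-1})}$ rate. Your reconstruction, including the identification of $\bar{K}_p$ and the Dudley-type chaining step, matches that argument in substance.
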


\paragraph{Step (iii).}

Noting that the set of $(g, t)$ satisfying $(g, t, z) \in \mathcal{A}^{\ny}$ is finite, we obtain the desired result by Theorem 14.3 of \cite{piterbarg1996asymptotic} and Theorem 3.2 of \cite{konakov1984convergence} in the same manner as the proof of Theorem 2 of \citet{lee2017doubly}.
\qed

\subsection{Proof of Theorem \ref{thm:mb_valid}} \label{sec:proof_mb_valid}

Given Theorem \ref{thm:mb_linearize}, we can prove the validity of the weighted bootstrap inference by verifying the conditions of Theorem 3.2 of \citet{chernozhukov2014anti}.
Let $K_{B,h}(W_i,g,t,z) \coloneqq \Psi_{i,h} B_{i,g,t}^{\ny} K_{i,h}$.
Define the function class $\mathcal{K}_T$ and the standardized process $\mathbb{G}_n k_T$ as follows:
\begin{align*}
	& \mathcal{K}_T \coloneqq \left\{ s \to \frac{1}{h}\frac{K_{B,h}(s,g,t,z)}{ \tilde{\mathcal{S}}_{g,t}^{\ny}(z)}: (g,t,z,h) \in \mathcal{A}^{\ny} \times \mathscr{H}_n \right\}, \\
	& \mathbb{G}_n k_T \coloneqq \frac{1}{\sqrt{n}} \sum_{i=1}^n \frac{1}{h} \frac{\Psi_{i,h} \big( B_{i,g,t}^{\ny} - \mu_B(Z_i) \big) K_{i,h}}{ \tilde{\mathcal{S}}_{g,t}^{\ny}(z)},
\end{align*}
where $\mathscr{H}_n$ is the set of bandwidths that satisfy Assumption \ref{as:asymptotic1}(iii).
Note that by Assumption \ref{as:asymptotic2} the difference between $\mathbb{G}_n k_T$ and the ($f_Z(z)$ times) studentized statistic vanishes at a polynomial rate.
By Lemma \ref{lem:K_vc_type} and Corollary A.1(i) of \citet{chernozhukov2014gaussian}, $\mathcal{K}_T$ is VC type.
By the definition of VC type classes, for some constants $a \ge e$ and $v \ge 1$, we have
\begin{align*}
	\sup_Q N( \mathcal{K}_T, \| \cdot \|_{Q,2}, \varepsilon \| \bar{K}_T \|_{Q,2} ) 
	\le (a/\varepsilon)^v,
	\quad
	0 < \forall \varepsilon \le 1,
	\forall n \ge 1,
\end{align*}
where $N(T, d, \varepsilon)$ denotes the $\varepsilon$-covering number of a semimetric space $(T, d)$ for $\varepsilon > 0$, $\| f \|_{Q,p} \coloneqq (Q|f|^p)^{1/p}$, and the supremum is taken over all finitely discrete probability measures.
Letting $\sigma_n^2$ be  any  positive constant such that $\sup_{k_T\in\mathcal{K}_T} Pk_T^2 \le \sigma_n^2 \le \|\bar{K}_T\|_{P,2}^2$, then $\sigma_n^2 = O(1)$ because
\begin{align}
	Pk_T^2 \lesssim \int K^2(u)f_Z(z+uh)du = O(1). \nonumber
\end{align}
In addition, we can see that $\|\bar{K}_T\|_{Q,2} = O(h^{-1/2})$ and
\begin{align}
	K_n 
	\coloneqq v( (\log n) \vee (\log( a \|\bar{K}_T\|_{Q,2} / \sigma_n ))) 
	=  O((\log n) \vee (\log h^{-1/2}))  = O(\log n). \nonumber
\end{align}
Thus, $\|\bar{K}_T\|_{Q,2}^2 \sigma_n^4 K_n^4 / n \lesssim n^{-1}h^{-1}(\log n)^4 \lesssim n^{-c}$ for some positive constant $c$ under Assumption \ref{as:asymptotic1}(iii).
The above discussion has shown that the conditions of Theorem 3.2 of \citet{chernozhukov2014anti} are satisfied, implying the desired result.
\qed 

\subsection{Proof of the auxiliary results for the local linear regression estimation} \label{subsec:LLR}

Recall that in Section \ref{subsec:bandwidth} the proposed undersmoothing for the LQR estimator is to use the IMSE-optimal bandwidth $h_{\mathrm{LL}}$ for the LLR estimator.
In this subsection, to derive this bandwidth, we prove the following asymptotic linear representation for the LLR estimator $\hat{\DR}_{g,t}^{\mathtt{LL},\ny}(z)$ defined in \eqref{eq:DR-LL-ny}:
\begin{align*}
	\hat{\DR}_{g,t}^{\mathtt{LL},\ny}(z) - {\DR}_{g,t}^{\ny}(z)
	& = \frac{1}{f_Z(z)} \frac{1}{nh} \sum_{i=1}^n  \left( B_{i,g,t}^{\mathtt{LL},\ny} - \mu_B(Z_i) \right) K_{i,h} + \frac{I_{2,K}}{2}\mu_B^{(2)}(z)h^2 + o_{\bP}(h^2) + o_{\bP}\left( \sqrt{\frac{\log n}{nh}} \right).
\end{align*}
Since the conditional expectation given $\bm{Z}$ of the first term in the right-hand side is $0$, it is easy to see that the asymptotic bias is given by
\begin{align*}
	\bE\left[ \hat{\DR}_{g,t}^{\mathtt{LL},\ny}(z) \;  \middle| \; \bm{Z} \right] - {\DR}_{g,t}^{\ny}(z)
	& = \frac{I_{2,K}}{2}\mu_B^{(2)}(z)h^2 + o_{\bP}(h^2) + o_{\bP}\left( \sqrt{\frac{\log n}{nh}} \right).
\end{align*}
Moreover, given the IID assumption in Assumption \ref{as:iid}, in the same manner as the derivation of the asymptotic variance of the LQR estimator, we can show that
\begin{align*}
	\Var \left[ \frac{1}{f_Z(z)} \frac{1}{nh} \sum_{i=1}^n K_{i,h} B_{i,g,t}^{\mathtt{LL},\ny} \; \middle| \; \bm{Z} \right]
	& = \frac{1}{nh} \frac{ \sigma_B^2(z) }{ f_Z(z) } \left( \int K^2(u) du \right) + o_{\bP}\left( \frac{1}{nh} \right).
\end{align*}
Accordingly, we focus on the proof of the asymptotic linear representation presented above.
Since this asymptotic linear representation for the LLR estimator is of the same form as that for the LQR estimator in the main text, we can perform uniform inference for CATT based on the LLR estimator in the same way as in the case of the LQR estimator.
The details, such as the construction of the standard error, the uniform critical values, and the bandwidth selection, can be found in the preprint version of this article (\citealp{imai2023doubly}).

\subsubsection{Proof of the asymptotic linear representation}

To simplify notation, with an abuse of notation, we suppress the superscript ``$\mathrm{LL}$'' in this subsection.

For the LLR estimator with $p = 1$, observe that
\begin{align} \label{eq:linear1}
	\begin{split}
		\hat{\DR}_{g,t}^{\ny}(z) - {\DR}_{g,t}^{\ny}(z)
		& = \hat \mu_{\hat A}(z) - \mu_A(z) \\
		& = \Big( \hat \mu_A(z) - \hat \mu_{\bE[A\mid Z]}(z) \Big) + \Big( \hat \mu_{\bE[A\mid Z]}(z) - \mu_A(z) \Big) + \Big( \hat \mu_{\hat A}(z) - \hat \mu_A(z) \Big) \\
		& \eqqcolon (\mathbf{LLR.I})  + (\mathbf{LLR.II}) + (\mathbf{LLR.III}),
	\end{split}
\end{align}
where each $\hat \mu$ in the right-hand side denotes the corresponding LLR estimator.

To proceed, we provide the following auxiliary results that we use repeatedly.
As in the case of $p=2$, from Lemma \ref{lem:Gamma} and the continuous mapping theorem,
\begin{align} \label{eq:Gamma_inv}
	\tilde{\bm{\Gamma}}_{(h,K,1)}^{-1}
	= \bm{\Gamma}_{(K,1)}^{-1} + o_{\bP}(1)
	= \frac{ 1 }{ f_Z(z) } \diag \left[ 1, \left( \int u^2 K(u) du \right)^{-1} \right] + o_{\bP}(1).
\end{align}
In addition, it is straightforward to show that
\begin{align} \label{eq:e0}
	\bm{e}_0^\top \bm{H}_1^{-1} = \bm{e}_0^\top,
	\qquad
	\bm{e}_0^\top \bm{\Gamma}_{(K,1)}^{-1} = \frac{1}{f_Z(z)} \bm{e}_0^\top,
	\qquad 
	\bm{e}_0^\top \bm{r}_1(u) = 1.
\end{align}

\paragraph{Evaluation on $(\mathbf{LLR.I})$.} 
To evaluate the first term in \eqref{eq:linear1}, observe that
\begin{align*}
	\hat \mu_A(z) - \hat \mu_{\bE[A\mid Z]}(z) 
	&= \frac{1}{n} \bm{e}_0^\top \tilde{\bm{\Gamma}}_{(h,K,1)}^{-1} \tilde{\bm{\Omega}}_{(h,K,1)} \left( \bm{A}_{g,t}^{\ny} - \bE\left[ \bm{A}_{g,t}^{\ny} \mid Z\right] \right) \\
	&= \bm{e}_0^\top \tilde{\bm{\Gamma}}_{(h,K,1)}^{-1} \frac{1}{nh} \sum_{i=1}^n K_{i,h} \bm{r}_1(u_{i,h}) \left(  A_{i,g,t}^{\ny} - \mu_A(Z_i) \right) \\
	& = \left( \bm{e}_0^\top\bm{\Gamma}_{(K,1)}^{-1} + o_{\bP}(1) \right) \frac{1}{nh} \sum_{i=1}^n K_{i,h} \bm{r}_1(u_{i,h}) \left(  A_{i,g,t}^{\ny} - \mu_A(Z_i) \right) \\
	& = \frac{1}{nh} \frac{1}{f_Z(z)} \sum_{i=1}^n K_{i,h} \left(  A_{i,g,t}^{\ny} - \mu_A(Z_i) \right) + o_{\bP}\left(  \frac{1}{nh} \sum_{i=1}^n K_{i,h} \bm{r}_1(u_{i,h}) \left(  A_{i,g,t}^{\ny} - \mu_A(Z_i) \right)\right),
\end{align*}
where the last equality follows from (\ref{eq:Gamma_inv}). 
From Lemma \ref{lem:eval_ka-mu}, which states that for any non-negative integer $q\in\mathbb{Z}_+$,
\begin{align*}
	\frac{1}{nh} \sum_{i=1}^n K_{i,h} (Z_i-z)^q \left(  A_{i,g,t}^{\ny} - \mu_A(Z_i) \right) = O_{\bP}\left( h^q \sqrt{\frac{\log n}{nh}}\right) 
\end{align*}
holds uniformly in $(g,t,z) \in \mathcal{A}^{\ny}$, we can see that
\begin{align*}
	\hat \mu_A(z) - \hat \mu_{\bE[A\mid Z]}(z) = \frac{1}{nh} \frac{1}{f_Z(z)} \sum_{i=1}^n K_{i,h} \left(  A_{i,g,t}^{\ny} - \mu_A(Z_i) \right) + o_{\bP}\left( \sqrt{\frac{\log n}{nh}} \right)
\end{align*}
holds uniformly in $(g,t,z) \in \mathcal{A}^{\ny}$.

\paragraph{Evaluation on $(\mathbf{LLR.II})$.}
To evaluate the second term in (\ref{eq:linear1}), denoting $\bm{\mu}_{A,(h)} = (\mu_A(z), h \cdot \mu_A^{(1)}(z))^\top$, in the same manner as the evaluation of $(\mathbf{LQR.II})$, we can see that 
\begin{align*}
	\hat \mu_{\bE[A\mid Z]}(z) - \mu_A(z) 
	&= \frac{1}{n} \bm{e}_0^\top \tilde{\bm{\Gamma}}_{(h,K,1)}^{-1} \tilde{\bm{\Omega}}_{(h,K,1)}  \bE\left[ \bm{A}_{g,t}^{\ny} \mid Z\right] -  \bm{e}_0^\top \tilde{\bm{\Gamma}}_{(h,K,1)}^{-1}\tilde{\bm{\Gamma}}_{(h,K,1)}\bm{\mu}_{A,(h)}\\
	& = \frac{\mu_A^{(2)}(z)}{2nh f_Z(z)} \sum_{i=1}^n K_{i,h} (Z_i-z)^2 + o_{\bP}\left( \frac{1}{nh} \sum_{i=1}^n K_{i,h} \bm{r}_{1}(u_{i,h}) (Z_i-z)^2 \right)
\end{align*}
holds uniformly in $(g,t,z) \in \mathcal{A}^{\ny}$.
Then, it holds that
\begin{align*}
	\hat \mu_{\bE[A\mid Z]}(z) - \mu_A(z) 
	& = \frac{I_{2,K}}{2}\mu_A^{(2)}(z) h^2 + o_{\bP}(h^2) + o_{\bP}\left( \sqrt{\frac{\log n}{nh}} \right)
\end{align*}
uniformly in $(g,t,z) \in \mathcal{A}^{\ny}$, where the leading term is derived from Lemma \ref{lem:eval_kq}, which states that
\begin{align*}
	\frac{1}{nh}\sum_{i=1}^n K_{i,h} (Z_i-z)^2 = I_{2,K}f_Z(z)h^2 + o_{\bP}(h^2) + O_{\bP}\left(h^2 \sqrt{\frac{\log n}{nh}} \right)
\end{align*}
uniformly in $z \in \mathcal{I}$, and the convergence rate of the remainder term also follows from Lemma \ref{lem:eval_kq}, which states that the following equations hold uniformly in $z \in \mathcal{I}$:
\begin{align*}
	& \frac{1}{nh}\sum_{i=1}^n K_{i,h}(Z_i - z)^2 = O(h^2) + O_{\bP}\left( h^2 \sqrt{ \frac{\log n}{nh} } \right),\\
	&  \frac{1}{nh}\sum_{i=1}^n K_{i,h}(Z_i - z)^3 = o(h^3) + O_{\bP}\left( h^3 \sqrt{ \frac{\log n}{nh} } \right).
\end{align*}

\paragraph{Evaluation on $(\mathbf{LLR.III})$.}
In the same manner as the evaluation of $(\mathbf{LQR.III})$, we can see that
\begin{align*}
	& \hat \mu_{\hat A}(z) - \hat \mu_A(z) \\
	& = \frac{1}{\hat \mu_{\hat R}(z) \mu_R(z)} \bm{e}_0^\top \tilde{\bm{\Gamma}}_{(h,K,1)}^{-1} \left( \frac{1}{nh} \sum_{i=1}^n K_{i,h} \bm{r}_1(u_{i,h}) E_{i,g,t}^{\ny} \right) \left( \hat \mu_{\hat R}(z) - \mu_R(z) \right) \\
	& \quad - \frac{1}{\hat \mu_G(z) \mu_G(z)} \bm{e}_0^\top \tilde{\bm{\Gamma}}_{(h,K,1)}^{-1} \left( \frac{1}{nh} \sum_{i=1}^n K_{i,h} \bm{r}_1(u_{i,h}) F_{i,g,t}^{\ny} \right) \left( \hat \mu_G(z) - \mu_G(z) \right) \\
	& \quad + \bm{e}_0^\top \tilde{\bm{\Gamma}}_{(h,K,1)}^{-1} \left( \frac{1}{nh} \sum_{i=1}^n K_{i,h} \bm{r}_1(u_{i,h}) \left( \frac{ 1 }{ \hat \mu_G(z) } \left( \hat F_{i,g,t}^{\ny} - F_{i,g,t}^{\ny} \right) - \frac{ 1 }{ \hat \mu_{\hat R}(z) } \left( \hat E_{i,g,t}^{\ny} - E_{i,g,t}^{\ny} \right) \right) \right) \\
	& \eqqcolon (\mathbf{LLR.III\en I}) - (\mathbf{LLR.III\en II}) + (\mathbf{LLR.III\en III}).
\end{align*}
In what follows, we examine each term.

For $(\mathbf{LLR.III \en I})$, likewize $(\mathbf{LQR.III \en I})$, we have
\begin{align*}
	(\mathbf{LLR.III \en I})
	& = \frac{1}{\hat \mu_{\hat R}(z) \mu_R(z)} \hat \mu_E(z) \left( \hat \mu_{\hat R}(z) - \mu_R(z) \right) \\
	& = \frac{\mu_E(z)}{\mu_R^2(z)} \left( \hat \mu_R(z) - \mu_R(z) \right) + o_{\bP}(h^2) + o_{\bP}\left( \sqrt{\frac{\log n}{nh}} \right),
\end{align*}
where we used Lemma \ref{lem:eval_mu_hatr}, which states that 
\begin{align*}
	\hat \mu_{\hat R}(z) = \hat \mu_R(z) + O_{\bP}\left( \frac{1}{\sqrt{n}} \right),
\end{align*}
and Lemma \ref{lem:eval_mu_re}, which states that
\begin{align*}
	& \hat \mu_R(z) - \mu_R(z) = O(h^2) + O_{\bP}\left( \sqrt{\frac{\log n}{nh}} \right), 
	\qquad
	\hat \mu_E(z) - \mu_E(z) = O(h^2) + O_{\bP}\left( \sqrt{\frac{\log n}{nh}} \right).
\end{align*}
Furthermore, in the same manner as the evaluation of $\hat \mu_A(z) - \hat{\mu}_{\bE[A \mid Z]}(z)$ and $\hat{\mu}_{\bE[A \mid Z]}(z) - \mu_A(z)$, we can show that
\begin{align*}
	\hat \mu_R(z) - \mu_R(z)
	& = \frac{1}{nh} \frac{1}{f_Z(z)} \sum_{i=1}^n K_{i,h} (R_{i,g,t} - \mu_R(Z_i)) + \frac{I_{2,K}}{2}\mu_R^{(2)}(z) h^2 + o_{\bP}(h^2) + o_{\bP}\left( \sqrt{\frac{\log n}{nh}} \right).
\end{align*}
Thus, we have shown that
\begin{align*}
	(\mathbf{LLR.III \en I})
	= \frac{ \mu_E(z) }{ f_Z(z) \mu_R^2(z) } \frac{1}{nh} \sum_{i=1}^n K_{i,h} (R_{i,g,t}-\mu_R(Z_i)) + \frac{I_{2,K}}{2}\frac{ \mu_E(z) }{ \mu_R^2(z) } \mu_R^{(2)}(z)h^2 + o_{\bP}(h^2) + o_{\bP}\left( \sqrt{\frac{\log n}{nh}} \right).
\end{align*}

Similar to the case of $(\mathbf{LLR.III\en I})$, we can show that
\begin{align*}
	(\mathbf{LLR.III\en II})
	= \frac{ \mu_F(z) }{ f_Z(z) \mu_G^2(z) } \frac{1}{nh} \sum_{i=1}^n K_{i,h} (G_{i,g}-\mu_G(Z_i)) + \frac{I_{2,K}}{2} \frac{ \mu_F(z) }{ \mu_G^2(z) } \mu_G^{(2)}(z)  + o_{\bP}(h^2) + o_{\bP}\left( \sqrt{\frac{\log n}{nh}} \right).
\end{align*}

For $(\mathbf{LLR.III\en III})$, the $1/\sqrt{n}$-consistency of the first-stage parametric estimation ensures that
\begin{align*}
	(\mathbf{LLR.III\en III})
	& = \bm{e}_0^\top \tilde{\bm{\Gamma}}_{(h,K,1)}^{-1} \left( \frac{1}{nh} \sum_{i=1}^n K_{i,h} \bm{r}_1(u_{i,h}) \left( \frac{ 1 }{ \hat \mu_G(z) } \left( \hat F_{i,g,t}^{\ny} - F_{i,g,t}^{\ny} \right) - \frac{ 1 }{ \hat \mu_{\hat R}(z) } \left( \hat E_{i,g,t}^{\ny} - E_{i,g,t}^{\ny} \right) \right) \right) \\
	& = O_{\bP}\left( \frac{1}{\sqrt{n}} \right),
\end{align*}
where we used \eqref{eq:Gamma_inv}, Assumption \ref{as:asymptotic1}(v), and Lemma \ref{lem:eval_kq}, which states that
\begin{align*}
	\frac{1}{nh} \sum_{i=1}^n K_{i,h}
	=  O_{\bP}(1), 
	\qquad
	\frac{1}{nh} \sum_{i=1}^n K_{i,h}(Z_i-z)
	=  o_{\bP}(1).
\end{align*}

Summing up, we obtain the desired result.
\qed

\section{Lemmas} \label{sec:lemma}

\subsection{Lemmas for Lemma \ref{lem:DR}} \label{subsec:lemma:identification}

The following is Lemma A.2 of \citet{callaway2021difference}.
Let ${\CATT}_{g,t}(X) \coloneqq \bE [ Y_t(g) - Y_t(0) \mid X, G_g = 1]$.

\begin{lemma}[Lemma A.2, \citealp{callaway2021difference}] \label{lem:ATT-nev}
	Suppose that Assumptions \ref{as:staggered}--\ref{as:overlap} hold.
	Then, for all $g \in \mathcal{G}$ and $t \in \{ 2, \dots, \mathcal{T}  \}$ such that $g \le t \le \bar{g}$,
	\begin{align*}
		{\CATT}_{g,t}(X) 
		= \bE[ Y_t - Y_{g-1} \mid X, G_g = 1 ] - \bE[ Y_t - Y_{g-1} \mid X, D_{t} = 0, G_g=0 ]
		\quad 
		\text{a.s.}
	\end{align*}
\end{lemma}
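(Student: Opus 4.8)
The plan is to follow the proof of Lemma A.1 in CS21, keeping the covariate vector $X$ inside every conditional expectation. First I would re-express the observed quantities on the right-hand side in terms of potential outcomes. By construction of the observed outcome, on the event $\{ G_g = 1 \}$ one has $Y_s = Y_s(g)$ for every $s$; in particular $Y_t = Y_t(g)$ and $Y_{g-\delta-1} = Y_{g-\delta-1}(g)$. Since $g \in \mathcal{G}_{\delta}$ forces $g \ge 2 + \delta$, the reference period satisfies $1 \le g - \delta - 1 < g - \delta$, so Assumption \ref{as:anticipation} applies there and gives $\bE[ Y_{g-\delta-1}(g) \mid X, G_g = 1 ] = \bE[ Y_{g-\delta-1}(0) \mid X, G_g = 1 ]$ a.s. On the event $\{ C = 1 \}$ the unit is never treated, so $Y_s = Y_s(0)$ for all $s$. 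Assumption \ref{as:overlap} ensures the conditioning events have positive probability given $X$ a.s., so the conditional expectations are well defined, and we obtain
\begin{align*}
    \bE[ Y_t - Y_{g-\delta-1} \mid X, G_g = 1 ] &= \bE[ Y_t(g) \mid X, G_g = 1 ] - \bE[ Y_{g-\delta-1}(0) \mid X, G_g = 1 ], \\
    \bE[ Y_t - Y_{g-\delta-1} \mid X, C = 1 ] &= \bE[ Y_t(0) - Y_{g-\delta-1}(0) \mid X, C = 1 ],
\end{align*}
almost surely.

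Second, I would link the never-treated trend to the group-$g$ trend by telescoping Assumption \ref{as:parallel-nev}. For each $s \in \{ g - \delta, \dots, t \}$ we have $2 \le s \le \mathcal{T}$ and $s \ge g - \delta$, so conditional parallel trends yields $\bE[ Y_s(0) - Y_{s-1}(0) \mid X, G_g = 1 ] = \bE[ Y_s(0) - Y_{s-1}(0) \mid X, C = 1 ]$ a.s. Summing over $s$ from $g - \delta$ to $t$ and cancelling the telescoping sums gives
\begin{align*}
    & \bE[ Y_t(0) - Y_{g-\delta-1}(0) \mid X, G_g = 1 ] \\
    & \qquad = \bE[ Y_t(0) - Y_{g-\delta-1}(0) \mid X, C = 1 ] \quad \text{a.s.}
\end{align*}

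Third, I would subtract the two identities from the first step and substitute the telescoped identity from the second step; the $\bE[ Y_{g-\delta-1}(0) \mid X, G_g = 1 ]$ terms cancel, leaving
\begin{align*}
    & \bE[ Y_t - Y_{g-\delta-1} \mid X, G_g = 1 ] - \bE[ Y_t - Y_{g-\delta-1} \mid X, C = 1 ] \\
    & \qquad = \bE[ Y_t(g) \mid X, G_g = 1 ] - \bE[ Y_t(0) \mid X, G_g = 1 ],
\end{align*}
which is exactly $\CATT_{g,t}(X)$ by definition. The only genuinely delicate point is the period bookkeeping: one must check that $g - \delta - 1$ lies strictly below the anticipation threshold $g - \delta$, so that Assumption \ref{as:anticipation} can be invoked at the base period, and that the smallest index entering the telescoping sum is exactly $g - \delta$, so that every increment is covered by Assumption \ref{as:parallel-nev}; both follow from $g \ge 2 + \delta$ and $t \ge g - \delta$. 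The remaining steps are a direct transcription of the CS21 argument with $X$ added to the conditioning set.
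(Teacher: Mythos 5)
Your argument is correct and is exactly the standard derivation behind Lemma A.1 of CS21, which the paper simply imports without reproving: express both conditional means in terms of potential outcomes, invoke limited anticipation at the base period $g-\delta-1 < g-\delta$, and telescope the conditional parallel trends assumption over $s \in \{g-\delta,\dots,t\}$. The period bookkeeping you flag (that $g \ge 2+\delta$ guarantees both the anticipation condition at the base period and the applicability of parallel trends to every increment) is the right thing to check, and your handling of it matches the CS21 argument.
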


To state the next lemma, we define the nonparametric IPW, OR, and DR estimands:
\begin{align*}
	{\IPW}_{g,t}^{\ny}(Z) 
	& \coloneqq \bE \left[ \left( \frac{G_g}{\bE[ G_g \mid Z]} - \frac{ R_{g,t}(W) }{\bE [ R_{g,t}(W) \mid Z ]} \right) (Y_t - Y_{g-1}) \; \middle| \; Z \right], \\
	{\OR}_{g,t}^{\ny}(Z)
	& \coloneqq \bE \left[ \frac{G_g}{\bE[G_g \mid Z]} \left( Y_t - Y_{g-1} - m_{g,t}^{\ny}(X) \right) \; \middle| \; Z \right], \\
	{\DR}_{g,t}^{\ny}(Z)
	& \coloneqq \bE \left[ \left( \frac{G_g}{\bE[ G_g \mid Z]} - \frac{ R_{g,t}(W) }{\bE [ R_{g,t}(W) \mid Z ]} \right) \left( Y_t - Y_{g-1} - m_{g,t}^{\ny}(X) \right) \; \middle| \; Z \right],
\end{align*}
where $R_{g,t}(W) \coloneqq p_{g,t}(X) (1 - D_t) (1 - G_g) / [ 1 - p_{g,t}(X) ]$.

\begin{lemma} \label{lem:nonparametric}
	Under Assumptions \ref{as:staggered}--\ref{as:overlap}, ${\CATT}_{g,t}(z) = {\IPW}_{g,t}^{\ny}(z) = {\OR}_{g,t}^{\ny}(z) = {\DR}_{g,t}^{\ny}(z)$ for all $z \in \mathcal{I}$, $g \in \mathcal{G}$, and $t \in \{ 2, \dots, \mathcal{T}\}$ such that $g\le t \le \bar{g}$.
\end{lemma}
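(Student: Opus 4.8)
The plan is to show that all four objects coincide with ${\CATT}_{g,t}(z)$ by repeatedly applying the law of iterated expectations (exploiting that $Z$ is a coordinate of $X$) together with two elementary reweighting identities, taking Lemma \ref{lem:ATT-nev} as the starting point.

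\textbf{Step 1 (rewrite the target).} First I would note that, since $Z$ is a coordinate of $X$, the tower property gives ${\CATT}_{g,t}(z) = \bE[{\CATT}_{g,t}(X) \mid G_g = 1, Z = z]$, and by Lemma \ref{lem:ATT-nev}, ${\CATT}_{g,t}(X) = \bE[Y_t - Y_{g-\delta-1}\mid X, G_g = 1] - m_{g,t,\delta}^{\nev}(X)$ a.s. Under Assumption \ref{as:overlap}, $\bE[G_g\mid Z=z] = \bP(G_g=1\mid Z=z) \ge \varepsilon > 0$, so for any integrable $V$ one may rewrite the ``treated'' conditional mean as a normalized weighted mean, $\bE[V\mid G_g=1,Z=z] = \bE[G_g V\mid Z=z]/\bE[G_g\mid Z=z]$; this is the form that will match the three estimands.

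\textbf{Step 2 (propensity-score balancing).} The key computation is that the odds weight $R_g(W) = p_g(X)C/(1-p_g(X))$ reweights the never-treated subpopulation to mimic the treated one. From $p_g(X) = \bP(G_g=1\mid X, G_g+C=1)$, and using $\{G_g=1\}\subseteq\{G_g+C=1\}$ and $\{G_g=1\}\cap\{C=1\}=\emptyset$, one gets $p_g(X)/(1-p_g(X)) = \bP(G_g=1\mid X)/\bP(C=1\mid X)$; Assumption \ref{as:overlap} ensures $1-p_g(X) > \varepsilon$, so $R_g$ is well defined and integrable. Conditioning on $X$, for any bounded measurable $q$ and any integrable $V$,
\begin{align*}
    \bE[R_g(W)\, q(X)\, V \mid X]
    &= \frac{p_g(X)}{1-p_g(X)}\, q(X)\, \bE[C V \mid X] \\
    &= \bP(G_g=1\mid X)\, q(X)\, \bE[V \mid X, C = 1],
\end{align*}
using $\bE[C V \mid X] = \bP(C=1\mid X)\,\bE[V \mid X, C=1]$. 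Taking $V = Y_t - Y_{g-\delta-1}$ gives $\bE[R_g(W) q(X)(Y_t - Y_{g-\delta-1}) \mid X] = \bP(G_g=1\mid X)\, q(X)\, m_{g,t,\delta}^{\nev}(X)$; taking $V \equiv 1$, $q \equiv 1$ gives $\bE[R_g(W)\mid X] = \bP(G_g=1\mid X)$, hence, integrating against $Z=z$, $\bE[R_g(W)\mid Z=z] = \bP(G_g=1\mid Z=z) = \bE[G_g\mid Z=z]$, so all three estimands share the same normalizing constant.

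\textbf{Step 3 (OR, IPW, DR).} For ${\OR}$, condition on $X$: $\bE[G_g(Y_t - Y_{g-\delta-1} - m_{g,t,\delta}^{\nev}(X))\mid X] = \bP(G_g=1\mid X)\,{\CATT}_{g,t}(X)$ by Lemma \ref{lem:ATT-nev}; dividing by $\bE[G_g\mid Z=z]$, taking $\bE[\cdot\mid Z=z]$, and using $\bE[\bP(G_g=1\mid X){\CATT}_{g,t}(X)\mid Z=z] = \bE[G_g{\CATT}_{g,t}(X)\mid Z=z]$ yields ${\OR}_{g,t,\delta}^{\nev}(z) = \bE[{\CATT}_{g,t}(X)\mid G_g=1,Z=z] = {\CATT}_{g,t}(z)$ by Step 1. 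For ${\IPW}$, split into the $G_g$-weighted term, which equals $\bE[Y_t-Y_{g-\delta-1}\mid G_g=1,Z=z]$, and the $R_g$-weighted term, which by Step 2 (with $q\equiv 1$) equals $\bE[G_g\,m_{g,t,\delta}^{\nev}(X)\mid Z=z]/\bE[G_g\mid Z=z] = \bE[m_{g,t,\delta}^{\nev}(X)\mid G_g=1,Z=z]$; the difference is again $\bE[{\CATT}_{g,t}(X)\mid G_g=1,Z=z]$ by Lemma \ref{lem:ATT-nev}. Finally, writing ${\DR}_{g,t,\delta}^{\nev}(z) = {\OR}_{g,t,\delta}^{\nev}(z) - \bE[R_g(W)(Y_t - Y_{g-\delta-1} - m_{g,t,\delta}^{\nev}(X))\mid Z=z]/\bE[R_g(W)\mid Z=z]$ (legitimate because the two normalizations coincide by Step 2), the bracketed term vanishes conditionally on $X$ since $\bE[R_g(W)(Y_t - Y_{g-\delta-1})\mid X] = \bP(G_g=1\mid X)\,m_{g,t,\delta}^{\nev}(X) = \bE[R_g(W)m_{g,t,\delta}^{\nev}(X)\mid X]$; hence ${\DR}_{g,t,\delta}^{\nev}(z) = {\OR}_{g,t,\delta}^{\nev}(z) = {\CATT}_{g,t}(z)$. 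The only genuine obstacle is Step 2 — pinning down the odds-weight identity from the precise definition of the GPS $p_g$ as a probability conditional on $\{G_g + C = 1\}$, and checking via Assumption \ref{as:overlap} that every reweighting and division is well defined; the rest is bookkeeping with iterated expectations.
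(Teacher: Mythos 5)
Your proof is correct, and it is essentially the argument the paper intends: the paper omits the proof of Lemma \ref{lem:nonparametric} by deferring to Theorem 1 of \citet{callaway2021difference}, and your Steps 1--3 (reduction to ${\CATT}_{g,t}(X)$ via Lemma \ref{lem:ATT-nev}, the odds-weight balancing identity $\bE[R_g(W)\mid X]=\bP(G_g=1\mid X)$, and iterated expectations conditional on $Z=z$) reconstruct exactly that argument adapted to conditioning on $Z$ rather than taking unconditional expectations. The only cosmetic caveat is that boundedness of $q$ is not needed and $m_{g,t,\delta}^{\nev}(X)$ need not be bounded, but you already sidestep this by factoring $X$-measurable quantities out of the conditional expectation, so nothing is missing.
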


\begin{proof}
	The proof is almost the same as Theorem 1 of \citet{callaway2021difference} and is thus omitted.
\end{proof}

\subsection{Lemmas for Theorems \ref{thm:bias_variance} and \ref{thm:approx_us_distribution}}

\begin{lemma} \label{lem:K_vc_type}
	Under Assumption \ref{as:asymptotic1}(ii), $\mathcal{K} = \{ s \mapsto K(hs + z) : h > 0, z \in \mathbb{R} \}$ is VC type (cf. Definition 2.1 of \citealp{chernozhukov2014gaussian}).
\end{lemma}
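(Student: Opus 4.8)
The plan is to reduce the claim to the classical fact that translates and dilations of a single function of bounded variation generate a VC-subgraph class, and then to use that VC-subgraph classes carrying a bounded envelope are of VC type in the sense of Definition~2.1 of CCK14b. First I would record the consequences of Assumption~\ref{as:asymptotic1}(ii): $K$ is compactly supported and (being six-times differentiable) has a continuous derivative, so $K$ is bounded, the constant $\bar K \coloneqq \sup_u |K(u)| < \infty$ is an envelope for $\mathcal{K}$, and $K$ has finite total variation $V \coloneqq \int_{\mathbb{R}} |K'(u)|\, du < \infty$. Using the Jordan decomposition I would write $K = K_1 - K_2$ with $K_1(u) \coloneqq \int_{-\infty}^{u} \max\{K'(v),0\}\, dv$ and $K_2(u) \coloneqq \int_{-\infty}^{u}\max\{-K'(v),0\}\,dv$ bounded, nondecreasing, $K_j(-\infty)=0$, and $\sup_u K_j(u) \le V$. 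Then $\mathcal{K}$ is contained in the pointwise difference of the two classes $\mathcal{K}_j \coloneqq \{\, s \mapsto K_j(hs+z) : h>0,\ z\in\mathbb{R} \,\}$, $j=1,2$, so it suffices to show each $\mathcal{K}_j$ is of VC type with the constant envelope $V$.

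Next I would show each $\mathcal{K}_j$ is VC-subgraph of bounded index. Fix $j$ and let $\psi_j$ be a generalized inverse of the nondecreasing function $K_j$. For every $t$ the level set $\{u : t < K_j(u)\}$ is empty, all of $\mathbb{R}$, or a half-line with nondecreasing endpoint $\psi_j(t)$; hence, after a short case analysis that absorbs whether the half-line is open or closed at its endpoint, the subgraph $\{(s,t): t < K_j(hs+z)\}$ is a finite union of sets of the form $\{(s,t): h s + z > \psi_j(t)\}$ and $\{(s,t): h s + z \ge \psi_j(t)\}$, i.e.\ (non)positivity sets of functions lying in the three-dimensional linear span of $\{\, (s,t)\mapsto s,\ (s,t)\mapsto 1,\ (s,t)\mapsto \psi_j(t) \,\}$. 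Since positivity sets of a finite-dimensional vector space of functions form a VC class of bounded index, and a finite union of VC classes is VC, the subgraphs of $\mathcal{K}_j$ form a VC class of bounded index. As $\mathcal{K}_j$ has the constant envelope $V$, the standard polynomial covering-number bound for VC-subgraph classes gives $\sup_Q N(\mathcal{K}_j, \|\cdot\|_{Q,2}, \varepsilon V) \le (A/\varepsilon)^v$ for some $A \ge e$, $v \ge 1$, all $\varepsilon \in (0,1]$, with the supremum over finitely discrete $Q$; that is, $\mathcal{K}_j$ is of VC type.

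Finally I would combine the pieces: a pointwise difference of two VC-type classes is again of VC type, since the covering number of the difference is at most the product of the two covering numbers and $\bar K \le \sup_u K_1 + \sup_u K_2$ serves as an envelope. Hence $\mathcal{K} \subseteq \mathcal{K}_1 - \mathcal{K}_2$ is of VC type with the constant envelope $\bar K$, which is exactly the assertion; note that there is no dependence on $n$, so the VC-type constants are trivially uniform in $n$.

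The main obstacle is the bookkeeping in the middle step: one must check carefully that passing to the generalized inverse of a $K_j$ that may be flat or have jumps—together with handling the open/closed boundary of its level sets—does not destroy the finite-dimensional-span structure, so that the VC index stays bounded by an absolute constant. A cleaner alternative, which I would at least mention, is to bypass the combinatorics entirely and invoke an existing result from the kernel empirical-process literature stating that $\{\, K(h\,\cdot\, + z) : h>0,\ z \in \mathbb{R} \,\}$ is a VC-type (equivalently, Euclidean) class whenever $K$ is of bounded variation—as in the treatments of kernel-induced classes by Nolan and Pollard or by Gin\'e and Guillou—and simply apply it to $K$.
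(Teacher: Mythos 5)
Your proof is correct and follows essentially the same route as the paper's: both use that $K$, being continuously differentiable with compact support, is of bounded variation and hence a difference of two monotone functions, and that monotone transformations of the two-dimensional affine class $\{ s \mapsto hs + z \}$ remain VC-subgraph, concluding VC type from the standard uniform entropy bound. The only difference is that you prove from scratch what the paper cites from van der Vaart and Wellner (Lemmas 2.6.16 and 2.6.18), and your handling of the final difference step via the product bound on covering numbers for VC-type classes is in fact slightly more careful than the paper's direct appeal to VC-subgraph stability.
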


\begin{proof}
	Note that $\{ hs + z: s, z \in \mathbb{R} \}$ is a VC-subgraph class.
	This is because (i) $\{ hs + z: s, z \in \mathbb{R} \}$ is a vector space of dimension 2, and (ii) any finite-dimensional vector space $\mathcal{F}$ is a VC-subgraph class with VC index at most $\dim(\mathcal{F}) + 2$ (cf. Lemma 2.6.15 of \citealp{vandervaart96weak}). 
	
	Since the continuous differentiability of the kernel function $K$ in Assumption \ref{as:asymptotic1}(ii) implies that $K$ is a function of bounded variation, $K$ can be written as the difference of two non-decreasing functions.
	Then, Lemmas 2.6.16 and 2.6.18 in \cite{vandervaart96weak} imply that $\mathcal{K}$ is a VC-subgraph class and so VC type. 
\end{proof}

For the subsequent analysis, we suppose that the IID random variables are taken from a common distribution $P$ and use the notation $Pf \coloneqq \int fdP$.

\begin{lemma} \label{lem:eval_kq}
	Under Assumptions \ref{as:iid} and \ref{as:asymptotic1}(i)--(iv), the following equation holds uniformly in $z \in \mathcal{I}$:
	\begin{align*}
		\frac{1}{nh}\sum_{i=1}^n K_{i,h}(Z_i-z)^q 
		=  
		\begin{cases} 
			\displaystyle h^q f(z) \int u^qK(u) du + O(h^{q+2})  +  O_{\bP}\left( h^q \sqrt{ \frac{ \log n }{ nh }  } \right)         & \text{if $q$ is even,} \\
			\displaystyle h^{q+1} f'(z) \int u^{q+1}K(u)du + O(h^{q+3})  +  O_{\bP}\left( h^q \sqrt{ \frac{ \log n }{ nh }  } \right) & \text{if $q$ is odd.} 
		\end{cases}
	\end{align*}
\end{lemma}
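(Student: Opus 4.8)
The plan is to split the sum into its conditional mean and a stochastic fluctuation, handling the former by a Taylor expansion and the latter by a uniform maximal inequality for VC-type classes. First I would factor out the bandwidth by writing $(Z_i-z)^q = h^q u_{i,h}^q$, so that
\[
\frac{1}{nh}\sum_{i=1}^n K_{i,h}(Z_i-z)^q = h^q \cdot \frac{1}{nh}\sum_{i=1}^n K_{i,h} u_{i,h}^q ,
\]
and it suffices to analyze $S_n(z) \coloneqq \frac{1}{nh}\sum_{i=1}^n K_{i,h} u_{i,h}^q$ and multiply the resulting rate by $h^q$ at the end.

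For the deterministic part, a change of variables gives $\bE[S_n(z)] = \int K(u) u^q f_Z(z+uh)\,du$. I would Taylor-expand $f_Z(z+uh)$ around $z$, which is licensed by the five-times differentiability of $f_Z$ in Assumption \ref{as:asymptotic1}(i), and use the symmetry of $K$ in Assumption \ref{as:asymptotic1}(ii) to annihilate the odd moments $\int u^\ell K(u)\,du$ with $\ell$ odd. When $q$ is even the leading term is $f_Z(z)\int u^q K(u)\,du$, the $O(h)$ term vanishes by symmetry, and the first surviving correction is the $O(h^2)$ term $\tfrac{h^2}{2} f_Z^{(2)}(z)\int u^{q+2}K(u)\,du$; when $q$ is odd the $O(1)$ term vanishes, the leading term becomes $h f_Z^{(1)}(z)\int u^{q+1}K(u)\,du$, the next $O(h^2)$ term vanishes again by symmetry, and the first surviving correction is $O(h^3)$. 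Reinstating the factor $h^q$ turns these into the stated $O(h^{q+2})$ and $O(h^{q+3})$ remainders. Uniformity in $z\in\mathcal{I}$ holds because $f_Z$ and its derivatives are bounded on the $h$-neighborhood of $\mathcal{I}$, using the compact support of $K$ together with Assumption \ref{as:asymptotic1}(i)--(ii).

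For the stochastic part I would bound $\sup_{z\in\mathcal{I}}|S_n(z)-\bE[S_n(z)]|$ via a maximal inequality for VC-type classes. Writing each summand as $\tfrac{1}{h} g((Z_i-z)/h)$ with $g(u)=K(u)u^q$ bounded and compactly supported, the class of functions indexed by $z$ is VC type by Lemma \ref{lem:K_vc_type} together with Corollary A.1(i) of CCK14b (multiplication by a fixed monomial preserves the VC-type property). This class has a constant envelope of order $h^{-1}$ and per-observation variance of order $h^{-1}$, since $\bE[h^{-2} g((Z-z)/h)^2] = h^{-1}\int g^2(u) f_Z(z+uh)\,du = O(h^{-1})$. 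The maximal inequality then yields $\bE\sup_{z}|S_n(z)-\bE[S_n(z)]| \lesssim \sqrt{\log n/(nh)} + \log n/(nh)$, and under the bandwidth condition in Assumption \ref{as:asymptotic1}(iii), which forces $nh/\log n \to \infty$, the first term dominates; converting this expectation bound into a probability statement gives $\sup_{z}|S_n(z)-\bE[S_n(z)]| = O_{\bP}(\sqrt{\log n/(nh)})$. Multiplying by $h^q$ combines with the deterministic expansion to deliver the claimed uniform rates, with the $O_{\bP}(h^q \sqrt{\log n/(nh)})$ fluctuation common to the even and odd cases.

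The hard part will be the stochastic step: obtaining the correct $\sqrt{\log n/(nh)}$ rate \emph{uniformly} in $z$ requires carefully tracking the envelope and variance scalings, both of order $h^{-1}$, through the maximal inequality, since a careless bound would leave an extra $h^{-1/2}$ factor. By contrast, the Taylor expansion and the odd/even moment bookkeeping are routine once the uniform boundedness of the derivatives of $f_Z$ on a slight enlargement of $\mathcal{I}$ is invoked.
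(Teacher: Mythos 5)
Your proposal is correct and follows essentially the same route as the paper's proof: split the sum into its conditional mean (change of variables, Taylor expansion of $f_Z$, kernel symmetry for the even/odd bookkeeping) and a centered fluctuation controlled uniformly over $z \in \mathcal{I}$ via a VC-type function class and the maximal inequality of CCK14b (Corollary 5.1) plus Markov's inequality, with the envelope/variance scalings delivering the $O_{\bP}\left( h^q \sqrt{(\log n)/(nh)} \right)$ rate. The only cosmetic differences are that you factor out $h^q$ and normalize before applying the maximal inequality and establish the VC-type property through Lemma \ref{lem:K_vc_type} combined with Corollary A.1(i) of CCK14b, whereas the paper keeps $(Z_i - z)^q$ inside the class $\mathcal{K}_q$ and verifies the VC-subgraph property directly by a bounded-variation argument (Lemmas 2.6.16 and 2.6.18 of \citealp{vandervaart96weak}).
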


\begin{proof}
	Fix a non-negative integer $q \in \mathbb{Z}_{+}$.
	Observe that
	\begin{align*}
		\frac{1}{nh}\sum_{i=1}^n K_{i,h}(Z_i-z)^q
		= \bE\left[ \frac{1}{nh}\sum_{i=1}^n K_{i,h}(Z_i-z)^q \right] + \left( \frac{1}{nh}\sum_{i=1}^n K_{i,h}(Z_i-z)^q - \bE\left[ \frac{1}{nh}\sum_{i=1}^n K_{i,h}(Z_i-z)^q \right] \right).
	\end{align*}
	For the first term, using Assumption \ref{as:asymptotic1}(i)--(ii), it is straightforward to see that
	\begin{align*}
		\bE\left[ \frac{1}{nh}\sum_{i=1}^n K_{i,h}(Z_i-z)^q \right]
		& = \int K(u)(uh)^q f(z+uh)du \\
		& = 
		\begin{cases} 
			\displaystyle h^q f(z) \int u^qK(u) du + O(h^{q+2})         & \text{if $q$ is even,} \\
			\displaystyle h^{q+1} f'(z) \int u^{q+1}K(u)du + O(h^{q+3}) & \text{if $q$ is odd.}
		\end{cases}
	\end{align*}
	
	To evaluate the second term, we define the function class $\mathcal{K}_q$ and the empirical process $\mathbb{G}_nk_q$ as follows:
	\begin{align*}
		& \mathcal{K}_q \coloneqq \left\{ s \mapsto K\left( \frac{s-z}{h} \right) (s-z)^q : h>0,  z\in\mathcal{I} \right\}, \\
		& \mathbb{G}_nk_q \coloneqq \frac{1}{\sqrt{n}}\sum_{i=1}^n \{ K_{i,h}(Z_i-z)^q - \mathbb{E}[K_{i,h}(Z_i-z)^q] \}.
	\end{align*}
	We show that $\mathcal{K}_q$ is VC type.
	To this end, we first observe that $\{ s \mapsto (hs + z)^q: h > 0, z\in\mathcal{I} \}$ is a class of functions of bounded variation.
	When $q = 0$, it is trivial that $s \mapsto 1$ is of bounded variation.
	When $q = 1$, because any non-decreasing function is of bounded variation, $s \mapsto (hs + z)$ is of bounded variation.
	When $q \ge 2$, $s \mapsto (hs + z)^q$ is of bounded variation since it is a continuously differentiable function.
	Next, as in the proof of Lemma \ref{lem:K_vc_type}, we can see that $\{ s \mapsto K(hs + z): h > 0, z\in\mathcal{I} \}$ is a class of functions of bounded variation.
	Then, because the product of functions of bounded variation is also of bounded variation, we have shown that $\mathcal{K}_q$ is a class of functions of bounded variation.
	Then, Lemmas 2.6.16 and 2.6.18 of \cite{vandervaart96weak} imply that $\mathcal{K}_q$ is a VC-subgraph class and so VC type.
	By Definition 2.1 of \citet{chernozhukov2014gaussian}, this implies that there are some positive constants $C$ and $v$ such that 
	\begin{align*}
		\sup_Q N (\mathcal{K}_q, \|\cdot\|_{Q,2}, \varepsilon\|\bar{K}_q\|_{Q,2}) \le (C/\varepsilon)^v,
		\qquad 
		0 < \forall \varepsilon \le 1, \forall n \ge 1,
	\end{align*}
	where the supremum is taken over all finitely discrete probability measures.
	From the definitions of $\mathcal{K}_q$ and $\mathbb{G}_n k_q$, observe that
	\begin{align*}
		& \sup_{z\in\mathcal{I}} \left| \frac{1}{nh}\sum_{i=1}^n K_{i,h}(Z_i-z)^q - \bE\left[ \frac{1}{nh}\sum_{i=1}^n K_{i,h}(Z_i-z)^q \right] \right| = \frac{1}{\sqrt{n}h} \sup_{k_q\in\mathcal{K}_q} |\mathbb{G}_n k_q|.
	\end{align*}
	Let $\bar{K}_q$ be an envelope of $\mathcal{K}_q$ and $\sigma^2>0$ be any positive constant such that $\sup_{k_q \in \mathcal{K}_q} P k_q^2 \le \sigma^2 \le \| \bar{K}_q\|_{P,2}^2$. 
	Here, under Assumption \ref{as:asymptotic1}(i)--(ii), we can see that
	\begin{align*}
		P k_q^2 \lesssim h\int K^2(u)(uh)^{2q}du \lesssim h^{2q+1}, \qquad \| \bar{K}_q \|_{P,2}^2 = O(1).
	\end{align*}
	Then, by Corollary 5.1 of \citet{chernozhukov2014gaussian}, we have
	\begin{align*}
		\bE\left[\sup_{ k_q \in \mathcal{K}_q }\left| \mathbb{G}_n k_q \right|\right] 
		&\lesssim \sqrt{v\sigma^2 \log\left(\frac{C\| \bar{K}_q \|_{P,2}}{\sigma}\right)} + \frac{vB}{\sqrt{n}}\log\left(\frac{C\| \bar{K}_q \|_{P,2}}{\sigma}\right) \\
		& = O\left( h^q \sqrt{ h \log\left(\frac{1}{h^q}\right) } \right) \\
		& = O\left( h^q \sqrt{ h \log n } \right),
	\end{align*}
	with some constant $B > 0$.
	Thus, Markov's inequality implies that
	\begin{align*}
		\frac{ 1 }{ \sqrt{n} h } \sup_{k_q\in\mathcal{K}_q} \left| \mathbb{G}_n k_q \right| = O_{\bP}\left( h^q \sqrt{ \frac{ \log n }{ nh }  } \right).
	\end{align*}
	
	Summing up, we obtain the desired result.
\end{proof}

\begin{lemma} \label{lem:Gamma}
	Suppose that Assumptions \ref{as:iid} and \ref{as:asymptotic1}(i)--(iii) hold.
	For each $p = 1, 2$, $\tilde{\bm{\Gamma}}_{(h,K,p)} = \bm{\Gamma}_{(K,p)} + o_{\bP}(1)$ uniformly in $z\in\mathcal{I}$.
\end{lemma}

\begin{proof}
	Recall that
	\begin{align*}
		\tilde{\bm{\Gamma}}_{(h,K,p)}
		= \frac{1}{nh} \sum_{i=1}^n K_{i,h} \bm{r}_p(u_{i,h}) \bm{r}_p(u_{i,h})^\top 
		= \frac{1}{nh} \sum_{i=1}^n K_{i,h} \left[ u_{i,h}^{l+m-2} \right]_{1 \le l,m \le (p+1)}.
	\end{align*}
	Denoting $q = l + m - 2 \ge 0$, we can see that
	\begin{align*}
		\bE \left[ \frac{1}{nh} \sum_{i=1}^n K_{i,h} u_{i,h}^q \right]
		= \int K(u) u^q f_Z(z + uh) du 
		= f_Z(z) \int u^q K(u) du + O(h)
	\end{align*}
	and, from Lemma \ref{lem:eval_kq}, that
	\begin{align*}
		\sup_{z\in\mathcal{I}} \left|\frac{1}{nh}\sum_{i=1}^n K_{i,h} u_{i,h}^q - \bE\left[\frac{1}{nh}\sum_{i=1}^n K_{i,h} u_{i,h}^q \right]\right| = o_p(1).
	\end{align*}
	Thus, we have the desired result.
\end{proof}

\begin{lemma} \label{lem:eval_bckq}
	Under Assumptions \ref{as:iid} and \ref{as:asymptotic1}(i)--(iv), the following equation holds uniformly in $z \in \mathcal{I}$:
	\begin{align*}
		\frac{1}{ nh } \sum_{i=1}^n K_{i,h} \Psi_{i,h} (Z_i - z)^q = 
		\begin{cases}
			\displaystyle f_Z(z) + O(h^4) + O_{\bP}\left( \sqrt{\frac{\log n}{nh}}\right) & \text{if $q = 0$,} \\
			\displaystyle \frac{f_Z^{(4-q)}(z)}{(4-q)!}h^4 \left( \frac{I_{4,K}^2 - I_{2,K}I_{6,K}}{I_{4,K} - I_{2,K}^2} \right) + o(h^4) + O_{\bP}\left( h^q\sqrt{\frac{\log n}{nh}}\right) & \text{if $1 \le q \le 4$,} \\
			\displaystyle o(h^4) + O_{\bP}\left( h^q\sqrt{\frac{\log n}{nh}}\right) & \text{if $q \ge 5$.}
		\end{cases}
	\end{align*}
\end{lemma}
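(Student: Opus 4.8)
The plan is to reduce the claim to the already-established Lemma \ref{lem:eval_kq} for the stochastic fluctuation, and to a direct fourth-order Taylor expansion for the deterministic (bias) part, exploiting the fact that the local-quadratic weights $\Psi_{i,2,h}$ make $K_{i,h}\Psi_{i,2,h}$ behave like a fourth-order equivalent kernel. First I would record the exact algebraic identity, valid since $\Psi_{i,2,h} = (I_{4,K} - u_{i,h}^2 I_{2,K})/(I_{4,K}-I_{2,K}^2)$ with $u_{i,h} = (Z_i-z)/h$, namely
\begin{align*}
    K_{i,h}\Psi_{i,2,h}(Z_i - z)^q
    = \frac{1}{I_{4,K} - I_{2,K}^2}\left( I_{4,K}\, K_{i,h}(Z_i - z)^q - \frac{I_{2,K}}{h^2}\, K_{i,h}(Z_i - z)^{q+2} \right),
\end{align*}
so that $\frac{1}{nh}\sum_i K_{i,h}\Psi_{i,2,h}(Z_i-z)^q$ is a fixed linear combination of $\frac{1}{nh}\sum_i K_{i,h}(Z_i-z)^q$ and $h^{-2}\cdot\frac{1}{nh}\sum_i K_{i,h}(Z_i-z)^{q+2}$. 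Writing each of these two averages as its expectation plus a centered term, the centered terms are, by the uniform stochastic rate in Lemma \ref{lem:eval_kq}, $O_{\bP}(h^q\sqrt{\log n/(nh)})$ and $h^{-2}O_{\bP}(h^{q+2}\sqrt{\log n/(nh)}) = O_{\bP}(h^q\sqrt{\log n/(nh)})$, uniformly in $z \in \mathcal{I}$; this already produces the advertised $O_{\bP}$ remainder for every value of $q$.

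It remains to evaluate the expectation. By IID-ness and the change of variables $u = (w-z)/h$,
\begin{align*}
    \bE\left[\frac{1}{h}K_{1,h}\Psi_{1,2,h}(Z_1 - z)^q\right]
    = h^q \int K(u)\,\Psi(u)\, u^q\, f_Z(z + uh)\,du,
    \qquad
    \Psi(u) \coloneqq \frac{I_{4,K} - u^2 I_{2,K}}{I_{4,K} - I_{2,K}^2}.
\end{align*}
Since $K$ is compactly supported (Assumption \ref{as:asymptotic1}(ii)) and $f_Z$ is five-times continuously differentiable near $\mathcal{I}$ (Assumption \ref{as:asymptotic1}(i)), I would Taylor-expand $f_Z(z+uh) = \sum_{j=0}^{4}\frac{f_Z^{(j)}(z)}{j!}(uh)^j + \frac{f_Z^{(5)}(\tilde z)}{5!}(uh)^5$, uniformly in $z \in \mathcal{I}$ for $n$ large, and collect terms using the moments $J_m \coloneqq \int K(u)\Psi(u)u^m\,du = (I_{4,K}I_{m,K} - I_{2,K}I_{m+2,K})/(I_{4,K}-I_{2,K}^2)$. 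A direct computation gives $J_0 = 1$, $J_m = 0$ for every odd $m$ by symmetry of $K$, $J_2 = 0$, and $J_4 = (I_{4,K}^2 - I_{2,K}I_{6,K})/(I_{4,K}-I_{2,K}^2)$. Hence the expectation equals $h^q\sum_{j=0}^{4}\frac{f_Z^{(j)}(z)}{j!}h^j J_{q+j} + O(h^{q+5})$, from which the cases follow: for $q = 0$ only $j = 0$ and $j = 4$ contribute, yielding $f_Z(z) + O(h^4)$; for $1 \le q \le 4$ all terms with $q+j < 4$ vanish and the first surviving one has $q + j = 4$ (i.e.\ $j = 4-q$), giving $\frac{f_Z^{(4-q)}(z)}{(4-q)!}J_4 h^4$ plus $O(h^6)$ corrections; and for $q \ge 5$ every term is $O(h^q) = o(h^4)$.

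Combining the stochastic and deterministic parts gives the stated three-case expansion, with uniformity in $z \in \mathcal{I}$ inherited from Lemma \ref{lem:eval_kq} for the centered part and from the fact that the Taylor remainder is bounded by $\sup|f_Z^{(5)}|$ over a fixed compact neighborhood of $\mathcal{I}$ times the fixed constant $\int |u|^{q+5}|\Psi(u)|K(u)\,du$. I expect the only delicate point to be the bookkeeping in the Taylor step --- carrying the expansion to fourth order and checking that $J_1 = J_2 = J_3 = 0$, which is precisely the higher-order-kernel property that forces the bias to start at $h^4$ rather than $h^2$; beyond that the argument is a routine reduction to Lemma \ref{lem:eval_kq}.
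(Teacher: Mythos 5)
Your proposal is correct and follows essentially the same route as the paper's proof: split into expectation plus centered term, Taylor-expand $f_Z(z+uh)$ to fourth order against the equivalent-kernel moments $J_m = (I_{4,K}I_{m,K}-I_{2,K}I_{m+2,K})/(I_{4,K}-I_{2,K}^2)$ for the bias, and invoke the Lemma \ref{lem:eval_kq} machinery for the $O_{\bP}(h^q\sqrt{\log n/(nh)})$ fluctuation. Your explicit reduction of the centered term to two instances of Lemma \ref{lem:eval_kq} via the linear-combination identity is a slightly cleaner bookkeeping of what the paper dispatches with ``the same arguments as in the proof of Lemma \ref{lem:eval_kq},'' but it is not a substantively different argument.
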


\begin{proof}
	Fix a non-negative integer $q\in\mathbb{Z}_+$. Observe that 
	\begin{align*}
		& \frac{1}{ nh} \sum_{i=1}^n K_{i,h} \Psi_{i,h} (Z_i-z)^q \\
		& = \mathbb{E}\left[  \frac{1}{ nh} \sum_{i=1}^n K_{i,h} \Psi_{i,h} (Z_i-z)^q \right] + \left(  \frac{1}{ nh} \sum_{i=1}^n K_{i,h} \Psi_{i,h} (Z_i-z)^q - \mathbb{E}\left[  \frac{1}{ nh} \sum_{i=1}^n K_{i,h} \Psi_{i,h} (Z_i-z)^q \right] \right).
	\end{align*}
	The first term can be expanded as
	\begin{align*}
		& \mathbb{E}\left[  \frac{1}{ nh} \sum_{i=1}^n K_{i,h} \Psi_{i,h} (Z_i-z)^q \right] \\ 
		& = \int K(u)\left( \frac{I_{4,K} - u^2I_{2,K}}{I_{4,K} - I_{2,K}^2} \right) (uh)^q f_Z(z+uh)du \\
		& = h^q \int K(u)\left( \frac{u^qI_{4,K} - u^{q+2}I_{2,K}}{I_{4,K} - I_{2,K}^2} \right)  f_Z(z+uh)du \\
		& = \sum_{k=0}^4 \frac{f_Z^{(k)}(z)}{k!} h^{q+k} \left( \frac{I_{q+k,K}I_{4,K} - I_{q+k+2,K}I_{2,K}}{I_{4,K} - I_{2,K}^2} \right) + o(h^{q+4}) \\
		& = 
		\begin{cases}
			\displaystyle f_Z(z) + \frac{f_Z^{(4)}(z)}{4!}h^4 \left( \frac{I_{4,K}^2 - I_{2,K}I_{6,K}}{I_{4,K} - I_{2,K}^2} \right) + o(h^4) & \text{if $q = 0$,} \\
			\displaystyle \frac{f_Z^{(4-q)}(z)}{(4-q)!}h^4 \left( \frac{I_{4,K}^2 - I_{2,K}I_{6,K}}{I_{4,K} - I_{2,K}^2} \right) + o(h^4) & \text{if $1 \le q \le 4$,} \\
			\displaystyle o(h^4) & \text{if $q \ge 5$}.
		\end{cases}
	\end{align*}
	For the second term, the same arguments as in the proof of Lemma \ref{lem:eval_kq} can show that it is $O_{\bP}(h^q \sqrt{(\log n) / (nh)})$.
	This completes the proof.
\end{proof}

\begin{lemma} \label{lem:eval_kaq}
	Under Assumptions \ref{as:iid} and \ref{as:asymptotic1}(i)--(iv), the following equation holds uniformly in $(g,t,z) \in \mathcal{A}^{\ny}$:
	\begin{align*}
		\frac{1}{nh}\sum_{i=1}^n K_{i,h}(Z_i-z)^qA_{i,g,t}^{\ny} 
		=
		\begin{cases}
			\displaystyle O(h^q) + O_{\bP}\left( h^q \sqrt{\frac{\log n}{nh}} \right)     & \text{if $q$ is even,} \\
			\displaystyle O(h^{q+1}) + O_{\bP}\left( h^q \sqrt{\frac{\log n}{nh}} \right) & \text{if $q$ is odd.}
		\end{cases} 
	\end{align*}
\end{lemma}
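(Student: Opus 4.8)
The plan is to mirror the proof of Lemma \ref{lem:eval_kq}, after first removing the dependence of $A_{i,g,t,\delta}^{\nev}$ on the evaluation point $z$, which enters only through the deterministic strictly positive factors $\mu_G(z)$ and $\mu_R(z)$. Using the decomposition in \eqref{eq:B_E_F},
\[
\frac{1}{nh}\sum_{i=1}^n K_{i,h}(Z_i-z)^q A_{i,g,t,\delta}^{\nev}
= \frac{1}{\mu_G(z)}\cdot\frac{1}{nh}\sum_{i=1}^n K_{i,h}(Z_i-z)^q F_{i,g,t,\delta}^{\nev}
- \frac{1}{\mu_R(z)}\cdot\frac{1}{nh}\sum_{i=1}^n K_{i,h}(Z_i-z)^q E_{i,g,t,\delta}^{\nev},
\]
and since $\mu_G(z),\mu_R(z)>\varepsilon$ uniformly on $\mathcal{I}$ by Assumption \ref{as:overlap}, it suffices to establish the claimed rate for $n^{-1}h^{-1}\sum_i K_{i,h}(Z_i-z)^q Q_i$ with $Q\in\{E_{g,t,\delta}^{\nev},F_{g,t,\delta}^{\nev}\}$, uniformly in $(g,t,z)\in\mathcal{A}_{\delta}^{\nev}$; crucially these $Q_i$ no longer depend on $z$.

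As in Lemma \ref{lem:eval_kq}, I would split each such average into its mean and a centered part. For the mean, writing $g_Q\coloneqq\mu_Q f_Z$ and substituting $Z_i=z+uh$,
\[
\bE\!\left[\frac{1}{nh}\sum_{i=1}^n K_{i,h}(Z_i-z)^q Q_i\right]
= h^q\!\int u^q K(u)\, g_Q(z+uh)\,du
= h^q\sum_{k=0}^{4}\frac{g_Q^{(k)}(z)}{k!}h^k I_{q+k,K}+o(h^{q+4}),
\]
using the five-times continuous differentiability of $f_Z$ and of $\mu_Q$ (Assumptions \ref{as:asymptotic1}(i) and \ref{as:asymptotic1}(iv)(a)) together with compact support of $K$ (Assumption \ref{as:asymptotic1}(ii)) to make the remainder uniform. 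Since $K$ is symmetric, $I_{l,K}=0$ for odd $l$, so the leading term is $g_Q(z)I_{q,K}h^q=O(h^q)$ when $q$ is even and $g_Q^{(1)}(z)I_{q+1,K}h^{q+1}=O(h^{q+1})$ when $q$ is odd; these bounds hold uniformly over $\mathcal{A}_{\delta}^{\nev}$ because $\mathcal{G}_{\delta}\times\{2,\dots,\mathcal{T}-\delta\}$ is finite and the relevant derivatives are bounded on the compact set $\mathcal{I}$.

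For the centered part, I would introduce the function class $\mathcal{K}_{Q,q}\coloneqq\{(z_0,q_0)\mapsto K((z_0-z)/h)(z_0-z)^q q_0: h>0,\ z\in\mathcal{I}\}$ and, with $\mathbb{G}_n k\coloneqq n^{-1/2}\sum_i\{K_{i,h}(Z_i-z)^q Q_i-\bE[K_{i,h}(Z_i-z)^q Q_i]\}$, note that the centered part equals $(\sqrt{n}h)^{-1}\sup_{k\in\mathcal{K}_{Q,q}}|\mathbb{G}_n k|$ after taking the supremum over $z\in\mathcal{I}$. The class $\{(z_0,q_0)\mapsto K((z_0-z)/h)(z_0-z)^q\}$ is VC type by the bounded-variation argument used in Lemmas \ref{lem:K_vc_type} and \ref{lem:eval_kq}, and multiplying by the single fixed coordinate map $(z_0,q_0)\mapsto q_0$ preserves the VC-type property, with envelope $\bar K_{Q,q}(z_0,q_0)\lesssim h^q|q_0|\bm{1}\{\mathrm{dist}(z_0,\mathcal{I})\le Ch\}$; here $\|\bar K_{Q,q}\|_{P,2}^2\lesssim h^{2q+1}$ and $\sup_{k}Pk^2\lesssim h^{2q+1}$ by compact support of $K$ and $\sup_z\bE[Q^2\mid Z=z]<\infty$ (implied by Assumption \ref{as:asymptotic1}(iv)(b)). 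Applying the maximal inequality (Corollary 5.1 of CCK14b) then yields $\bE[\sup_k|\mathbb{G}_n k|]\lesssim h^q\sqrt{h\log n}$, hence by Markov's inequality the centered part is $O_{\bP}(h^q\sqrt{\log n/(nh)})=O_{\bP}(\sqrt{\log n/(nh)})$ since $h\to0$. Combining the two parts, plugging back into the $E$-$F$ decomposition, and taking the maximum over the finitely many $(g,t)$ gives the claim.

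The one point needing care beyond Lemma \ref{lem:eval_kq} is that the multiplicand $Q_i$ is not bounded but only $L^4$-integrable conditionally on $Z_i$, so $\bar K_{Q,q}$ is an unbounded envelope; one must therefore invoke the form of Corollary 5.1 of CCK14b that accommodates an $L^q$ (here $L^4$) envelope — the fourth-moment bound in Assumption \ref{as:asymptotic1}(iv)(b) being exactly what controls the associated term — or, equivalently, truncate $Q_i$ at a slowly growing level and discard the tail using Markov's inequality and that fourth-moment bound. This is a routine adaptation rather than a genuine obstacle; the remaining steps repeat verbatim the arguments already used for Lemmas \ref{lem:eval_kq} and \ref{lem:eval_bckq}.
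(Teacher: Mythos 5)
Your proposal is correct and follows essentially the same route as the paper, which proves this lemma by repeating the mean-plus-centered-empirical-process argument of Lemma \ref{lem:eval_kq} (Taylor expansion of $h^q\int u^qK(u)\mu_Q(z+uh)f_Z(z+uh)\,du$ using symmetry of $K$ for the mean, and a VC-type class with Corollary 5.1 of CCK14b for the fluctuation). The only remark is that your preliminary decomposition of $A_{i,g,t,\delta}^{\nev}$ into $F_{i,g,t,\delta}^{\nev}/\mu_G(z)-E_{i,g,t,\delta}^{\nev}/\mu_R(z)$, while valid, imports Assumption \ref{as:overlap} which is not among the lemma's stated hypotheses and is not needed: Assumption \ref{as:asymptotic1}(iv) already imposes the required differentiability of $\mu_A$ and the conditional fourth-moment bound directly on $A_{g,t,\delta}^{\nev}$ (treating it as the generic $Q$), so one can run the argument on $A$ itself, handling the $z$-dependence through the index set of the function class exactly as in Lemma \ref{lem:eval_ka-mu}.
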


\begin{proof}
	The proof is almost the same as that for Lemma \ref{lem:eval_kq}.
\end{proof}

\begin{lemma} \label{lem:eval_bckaq}
	Under Assumption \ref{as:iid} and \ref{as:asymptotic1}(i)--(iv), the following equation holds uniformly in $(g,t,z) \in \mathcal{A}^{\ny}$:
	\begin{align*}
		& \frac{1}{ nh } \sum_{i=1}^n K_{i,h} \Psi_{i,h} (Z_i-z)^q A_{i,g,t}^{\ny} \\
		& = 
		\begin{cases}
			\displaystyle f_Z(z)\mu_A(z) + O(h^4) + O_{\bP}\left( \sqrt{\frac{\log n}{nh}}\right) & \text{if $q = 0$}, \\
			\displaystyle \frac{1}{(4-q)!}\frac{\partial^4\{f_Z(z)\mu_A(z)\}}{\partial z^4}h^4 \left( \frac{I_{4,K}^2 - I_{2,K}I_{6,K}}{I_{4,K} - I_{2,K}^2} \right) + o(h^4) + O_{\bP}\left( h^q\sqrt{\frac{\log n}{nh}}\right) & \text{if $1 \le q \le 4$,} \\
			\displaystyle o(h^4) + O_{\bP}\left( h^q\sqrt{\frac{\log n}{nh}}\right) & \text{if $q \ge 5$.}
		\end{cases}
	\end{align*}
\end{lemma}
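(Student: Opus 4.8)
The plan is to mirror the two companion results: the proof combines the bias-corrected kernel-moment calculation of Lemma~\ref{lem:eval_bckq} with the insertion of the dependent variable carried out in Lemma~\ref{lem:eval_kaq}. As in those proofs, I would split
\begin{align*}
    \frac{1}{nh}\sum_{i=1}^n \Psi_{i,2,h} K_{i,h}(Z_i-z)^q A_{i,g,t,\delta}^{\nev}
    = \bE\left[\frac{1}{nh}\sum_{i=1}^n \Psi_{i,2,h} K_{i,h}(Z_i-z)^q A_{i,g,t,\delta}^{\nev}\right] + \mathbb{D}_{n,q}(g,t,z),
\end{align*}
where $\mathbb{D}_{n,q}(g,t,z)$ is the centered remainder, and treat the two pieces separately. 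A preliminary reduction is convenient: writing $A_{i,g,t,\delta}^{\nev} = \mu_G(z)^{-1}F_{i,g,t,\delta}^{\nev} - \mu_R(z)^{-1}E_{i,g,t,\delta}^{\nev}$ with $E$ and $F$ as in \eqref{eq:B_E_F}, and noting that $\mu_G(z)$ and $\mu_R(z)$ are deterministic and, by Assumptions~\ref{as:overlap} and \ref{as:asymptotic1}(i),(iv), bounded and bounded away from zero on $\mathcal{I}$, it suffices to establish the claimed expansion with the dependent variable $F_{i,g,t,\delta}^{\nev}$, and separately with $E_{i,g,t,\delta}^{\nev}$, in place of $A_{i,g,t,\delta}^{\nev}$; these variables, unlike $A_{i,g,t,\delta}^{\nev}$, do not depend on the evaluation point $z$. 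Recombining with the factors $\mu_G(z)^{-1}$ and $-\mu_R(z)^{-1}$ then reproduces the leading constant in the statement, consistent with the convention (used throughout the paper, e.g.\ in the formula for $\mu_B^{(\nu)}$) that $\mu_A^{(\nu)}$ is computed with the normalizations $\mu_G(z),\mu_R(z)$ held fixed.

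For the expectation term, I would condition on $\bm{Z}$ and use the law of iterated expectations to replace $F_{i,g,t,\delta}^{\nev}$ by $\mu_F(Z_i)$, so that the unconditional expectation equals
\begin{align*}
    \int K(u)\left(\frac{I_{4,K}-u^2 I_{2,K}}{I_{4,K}-I_{2,K}^2}\right)(uh)^q f_Z(z+uh)\,\mu_F(z+uh)\,du,
\end{align*}
and then Taylor-expand $f_Z(z+uh)\mu_F(z+uh)$ about $u=0$ to fourth order, using the smoothness in Assumptions~\ref{as:asymptotic1}(i) and (iv)(a). By symmetry of $K$ (Assumption~\ref{as:asymptotic1}(ii)) all odd moments of $u\mapsto K(u)(I_{4,K}-u^2 I_{2,K})$ vanish, and its second moment vanishes identically because $I_{2,K}I_{4,K}-I_{4,K}I_{2,K}=0$; this cancellation is exactly what removes the $O(h^2)$ term and produces an $O(h^4)$ leading term, in line with $\zeta(2)=4$. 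Collecting the surviving contributions, the unique term of exact order $h^4$ comes from the $(4-q)$-th Taylor coefficient of $f_Z\mu_F$ when $0\le q\le 4$ (absorbed into $O(h^4)$ when $q=0$), while every surviving term is $o(h^4)$ when $q\ge 5$; multiplying the $F$- and $E$-versions by $\mu_G(z)^{-1}$ and $-\mu_R(z)^{-1}$ and adding yields the stated leading constant, exactly as in Lemma~\ref{lem:eval_bckq}. All remainders are uniform over $(g,t,z)\in\mathcal{A}_{\delta}^{\nev}$ because, for each of the finitely many pairs $(g,t)$, the relevant derivatives of $f_Z\mu_F$ and $f_Z\mu_E$ are continuous, hence bounded, on the compact interval $\mathcal{I}$.

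For the centered term, I would reuse the empirical-process argument from the proof of Lemma~\ref{lem:eval_bckq}, now with the extra factor $F_{i,g,t,\delta}^{\nev}$ included as in Lemma~\ref{lem:eval_kaq}. The relevant function class, obtained by multiplying the bounded-variation kernel and polynomial factors in $\Psi_{i,2,h}K_{i,h}(Z_i-z)^q$ (a VC-type class, as in the proofs of Lemmas~\ref{lem:K_vc_type} and \ref{lem:eval_bckq}) by the finite collection of fixed functions $F_{g,t,\delta}^{\nev}$, is again VC type; it admits an $L^2$ (in fact $L^4$) envelope by Assumption~\ref{as:asymptotic1}(iv)(b); and, conditioning on $Z_i$, $\sup_f P f^2 \lesssim h^{2q+1}$ because $\sup_{z\in\mathcal{I}}\bE[(F_{i,g,t,\delta}^{\nev})^2\mid Z_i=z]<\infty$. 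Applying Corollary~5.1 of CCK14b to the corresponding empirical process and then Markov's inequality gives $\sup_{(g,t,z)\in\mathcal{A}_{\delta}^{\nev}}|\mathbb{D}_{n,q}(g,t,z)| = O_{\bP}\big(h^q\sqrt{(\log n)/(nh)}\big)$, exactly as in Lemma~\ref{lem:eval_bckq}. Combining the expectation and centered contributions and recombining the $F$- and $E$-parts yields the three cases in the statement. The only genuinely non-routine points—everything else being a transcription of Lemmas~\ref{lem:eval_bckq} and \ref{lem:eval_kaq}—are the initial handling of the evaluation-point dependence hidden inside $A_{i,g,t,\delta}^{\nev}$ (dealt with by the $F/E$ decomposition) and the verification that the VC-type property, the envelope, and the second-moment bound hold \emph{uniformly} over the full index set $\mathcal{A}_{\delta}^{\nev}$, for which the fourth-moment condition in Assumption~\ref{as:asymptotic1}(iv)(b) is the key input; I expect this uniformity check to be the main obstacle.
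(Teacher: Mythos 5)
Your proposal is correct and follows essentially the same route as the paper, which proves this lemma simply by noting it is "almost the same" as Lemma \ref{lem:eval_bckq}: split into the expectation and the centered empirical process, Taylor-expand $f_Z\mu_A$ against the $\Psi$-weighted kernel moments (whose vanishing second moment kills the $O(h^2)$ term), and bound the centered part via the VC-type/Corollary~5.1 argument of CCK14b. Your explicit handling of the evaluation-point dependence hidden in $A_{i,g,t,\delta}^{\nev}$ through the $F/E$ decomposition is a useful elaboration of a step the paper leaves implicit, and it is consistent with the paper's convention for $\mu_A^{(\nu)}$.
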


\begin{proof}
	The proof is almost the same as that for Lemma \ref{lem:eval_bckq}.
\end{proof}

\begin{lemma} \label{lem:eval_ka-mu}
	Suppose that Assumptions \ref{as:iid} and \ref{as:asymptotic1}(i)--(iv) hold.
	The following equation holds uniformly in $(g,t,z) \in \mathcal{A}^{\ny}$:
	\begin{align*}
		\frac{1}{nh} \sum_{i=1}^n K_{i,h} \Psi_{i,h} (Z_i - z)^q \left(  A_{i,g,t}^{\ny} - \mu_A(Z_i) \right) 
		=  O_{\bP}\left( h^q \sqrt{\frac{\log n}{nh}} \right).
	\end{align*}
\end{lemma}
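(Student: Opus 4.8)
The plan is to recognize the left-hand side as $(\sqrt{n}\,h)^{-1}$ times an empirical process evaluated at a mean-zero function, and then to control its supremum over $z\in\mathcal{I}$ by a maximal inequality for VC-type classes, exactly in the spirit of Lemma \ref{lem:eval_kq}. In fact I would prove the slightly stronger bound $O_{\bP}\big(h^q\sqrt{(\log n)/(nh)}\big)$, which implies the stated bound since $h\to 0$. The first step is \emph{centering}: because $K_{i,h}$, $\Psi_{i,p,h}$, and $(Z_i-z)^q$ are all functions of $Z_i$ (given $z$, $h$, $p$) and $\bE[A_{i,g,t,\delta}^{\nev}-\mu_A(Z_i)\mid Z_i]=0$, every summand has mean zero, so no mean correction is needed. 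Fixing $(g,t)$ and writing $\ell_{z}(W_i)\coloneqq K_{i,h}\Psi_{i,p,h}(Z_i-z)^q(A_{i,g,t,\delta}^{\nev}-\mu_A(Z_i))$, we have
\begin{align*}
    \sup_{z\in\mathcal{I}}\left|\frac{1}{nh}\sum_{i=1}^n \ell_{z}(W_i)\right| = \frac{1}{\sqrt{n}\,h}\sup_{z\in\mathcal{I}}\left|\mathbb{G}_n \ell_{z}\right|,
    \qquad \mathbb{G}_n\ell_{z}\coloneqq \frac{1}{\sqrt{n}}\sum_{i=1}^n \ell_{z}(W_i),
\end{align*}
and I would analyze the function class $\mathcal{G}_{p,q}\coloneqq\{\ell_{z}:z\in\mathcal{I}\}$, with $h$ ranging over the admissible bandwidths.

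Next I would verify that $\mathcal{G}_{p,q}$ is VC type. Since $a\mapsto a$ is a single function and $s\mapsto\mu_A(s)$ is fixed and bounded on $\mathcal{I}$, it suffices to show that $\{s\mapsto K((s-z)/h)\Psi_p((s-z)/h)(s-z)^q:h>0,z\in\mathcal{I}\}$ is VC type, after which multiplying by the fixed factor $(a-\mu_A(s))$ preserves this with a correspondingly rescaled envelope. For $p=1$, $\Psi_1\equiv 1$ and the argument is verbatim that of Lemma \ref{lem:eval_kq}. For $p=2$, $\Psi_2((s-z)/h)=\big(I_{4,K}-I_{2,K}(s-z)^2/h^2\big)/(I_{4,K}-I_{2,K}^2)$ is a quadratic polynomial in $s$ (and $I_{4,K}>I_{2,K}^2$ by Cauchy--Schwarz), so $K((s-z)/h)\Psi_2((s-z)/h)(s-z)^q$ is, for each $(h,z)$, a compactly supported function of bounded variation, and the class over $(h,z)$ is a VC-subgraph class by Lemmas 2.6.16 and 2.6.18 of \cite{vandervaart96weak}, as in Lemmas \ref{lem:K_vc_type} and \ref{lem:eval_kq}.

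Then I would apply the maximal inequality. With $K$ supported on $[-L,L]$, a valid envelope is $\bar{G}_{p,q}(W_i)\lesssim h^q\,\bm{1}\{Z_i\in\mathcal{I}_h\}\big(|A_{i,g,t,\delta}^{\nev}|+\sup_{s\in\mathcal{I}}|\mu_A(s)|\big)$ with $\mathcal{I}_h\coloneqq[a-Lh,b+Lh]$, which has a uniformly bounded fourth moment by Assumption \ref{as:asymptotic1}(iv). The variance bound $P\ell_{z}^2\lesssim h^{2q+1}$ follows from $\Var[A_{i,g,t,\delta}^{\nev}\mid Z_i=z]\le\big(\sup_z\bE[(A_{i,g,t,\delta}^{\nev})^4\mid Z_i=z]\big)^{1/2}<\infty$ together with the usual change of variables. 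Applying Corollary 5.1 of CCK14b (in the form that allows an $L^2$ envelope) gives $\bE\big[\sup_{z\in\mathcal{I}}|\mathbb{G}_n \ell_{z}|\big]\lesssim h^{q+1/2}\sqrt{\log n}$, the extra remainder involving $\|\max_{i}\bar G_{p,q}(W_i)\|_{P,2}$ being of lower order under the undersmoothing condition Assumption \ref{as:asymptotic1}(iii). Markov's inequality then yields $\frac{1}{\sqrt{n}\,h}\sup_z|\mathbb{G}_n \ell_{z}|=O_{\bP}\big(h^q\sqrt{(\log n)/(nh)}\big)$, and taking the maximum over the finite index set of admissible $(g,t)$ preserves the rate, giving uniformity over $(g,t,z)\in\mathcal{A}_{\delta}^{\nev}$.

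I expect the main obstacle to be the unbounded envelope: in contrast with Lemma \ref{lem:eval_kq}, where $\bar K_q$ is bounded, here $A_{i,g,t,\delta}^{\nev}$ is only $L^4$-integrable, so one must use the version of the VC-type maximal inequality that accommodates a square-integrable envelope and check that its extra remainder term is negligible under Assumption \ref{as:asymptotic1}(iii)--(iv) (this is exactly the same issue already implicit in Lemma \ref{lem:eval_kaq}). The VC-type verification for $p=2$, where the factor $\Psi_{i,2,h}$ depends on the bandwidth, is a routine extension of the $p=1$ case once one notes $\Psi_{i,2,h}K_{i,h}$ is bounded on the support of $K$ and writes $\Psi_2$ as an affine function of $(s-z)^2/h^2$.
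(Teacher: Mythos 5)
Your proposal is correct and follows essentially the same route as the paper: both center the summands via the law of iterated expectations, verify that the relevant kernel-weighted class is VC type, derive the variance bound $P\ell_z^2\lesssim h^{2q+1}$ (the paper uses Cauchy--Schwarz with the fourth-moment condition in Assumption \ref{as:asymptotic1}(iv)), and apply Corollary 5.1 of CCK14b followed by Markov's inequality to obtain the stronger rate $O_{\bP}(h^q\sqrt{(\log n)/(nh)})$. Your treatment of the unbounded envelope and of the $p=2$ factor $\Psi_{i,2,h}$ is, if anything, slightly more explicit than the paper's, which cites Corollary A.1(i) of CCK14b for the VC-type step and writes out only the $p=1$ case.
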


\begin{proof}
	
	Using the law of iterated expectations, we can see that the mean is zero:
	\begin{align*}
		& \bE\left[ \frac{1}{nh} \sum_{i=1}^n K_{i,h} \Psi_{i,h}(Z_i-z)^q \left(  A_{i,g,t}^{\ny} - \mu_A(Z_i) \right) \right] \\
		& = \bE\left[ \frac{1}{nh} \sum_{i=1}^n K_{i,h}\Psi_{i,h} (Z_i-z)^q \left(  \mathbb{E}[A_{i,g,t}^{\ny} \mid Z_i] - \mu_A(Z_i) \right) \right] \\
		& = 0.
	\end{align*}
	Given this, letting $K_{A-\mu,h}(W_i, g, t, z) \coloneq K_{i,h}\Psi_{i,h} ( A_{i,g,t}^{\ny} - \mu_A(Z_i) )$, define the function class $\mathcal{K}_{A-\mu}$ and the empirical process $\mathbb{G}_n k_{A-\mu}$ as follows:
	\begin{align*}
		& \mathcal{K}_{A-\mu} \coloneqq \Big\{ s \mapsto K_{A-\mu,h}(s,g,t,z)(s-z)^q : h>0, (g,t,z) \in \mathcal{A}^{\ny} \Big\}, \\
		& \mathbb{G}_n k_{A-\mu} \coloneq \frac{1}{\sqrt{n}} \sum_{i=1}^n K_{i,h}\Psi_{i,h} (Z_i-z)^q \left(  A_{i,g,t}^{\ny} - \mu_A(Z_i) \right).
	\end{align*}
	Let $\bar{K}_{A-\mu}$ be an envelope of $\mathcal{K}_{A-\mu}$, which exists under Assumption \ref{as:asymptotic1}(iv)-(b).
	By Corollary A.1(i) of \citet{chernozhukov2014gaussian}, $\mathcal{K}_{A-\mu}$ is VC type, that is, there are some positive constants $C$ and $v$ such that
	\begin{align*}
		\sup_Q N(\mathcal{K}_{A-\mu}, \|\cdot\|_{Q,2}, \varepsilon\|\bar{K}_{A-\mu}\|_{Q,2}) \le (C/\varepsilon)^v, \quad 0 < \forall \varepsilon \le 1, \forall n\ge 1,
	\end{align*}
	where the supremum is taken over all finitely discrete probability measures.
	
	To obtain the desired result, it suffices to show that
	\begin{align*}
		\frac{1}{\sqrt{n}h} \sup_{k_{A-\mu} \in \mathcal{K}_{A-\mu}} \left| \mathbb{G}_n k_{A-\mu} \right|
		= O_{\bP}\left( h^q \sqrt{\frac{\log n}{nh}} \right).
	\end{align*}
	Letting $\sigma^2>0$ be any positive constant such that $\sup_{k_{A-\mu}\in\mathcal{K}_{A-\mu}} Pk_{A-\mu}^2 \le \sigma^2 \le \| \bar{K}_{A-\mu} \|_{P,2}^2$, by Corollary 5.1 of \citet{chernozhukov2014gaussian}, we have
	\begin{align*}
		\bE\left[\sup_{k_{A-\mu}\in\mathcal{K}_{A-\mu}}\left|\mathbb{G}_nk_{A-\mu}\right|\right] &\lesssim \sqrt{v\sigma^2 \log\left(\frac{C\| \bar{K}_{A-\mu} \|_{P,2}}{\sigma}\right)} + \frac{v B}{\sqrt{n}}\log\left(\frac{C \| \bar{K}_{A-\mu} \|_{P,2}}{\sigma}\right)
	\end{align*}
	with some positive constant $B > 0$.
	Here, from the law of iterated expectation and Assumption \ref{as:asymptotic1}(ii) and (iv)-(b), we can see that
	\begin{align*}
		Pk_{A-\mu}^2 \lesssim  h \int K^2(u)\Psi^2(u) (uh)^{2q} f_Z(z + uh) du  \lesssim h^{2q+1}, 
		\qquad 
		\| \bar{K}_{A-\mu} \|_{P,2}^2 = O(1).
	\end{align*}
	Thus, by Markov's inequality, we have
	\begin{align*}
		\frac{1}{\sqrt{n}h}\sup_{k_{A-\mu} \in \mathcal{K}_{A-\mu} }\left|\mathbb{G}_n k_{A-\mu}\right| = O_{\bP}\left( h^q \sqrt{\frac{\log (1/h)}{nh}}\right)  = O_{\bP}\left( h^q \sqrt{\frac{\log n}{nh}}\right),
	\end{align*}
	implying the desired result.
\end{proof}

\begin{lemma} \label{lem:eval_mu_hatr}
	Suppose that Assumptions \ref{as:iid} and \ref{as:asymptotic1}(i)--(v) hold.
	The following equation holds uniformly in $(g,t,z) \in \mathcal{A}^{\ny}$:
	\begin{align*}
		\hat{\mu}_{\hat{R}}(z) = \hat{\mu}_R(z) + O_{\bP}(1/\sqrt{n}).
	\end{align*}
\end{lemma}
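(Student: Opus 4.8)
The plan is to exploit the linearity of the local polynomial regression operator in its dependent variable. Since $\hat{\mu}_{\hat R}(z)$ and $\hat{\mu}_R(z)$ are built from the same design, kernel weights, polynomial order $p$ and bandwidth $h$, differing only in whether the dependent variable is $\hat R_{i,g}$ or $R_{i,g}$, their difference is the $p$-th order LPR estimator applied to the first-stage residuals $\hat R_{i,g} - R_{i,g}$ (stacked over $i$ into $\hat{\bm{R}}_g - \bm{R}_g$):
\begin{align*}
    \hat{\mu}_{\hat R}(z) - \hat{\mu}_R(z)
    = \frac{1}{n} \bm{e}_0^\top \bm{H}_p^{-1} \tilde{\bm{\Gamma}}_{(h,K,p)}^{-1} \tilde{\bm{\Omega}}_{(h,K,p)} \left( \hat{\bm{R}}_g - \bm{R}_g \right)
    = \bm{e}_0^\top \tilde{\bm{\Gamma}}_{(h,K,p)}^{-1} \left( \frac{1}{nh} \sum_{i=1}^n K_{i,h} \bm{r}_p(u_{i,h}) \left( \hat R_{i,g} - R_{i,g} \right) \right),
\end{align*}
where the second equality uses $\bm{e}_0^\top \bm{H}_p^{-1} = \bm{e}_0^\top$ and the identity $\frac{1}{n}\tilde{\bm{\Omega}}_{(h,K,p)} \bm{Q} = \frac{1}{nh}\sum_{i=1}^n K_{i,h}\bm{r}_p(u_{i,h})Q_i$ recorded in the notation section.

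I would then bound the two factors separately. For the matrix factor, Lemma \ref{lem:Gamma} and the continuous mapping theorem give $\tilde{\bm{\Gamma}}_{(h,K,p)}^{-1} = \bm{\Gamma}_{(K,p)}^{-1} + o_{\bP}(1) = O_{\bP}(1)$, and because $f_Z$ is bounded above and away from zero on $\mathcal{I}$ (Assumption \ref{as:asymptotic1}(i)) this bound is uniform in $z \in \mathcal{I}$ and does not involve $(g,t)$. For the remaining factor, I would factor out $\max_{1 \le i \le n}|\hat R_{i,g} - R_{i,g}| = O_{\bP}(1/\sqrt{n})$ (Assumption \ref{as:asymptotic1}(v)) and use the compact support of $K$ (Assumption \ref{as:asymptotic1}(ii)) to bound $\|\bm{r}_p(u_{i,h})\|$ by a constant on $\{ i : K_{i,h} \ne 0 \}$, so that
\begin{align*}
    \left\| \frac{1}{nh} \sum_{i=1}^n K_{i,h} \bm{r}_p(u_{i,h}) \left( \hat R_{i,g} - R_{i,g} \right) \right\|
    \lesssim \left( \max_{1 \le i \le n} \left| \hat R_{i,g} - R_{i,g} \right| \right) \frac{1}{nh} \sum_{i=1}^n \left| K_{i,h} \right| = O_{\bP}\left( \frac{1}{\sqrt{n}} \right),
\end{align*}
since $\frac{1}{nh}\sum_{i=1}^n |K_{i,h}| = O_{\bP}(1)$ uniformly in $z \in \mathcal{I}$ by the argument of Lemma \ref{lem:eval_kq} (applied with $q = 0$ to $|K|$). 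Multiplying the two bounds gives $\hat{\mu}_{\hat R}(z) - \hat{\mu}_R(z) = O_{\bP}(1/\sqrt{n})$; uniformity in $(g,t)$ is automatic because $\mathcal{G}_\delta$ and $\{2,\dots,\mathcal{T}-\delta\}$ are finite, and uniformity in $z$ comes from the uniform versions of Lemmas \ref{lem:Gamma} and \ref{lem:eval_kq}.

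The only mild subtlety, rather than a genuine obstacle, is guaranteeing $\sup_{z \in \mathcal{I}} \| \tilde{\bm{\Gamma}}_{(h,K,p)}^{-1} \| = O_{\bP}(1)$, i.e.\ that the smallest eigenvalue of $\tilde{\bm{\Gamma}}_{(h,K,p)}$ stays bounded away from zero uniformly over $z$; this follows from the same VC-type empirical-process argument underlying Lemma \ref{lem:eval_kq} together with the lower bound on $f_Z$ over $\mathcal{I}$. The reasoning is identical for $p = 1$ and $p = 2$, only the fixed, finite dimension of $\bm{r}_p$ changing, and since the first-stage rate enters solely through $\max_i |\hat R_{i,g} - R_{i,g}|$, no undersmoothing restriction on $h$ is needed for this lemma.
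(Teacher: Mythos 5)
Your proposal is correct and follows essentially the same route as the paper's proof: decompose $\hat R_{i,g} = R_{i,g} + (\hat R_{i,g} - R_{i,g})$ so that the difference of the two LPR estimators is the LPR operator applied to the first-stage residuals, control the matrix factor via Lemma \ref{lem:Gamma}, and bound the remaining sum by $\max_{1 \le i \le n}|\hat R_{i,g} - R_{i,g}| \cdot \frac{1}{nh}\sum_{i=1}^n |K_{i,h}| = O_{\bP}(1/\sqrt{n})$ using Assumption \ref{as:asymptotic1}(v) and Lemma \ref{lem:eval_kq}. The only cosmetic difference is that the paper first reduces $\bm{e}_0^\top\bm{\Gamma}_{(K,1)}^{-1}$ to $f_Z(z)^{-1}\bm{e}_0^\top$ before bounding, whereas you bound the full vector norm directly; your added remark on the uniform-in-$z$ invertibility of $\tilde{\bm{\Gamma}}_{(h,K,p)}$ is a welcome extra precision but not a different argument.
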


\begin{proof}
	For the LQR estimator $\hat{\mu}_{\hat{R}}(z)$, we have
	\begin{align*}
		\hat{\mu}_{\hat{R}}(z) 
		& = \bm{e}_0^\top \tilde{\bm{\Gamma}}_{(h,K,2)}^{-1} \tilde{\bm{\Omega}}_{(h,K,2)} \bm{\hat{R}}_{g,t} \\
		& = \bm{e}_0^\top \tilde{\bm{\Gamma}}_{(h,K,2)}^{-1} \frac{1}{nh} \sum_{i=1}^n K_{i,h} \bm{r}_2(u_{i,h}) \left( R_{i,g,t} + \hat{R}_{i,g,t} - R_{i,g,t} \right) \\
		& = \hat \mu_R(z) + \bm{e}_0^\top \tilde{\bm{\Gamma}}_{(h,K,2)}^{-1} \frac{1}{nh} \sum_{i=1}^n K_{i,h} \bm{r}_2(u_{i,h}) \left( \hat{R}_{i,g,t} - R_{i,g,t} \right) \\
		& = \hat \mu_R(z) + \bm{e}_0^\top \bm{\Gamma}_{(K,2)}^{-1} \frac{1}{nh} \sum_{i=1}^n K_{i,h} \bm{r}_2(u_{i,h}) \left( \hat{R}_{i,g,t} - R_{i,g,t} \right) + o_{\bP}\left( \frac{1}{\sqrt{n}} \right) \\
		& = \hat \mu_R(z) + \frac{1}{f_Z(z)} \frac{1}{nh} \sum_{i=1}^n K_{i,h} \Psi_{i,h} \left( \hat{R}_{i,g,t} - R_{i,g,t
		} \right) + o_{\bP}\left( \frac{1}{\sqrt{n}} \right),
	\end{align*}
	where we used $\tilde{\bm{\Gamma}}_{(h,K,2)}^{-1} = \bm{\Gamma}_{(K,2)}^{-1} + o_{\bP}(1)$ by Lemma \ref{lem:Gamma}, Assumption \ref{as:asymptotic1}(v), and Lemma \ref{lem:eval_bckq}, which states that
	\begin{align*}
		\frac{1}{nh} \sum_{i=1}^n K_{i,h}\Psi_{i,h}
		= O_{\bP}(1).
	\end{align*}
	Observe that
	\begin{align*}
		\frac{1}{f_Z(z)} \frac{1}{nh} \sum_{i=1}^n K_{i,h}\Psi_{i,h} \left( \hat{R}_{i,g,t} - R_{i,g,t} \right)
		& \le \max_{1 \le i \le n} \left| \hat{R}_{i,g,t} - R_{i,g,t} \right|  \left( \frac{1}{f_Z(z)} \frac{1}{nh} \sum_{i=1}^n |K_{i,h}\Psi_{i,h}| \right) \\
		& = O_{\bP}\left( \frac{1}{\sqrt{n}} \right) \cdot O_{\bP}(1) \\
		& = O_{\bP}\left( \frac{1}{\sqrt{n}} \right),
	\end{align*}
	where the first equality follows from Assumption \ref{as:asymptotic1}(v).
	This completes the proof.
\end{proof}

\begin{lemma} \label{lem:eval_mu_re}
	Suppose that Assumptions \ref{as:iid} and \ref{as:asymptotic1}(i)--(iv) hold.
	For the LLR estimators $\hat{\mu}_R(z)$ and $\hat{\mu}_E(z)$ with $p = 1$, the following equations hold uniformly in $(g,t,z) \in \mathcal{A}^{\ny}$:
	\begin{align*}
		\hat{\mu}_R(z) - \mu_R(z) 
		& = O(h^2) + O_{\bP}\left( \sqrt{\frac{ \log n }{ nh } } \right), \\
		\hat{\mu}_E(z) - \mu_E(z) 
		& = O(h^2) + O_{\bP}\left( \sqrt{\frac{ \log n }{ nh } } \right).
	\end{align*}
\end{lemma}

\begin{proof}
	Observe that
	\begin{align*}
		\hat{\mu}_R(z)  - \mu_R(z) 
		= \Big( \hat{\mu}_R(z) - \hat{\mu}_{\mathbb{E}[R \mid Z]}(z) \Big) + \Big( \hat{\mu}_{\mathbb{E}[R \mid Z]}(z)   - \mu_R(z) \Big).
	\end{align*}
	Similar to the evaluation on $\mathbf{(LLR.I)}$ and $\mathbf{(LLR.II)}$, we can show that
	\begin{align*}
		\hat{\mu}_R(z)  - \hat{\mu}_{\mathbb{E}[ R \mid Z]}(z)
		& = O_{\bP}\left( \sqrt{\frac{\log n}{nh}} \right), \\
		\hat{\mu}_{\mathbb{E}[R\mid Z]}(z) - \mu_R(z)
		& = O(h^2) + O_{\bP}\left( \sqrt{\frac{\log n}{nh}} \right),
	\end{align*}
	which completes the proof.
\end{proof}

\begin{lemma} \label{lem:eval_mu_bce}
	Suppose that Assumptions \ref{as:iid} and \ref{as:asymptotic1}(i)--(iv) hold.
	For the LQR estimators $\hat{\mu}_R(z)$ and $\hat{\mu}_E(z)$, the following equations hold uniformly in $(g, t, z) \in \mathcal{A}^{\ny}$:
	\begin{align*}
		\hat{\mu}_R(z) - \mu_R(z) 
		& = O(h^4) + O_{\bP}\left( \sqrt{\frac{\log n}{nh}} \right), \\
		\hat{\mu}_E(z) - \mu_E(z) 
		& = O(h^4) + O_{\bP}\left( \sqrt{\frac{\log n}{nh}} \right).
	\end{align*}
\end{lemma}

\begin{proof}
	Observe that
	\begin{align*}
		\hat{\mu}_R(z) - \mu_R(z) 
		= \Big( \hat{\mu}_R(z) - \hat{\mu}_{\bE[R \mid Z]}(z) \Big) + \Big( \hat{\mu}_{\bE[R \mid Z]}(z) - \mu_R(z) \Big).
	\end{align*}
	Similar to the evaluation on $\mathbf{(LQR.I)}$ and $\mathbf{(LQR.II)}$, we can show that
	\begin{align*}
		\hat{\mu}_R(z) - \hat{\mu}_{\bE[R\mid Z]}(z)
		& = O_{\bP}\left( \sqrt{\frac{\log n}{nh}} \right), \\
		\hat{\mu}_{\bE[R \mid Z]}(z) - \mu_R(z)
		& = O(h^4) + O_{\bP}\left( \sqrt{\frac{\log n}{nh}} \right),
	\end{align*}
	which completes the proof.
\end{proof}

\section{Limited Treatment Anticipation} \label{sec:not-yet-treated}

In this section, we consider how to relax the no-anticipation condition in Assumption \ref{as:anticipation-ny} by allowing anticipation behavior in a limited but reasonable way, as in \citet{callaway2021difference}.
It turns out that even with such limited treatment anticipation, we can perform uniform inference for CATT in the same way as in Section \ref{sec:CATTgt}, except for some minor modifications to several variables.
Following the main text, the analysis in this section considers the not-yet-treated group as the comparison group.
Throughout this section, we maintain the staggered treatment adoption, random sampling, and overlap conditions in Assumptions \ref{as:staggered}, \ref{as:iid}, and \ref{as:overlap}.

\subsection{Identification}

Suppose that there is a known integer $\delta \ge 0$ such that each unit can anticipate treatment by $\delta$ periods.
More formally, we impose the following assumptions on anticipation behavior and conditional parallel trends, which are the same as Assumptions 3 and 5 of \citet{callaway2021difference}.

\begin{assumption}[Limited Treatment Anticipation] \label{as:anticipation-delta}
	There is a known $\delta \ge 0$ such that
	\begin{align*}
		\bE [ Y_t(g) \mid X, G_g = 1 ]
		= \bE [ Y_t(0) \mid X, G_g = 1 ] 
		\quad
		\text{a.s. for all $g \in \mathcal{G}$ and $t \in \{ 1, \dots, \mathcal{T} \}$ such that $t < g - \delta$}. 
	\end{align*}
\end{assumption}

\begin{assumption}[Conditional Parallel trends Based on ``Not-Yet-Treated'' Groups] \label{as:parallel-ny-delta}
	For each $g \in \mathcal{G}$ and each $(s, t) \in \{ 2, \dots, \mathcal{T} \} \times \{ 2, \dots, \mathcal{T} \}$ such that $t \ge g - \delta$ and $t + \delta \le s < \bar g$,
	\begin{align*}
		\bE[ Y_t(0) - Y_{t-1}(0) \mid X, G_g = 1 ]
		= \bE[ Y_t(0) - Y_{t-1}(0) \mid X, D_s = 0, G_g = 0 ]
		\quad 
		\text{a.s.}
	\end{align*}
\end{assumption}

Slightly different from the OR function and GPS in \eqref{eq:GPS-OR}, we define
\begin{align*}
	m_{g,t,\delta}^{\ny}(X) 
	& \coloneqq \bE[ Y_t - Y_{g-\delta-1} \mid X, D_{t+\delta} = 0, G_g = 0 ], \\
	p_{g,t+\delta}(X) 
	& \coloneqq \bP[G_g = 1 \mid X, G_g + (1 - D_{t+\delta})(1 - G_g) = 1].
\end{align*}
Let $m_{g,t,\delta}^{\ny}(X; \beta_{g,t,\delta}^{\ny})$ and $p_{g,t+\delta}(X; \pi_{g,t+\delta})$ be the corresponding parametric specifications known up to the finite-dimensional parameters $\beta_{g,t,\delta}^{\ny} \in \mathscr{B}_{g,t,\delta}^{\ny}$ and $\pi_{g,t+\delta} \in \Pi_{g,t+\delta}$.
Define the conditional DR estimand under limited treatment anticipation as follows:
\begin{align*}
	& {\DR}_{g,t,\delta}^{\ny}(Z; \beta_{g,t,\delta}^{\ny}, \pi_{g,t+\delta}) \\
	& \coloneqq \bE \left[ \left( \frac{G_g}{\bE[ G_g \mid Z]} - \frac{ R_{g,t+\delta}(W; \pi_{g,t+\delta}) }{\bE \left[ R_{g,t+\delta}(W; \pi_{g,t+\delta}) \mid Z \right]} \right) \left( Y_t - Y_{g-\delta-1} - m_{g,t,\delta}^{\ny}(X; \beta_{g,t,\delta}^{\ny}) \right) \; \middle| \; Z \right],
\end{align*}
where
\begin{align*}
	R_{g,t+\delta}(W; \pi_{g,t+\delta})
	\coloneqq \frac{p_{g,t+\delta}(X; \pi_{g,t+\delta}) (1 - D_{t+\delta}) (1 - G_g)}{1 - p_{g,t+\delta}(X; \pi_{g,t+\delta})}.
\end{align*}

Lastly, we impose the following assumption, which replaces Assumption \ref{as:parametric-ny} to account for limited treatment anticipation.
Let $\mathcal{G}_{\delta} \coloneqq \mathcal{G} \cap \{ 2 + \delta, 3 + \delta, \dots, \mathcal{T} \}$.

\begin{assumption}[Parametric Models for the ``Not-Yet-Treated'' Group] \label{as:parametric-ny-delta}
	For each $g \in \mathcal{G}_{\delta}$ and $t \in \{ 2, \dots, \mathcal{T} - \delta \}$ such that $g - \delta \le t < \bar g -\delta$, either condition is satisfied:
	\begin{enumerate}[(i)]
		\item There exists a unique $\beta_{g,t,\delta}^{\ny*} \in \mathscr{B}_{g,t,\delta}^{\ny}$ such that $m_{g,t,\delta}^{\ny}(X) = m_{g,t,\delta}^{\ny}(X; \beta_{g,t,\delta}^{\ny*})$ a.s.
		\item There exists a unique $\pi_{g,t+\delta}^* \in \Pi_{g,t+\delta}$ such that $p_{g,t+\delta}(X) = p_{g,t+\delta}(X; \pi_{g,t+\delta}^*)$ a.s.
	\end{enumerate}
\end{assumption}

The following lemma states that even with limited treatment anticipation, $\CATT_{g,t}(z)$ is identifiable from the above conditional DR estimand for each $(g, t, z) \in \mathcal{A}_{\delta}^{\ny}$, where
\begin{align*}
	\mathcal{A}_{\delta}^{\ny} \coloneqq \{ (g, t, z) : g \in \mathcal{G}_{\delta}, t \in \{ 2, \dots, \mathcal{T} - \delta \}, g - \delta \le t \le \bar{g}-\delta, z \in \mathcal{I} \}.
\end{align*}

\begin{lemma}
	Suppose that Assumptions \ref{as:staggered}, \ref{as:iid}, \ref{as:overlap}, \ref{as:anticipation-delta}, and \ref{as:parallel-ny-delta} hold.
	Fix arbitrary $(g, t, z) \in \mathcal{A}_{\delta}^{\ny}$.
	\begin{enumerate}[(i)]
		\item Under Assumption \ref{as:parametric-ny-delta}(i), $\CATT_{g,t}(z) = \DR_{g,t,\delta}^{\ny}(z; \beta_{g,t,\delta}^{\ny*}, \pi_{g,t+\delta})$ for all $\pi_{g,t+\delta} \in \Pi_{g,t+\delta}$.
		\item Under Assumption \ref{as:parametric-ny-delta}(ii), $\CATT_{g,t}(z) = \DR_{g,t,\delta}^{\ny}(z; \beta_{g,t,\delta}^{\ny}, \pi_{g,t+\delta}^*)$ for all $\beta_{g,t,\delta}^{\ny} \in \mathscr{B}_{g,t,\delta}^{\ny}$.
	\end{enumerate}
\end{lemma}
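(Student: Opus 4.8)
The plan is to transcribe the proof of Lemma~\ref{lem:DR}, replacing the never-treated comparison group by the not-yet-treated groups observed at calendar time $s = t+\delta$. The first ingredient is the not-yet-treated analogue of Lemma~\ref{lem:nonparametric}: under Assumptions~\ref{as:staggered}--\ref{as:anticipation}, \ref{as:overlap}, and \ref{as:parallel-ny}, the quantity $\CATT_{g,t}(z)$ equals the corresponding nonparametric OR, IPW, and DR estimands built from $m_{g,t,\delta}^{\ny}(X)$, $\bE[G_g \mid Z]$, and $R_{g,t+\delta}(W) \coloneqq p_{g,t+\delta}(X)(1-D_{t+\delta})(1-G_g)/[1-p_{g,t+\delta}(X)]$. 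This follows from the not-yet-treated counterparts of Lemma~A.1 and Theorem~1 of CS21; the only bookkeeping is to track the index $s = t+\delta$ in Assumption~\ref{as:parallel-ny} and the overlap bound $p_{g,t+\delta}(X) < 1-\varepsilon$ in Assumption~\ref{as:overlap}, which ensure the denominators $\bE[R_{g,t+\delta}(W;\pi_{g,t+\delta}) \mid Z]$ are well defined and bounded away from zero. Given this identity, both claims reduce to checking that a single residual term vanishes.

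For part~(i), suppose $m_{g,t,\delta}^{\ny}(X) = m_{g,t,\delta}^{\ny}(X;\beta_{g,t,\delta}^{\ny*})$ a.s. By the not-yet-treated nonparametric identity the OR estimand already equals $\CATT_{g,t}(z)$, so it suffices to show that the additional weighting term in the DR estimand, namely $\bE[ \frac{R_{g,t+\delta}(W;\pi_{g,t+\delta})}{\bE[R_{g,t+\delta}(W;\pi_{g,t+\delta})\mid Z]}(Y_t - Y_{g-\delta-1} - m_{g,t,\delta}^{\ny}(X;\beta_{g,t,\delta}^{\ny*})) \mid Z]$, is zero for every $\pi_{g,t+\delta}$. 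By the law of iterated expectations this reduces to $\bE[(1-D_{t+\delta})(1-G_g)(Y_t - Y_{g-\delta-1} - m_{g,t,\delta}^{\ny}(X;\beta_{g,t,\delta}^{\ny*})) \mid X] = 0$ a.s., which holds because $\bE[Y_t - Y_{g-\delta-1} \mid X, D_{t+\delta}=0, G_g=0] = m_{g,t,\delta}^{\ny}(X) = m_{g,t,\delta}^{\ny}(X;\beta_{g,t,\delta}^{\ny*})$ by Assumption~\ref{as:parametric-ny}(i), exactly as in the never-treated case.

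For part~(ii), suppose $p_{g,t+\delta}(X) = p_{g,t+\delta}(X;\pi_{g,t+\delta}^*)$ a.s., so that $R_{g,t+\delta}(W;\pi_{g,t+\delta}^*) = R_{g,t+\delta}(W)$. The key fact is the generalized-propensity-score balancing identity: conditioning on the event $\{G_g + (1-D_{t+\delta})(1-G_g) = 1\}$ one has $\bP(G_g = 1 \mid X, G_g + (1-D_{t+\delta})(1-G_g) = 1) = p_{g,t+\delta}(X)$, which gives $\bE[R_{g,t+\delta}(W) \mid X] = \bE[G_g \mid X]$ a.s.\ and hence $\bE[R_{g,t+\delta}(W) \mid Z] = \bE[G_g \mid Z]$. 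Therefore, for an arbitrary $\beta_{g,t,\delta}^{\ny}$, the difference between the DR estimand at $(\beta_{g,t,\delta}^{\ny}, \pi_{g,t+\delta}^*)$ and the nonparametric DR estimand equals $\bE[(G_g/\bE[G_g\mid Z] - R_{g,t+\delta}(W)/\bE[G_g\mid Z])(m_{g,t,\delta}^{\ny}(X) - m_{g,t,\delta}^{\ny}(X;\beta_{g,t,\delta}^{\ny})) \mid Z]$, which vanishes once we condition on $X$ because $\bE[G_g \mid X] = \bE[R_{g,t+\delta}(W) \mid X]$. Since the nonparametric DR estimand equals $\CATT_{g,t}(z)$ by the identity from the first step, the claim follows.

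The step I expect to be the main obstacle is the first one: establishing the not-yet-treated nonparametric identity and, within it, the balancing identity $\bE[R_{g,t+\delta}(W) \mid X] = \bE[G_g \mid X]$ with the correct conditioning event $\{G_g + (1-D_{t+\delta})(1-G_g) = 1\}$ and the correct time index $s = t+\delta$. Once these are in place, everything else is a word-for-word transcription of the proof of Lemma~\ref{lem:DR}, so I would state the required modifications and omit the routine algebra, as the paper itself does.
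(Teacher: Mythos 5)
Your proposal is correct and follows essentially the same route as the paper, which omits this proof entirely on the grounds that it is ``almost the same as Lemma \ref{lem:DR}'': you reduce to the not-yet-treated analogue of Lemma \ref{lem:nonparametric} and then show the residual term vanishes via the law of iterated expectations, exactly mirroring the never-treated argument with $C$ replaced by $(1-D_{t+\delta})(1-G_g)$ and $p_g$ by $p_{g,t+\delta}$. Your explicit verification of the balancing identity $\bE[R_{g,t+\delta}(W)\mid X]=\bE[G_g\mid X]$ for part (ii) is a correct filling-in of a step the paper leaves implicit even in its proof of Lemma \ref{lem:DR}(ii).
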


The proof is omitted here because it is the same as Lemma \ref{lem:DR}.

\subsection{Estimation and inference}

For notational convenience, we write $m_{g,t,\delta}^{\ny} \coloneqq m_{g,t,\delta}^{\ny}(X; \beta_{g,t,\delta}^{\ny*})$ and $R_{g,t+\delta} \coloneqq R_{g,t+\delta}(W; \pi_{g,t+\delta}^*)$, and define
\begin{align*}
	A_{g,t,\delta}^{\ny}
	\coloneqq \left( \frac{ G_g }{ \mu_G(z) } - \frac{ R_{g,t+\delta} }{ \mu_R(z) } \right) \left( Y_t - Y_{g-\delta-1} - m_{g,t,\delta}^{\ny} \right),
\end{align*}
where $\mu_G(z) = \bE[ G_g \mid Z = z ]$ and $\mu_R(z) = \bE[ R_{g,t+\delta} \mid Z = z ]$.
The definition of $\mu_R(z)$ here slightly differs from that in Section \ref{sec:estimation}, but we use the same notation for ease of exposition; throughout this section, a similar abuse of notation is made for other variables.
The estimand of interest can be written as $\DR_{g,t,\delta}^{\ny}(z) = \mu_A(z) = \bE[ A_{g,t,\delta}^{\ny} \mid Z = z ]$ for $z \in \mathcal{I}$.

The estimation procedure is essentially the same as in Section \ref{subsec:procedure}.
Specifically, we first obtain some parametric estimators $\hat \beta_{g,t,\delta}^{\ny}$ and $\hat \pi_{g,t+\delta}$.
We then compute $\hat m_{i,g,t,\delta}^{\ny} \coloneqq m_{g,t,\delta}^{\ny}(X_i; \hat \beta_{g,t,\delta}^{\ny})$, $\hat R_{i,g,t+\delta} \coloneqq R_{g,t+\delta}(W_i; \hat \pi_{g,t+\delta})$, and 
\begin{align*}
	\hat A_{i,g,t,\delta}^{\ny}
	\coloneqq \left( \frac{ G_g }{ \hat \mu_G(z) } - \frac{ \hat R_{g,t+\delta} }{ \hat \mu_{\hat R}(z) } \right) \left( Y_t - Y_{g-\delta-1} - \hat m_{g,t,\delta}^{\ny} \right).
\end{align*}
Here, $\hat \mu_G(z)$ and $\hat \mu_{\hat R}(z)$ denote the LQR estimators of $\mu_G(z) = \bE[ G_g \mid Z = z ]$ and $\mu_R(z) = \bE[ R_{g,t+\delta} \mid Z = z ]$, respectively, in the current context.
Finally, the estimator for $\CATT_{g,t}(z)$ is given by
\begin{align} \label{eq:DRest-ny-delta}
	\hat{\DR}_{g,t,\delta}^{\ny}(z) 
	\coloneqq \hat \mu_{\hat A}(z),
\end{align}
where $\hat \mu_{\hat A}(z)$ denotes the LQR estimator for $\mu_A(z)$.

We can show that the leading term of the conditional DR estimator in \eqref{eq:DRest-ny-delta} is given by 
\begin{align*}
	& \hat{\DR}_{g,t,\delta}^{\ny}(z) - {\DR}_{g,t,\delta}^{\ny}(z)
	\approx \frac{1}{f_Z(z)} \frac{1}{nh} \sum_{i=1}^n \Psi_{i,h} (B_{i,g,t,\delta}^{\ny} - \mu_B(Z_i)) K\left( \frac{Z_i - z}{h} \right),
\end{align*}
uniformly in $(g,t,z) \in \mathcal{A}_{\delta}^{\ny}$, where
\begin{align*}
	\begin{split}
		B_{i,g,t,\delta}^{\ny} 
		& \coloneqq A_{i,g,t,\delta}^{\ny} + \frac{ \mu_E(z) }{ \mu_R^2(z) } R_{i,g,t+\delta} - \frac{ \mu_F(z) }{ \mu_G^2(z) } G_{i,g}, \\
		E_{i,g,t,\delta}^{\ny} 
		& \coloneqq R_{i,g,t+\delta} ( Y_{i,t} - Y_{i,g-\delta-1} - m_{i,g,t,\delta}^{\ny} ), \\
		F_{i,g,t,\delta}^{\ny} 
		& \coloneqq G_{i,g} ( Y_{i,t} - Y_{i,g-\delta-1} - m_{i,g,t,\delta}^{\ny} ),
	\end{split}
\end{align*}
and, with an abuse of notation, 
\begin{align*}
	\mu_B(z) \coloneqq \bE[ B_{i,g,t,\delta}^{\ny} \mid Z_i = z ],
	\qquad 
	\mu_E(z) \coloneqq \bE[ E_{i,g,t,\delta}^{\ny} \mid Z_i = z ],
	\qquad 
	\mu_F(z) \coloneqq \bE[ F_{i,g,t,\delta}^{\ny} \mid Z_i = z ].
\end{align*}
Based on this asymptotic linear representation, we can perform uniform inference for CATT in the same way as in Section \ref{sec:CATTgt}, except for the minor changes in the definitions of several variables.
In addition, we can prove the validity of the uniform inference method as in Theorems \ref{thm:bias_variance}, \ref{thm:approx_us_distribution}, \ref{thm:mb_linearize}, and \ref{thm:mb_valid}.

\section{Never-Treated Group} \label{sec:never-treated}

While our main analysis considers the never-treated group as the comparison group, the never-treated group (i.e., the units that have never been treated) can also serve as a valid comparison group.
In this section, we discuss the identification, estimation, and uniform inference methods using the never-treated group. 
Throughout this section, we maintain the staggered treatment adoption, random sampling, overlap, and limited treatment anticipation conditions in Assumptions \ref{as:staggered}, \ref{as:iid}, \ref{as:overlap}, and \ref{as:anticipation-delta}.

\subsection{Identification}

Recall that we set $G = \infty$ if the unit has never been treated.
Letting $C \coloneqq \bm{1}\{ G = \infty \}$, we define the OR function and GPS in the current context as follows:
\begin{align*}
	m_{g,t,\delta}^{\nev}(X) 
	& \coloneqq \bE[ Y_t - Y_{g-\delta-1} \mid X, C=1 ],\\
	p_{g}(X) 
	& \coloneqq \bP[G_g = 1 \mid X, G_g + C = 1],
\end{align*}
where $\delta \ge 0$ is as given in Assumption \ref{as:anticipation-delta}.
Let $m_{g,t,\delta}^{\nev}(X; \beta_{g,t,\delta}^{\nev})$ and $p_{g}(X; \pi_{g})$ be the corresponding parametric specifications known up to the finite-dimensional parameters $\beta_{g,t,\delta}^{\nev} \in \mathscr{B}_{g,t,\delta}^{\nev}$ and $\pi_{g} \in \Pi_{g}$.
The conditional DR estimand based on the never-treated group is defined by
\begin{align*}
	{\DR}_{g,t,\delta}^{\nev}(Z; \beta_{g,t,\delta}^{\nev}, \pi_g)
	\coloneqq \bE \left[ \left( \frac{G_g}{\bE[ G_g \mid Z]} - \frac{ R_g(W; \pi_g)  }{\bE \left[ R_g(W; \pi_g) \; \middle| \; Z \right]} \right) \left( Y_t - Y_{g-\delta-1} - m_{g,t,\delta}^{\nev}(X; \beta_{g,t,\delta}^{\nev}) \right) \; \middle| \; Z \right],
\end{align*}
where $W = (Y_1, \dots, Y_\mathcal{T}, X^\top, D_1, \dots, D_\mathcal{T})^\top$ and
\begin{align*}
	R_g(W; \pi_g) 
	\coloneqq \frac{p_g(X; \pi_g) C}{1 - p_g(X; \pi_g)}.
\end{align*}

To develop the analysis using the never-treated group, we replace the parallel trends condition and the parametric assumptions based on the not-yet-treated group in Assumptions \ref{as:parallel-ny} and \ref{as:parametric-ny} (or Assumptions \ref{as:parallel-ny-delta} and \ref{as:parametric-ny-delta}) with the following assumptions.

\begin{assumption}[Conditional Parallel Trends Based on the ``Never-Treated'' Group] \label{as:parallel-nev-delta}
	For each $g \in \mathcal{G}$ and $t \in \{ 2, \dots, \mathcal{T} \}$ such that $t \ge g - \delta$,
	\begin{align*}
		\bE[ Y_t(0) - Y_{t-1}(0) \mid X, G_g = 1 ]
		= \bE[ Y_t(0) - Y_{t-1}(0) \mid X, C = 1 ]
		\quad 
		\text{a.s.}
	\end{align*}
\end{assumption}

\begin{assumption}[Parametric Models for the ``Never-Treated'' Group] \label{as:parametric-nev-delta}
	For each $g \in \mathcal{G}_{\delta}$ and $t \in \{ 2, \dots, \mathcal{T} - \delta \}$ such that $t \ge g -\delta$, either condition is satisfied:
	\begin{enumerate}[(i)]
		\item There exists a unique $\beta_{g,t,\delta}^{\nev*} \in \mathscr{B}_{g,t,\delta}^{\nev}$ such that $m_{g,t,\delta}^{\nev}(X) = m_{g,t,\delta}^{\nev}(X; \beta_{g,t,\delta}^{\nev*})$ a.s.
		\item There exists a unique $\pi_g^* \in \Pi_g$ such that $p_g(X) = p_g(X; \pi_g^*)$ a.s.
	\end{enumerate}
\end{assumption}

The next lemma states that the conditional DR estimand just defined above identifies CATT, whose proof is omitted here because it is the same as Lemma \ref{lem:DR}.
Let
\begin{align*}
	\mathcal{A}_{\delta}^{\nev} 
	\coloneqq \{ (g, t, z) : g \in \mathcal{G}_{\delta}, t \in \{ 2, \dots, \mathcal{T} - \delta \}, t \ge g-\delta, z \in \mathcal{I} \}.
\end{align*}

\begin{lemma} \label{lem:DR-nev}
	Suppose that Assumptions \ref{as:staggered}, \ref{as:iid}, \ref{as:overlap}, \ref{as:anticipation-delta}, and \ref{as:parallel-nev-delta} hold.
	Fix arbitrary $(g, t, z) \in \mathcal{A}_{\delta}^{\nev}$.
	\begin{enumerate}[(i)]
		\item Under Assumption \ref{as:parametric-nev-delta}(i), ${\CATT}_{g,t}(z) = {\DR}_{g,t,\delta}^{\nev}(z; \beta_{g,t,\delta}^{\nev*}, \pi_g)$ for all $\pi_g \in \Pi_g$.
		\item Under Assumption \ref{as:parametric-nev-delta}(ii), ${\CATT}_{g,t}(z) = {\DR}_{g,t,\delta}^{\nev}(z; \beta_{g,t,\delta}^{\nev}, \pi_g^*)$ for all $\beta_{g,t,\delta}^{\nev} \in \mathscr{B}_{g,t,\delta}^{\nev}$.
	\end{enumerate}
\end{lemma}

\subsection{Estimation and inference}

For notational convenience, we write $m_{g,t,\delta}^{\nev} \coloneqq m_{g,t,\delta}^{\nev}(X; \beta_{g,t,\delta}^{\nev*})$ and $R_{g} \coloneqq R_{g}(W; \pi_{g}^*)$, and define
\begin{align*}
	A_{g,t,\delta}^{\nev}
	\coloneqq \left( \frac{ G_g }{ \mu_G(z) } - \frac{ R_{g} }{ \mu_R(z) } \right) \left( Y_t - Y_{g-\delta-1} - m_{g,t,\delta}^{\nev} \right),
\end{align*}
where, with an abuse of notation, $\mu_G(z) = \bE[ G_g \mid Z = z ]$ and $\mu_R(z) = \bE[ R_{g} \mid Z = z ]$.
The definition of $\mu_R(z)$ here slightly differs from that in Section \ref{sec:estimation}, but we use the same notation for ease of exposition; throughout this section, a similar abuse of notation is made for other variables.
The estimand of interest can be written as $\DR_{g,t,\delta}^{\nev}(z) = \mu_A(z) = \bE[ A_{g,t,\delta}^{\nev} \mid Z = z ]$ for $(g,t,z) \in \mathcal{A}_\delta^{\nev}$.

The estimation procedure is almost the same as described in Section \ref{subsec:procedure}.
Specifically, we first obtain some parametric estimators $\hat \beta_{g,t,\delta}^{\nev}$ and $\hat \pi_{g}$.
We then compute $\hat m_{i,g,t,\delta}^{\nev} \coloneqq m_{g,t,\delta}^{\nev}(X_i; \hat \beta_{g,t,\delta}^{\nev})$, $\hat R_{i,g} \coloneqq R_{g}(W_i; \hat \pi_{g})$, and 
\begin{align*}
	\hat A_{i,g,t,\delta}^{\nev}
	\coloneqq \left( \frac{ G_{i,g} }{ \hat \mu_G(z) } - \frac{ \hat R_{i,g} }{ \hat \mu_{\hat R}(z) } \right) \left( Y_{i,t} - Y_{i,g-\delta-1} - \hat m_{i,g,t,\delta}^{\nev} \right).
\end{align*}
Here, with an abuse of notation, $\hat \mu_G(z)$ and $\hat \mu_{\hat R}(z)$ denote the LQR estimators for $\mu_G(z) = \bE[ G_g \mid Z = z ]$ and $\mu_R(z) = \bE[ R_{g} \mid Z = z ]$, respectively.
Finally, the estimator for $\CATT_{g,t}(z)$ is given by the LQR estimator $\hat \mu_{\hat A}(z)$ for $\mu_A(z)$:
\begin{align} \label{eq:DRest-nev}
	\hat{\DR}_{g,t,\delta}^{\nev}(z) 
	\coloneqq \hat \mu_{\hat A}(z).
\end{align}

We can show that the leading term of the conditional DR estimator in \eqref{eq:DRest-nev} is given by 
\begin{align*}
	& \hat{\DR}_{g,t,\delta}^{\nev}(z) - {\DR}_{g,t,\delta}^{\nev}(z)
	\approx \frac{1}{f_Z(z)} \frac{1}{nh} \sum_{i=1}^n  \Psi_{i,p,h} (B_{i,g,t,\delta}^{\nev} - \mu_B(Z_i)) K\left( \frac{Z_i - z}{h} \right),
\end{align*}
uniformly in $(g,t,z) \in \mathcal{A}_{\delta}^{\nev}$, where
\begin{align*}
	\begin{split}
		B_{i,g,t,\delta}^{\nev} 
		& \coloneqq A_{i,g,t,\delta}^{\nev} + \frac{ \mu_E(z) }{ \mu_R^2(z) } R_{i,g} - \frac{ \mu_F(z) }{ \mu_G^2(z) } G_{i,g}, \\
		E_{i,g,t,\delta}^{\nev} 
		& \coloneqq R_{i,g} ( Y_{i,t} - Y_{i,g-\delta-1} - m_{i,g,t,\delta}^{\nev} ), \\
		F_{i,g,t,\delta}^{\nev} 
		& \coloneqq G_{i,g} ( Y_{i,t} - Y_{i,g-\delta-1} - m_{i,g,t,\delta}^{\nev} ),
	\end{split}
\end{align*}
and, with an abuse of notation, 
\begin{align*}
	\mu_B(z) \coloneqq \bE[ B_{i,g,t,\delta}^{\nev} \mid Z_i = z ],
	\qquad 
	\mu_E(z) \coloneqq \bE[ E_{i,g,t,\delta}^{\nev} \mid Z_i = z ],
	\qquad 
	\mu_F(z) \coloneqq \bE[ F_{i,g,t,\delta}^{\nev} \mid Z_i = z ].
\end{align*}
Based on this asymptotic linear representation, we can perform uniform inference for CATT in the same way as in Section \ref{sec:CATTgt}, except for the minor changes in the definitions of several variables.
In addition, we can prove the validity of the uniform inference method as in Theorems \ref{thm:bias_variance}, \ref{thm:approx_us_distribution}, \ref{thm:mb_linearize}, and \ref{thm:mb_valid}.

\section{Additional Discussions for Summary Parameters} \label{sec:supp-summary}

Recall that the aggregated parameter of general form in \eqref{eq:summary} is given by
\begin{align} \label{eq:supp-summary}
	\theta(z) 
	= \sum_{g \in \mathcal{G}} \sum_{t=2}^{\mathcal{T}} w_{g,t}(z) \cdot {\CATT}_{g,t}(z),
\end{align}
where $w_{g,t}(z)$ is a known or estimable weighting function.
In this section, we consider a variety of useful summary parameters that can be written in this form, and discuss how to construct the uniform confidence bands for them.
For presentation purposes, we focus on the analysis using the not-yet-treated group under the no-anticipation condition in Assumption \ref{as:anticipation-ny}.
To make this section self-contained, some of the discussion below may overlap with the discussion in the main text.

Throughout this section, in slightly different notation from the main text, we write $\mu_{G_g}(z) = \bE[ G_g \mid Z = z ]$ to emphasize that this quantity depends on $g$.

\subsection{Examples of summary parameters} \label{subsec:examples}

There are many candidates for useful summary parameters that can be written in the form \eqref{eq:supp-summary}, but we focus on the following parameters for empirical relevance:
(i) the ``event-study-type'' conditional average treatment effect;
(ii) the group-specific conditional average treatment effect;
(iii) the calendar-time conditional average treatment effect; and
(iv) the simple weighted conditional average treatment effect.

\subsubsection{The event-study-type conditional average treatment effect}

Let $e = t - g \ge 0$ denote elapsed treatment time.
To examine the treatment effect heterogeneity with respect to elapsed treatment time $e$ and covariate value $z$, we consider the event-study-type conditional average treatment effect:
\begin{align*}
	\theta_{\es}(e, z)
	& \coloneqq \bE[ Y_{G+e}(G) - Y_{G+e}(0) \mid G + e < \bar g, Z = z ] \\
	& = \bE \Big[ \bE[ Y_{G+e}(G) - Y_{G+e}(0) \mid G, G + e < \bar g, Z = z ]  \; \Big| \; G + e < \bar g, Z = z \Big]\\
	& = \sum_{g \in \mathcal{G}} \bm{1}\{ g + e < \bar g \} \cdot \Pr(G = g \mid G + e < \bar g, Z = z) \cdot {\CATT}_{g, g + e}(z).
\end{align*}
This is the conditional counterpart of the event-study-type summary parameter in equation (3.4) of \citet{callaway2021difference}.
To further rewrite this parameter, observe that
\begin{align*}
	\bm{1}\{ g + e < \bar g \} \cdot \Pr(G = g \mid G + e < \bar{g}, Z = z)
	& = \bm{1}\{ g < \bar g - e \} \cdot \frac{ \Pr( G = g, G < \bar g - e \mid Z = z ) }{ \Pr( G < \bar g - e \mid Z = z ) } \\
	& = \bm{1}\{ g < \bar g - e \} \cdot \frac{ \Pr( G = g \mid Z = z ) }{ \sum_{g'=2}^{\bar g - 1 - e} \Pr( G = g' \mid Z = z ) } \\
	& = \bm{1}\{ g < \bar g - e \} \cdot \frac{ \mu_{G_g}(z) }{ \sum_{g'=2}^{\bar g - 1 - e} \mu_{G_{g'}}(z) }.
\end{align*}
Thus, we can write
\begin{align} \label{eq:supp-es}
	\theta_{\es}(e, z)
	= \sum_{g \in \mathcal{G}} \sum_{t=2}^{\mathcal{T}} w_{g,t}^{\es}(e, z) \cdot {\CATT}_{g,t}(z),
\end{align}
where the weighting function is estimable and given by
\begin{align*}
	w_{g,t}^{\es}(e, z)
	\coloneqq \bm{1}\{ g < \bar g - e, t = g + e \} \cdot \frac{ \mu_{G_g}(z) }{ \sum_{g'=2}^{\bar g - 1 - e} \mu_{G_{g'}}(z) }.
\end{align*}

\subsubsection{The group-specific conditional average treatment effect}

To examine the treatment effect heterogeneity with respect to group $g$ and covariate value $z$, we consider the group-specific conditional average treatment effect:
\begin{align*}
	\theta_{\sel}(g', z)
	\coloneqq \frac{ \sum_{t=2}^{\mathcal{T}} \bm{1}\{ g' \le t < \bar g \} \cdot {\CATT}_{g',t}(z) }{ \sum_{t'=2}^{\mathcal{T}} \bm{1}\{ g' \le t' < \bar g \} },
\end{align*}
where the subscript ``$\sel$'' comes from the fact that the groups are typically determined by ``selective'' treatment timings.
This is the conditional counterpart of the group-specific summary parameter in equation (3.7) of \citet{callaway2021difference}.
It is easy to see that this parameter can be rewritten as
\begin{align} \label{eq:supp-sel}
	\theta_{\sel}(g', z)
	= \sum_{g\in\mathcal{G}} \sum_{t=2}^{\mathcal{T}} w_{g,t}^{\sel}(g', z) \cdot {\CATT}_{g,t}(z),
\end{align}
where the weighting function is known and given by
\begin{align*}
	w_{g,t}^{\sel}(g', z) 
	\coloneqq \frac{ \bm{1}\{ g = g', g' \le t < \bar g \} }{ \sum_{t'=2}^{\mathcal{T}} \bm{1}\{ g' \le t' < \bar g \} }.
\end{align*}

\subsubsection{The calendar-time conditional average treatment effect}

To examine the treatment effect heterogeneity with respect to calendar time $t$ and covariate value $z$, we consider the calendar-time conditional average treatment effect:
\begin{align*}
	\theta_{\cc}(t', z)
	& \coloneqq \bE[ Y_{t'}(G) - Y_{t'}(0) \mid G \le t', Z = z ] \\
	& = \bE \Big[  \bE[ Y_{t'}(G) - Y_{t'}(0) \mid G, G \le t', Z = z ] \; \Big| \; G \le t', Z = z \Big] \\
	& = \sum_{g \in \mathcal{G}} \bm{1}\{ g \le t' \} \cdot \Pr(G = g \mid G \le t', Z = z) \cdot {\CATT}_{g,t'}(z).
\end{align*}
This is the conditional counterpart of the calendar-time summary parameter in equation (3.8) of \citet{callaway2021difference}.
To further rewrite this parameter, observe that
\begin{align*}
	\bm{1}\{ g \le t' \} \cdot \Pr(G = g \mid G \le t', Z = z)
	& = \bm{1}\{ g \le t' \} \cdot \frac{ \Pr( G = g, G \le t' \mid Z = z ) }{ \Pr( G \le t' \mid Z = z ) } \\
	& = \bm{1}\{ g \le t' \} \cdot \frac{ \Pr( G = g \mid Z = z ) }{ \sum_{g'=2}^{t'} \Pr( G = g' \mid Z = z ) } \\
	& = \bm{1}\{ g \le t' \} \cdot \frac{ \mu_{G_g}(z) }{ \sum_{g'=2}^{t'} \mu_{G_{g'}}(z) }.
\end{align*}
Thus, we can write
\begin{align} \label{eq:supp-calendar}
	\theta_{\cc}(t', z)
	= \sum_{g \in \mathcal{G}} \sum_{t=2}^{\mathcal{T}} w_{g,t}^{\cc}(t', z) \cdot {\CATT}_{g,t}(z),
\end{align}
where the weighting function is estimable and given by
\begin{align*}
	w_{g,t}^{\cc}(t', z)
	\coloneqq \bm{1}\{ g \le t', t = t' \} \cdot \frac{ \mu_{G_g}(z) }{ \sum_{g'=2}^{t'} \mu_{G_{g'}}(z) }.
\end{align*}

\subsubsection{The simple weighted conditional average treatment effect}

By simply weighted averaging all identified CATT's, we consider the simple weighted conditional average treatment effect:
\begin{align*}
	\theta_{\W}^{\OO}(z)
	& \coloneqq \frac{1}{\kappa(z)} \sum_{t=2}^{\mathcal{T}} \bE[ Y_t(G) - Y_t(0) \mid G < \bar g, Z = z ] \\
	& = \frac{1}{\kappa(z)} \sum_{g \in \mathcal{G}} \sum_{t=2}^{\mathcal{T}} \bm{1}\{ g \le t < \bar g \} \cdot \Pr(G = g \mid G < \bar g, Z = z) \cdot {\CATT}_{g,t}(z),
\end{align*}
where
\begin{align*}
	\kappa(z)
	& \coloneqq \sum_{t=2}^{\mathcal{T}} \bE[ \bm{1}\{ G \le t < \bar g \} \mid G < \bar g, Z = z] \\
	& = \sum_{g \in \mathcal{G}} \sum_{t=2}^{\mathcal{T}} \bm{1}\{ g \le t < \bar g \} \cdot \Pr(G = g \mid G < \bar g, Z = z).
\end{align*}
The subscript ``$\W$'' and the superscript ``$\OO$'' come from the fact that this summary parameter aggregates $\CATT_{g,t}(z)$'s with simple ``weights'' into a single ``overall'' effect.
This is the conditional counterpart of the overall treatment effect parameter in equation (3.10) of \citet{callaway2021difference}.
To further rewrite this parameter, observe that
\begin{align*}
	\bm{1}\{ g \le t < \bar g \} \cdot \Pr(G = g \mid G < \bar{g}, Z = z)
	& = \bm{1}\{ g \le t < \bar g \} \cdot \frac{ \Pr( G = g, G < \bar g \mid Z = z ) }{ \Pr( G < \bar g \mid Z = z ) } \\
	& = \bm{1}\{ g \le t < \bar g \} \cdot \frac{ \Pr( G = g \mid Z = z ) }{ \sum_{g'=2}^{\bar g - 1} \Pr( G = g' \mid Z = z ) } \\
	& = \bm{1}\{ g \le t < \bar g \} \cdot \frac{ \mu_{G_g}(z) }{ \sum_{g'=2}^{\bar g - 1} \mu_{G_{g'}}(z) }.
\end{align*}
Thus, we can write
\begin{align} \label{eq:supp-overall}
	\theta_{\W}^{\OO}(z)
	= \sum_{g \in \mathcal{G}} \sum_{t=2}^{\mathcal{T}} w_{g,t}^{\OW}(z) \cdot {\CATT}_{g,t}(z),
\end{align}
where the weighting function is estimable and given by 
\begin{align*}
	w_{g,t}^{\OW}(z)
	\coloneqq \frac{1}{\kappa(z)} \cdot \bm{1}\{ g \le t < \bar g \} \cdot \frac{ \mu_{G_g}(z) }{ \sum_{g'=2}^{\bar g - 1} \mu_{G_{g'}}(z) }
\end{align*}
with
\begin{align*}
	\kappa(z)
	= \sum_{g \in \mathcal{G}} \sum_{t=2}^{\mathcal{T}} \bm{1}\{ g \le t < \bar g \} \cdot \frac{ \mu_{G_g}(z) }{ \sum_{g'=2}^{\bar g - 1} \mu_{G_{g'}}(z) }.
\end{align*}

\subsection{Inference for the aggregated parameter}

We discuss the uniform inference method for the general aggregated parameter $\theta(z)$ in \eqref{eq:supp-summary}.
Since the identifiability of $\theta(z)$ is straightforward from the identification result for ${\CATT}_{g,t}(z)$ in Lemma \ref{lem:DR}, we proceed directly to the discussion of the estimation and uniform inference methods.

\subsubsection{Estimation}

We estimate the aggregated parameter $\theta(z)$ in \eqref{eq:supp-summary} by
\begin{align*}
	\hat{\theta}(z) 
	\coloneqq \sum_{g \in \mathcal{G}} \sum_{t=2}^{\mathcal{T}} \hat{w}_{g,t}(z) \cdot \hat{\DR}_{g,t}(z), 
\end{align*}
where $\hat{\DR}_{g,t}(z)$ is the conditional DR estimator defined in \eqref{eq:DRest-ny}, and $\hat{w}_{g,t}(z) = w_{g,t}(z)$ if the weighting function $w_{g,t}(z)$ is known, otherwise $\hat{w}_{g,t}(z)$ is a nonparametric estimator constructed with certain LQR estimation.
For example, the weighting function $w_{g,t}^{\es}(e, z)$ for the event-study-type summary parameter in \eqref{eq:supp-es} can be estimated by
\begin{align*}
	\hat{w}_{g,t}^{\es}(e, z)
	\coloneqq \bm{1}\{ g < \bar g - e, t = g + e \} \cdot \frac{ \hat{\mu}_{G_g}(z) }{ \sum_{g'=2}^{\bar g - 1 - e} \hat{\mu}_{G_{g'}}(z) },
\end{align*}
where $\hat{\mu}_{G_g}(z)$ denotes the LQR estimator for $\mu_{G_g}(z) = \bE[ G_g \mid Z = z ]$.

\subsubsection{Asymptotic linear representation}

To derive the asymptotic linear representation for the aggregated estimator $\hat{\theta}(z)$, the next assumption requires that the weighting function estimator $\hat{w}_{g,t}(z)$ exhibits the same form of asymptotic linearity as in Theorem \ref{thm:bias_variance}.
Although this is a high-level condition, we will demonstrate in the next subsection that it is satisfied with the summary parameters discussed in the previous subsection.

\begin{assumption} \label{as:supp-weight}
	Either condition is satisfied for all $(g, t, z) \in \mathcal{A}$:
	\begin{enumerate}[(i)]
		\item $w_{g,t}(z)$ is a known weighting function and $\hat{w}_{g,t}(z) = w_{g,t}(z)$.
		\item The weighting function estimator $\hat{w}_{g,t}(z)$ is consistent for $w_{g,t}(z)$ and exhibits the following asymptotic linear representation:
		\begin{align*}
			& \hat{w}_{g,t}(z) - w_{g,t}(z) \\
			& = \frac{1}{ f_Z(z) } \frac{1}{nh} \sum_{i=1}^n \Psi_{i,h} (\xi_{i,g,t} - \mu_{\xi}(Z_i)) K \left( \frac{Z_i - z}{h} \right) + \mathrm{Bias} \left[ \hat{w}_{g,t}(z) \; \middle| \; \bm{Z} \right] + o_{\bP}(h^4) + o_{\bP}\left( \sqrt{\frac{\log n}{nh}} \right),
		\end{align*}
		where $\xi_{i,g,t}$ is an estimable random variable, $\mu_{\xi}(Z_i) = \bE[ \xi_{i,g,t} \mid Z_i ]$, and
		\begin{align*}
			\mathrm{Bias}\left[ \hat{w}_{g,t}(z) \;\middle|\; \bm{Z} \right] 
			& = h^4 \mathcal{B}_{w,g,t}(z) + o_{\bP}(h^4) = o_{\bP}\left( \frac{1}{nh} \right), \\
			\mathcal{B}_{w,g,t}(z)
			& \coloneqq \frac{1}{24 f_Z(z)} \left( 2 \mu_{\xi}^{(3)}(z) f_Z^{(1)}(z) + \mu_{\xi}^{(4)}(z) f_Z(z) \right) \left( \frac{I_{4,K}^2 - I_{2,K} I_{6,K} }{ I_{4,K} - I_{2,K}^2 } \right),
		\end{align*}
		and the convergence rates of the remainder terms hold uniformly in $(g, t, z) \in \mathcal{A}$.
	\end{enumerate}
\end{assumption}

The next theorem formalizes the asymptotic linear representation for $\hat{\theta}(z) - \theta(z)$, whose proof is given at the end of this section.

\begin{theorem} \label{thm:supp-summary}
	Suppose that Assumptions \ref{as:staggered}--\ref{as:asymptotic1} and \ref{as:supp-weight} hold.
	When $n \to \infty$, we have
	\begin{align*}
		\hat{\theta}(z) - \theta(z) 
		= \frac{1}{ f_Z(z) } \frac{1}{nh} \sum_{i=1}^n \Psi_{i,h} (J_i - \mu_J(Z_i)) K \left( \frac{Z_i - z}{h} \right) + \mathrm{Bias} \left[ \hat \theta(z) \; \middle| \; \bm{Z} \right] + o_{\bP}(h^4) + o_{\bP}\left( \sqrt{\frac{\log n}{nh}} \right),
	\end{align*}
	where
	\begin{align*}
		J_i \coloneqq 
		\begin{cases}
			\displaystyle \sum_{g \in \mathcal{G}} \sum_{t=2}^{\mathcal{T}} w_{g,t}(z) \cdot B_{i,g,t} & \text{under Assumption \ref{as:supp-weight}(i)}, \\
			\displaystyle \sum_{g \in \mathcal{G}} \sum_{t=2}^{\mathcal{T}} \Big( w_{g,t}(z) \cdot B_{i,g,t} + {\DR}_{g,t}(z) \cdot \xi_{i,g,t} \Big) & \text{under Assumption \ref{as:supp-weight}(ii)},
		\end{cases}
	\end{align*}
	and
	\begin{align*}
		\mathrm{Bias}\left[ \hat \theta(z) \;\middle|\; \bm{Z} \right] 
		& = h^4 \mathcal{B}_{\theta}(z) + o_{\bP}(h^4) = o_{\bP}\left( \frac{1}{nh} \right), \\
		\mathrm{Var}\left[ \hat \theta(z) \;\middle|\; \bm{Z} \right]
		& = \frac{1}{nh} \mathcal{V}_{\theta}(z) + o_{\bP}\left( \frac{1}{nh} \right),
	\end{align*}
	and
	\begin{align*}
		\mathcal{B}_{\theta}(z)
		& \coloneqq \frac{1}{24 f_Z(z)}\left(2\mu_J^{(3)}(z)f_Z^{(1)}(z) + \mu_J^{(4)}(z)f_Z(z)\right)  \left( \frac{I_{4,K}^2 - I_{2,K}I_{6,K}}{I_{4,K} - I_{2,K}^2} \right), \\
		\mathcal{V}_{\theta}(z)
		& \coloneqq \frac{\sigma_J^2(z)}{f_Z(z)} \left( \frac{ I_{4,K}^2 I_{0,K^2} - 2I_{2,K}I_{4,K}I_{2,K^2} + I_{2,K}^2I_{4,K^2} }{ ( I_{4,K} - I_{2,K}^2 )^2 } \right),
	\end{align*}
	with denoting $\mu_J(z) \coloneqq \bE[ J_i \mid Z_i = z ]$, $\sigma_J^2(z) \coloneqq \Var[ J_i \mid Z_i = z]$, and
	\begin{align*}
		\mu_J^{(\nu)}(z) \coloneqq 
		\begin{cases}
			\displaystyle \sum_{g \in \mathcal{G}} \sum_{t=2}^{\mathcal{T}} w_{g,t}(z) \cdot \mu_{B}^{(\nu)}(z) & \text{under Assumption \ref{as:supp-weight}(i)}, \\
			\displaystyle \sum_{g \in \mathcal{G}} \sum_{t=2}^{\mathcal{T}} \Big( w_{g,t}(z) \cdot \mu_{B}^{(\nu)}(z) + {\DR}_{g,t}(z) \cdot \mu_{\xi}^{(\nu)}(z) \Big) & \text{under Assumption \ref{as:supp-weight}(ii)}.
		\end{cases}
	\end{align*}
\end{theorem}

For presentation purposes, the discussion in the rest of this subsection focuses on the analysis under Assumption \ref{as:supp-weight}(ii), but essentially the same arguments can apply to the case under Assumption \ref{as:supp-weight}(i).

\subsubsection{Standard error}

We can compute the standard error of the aggregated estimator $\hat{\theta}(z)$ in the same way as in Section \ref{subsec:se}.
Specifically, letting $\hat{\xi}_{i,g,t}$ be a ``consistent'' estimator for $\xi_{i,g,t}$, we compute
\begin{align*}
	\hat{J}_i
	\coloneqq \sum_{g \in \mathcal{G}} \sum_{t=2}^{\mathcal{T}} \Big( \hat{w}_{g,t}(z) \cdot \hat{B}_{i,g,t} + \hat{\DR}_{g,t}(z) \cdot \hat{\xi}_{i,g,t} \Big),
\end{align*}
where $\hat{B}_{i,g,t}$ is defined in Section \ref{subsec:se}.
Next, we estimate the conditional variance $\sigma_J^2(z)$ by the same LLR estimation as in Section \ref{subsec:se}, except that we replace $\hat{U}_{i,g,t}$ with $\hat{U}_i^J \coloneqq \hat{J}_i - \hat \mu_{\hat{J}}(Z_i)$, which leads to the conditional variance estimator $\hat{\sigma}_{\hat J}^2(z)$.
Then, we compute
\begin{align*}
	\hat{\mathcal{V}}_{\theta}(z)
	\coloneqq \frac{\hat{\sigma}_{\hat J}^2(z)}{\hat{f}_Z(z)} \left( \frac{ I_{4,K}^2 I_{0,K^2} - 2I_{2,K}I_{4,K}I_{2,K^2} + I_{2,K}^2I_{4,K^2} }{ ( I_{4,K} - I_{2,K}^2 )^2 } \right).
\end{align*}
Finally, the standard error of $\hat{\theta}(z)$ is given by
\begin{align*}
	\hat{{\SE}}_{\theta}(z)
	\coloneqq \left( \frac{1}{nh} \hat{\mathcal{V}}_{\theta}(z) \right)^{1/2}.
\end{align*}

\subsubsection{Critical value}

\paragraph{Analytical method.}

As shown in Theorem \ref{thm:supp-summary}, the aggregated estimator $\hat{\theta}(z)$ has the same form of asymptotic linearity as for the conditional DR estimator $\hat{\DR}_{g,t}(z)$ in Theorem \ref{thm:bias_variance}.
As a result, we can construct the uniform confidence band for the aggregated parameter $\theta(z)$ by the same analytical method for CATT as discussed in Section \ref{subsec:UCB}.
Specifically, the $(1 - \alpha)$ uniform confidence band for $\theta(z)$ is given by $\hat{\mathcal{C}}_{\theta} \coloneqq \{ \hat{\mathcal{C}}_{\theta}(z) \}$, where
\begin{align*}
	\hat{\mathcal{C}}_{\theta}(z)
	\coloneqq 
	\left[ 
	\hat{\theta}(z) - \hat c(1 - \alpha) \cdot \hat{{\SE}}_{\theta}(z),
	\qquad 
	\hat{\theta}(z) + \hat c(1 - \alpha) \cdot \hat{{\SE}}_{\theta}(z)
	\right],
\end{align*}
where $\hat c(1 - \alpha)$ is defined in \eqref{eq:analytical-critical}.
Note that $\hat c(1 - \alpha)$ must not depend on $z$ and another variable (if any) specific to the summary parameter of interest (e.g., elapsed treatment time $e$ for the event-study-type parameter).
This uniform confidence band can be justified in exactly the same way as in Theorem \ref{thm:approx_us_distribution}, in conjunction with the asymptotic linear representation in Theorem \ref{thm:supp-summary}.

\paragraph{Multiplier bootstrapping.}

Since the aggregated parameter $\theta(z)$ is not characterized as a solution of an optimization problem, we cannot use the same type of weighted bootstrapping as CATT, which randomizes the objective function of the LQR estimation with bootstrap weights.
However, as implied by Theorem \ref{thm:mb_linearize} in the case of CATT, this type of bootstrapping is asymptotically equivalent to the multiplier bootstrap inference that randomizes the influence function with bootstrap weights.
Based on this insight, we consider the multiplier bootstrap inference for the aggregated parameter $\theta(z)$ by randomizing the influence function for the aggregated estimator $\hat{\theta}(z)$.
To be specific, recall the asymptotic linear representation for $\hat \theta(z) - \theta(z)$ in Theorem \ref{thm:supp-summary}, and let
\begin{align*}
	\hat{\theta}^{\star,b}(z) 
	\coloneqq \hat{\theta}(z) + \frac{1}{ \hat f_Z(z) } \frac{1}{nh} \sum_{i=1}^n ( V_i^{\star,b} - 1 ) \Psi_{i,h} \left( \hat J_i - \hat \mu_{\hat J}(Z_i) \right) K\left( \frac{Z_i - z}{h} \right),
\end{align*}
where $\{ V_i^{\star,b} \}$ is the same set of bootstrap weights as in Section \ref{subsec:UCB}.
Then, the bootstrap counterpart of the studentized statistic is given by
\begin{align*}
	M_{\theta}^{\star,b} 
	\coloneqq \sup \frac{ | \hat{\theta}^{\star,b}(z) - \hat{\theta}(z) | }{ \hat{{\SE}}_{\theta}(z) },
\end{align*}
where we take the supremum over $z \in \mathcal{I}$ and another variable (if any) specific to the chosen summary parameter (e.g., elapsed treatment time $e$ for the event-study-type parameter).
Letting $\tilde{c}_{\theta}(1 - \alpha)$ denote the empirical $(1 - \alpha)$ quantile of $\{ M_{\theta}^{\star,b} \}_{b=1}^B$, the $(1 - \alpha)$ uniform confidence band for $\theta(z)$ is $\tilde{\mathcal{C}}_{\theta} \coloneqq \{ \tilde{\mathcal{C}}_{\theta}(z) \}$, where
\begin{align*}
	\tilde{\mathcal{C}}_{\theta}(z)
	\coloneqq \left[ \hat{\theta}(z) - \tilde{c}_{\theta}(1 - \alpha) \cdot \hat{{\SE}}_{\theta}(z), \quad \hat{\theta}(z) + \tilde{c}_{\theta}(1 - \alpha) \cdot \hat{{\SE}}_{\theta}(z) \right].
\end{align*}
We can prove the validity of this uniform confidence band in exactly the same way as in Theorem \ref{thm:mb_valid}.

\subsubsection{Bandwidth selection}

Similar to the bandwidth choice based on the insight of the simple RBC inference in the case of CATT, we propose to compute the aggregated estimator $\hat \theta(z)$ via the LQR estimation by using the IMSE-optimal bandwidth for the LLR estimator of $\theta(z)$.
To be specific, we write the aggregated estimator computed with the LLR estimation as
\begin{align*}
	\hat{\theta}^{\mathtt{LL}}(z) 
	\coloneqq \sum_{g \in \mathcal{G}} \sum_{t=2}^{\mathcal{T}} \hat{w}_{g,t}^{\mathtt{LL}}(z) \cdot \hat{\DR}_{g,t}^{\mathtt{LL}}(z), 
\end{align*}
where the estimators $\hat{w}_{g,t}^{\mathtt{LL}}(z)$ and $\hat{\DR}_{g,t}^{\mathtt{LL}}(z)$ are computed with the LLR estimation.
By the same arguments as in \eqref{eq:bias_var_nev_1}, we can show that the asymptotic bias and variance of $\hat{\theta}^{\mathtt{LL}}(z)$ are given by
\begin{align*}
	\mathrm{Bias}\left[ \hat \theta^{\mathtt{LL}}(z) \;\middle|\; \bm{Z} \right] 
	& \approx h^2 \frac{I_{2,K}}{2}\mu^{(2)}_J(z) , \quad \mathrm{Var}\left[ \hat \theta^{\mathtt{LL}}(z) \;\middle|\; \bm{Z} \right] \approx \frac{I_{0,K^2}}{nh} \frac{\sigma^2_J(z)}{f_Z(z)}.
\end{align*}
Thus, the IMSE over $z \in \mathcal{I}$ can be written as
\begin{align*}
	h^4 \frac{I_{2,K}^2}{4} \int_{z \in \mathcal{I}} \left[  \mu_J^{(2)}(z)  \right]^2 dz + \frac{I_{0,K^2}}{nh} \int_{z \in \mathcal{I}}\frac{ \sigma_J^2(z) }{ f_Z(z) }  dz,
\end{align*}
and the infeasible IMSE-optimal bandwidth for the LLR estimator is given by
\begin{align*}
	h_{\theta}^{\mathtt{LL}}
	\coloneqq \left( \frac{ I_{0,K^2}\int_{z \in \mathcal{I}} f_Z^{-1}(z)\sigma
		_J^2(z)dz }{I_{2,K}^2 \int_{z \in \mathcal{I}} [\mu_J^{(2)}(z)]^2 dz } \right)^{1/5} n^{-1/5}.
\end{align*}
In practice, we can easily obtain the feasible IMSE-optimal bandwidth, say $\hat{h}_{\theta}^{\mathtt{LL}}$, by estimating the unknown quantities in the above equation.
If the chosen summary parameter has another variable in addition to $z$, such as elapsed treatment time $e$ for the event-study-type parameter, we take the minimum of the bandwidths over its values to obtain the common bandwidth, as in the case of CATT in Section \ref{subsec:bandwidth}.

\subsection{Examples of summary parameters (continued)}

Using the uniform inference method for the general aggregated parameter $\theta(z)$ developed in the previous subsection, we can easily construct the uniform confidence bands for the summary parameters introduced in Section \ref{subsec:examples}.
For completeness, in this subsection, we present the definitions of the key variables $J_i$ and $\xi_{i,g,t}$ in Theorem \ref{thm:supp-summary} and Assumption \ref{as:supp-weight} for each summary parameter.

\subsubsection{The event-study-type conditional average treatment effect}

The event-study-type conditional average treatment effect in \eqref{eq:supp-es} can be estimated by
\begin{align*}
	\hat \theta_{\es}(e, z)
	\coloneqq \sum_{g \in \mathcal{G}} \sum_{t=2}^{\mathcal{T}} \hat{w}_{g,t}^{\es}(e, z) \cdot \hat{\DR}_{g,t}(z),
\end{align*}
where the weighting function estimator is given by
\begin{align*}
	\hat{w}_{g,t}^{\es}(e, z)
	\coloneqq \bm{1}\{ g < \bar g - e, t = g + e \} \cdot \frac{ \hat{\mu}_{G_g}(z) }{ \sum_{g'=2}^{\bar g - 1 - e} \hat{\mu}_{G_{g'}}(z) }
\end{align*}
with denoting the LQR estimator for $\mu_{G_g}(z)$ as $\hat{\mu}_{G_g}(z)$.

The event-study-type summary estimator exhibits the asymptotic linear representation in Theorem \ref{thm:supp-summary} when replacing the variable $J_i$ with
\begin{align*}
	J_i^{\es}
	\coloneqq \sum_{g \in \mathcal{G}} \sum_{t=2}^{\mathcal{T}} \Big( w_{g,t}^{\es}(e, z) \cdot B_{i,g,t} + {\DR}_{g,t}(z) \cdot \xi_{i,g,t}^{\es} \Big),
\end{align*}
where
\begin{align*}
	\xi_{i,g,t}^{\es}
	\coloneqq \bm{1}\{ g < \bar g - e, t = g + e \} \cdot \left( \frac{ G_{i,g} }{ \sum_{g' = 2}^{\bar g - 1 - e} \mu_{G_{g'}}(z) } - \frac{ \mu_{G_g}(z) }{ \left( \sum_{g' = 2}^{\bar g - 1 - e} \mu_{G_{g'}}(z) \right)^2 } \sum_{g' = 2}^{\bar g - 1 - e} G_{i,g'} \right).
\end{align*}
To see this result, it suffices to show that the weighting function estimator $\hat{w}_{g,t}^{\es}(z)$ exhibits the asymptotic linear representation in Assumption \ref{as:supp-weight}(ii) when replacing the variable $\xi_{i,g,t}$ with $\xi_{i,g,t}^{\es}$.
By simple algebra, observe that
\begin{align*}
	& \hat{w}_{g,t}^{\es}(e, z) - w_{g,t}^{\es}(e, z) \\
	& = \bm{1}\{ g < \bar g - e, t = g + e \} \cdot \left( \frac{ \hat \mu_{G_g}(z) }{ \sum_{g'=2}^{\bar g - 1 - e} \hat \mu_{G_{g'}}(z) } - \frac{ \mu_{G_g}(z) }{ \sum_{g'=2}^{\bar g - 1 - e} \mu_{G_{g'}}(z) } \right) \\
	& = \bm{1}\{ g < \bar g - e, t = g + e \} \cdot \left( \frac{ \hat{\mu}_{G_g}(z) }{ \sum_{g'=2}^{\bar g - 1 - e} \hat{\mu}_{G_{g'}}(z) } - \frac{ \hat{\mu}_{G_g}(z) }{ \sum_{g'=2}^{\bar g - 1 - e} \mu_{G_{g'}}(z) } + \frac{ \hat{\mu}_{G_g}(z) }{ \sum_{g'=2}^{\bar g - 1 - e} \mu_{G_{g'}}(z) } - \frac{ \mu_{G_g}(z) }{ \sum_{g'=2}^{\bar g - 1 - e} \mu_{G_{g'}}(z) } \right) \\
	& = \bm{1}\{ g < \bar g - e, t = g + e \} \cdot \left( \frac{ \hat{\mu}_{G_g}(z) - \mu_{G_g}(z) }{ \sum_{g'=2}^{\bar g - 1 - e} \mu_{G_{g'}}(z) } - \frac{ \hat{\mu}_{G_g}(z) \cdot \sum_{g'=2}^{\bar g - 1 - e}[ \hat{\mu}_{G_{g'}}(z) - \mu_{G_{g'}}(z) ] }{ \left( \sum_{g'=2}^{\bar g - 1 - e} \hat{\mu}_{G_{g'}}(z) \right) \cdot \left( \sum_{g'=2}^{\bar g - 1 - e} \mu_{G_{g'}}(z) \right)} \right).
\end{align*}
Here, in the same manner as the proof for the asymptotic linear representation for $\hat{\mu}_A(z)$ in Section \ref{subsubsec:proof-linear}, we can show that
\begin{align} \label{eq:Glinear}
	\begin{split}
		& \hat{\mu}_{G_g}(z) - \mu_{G_g}(z) \\
		& = \frac{1}{ f_Z(z) } \frac{1}{nh} \sum_{i=1}^n \Psi_{i,h} ( G_{i,g} - \mu_{G_g}(Z_i) ) K_{i,h} + \mathrm{Bias} \left[ \hat{\mu}_{G_g}(z) \; \middle| \; \bm{Z} \right] + o_{\bP}(h^4) + o_{\bP}\left( \sqrt{\frac{\log n}{nh}}\right) \\
		& = O_{\bP}\left( \sqrt{\frac{\log n}{nh}}\right).
	\end{split}
\end{align}
Combining these two equations leads to
\begin{align*}
	& \hat{w}_{g,t}^{\es}(e, z) - w_{g,t}^{\es}(e, z) \\
	& = \bm{1}\{ g < \bar g - e, t = g + e \} \cdot \Bigg\{ \frac{1}{ \sum_{g' = 2}^{\bar g - 1 - e} \mu_{G_{g'}}(z) } \left( \frac{1}{ f_Z(z) } \frac{1}{nh} \sum_{i=1}^n \Psi_{i,h} ( G_{i_g} - \mu_{G_g}(Z_i) ) K_{i,h} + \mathrm{Bias} \left[ \hat \mu_{G_g}(z) \; \middle| \; \bm{Z} \right] \right) \\
	& \quad - \frac{ \mu_{G_g}(z) }{ \left( \sum_{g' = 2}^{\bar g - 1 - e} \mu_{G_{g'}}(z) \right)^2 } \sum_{g' = 2}^{\bar g - 1 - e} \left( \frac{1}{ f_Z(z) } \frac{1}{nh} \sum_{i=1}^n \Psi_{i,h} ( G_{i,g'} - \mu_{G_{g'}}(Z_i) ) K_{i,h} + \mathrm{Bias} \left[ \hat \mu_{G_{g'}}(z) \; \middle| \; \bm{Z} \right] \right) \Bigg\} \\
	& \quad + o_{\bP}(h^4) + o_{\bP}\left( \sqrt{\frac{\log n}{nh}}\right).
\end{align*}
Rearranging this equation, we have 
\begin{align*}
	& \hat{w}_{g,t}^{\es}(e, z) - w_{g,t}^{\es}(e, z) \\
	& = \frac{1}{ f_Z(z) } \frac{1}{nh} \sum_{i=1}^n \Psi_{i,h} (\xi_{i,g,t}^{\es} - \mu_{\xi,\es}(Z_i)) K_{i,h} + \mathrm{Bias}[ \hat w_{g,t}^{\es}(e, z) \mid \bm{Z} ] + o_{\bP}(h^4) + o_{\bP}\left( \sqrt{\frac{\log n}{nh}} \right),
\end{align*}
implying that the event-study-type summary estimator exhibits the asymptotic linearity in Theorem \ref{thm:supp-summary}.

\subsubsection{The group-specific conditional average treatment effect}

Because the group-specific conditional average treatment effect in \eqref{eq:supp-sel} has the known weighting function $w_{g,t}^{\sel}(g', z)$, its estimator is simply given by
\begin{align*}
	\hat{\theta}_{\sel}(g', z)
	\coloneqq \sum_{g\in\mathcal{G}} \sum_{t=2}^{\mathcal{T}} w_{g,t}^{\sel}(g', z) \cdot \hat{\DR}_{g,t}(z).
\end{align*}
This summary estimator exhibits the asymptotic linear representation in Theorem \ref{thm:supp-summary} when replacing the variable $J_i$ with
\begin{align*}
	J_i^{\sel}
	\coloneqq \sum_{g \in \mathcal{G}} \sum_{t=2}^{\mathcal{T}} w_{g,t}^{\sel}(g', z) \cdot B_{i,g,t}.
\end{align*}
The proof is straightforward from the asymptotic linearity for the conditional DR estimator in Theorem \ref{thm:bias_variance} and the fact that $w_{g,t}^{\sel}(g', z)$ is the known and non-stochastic weighting function.

\subsubsection{The calendar-time conditional average treatment effect}

The calendar-time conditional average treatment effect in \eqref{eq:supp-calendar} can be estimated by
\begin{align*}
	\hat \theta_{\cc}(t', z)
	\coloneqq \sum_{g \in \mathcal{G}} \sum_{t=2}^{\mathcal{T}} \hat{w}_{g,t}^{\cc}(t', z) \cdot \hat{\DR}_{g,t}(z),
\end{align*}
where the weighting function estimator is given by
\begin{align*}
	\hat{w}_{g,t}^{\cc}(t', z)
	\coloneqq \bm{1}\{ g \le t', t = t' \} \cdot \frac{ \hat \mu_{G_g}(z) }{ \sum_{g'=2}^{t'} \hat \mu_{G_{g'}}(z) }.
\end{align*}
By the same arguments as for the case of the event-study-type summary parameter, we can show that the calendar-time summary estimator exhibits the asymptotic linearity in Theorem \ref{thm:supp-summary} with the following definitions of $J_i$ and $\xi_{i,g,t}$:
\begin{align*}
	J_i^{\cc}
	& \coloneqq \sum_{g \in \mathcal{G}} \sum_{t=2}^{\mathcal{T}} \Big( w_{g,t}^{\cc}(t', z) \cdot B_{i,g,t} + {\DR}_{g,t}(z) \cdot \xi_{i,g,t}^{\cc} \Big), \\
	\xi_{i,g,t}^{\cc}
	& \coloneqq \bm{1}\{ g \le t', t = t' \} \cdot \left( \frac{ G_{i,g} }{ \sum_{g' = 2}^{t'} \mu_{G_{g'}}(z) } - \frac{ \mu_{G_g}(z) }{ \left( \sum_{g' = 2}^{t'} \mu_{G_{g'}}(z) \right)^2 } \sum_{g' = 2}^{t'} G_{i,g'} \right).
\end{align*}

\subsubsection{The simple weighted conditional average treatment effect}

The simple weighted conditional average treatment effect in \eqref{eq:supp-overall} can be estimated by
\begin{align*}
	\hat \theta_{\W}^{\OO}(z)
	\coloneqq \sum_{g \in \mathcal{G}} \sum_{t=2}^{\mathcal{T}} \hat{w}_{g,t}^{\OW}(z) \cdot \hat{\DR}_{g,t}(z),
\end{align*}
where the weighting function estimator is given by
\begin{align*}
	\hat{w}_{g,t}^{\OW}(z)
	\coloneqq \frac{1}{\hat \kappa(z)} \cdot \bm{1}\{ g \le t < \bar g \} \cdot \frac{ \hat \mu_{G_g}(z) }{ \sum_{g'=2}^{\bar g - 1} \hat \mu_{G_{g'}}(z) }
\end{align*}
with
\begin{align*}
	\hat \kappa(z)
	\coloneqq \sum_{g \in \mathcal{G}} \sum_{t=2}^{\mathcal{T}} \bm{1}\{ g \le t < \bar g \} \cdot \frac{ \hat \mu_{G_g}(z) }{ \sum_{g'=2}^{\bar g - 1} \hat \mu_{G_{g'}}(z) }.
\end{align*}

This overall summary estimator exhibits the asymptotic linear representation in Theorem \ref{thm:supp-summary} when replacing the variable $J_i$ with
\begin{align*}
	J_i^{\OW}
	\coloneqq \sum_{g \in \mathcal{G}} \sum_{t=2}^{\mathcal{T}} \Big( w_{g,t}^{\OW}(z) \cdot B_{i,g,t} + {\DR}_{g,t}(z) \cdot \xi_{i,g,t}^{\OW} \Big),
\end{align*}
where
\begin{align*}
	\xi_{i,g,t}^{\OW}
	& \coloneqq \bm{1}\{ g \le t < \bar g \} \cdot \left( \frac{ G_{i,g} }{ \kappa(z) \cdot \sum_{g'=2}^{\bar g - 1} \mu_{G_{g'}}(z) } - \frac{ \mu_{G_g}(z) }{ \kappa(z) \cdot \left( \sum_{g'=2}^{\bar g - 1} \mu_{G_{g'}}(z) \right)^2 } \cdot \left( \sum_{g'=2}^{\bar g - 1} G_{i,g'} \right) \right. \\
	& \left. \quad - \frac{ \mu_{G_g}(z) }{ \left( \kappa(z) \cdot \sum_{g'=2}^{\bar g - 1} \mu_{G_{g'}}(z) \right)^2 } \cdot \sum_{g' \in \mathcal{G}} \sum_{t=2}^{\mathcal{T}} \bm{1}\{ g' \le t < \bar g \} \cdot \left( G_{i,g'} - \frac{ \mu_{G_{g'}}(z) }{ \left( \sum_{g''=2}^{\bar g - 1} \mu_{G_{g''}}(z) \right) } \sum_{g''=2}^{\bar g - 1} G_{i,g''} \right) \right).
\end{align*}
To see this result, it suffices to show that the weighting function estimator $\hat{w}_{g,t}^{\OW}(z)$ exhibits the asymptotic linear representation in Assumption \ref{as:supp-weight}(ii) when replacing the variable $\xi_{i,g,t}$ with $\xi_{i,g,t}^{\OW}$.
By simple algebra, we have
\begin{align*}
	& \hat{w}_{g,t}^{\OW}(z) - w_{g,t}^{\OW}(z) \\
	& = \bm{1}\{ g \le t < \bar g \} \cdot \left( \frac{1}{\hat \kappa(z)} \cdot \frac{ \hat \mu_{G_g}(z) }{ \sum_{g'=2}^{\bar g - 1} \hat \mu_{G_{g'}}(z) } - \frac{1}{\kappa(z)} \cdot \frac{ \mu_{G_g}(z) }{ \sum_{g'=2}^{\bar g - 1} \mu_{G_{g'}}(z) } \right) \\
	& = \bm{1}\{ g \le t < \bar g \} \\
	& \quad\quad  \cdot \left( \frac{1}{\hat \kappa(z)} \cdot \frac{ \hat \mu_{G_g}(z) }{ \sum_{g'=2}^{\bar g - 1} \hat \mu_{G_{g'}}(z) } - \frac{1}{\kappa(z)} \cdot \frac{ \hat \mu_{G_g}(z) }{ \sum_{g'=2}^{\bar g - 1} \mu_{G_{g'}}(z) } + \frac{1}{\kappa(z)} \cdot \frac{ \hat \mu_{G_g}(z) }{ \sum_{g'=2}^{\bar g - 1} \mu_{G_{g'}}(z) } - \frac{1}{\kappa(z)} \cdot \frac{ \mu_{G_g}(z) }{ \sum_{g'=2}^{\bar g - 1} \mu_{G_{g'}}(z) } \right) \\
	& = \bm{1}\{ g \le t < \bar g \} \cdot \left( \frac{ \hat \mu_{G_g}(z) - \mu_{G_g}(z) }{ \kappa(z) \cdot \sum_{g'=2}^{\bar g - 1} \mu_{G_{g'}}(z) } - \frac{ \hat \mu_{G_g}(z) \cdot \left( \hat \kappa(z) \cdot \sum_{g'=2}^{\bar g - 1} \hat \mu_{G_{g'}}(z) - \kappa(z) \cdot \sum_{g'=2}^{\bar g - 1} \mu_{G_{g'}}(z) \right) }{ \left( \hat \kappa(z) \cdot \sum_{g'=2}^{\bar g - 1} \hat \mu_{G_{g'}}(z) \right) \cdot \left( \kappa(z) \cdot \sum_{g'=2}^{\bar g - 1} \mu_{G_{g'}}(z) \right) } \right).
\end{align*}
Here, observe that
\begin{align*}
	\hat \kappa(z) \cdot \sum_{g'=2}^{\bar g - 1} \hat \mu_{G_{g'}}(z) - \kappa(z) \cdot \sum_{g'=2}^{\bar g - 1} \mu_{G_{g'}}(z)
	= \hat \kappa(z) \cdot \sum_{g'=2}^{\bar g - 1} \left( \hat \mu_{G_{g'}}(z) - \mu_{G_{g'}}(z) \right) + \left( \hat \kappa(z) - \kappa(z) \right) \cdot \sum_{g'=2}^{\bar g - 1} \mu_{G_{g'}}(z),
\end{align*}
and
\begin{align*}
	& \hat \kappa(z) - \kappa(z) \\
	& = \sum_{g \in \mathcal{G}} \sum_{t=2}^{\mathcal{T}} \bm{1}\{ g \le t < \bar g \} \cdot \left( \frac{ \hat \mu_{G_g}(z) }{ \sum_{g'=2}^{\bar g - 1} \hat \mu_{G_{g'}}(z) } - \frac{ \mu_{G_g}(z) }{ \sum_{g'=2}^{\bar g - 1} \mu_{G_{g'}}(z) } \right) \\
	& = \sum_{g \in \mathcal{G}} \sum_{t=2}^{\mathcal{T}} \bm{1}\{ g \le t < \bar g \} \cdot \left( \frac{ \hat \mu_{G_g}(z) - \mu_{G_g}(z) }{ \sum_{g'=2}^{\bar g - 1} \mu_{G_{g'}}(z) } - \frac{ \hat \mu_{G_g}(z) \cdot \sum_{g'=2}^{\bar g - 1} \left( \hat \mu_{G_{g'}}(z) - \mu_{G_{g'}}(z) \right) }{ \left( \sum_{g'=2}^{\bar g - 1} \hat \mu_{G_{g'}}(z) \right) \cdot \left( \sum_{g'=2}^{\bar g - 1} \mu_{G_{g'}}(z) \right) } \right).
\end{align*}
By combining these equations with \eqref{eq:Glinear}, we can see that
\begin{align*}
	& \hat{w}_{g,t}^{\OW}(z) - w_{g,t}^{\OW}(z) \\
	& = \bm{1}\{ g \le t < \bar g \} \cdot \Bigg\{ \frac{ 1 }{  \kappa(z) \cdot \sum_{g'=2}^{\bar g - 1} \mu_{G_{g'}}(z) } \left( \frac{1}{ f_Z(z) } \frac{1}{nh} \sum_{i=1}^n \Psi_{i,h} ( G_{i,g} - \mu_{G_g}(Z_i) ) K_{i,h} + \mathrm{Bias}[ \hat \mu_{G_g}(z) \mid \bm{Z} ] \right) \\
	& \quad - \frac{ \mu_{G_g}(z) }{ \left( \kappa(z) \cdot \sum_{g'=2}^{\bar g - 1} \mu_{G_{g'}}(z) \right)^2 } \cdot \Bigg[ \kappa(z) \cdot \sum_{g'=2}^{\bar g - 1} \left( \frac{1}{ f_Z(z) } \frac{1}{nh} \sum_{i=1}^n \Psi_{i,h} ( G_{i,g'} - \mu_{G_{g'}}(Z_i) ) K_{i,h} + \mathrm{Bias}[ \hat \mu_{G_{g'}}(z) \mid \bm{Z} ] \right) \\
	& \quad\quad + \sum_{g' \in \mathcal{G}} \sum_{t=2}^{\mathcal{T}} \bm{1}\{ g' \le t < \bar g \} \cdot \Bigg\{ \left( \frac{1}{ f_Z(z) } \frac{1}{nh} \sum_{i=1}^n \Psi_{i,h} ( G_{i,g'} - \mu_{G_{g'}}(Z_i) ) K_{i,h} + \mathrm{Bias}[ \hat \mu_{G_{g'}}(z) \mid \bm{Z} ] \right) \\
	& \quad\quad\quad - \frac{ \mu_{G_{g'}}(z) }{ \left( \sum_{g''=2}^{\bar g - 1} \mu_{G_{g''}}(z) \right) } \sum_{g''=2}^{\bar g - 1} \left( \frac{1}{ f_Z(z) } \frac{1}{nh} \sum_{i=1}^n \Psi_{i,h} ( G_{i,g''} - \mu_{G_{g''}}(Z_i) ) K_{i,h} + \mathrm{Bias}[ \hat \mu_{G_{g''}}(z) \mid \bm{Z} ] \right) \Bigg\} \Bigg] \Bigg\} \\
	& \quad + o_{\bP}(h^4) + o_{\bP}\left( \sqrt{\frac{\log n}{nh}} \right).
\end{align*}
Rearranging this equation, we can write 
\begin{align*}
	& \hat{w}_{g,t}^{\OW}(z) - w_{g,t}^{\OW}(z) \\
	& = \frac{1}{ f_Z(z) } \frac{1}{nh} \sum_{i=1}^n \Psi_{i,h} (\xi_{i,g,t}^{\OW} - \mu_{\xi,\OW}(Z_i)) K_{i,h} + \mathrm{Bias}[ \hat w_{g,t}^{\OW}(z) \mid \bm{Z} ] + o_{\bP}(h^4) + o_{\bP}\left( \sqrt{\frac{\log n}{nh}} \right),
\end{align*}
implying that the overall summary estimator exhibits the asymptotic linearity in Theorem \ref{thm:supp-summary}.

\subsection{Proof of Theorem \ref{thm:supp-summary}}

We focus on the proof under Assumption \ref{as:supp-weight}(ii) because the case under Assumption \ref{as:supp-weight}(i) is trivial. 
Observe that
\begin{align*}
	\hat \theta(z) - \theta(z)
	& = \sum_{g \in \mathcal{G}} \sum_{t=2}^{\mathcal{T}} \left( \hat{w}_{g,t}(z) \cdot \hat{\DR}_{g,t}(z) - w_{g,t}(z) \cdot {\DR}_{g,t}(z) \right) \\
	& = \sum_{g \in \mathcal{G}} \sum_{t=2}^{\mathcal{T}} \left( w_{g,t}(z) \cdot \left[ \hat{\DR}_{g,t}(z) - {\DR}_{g,t}(z) \right] + \hat{\DR}_{g,t}(z) \cdot \left[ \hat{w}_{g,t}(z) - w_{g,t}(z) \right] \right) \\
	& = \sum_{g \in \mathcal{G}} \sum_{t=2}^{\mathcal{T}} \left( w_{g,t}(z) \cdot \left[ \hat{\DR}_{g,t}(z) - {\DR}_{g,t}(z) \right] + {\DR}_{g,t}(z) \cdot \left[ \hat w_{g,t}(z) - w_{g,t}(z) \right] \right) \\
	& \quad + \sum_{g \in \mathcal{G}} \sum_{t=2}^{\mathcal{T}} \left[ \hat{\DR}_{g,t}(z) - {\DR}_{g,t}(z) \right] \cdot \left[ \hat{w}_{g,t}(z) - w_{g,t}(z) \right].
\end{align*}
Here, by Theorem \ref{thm:bias_variance} and Assumption \ref{as:supp-weight}, we have
\begin{align*}
	& \hat{\DR}_{g,t}(z) - {\DR}_{g,t}(z) \\
	& = \frac{1}{f_Z(z)} \frac{1}{nh} \sum_{i=1}^n \Psi_{i,h} (B_{i,g,t} - \mu_B(Z_i)) K_{i,h} + \mathrm{Bias}\left[ \hat{\DR}_{g,t}(z) \; \middle| \; \bm{Z} \right] + o_{\bP}\left( h^4 \right) + o_{\bP}\left( \sqrt{\frac{\log n}{nh}} \right) \\
	& = O_{\bP}\left( \sqrt{\frac{\log n}{nh}} \right),
\end{align*}
and
\begin{align*}
	& \hat{w}_{g,t}(z) - w_{g,t}(z) \\
	& = \frac{1}{ f_Z(z) } \frac{1}{nh} \sum_{i=1}^n \Psi_{i,h} (\xi_{i,g,t} - \mu_{\xi}(Z_i)) K_{i,h} + \mathrm{Bias} \left[ \hat{w}_{g,t}(z) \; \middle| \; \bm{Z} \right] + o_{\bP}(h^4) + o_{\bP}\left( \sqrt{\frac{\log n}{nh}} \right) \\
	& = O_{\bP}\left( \sqrt{\frac{\log n}{nh}} \right).
\end{align*}
Combining these three equations, we obtain the desired result.
\qed 

\section{Pre-Trends} \label{sec:pretrends}

\phantomsection\label{page:AE-7-2}\Copy{AE-7-2}{
	In this section, we discuss how to assess the credibility of the identifying assumptions by examining pre-trends.
	Because the conditional parallel trends assumption is particularly important for the DiD analysis, we focus on assessing its plausibility.
	For presentation purposes, we focus on the analysis based on the no-anticipation condition and the conditional parallel trends based on the not-yet-treated group in Assumptions \ref{as:anticipation-ny} and \ref{as:parallel-ny}, respectively.
	
	Throughout this section, we assume the same type of conditional parallel trends as in Assumption \ref{as:parallel-ny} in both the pre-treatment and post-treatment periods, that is, for all $g \in \mathcal{G}$, $t \ge 2$, $s \ge 2$ such that $g \le s < \bar g$, 
	\begin{align*}
		\bE[ Y_t(0) - Y_{t-1}(0) \mid X, G_g = 1 ]
		= \bE[ Y_t(0) - Y_{t-1}(0) \mid X, D_s = 0, G_g = 0 ]
		\quad 
		\text{a.s.}
	\end{align*}
	From a theoretical viewpoint, this assumption is stronger than Assumption \ref{as:parallel-ny} in that only the former requires parallel trends in the pre-treatment periods.
	However, in many practical situations, there should not be a significant difference between the two parallel trends assumptions.
	
	To consider our treatment parameters in the pre-treatment periods, take any $g \in \mathcal{G}$, $t \ge 2$ such that $t < g$, and $z \in \mathcal{I}$.
	Under Assumptions \ref{as:staggered}--\ref{as:parametric-ny} and the conditional parallel trends assumption above, the same arguments as in Lemma \ref{lem:DR} can show that ${\CATT}_{g,t}(Z) = {\DR}_{g,t}^{\mathrm{pre}}(Z)$ a.s., where
	\begin{align} \label{eq:DRestimand-pre}
		{\DR}_{g,t}^{\mathrm{pre}}(Z)
		\coloneqq \bE \left[ \left( \frac{G_g}{\bE[ G_g \mid Z]} - \frac{ R_{g,g}  }{\bE \left[ R_{g,g} \; \middle| \; Z \right]} \right) \left( Y_t - Y_{g-1} - m_{g,t}^{\mathrm{pre}}(X) \right) \; \middle| \; Z \right]
	\end{align}
	with $R_{g,g}$ corresponding to $R_{g,t}$ in \eqref{eq:Rdef} with $t = g$ and
	\begin{align*}
		m_{g,t}^{\mathrm{pre}}(X)
		\coloneqq \bE[ Y_t - Y_{g-1} \mid X, D_g = 0, G_g = 0 ]. 
	\end{align*}
	Notice the difference in the DR estimands between \eqref{eq:DRestimand-pre} and \eqref{eq:DRestimand}.
	In addition, as a direct consequence of the no-anticipation condition in Assumption \ref{as:anticipation-ny}, it holds that ${\CATT}_{g,t}(z) = \bE[ Y_t(0) - Y_t(0) \mid G_g = 1, Z = z ] = 0$ in the pre-treatment periods.
	As a result, we have ${\CATT}_{g,t}(z) = {\DR}_{g,t}^{\mathrm{pre}}(z) = 0$.
	However, this result can be meaningful only when $t \neq g - 1$, because this result should be trivial from the construction of the DR estimand when $t = g - 1$.
	Therefore, we obtain the following testable implication for the identifying assumptions:
	\begin{align} \label{eq:pre-trends1}
		{\CATT}_{g,t}(z) 
		= {\DR}_{g,t}^{\mathrm{pre}}(z) 
		= 0
		\quad \text{for all $g \in \mathcal{G}$, $t \ge 2$ such that $t \le g - 2$, and $z \in \mathcal{I}$}.
	\end{align}
	Moreover, it is easy to see from \eqref{eq:pre-trends1} that the event-study-type conditional average treatment effect should also vanish in the pre-treatment periods:
	\begin{align} \label{eq:pre-trends2}
		\theta_{\es}(e, z)
		= 0
		\quad \text{for all $e \le -2$ and $z \in \mathcal{I}$}.
	\end{align}
	Note that \eqref{eq:pre-trends2} excludes $e = -1$ as the baseline for the same reason that \eqref{eq:pre-trends1} excludes $t = g - 1$.
	
	Based on the testable implications in \eqref{eq:pre-trends1} and \eqref{eq:pre-trends2}, we can assess the plausibility of conditional parallel trends using our estimation and uniform inference methods.
	Specifically, we obtain the LQR estimates and the uniform confidence bands for $\CATT_{g,t}(z) = \DR_{g,t}^{\mathrm{pre}}(z)$ and $\theta_{\es}(e, z)$ in the pre-treatment periods in the same manner as in the main text.
	If there are many estimates far from zero so that the resulting uniform confidence bands exclude zero at many evaluation points, this is inconsistent with the testable implications of no pre-trends in \eqref{eq:pre-trends1} and \eqref{eq:pre-trends2} and suggests violations of conditional parallel trends.
	Conversely, the conditional parallel trends assumption is not refuted if the resulting uniform confidence bands include zero at most evaluation points.
	
	For the conditional DR estimands in \eqref{eq:DRestimand-pre} and \eqref{eq:DRestimand}, we consider ``long-differences'' by comparing $Y_t$ with $Y_{g-1}$ rather than ``short-differences'' that compare $Y_t$ with $Y_{t-1}$.
	We prefer long-differences because, as pointed out by \citet{roth2024interpreting}, if there are violations of parallel trends, event-study plots produced by recent DiD methods based on short-differences should lead to different interpretations than traditional dynamic two-way fixed effects event-study plots.
	As a solution to this problem, \citet{roth2024interpreting} recommends taking long-differences in both the pre-treatment and post-treatment periods.
	Our definition of the DR estimand in \eqref{eq:DRestimand-pre} follows this recommendation, though our goal is not to produce event-study plots, but rather to construct uniform confidence bands.
	
	Importantly, we view the procedure discussed here as complementary to, rather than a substitute for, the analysis of pre-trends for the group-time average treatment effect and the event-study-type estimand by \citet{callaway2021difference}.
	This is because their estimators achieve the parametric rate, making their approach more effective at detecting violations of conditional parallel trends than our approach based on nonparametric kernel smoothing.
	We recommend using our approach in conjunction with theirs to examine the possible heterogeneity in pre-trends with respect to groups, periods, and covariate values.
	
	Lastly, we should be cautious about using our uniform inference results for pre-trends as a ``pre-test'' to determine whether researchers can apply our estimation and uniform inference methods to the post-treatment periods.
	This type of pre-testing may distort both point estimates and uniform inference results in post-treatment periods, as pointed out by \citet{roth2022pre}.
	Instead, we recommend examining the estimation and uniform inference results in the pre-treatment and post-treatment periods simultaneously and checking pre-trends to assess the credibility of the identifying assumptions originally justified by the context of the application.
}

\section{Monte Carlo Experiments} \label{sec:simulation}

In this section, we evaluate the finite sample performance of our proposal through Monte Carlo experiments.

\subsection{Simulation design} \label{subsec:simulation_design}

We generate the group $G$ for each unit by the same simulation design as in \citet{callaway2021difference}.
To be specific, let $\mathcal{G} = \{ 0, 2, 3, \dots, \mathcal{T} \}$ denote the support of $G$, where $0$ indicates the never-treated group (for exposition purposes, we use the slightly different notation from the other sections).
Let $X = (Z, X_{\sub}^\top)^\top$ be the $k$-dimensional vector of pre-treatment covariates such that $X \sim \mathrm{Normal}(\bm{0}_k, \bm{I}_k)$, where $\bm{0}_k$ and $\bm{I}_k$ denote the zero vector and the identity matrix of size $k$, respectively.
The group $G$ is determined with
\begin{align*}
	\bP(G = g \mid X) 
	= \bP(G = g \mid Z) 
	= \frac{ \exp(Z \gamma_g) }{ \sum_{g \in \mathcal{G}} \exp(Z \gamma_g) },
\end{align*}
where $\gamma_g = 0.5 g / \mathcal{T}$ for $g \in \mathcal{G}$.
Here, we let the group choice probability depend only on $Z$ to make it easier to compute the true parameter values for the summary parameters.

In a slight departure from the simulation design in \citet{callaway2021difference}, we consider the following potential outcome equations that allow for three practically relevant issues:
(i) treatment effect heterogeneity with respect to observable covariates;
(ii) treated potential outcomes that may be nonlinear in observable covariates; and
(iii) heteroscedastic error terms.
Specifically, we first generate the untreated potential outcome by 
\begin{align*}
	Y_{i,t}(0) 
	= \delta_t + \eta_i + X_i^\top \beta_t(0) + u_{i,t}(0),
\end{align*}
where $\delta_t = t$, $\beta_t(0) = (t, t/2, \dots, t/k)^\top$, $\eta_i | (G_i, X_i) \sim \mathrm{Normal}(G_i, 1)$, and $u_{i,t}(0) | (G_i, X_i) \sim \mathrm{Normal}(0, \sigma_0^2(X_i))$.
We then generate the potential outcome for group $g$ by
\begin{align*}
	Y_{i,t}(g) 
	= Y_{i,t}(0) + M_{g,t}(Z_i) + \delta_e + \Big( u_{i,t}(g) - u_{i,t}(0) \Big),
\end{align*}
where $\delta_e = e + 1 \coloneqq t - g + 1$, $u_{i,t}(g) | (G_i, X_i, u_{i,t}(0)) \sim \mathrm{Normal}(0, \sigma_g^2(X_i))$, and $M_{g,t}$ is a function for which we consider two scenarios:
\begin{enumerate}[(i)]
	\item The nonlinear potential outcome equation where $M_{g,t}(Z_i) = (g / t) \sin(\pi Z_i)$.
	\item The linear potential outcome equation where $M_{g,t}(Z_i) = (g / t) Z_i$.
\end{enumerate}
For the conditional variances $\sigma_0^2(X_i)$ and $\sigma_g^2(X_i)$, we consider the following two cases:
\begin{enumerate}[(i)]
	\item The homoscedastic case where $\sigma_0^2(X_i) = \sigma_g^2(X_i) = 1$.
	\item The heteroscedastic case where $\sigma_0^2(X_i) = 0.5 + \Phi(Z_i)$ and $\sigma_g^2(X_i) = (g/\mathcal{T}) + \Phi(Z_i)$.
\end{enumerate}
Here, $\Phi$ denotes the standard normal cumulative distribution function.

Most of this section considers CATT as the target parameter, which is given by
\begin{align*}
	{\CATT}_{g,t}(z) 
	= \bE [ Y_t(g) - Y_t(0) \mid G_g = 1, Z = z  ]
	= M_{g,t}(z) + \delta_e.
\end{align*}
For the choice of $(g, t, z)$, we focus on $g = t = 2$ and a grid of 41 equally spaced points over $\mathcal{I} = [-1, 1]$.

To study the aggregated parameter $\theta(z)$, at the end of this section, we also examine the finite sample performance of the uniform inference for the event-study-type conditional average treatment effect at $e = 0$:
\begin{align*}
	\theta_{\es}(0, z)
	& \coloneqq \bE[ Y_{i,G}(G) - Y_{i,G}(0) \mid G \neq 0, Z = z ] \\
	& = \sum_{g=2}^{\mathcal{T}} \Pr(G = g \mid G \neq 0, Z = z) \cdot {\CATT}_{g,g}(z).
\end{align*}

\subsection{Methods}

We consider the uniform inference methods proposed in the main text.
Specifically, using the not-yet-treated group as the comparison group, we first obtain the estimates of the GPS and OR function by the parametric logit method and OLS, respectively.
We then estimate the nuisance parameters and the conditional DR estimand in the second and third stages, as well as the weighting function for the aggregated parameter, using the $p$-th order LPR estimation with the Gaussian kernel.

For the choice of the local polynomial order $p$ and the bandwidth $h$, we consider the four alternatives:
\begin{enumerate}[(i)]
	\item The LLR estimation based on the rule-of-thumb undersmoothing $\hat h_{\US} \coloneqq \hat h_{\mathrm{LL}} \cdot n^{1/5} \cdot n^{-2/7}$, which is theoretically justified in the preprint version of this article (\citealp{imai2023doubly}).
	\item The LLR estimation based on the IMSE-optimal bandwidth $\hat h_{\mathrm{LL}}$ for the LLR estimation.
	\item The LQR estimation based on the IMSE-optimal bandwidth $\hat h_{\mathrm{LL}}$ for the LLR estimation, which is theoretically justified in the main text.
	\item The LQR estimation based on the IMSE-optimal bandwidth $\hat h_{\mathrm{LQ}}$ for the LQR estimation.
\end{enumerate}
Our theoretical investigations suggest that the undersmoothing methods (i) and (iii) outperform the IMSE-optimal methods (ii) and (iv) in terms of uniform inference (i.e., correct uniform coverage probability).

We construct the uniform critical values by the analytical method and weighted/multiplier bootstrapping with \citeauthor{mammen1993bootstrap}'s (\citeyear{mammen1993bootstrap}) weights.

\subsection{Baseline results} \label{subsec:MC:baseline}

As a baseline, Tables \ref{table:MC1} and \ref{table:MC2} report the simulation results for CATT when $n \in \{ 500, 1000 \}$, $\mathcal{T} \in \{ 2, 4 \}$, the number of the pre-treatment covariates is set to $k = 1$ such that $X = Z$, the function in the treated potential outcome equation is given by $M_{g,t}(Z_i) = (g/t) \sin(\pi Z_i)$, the error terms $u_{it}(0)$ and $u_{it}(g)$ are homoscedastic, and the number of Monte Carlo replications is set to 1,000.
The tables show the pointwise bias, the pointwise root mean squared error (RMSE), and the empirical uniform coverage probability (UCP) and the pointwise length of the 95\% uniform confidence bands based on the analytical method and weighted bootstrapping.
For presentation purposes, the tables focus only on the simulation results for $z \in \{ -1, 0, 1 \}$.

The simulation results highlight the satisfactory performance of the proposed undersmoothing methods (i.e., the LLR-based inference using the rule-of-thumb undersmoothing and the LQR-based inference using the IMSE-optimal bandwidth for the LLR estimation), especially when coupled with weighted bootstrapping.
The bias and RMSE of both methods are sufficiently small regardless of the number of units $n$ and the length of the time series $\mathcal{T}$.
Interestingly, the empirical uniform coverage probability for the analytical method is somewhat under-coverage, but that for weighted bootstrapping is satisfactorily close to the desired level, suggesting that weighted bootstrapping may have some asymptotic refinements that the analytical method does not.
Note that the weighted bootstrap inference produces a wider confidence band on average than the analytical method, but the difference is modest and the length for weighted bootstrapping should also be acceptable.

For the comparison between the LLR-based inference using the rule-of-thumb undersmoothing and the LQR-based inference using the IMSE-optimal bandwidth for the LLR estimation, it seems difficult to rank them from the simulation results here.
In some cases, the LLR-based inference seems to be preferable, but in other cases, the LQR-based inference seems to be preferable, and the difference is small.
Given this, it should be recommended in practice to implement both methods and see the difference.
Specifically, if they produce substantially different inference results, this may indicate that there is some violation of assumptions and/or some other problems in the data (e.g., outliers).

The simulation results also show that the IMSE-optimal methods, that is, the LLR-based (resp. LQR-based) inference using the IMSE-optimal bandwidth for the LLR (resp. LQR) estimation, does not work well in terms of uniform inference.
This is because the IMSE-optimal methods lead to non-negligible bias, which can be more than twice as large as the undersmoothing methods.
As a result, the empirical uniform coverage probability for the IMSE-optimal methods can be far from the nominal level, despite that they achieve smaller RMSE than the undersmoothing methods.

\subsection{Additional results: the number of pre-treatment covariates}

To examine whether the simulation results are sensitive to the number of the pre-treatment covariates $k$, we change from $k = 1$ to $k = 5$ in this subsection.
The other simulation settings are the same as in Section \ref{subsec:MC:baseline}, but to save space we only report the simulation results for $\mathcal{T} = 2$.

Table \ref{table:MC3} shows the simulation results, which are almost the same as the baseline results in Table \ref{table:MC1}.
Thus, the same comments as in Section \ref{subsec:MC:baseline} apply to this simulation setting as well.

\subsection{Additional results: the heteroscedastic error terms} \label{subsec:heteroscedastic}

To examine whether the simulation results are sensitive to the distributions of the error terms $u_{it}(0)$ and $u_{it}(g)$, we consider the heteroscedastic case in this subsection.
The other simulation settings are the same as in Section \ref{subsec:MC:baseline}, but to save space we only report the simulation results for $\mathcal{T} = 2$.

Table \ref{table:MC4} shows almost the same simulation results as the baseline results in Table \ref{table:MC1}.

\subsection{Additional results: the linear potential outcome equation} \label{subsec:MC:linear}

To examine whether the simulation results are sensitive to the functional form of $M_{g,t}(Z_i)$ in the potential outcome equation, we consider the linear potential outcome equation such that $M_{g,t}(Z_i) = (g/t) Z_i$ in this section.
The other simulation settings are the same as in Section \ref{subsec:MC:baseline}, but to save space we only report the simulation results for $\mathcal{T} = 2$.

Table \ref{table:MC5} shows the simulation results.
Many of the same comments made in Section \ref{subsec:MC:baseline} apply to this simulation setting as well, but it should be noted that the IMSE-optimal methods perform as well as the undersmoothing methods in this setting.
This is because in this setting the CATT function is linear in $Z$ and the LLR and LQR estimators have small bias regardless of the choice of bandwidth.
However, in practical situations, researchers do not know the functional form of CATT a priori, and it should be recommended to use the undersmoothing methods rather than the IMSE-optimal methods.

\subsection{Additional results: the aggregated parameter}

In this subsection, we examine the finite sample performance of the uniform inference method for the event-study-type conditional average treatment effect at $e = 0$, that is, $\theta_{\es}(0, z)$.
The simulation setting is the same as in Sections \ref{subsec:MC:baseline} and \ref{subsec:MC:linear}.
Since $\theta_{\es}(0, z)$ reduces to $\CATT_{2,2}(z)$ when $\mathcal{T} = 2$, we here focus on the simulation results only for $\mathcal{T} = 4$.
In addition, we set the number of Monte Carlo replications to 500 to reduce computation time.

Tables \ref{table:MC6} and \ref{table:MC7} show the simulation results for the nonlinear and linear potential outcome equations, respectively.
Similar to the simulation results for CATT, the undersmoothing methods outperform the IMSE-optimal methods.
It is also noteworthy that the estimation of the event-study-type summary parameter is substantially more precise than the estimation of CATTs, which can be expected from the fact that the summary parameter is obtained by aggregating CATTs.

\subsection{Recommendations based on the simulation results}

In conclusion, the simulation results for CATT and the event-study-type summary parameter suggest that the LQR-based inference using the IMSE-optimal bandwidth for the LLR estimation (i.e., the simple RBC approach) works satisfactorily in many situations, especially when coupled with weighted/multiplier bootstrapping.
The LLR-based inference using the rule-of-thumb undersmoothing also performs adequately.
The standard IMSE-optimal methods are not recommended for uniform inference.

\section{Additional Details on Empirical Illustration} \label{sec:empirical2}

\subsection{Data}

The data we use in Section \ref{sec:application} is {\ttfamily min\_wage\_CS.rds}, which was taken from the Pedro H. C. Sant'Anna's GitHub repository: \url{https://github.com/pedrohcgs/CS_RR/tree/main/data}.
The details of the data construction can be found in Section 5 and Appendix SB of \citet{callaway2021difference}, but there are several things to note here.
First, the information on teen employment at the county level comes from the Quarterly Workforce Indicators (QWI).
Second, the pre-treatment characteristics, including the fraction of the population below the poverty line in 1997 (i.e., the poverty rate), are taken from the 2000 County Data Book.
Lastly, the sample consists of counties from 29 states, and the other states are excluded because (i) their minimum wages are higher than the federal minimum wage in 2000, (ii) their information on teen employment is unavailable, or (iii) they are located in the Northern Census region.

\subsection{Summary statistics}

To examine the distribution of the poverty rate in terms of which we assess the treatment effect heterogeneity, Figure \ref{fig:density} shows the kernel density of the poverty rate based on the Epanechnikov kernel and the MSE-optimal bandwidth computed with the {\ttfamily nprobust} package in {\ttfamily R} (\citealp{calonico2019nprobust}).
In the figure, the dashed line indicates the median of the poverty rate (14\%). 
The density is right skewed, indicating a higher concentration of counties with poverty rates below the median. 

To check whether or not the distributions of the pre-treatment variables are balanced between the treated groups and the not-yet-treated groups, Table \ref{table:descriptive} shows their means and their standard deviations (SD).
It is clear that the distributions of the pre-treatment variables are substantially different between the treated groups and the not-yet-treated groups.
It seems that counties located in the Midwest are more likely to increase minimum wages than counties in other regions.
Moreover, minimum wage increases occur earlier in counties with lower poverty rates, larger populations, higher proportions of high school graduates and white residents, and higher median incomes.
Accordingly, the minimum wage seems to depend on the pre-treatment variables and we should include the pre-treatment variables in the DiD analysis.

\subsection{Pre-trends}

\phantomsection\label{page:AE-7-3}\Copy{AE-7-3}{
	In light of the discussion about pre-trends in Appendix \ref{sec:pretrends}, Figure \ref{fig:boot-pre} shows the LQR estimates and the uniform inference results for $\CATT_{g,t}(z)$ and the event-study-type conditional average treatment effect $\theta_{\es}(e, z)$ in the pre-treatment periods such that $t \le g - 2$ and $e \le -2$.
	For presentation purposes, panel (a) presents the results of $\CATT_{g,t}(z)$ for a subset of the pre-treatment periods, but the results for $\theta_{\es}(e, z)$ in panel (b) are obtained by using data from all available pre-treatment periods.
	Note that the figure excludes the results for the base periods $t = g - 1$ and $e = -1$.
	The options for uniform inference are the same as in Figure \ref{fig:boot} in the main text, except that the results here are based on the bandwidth obtained by taking the minimum of the integrated (over $z \in \mathcal{I}$) MSE-optimal bandwidths for the LLR estimation across all $(g, t)$ or $e$, including both the pre-treatment and post-treatment periods.
	Nevertheless, the differences in the bandwidths between Figures \ref{fig:boot} and \ref{fig:boot-pre} are sufficiently small and less than 0.01.
	We also set the vertical axis scale of Figure \ref{fig:boot-pre} slightly larger than that of Figure \ref{fig:boot} to display the relatively wide, uniform confidence band of $\CATT_{g,t}(z)$ for $g = 2006$ and $t = 2002$.
	
	Overall, the estimates in the pre-treatment periods are close to zero, with a small amount of treatment effect heterogeneity, and the corresponding uniform confidence bands include zero at most evaluation points.
	This result is consistent with the testable implications of no pre-trends in \eqref{eq:pre-trends1} and \eqref{eq:pre-trends2}, which strengthens the plausibility of the conditional parallel trends assumption in this empirical context.
	However, for the same dataset, the empirical analysis of \citet{callaway2021difference} finds statistical evidence against the conditional parallel trends assumption.
	Therefore, we should be careful about the possibility that our result of no pre-trends stems from relatively imprecise kernel smoothing estimates.
}


\begin{table}[p]
	\caption{Baseline Monte Carlo simulation results for CATT: $\mathcal{T} = 2$} \label{table:MC1}
	\centering
	\begin{tabular}[t]{rrrrrrrrrrr}
		\toprule
		\multicolumn{7}{c}{ } & \multicolumn{2}{c}{Analytical UCB} & \multicolumn{2}{c}{Bootstrap UCB} \\
		\cmidrule(l{3pt}r{3pt}){8-9} \cmidrule(l{3pt}r{3pt}){10-11}
		$\mathcal{T}$ & $n$ & $p$ & Bandwidth & $z$ & Bias & RMSE & UCP & Length & UCP & Length\\
		\midrule
		\rowcolor{Gray}
		2 & 500 & 1 & US1 & -1 & -0.049 & 0.377 & 0.930 & 2.176 & 0.950 & 2.259\\
		\rowcolor{Gray}
		2 & 500 & 1 & US1 & 0 & 0.025 & 0.286 & 0.930 & 1.655 & 0.950 & 1.718\\
		\rowcolor{Gray}
		2 & 500 & 1 & US1 & 1 & -0.029 & 0.374 & 0.930 & 2.098 & 0.950 & 2.178\\
		2 & 500 & 1 & IMSE1 & -1 & -0.071 & 0.305 & 0.703 & 1.535 & 0.791 & 1.668\\
		2 & 500 & 1 & IMSE1 & 0 & 0.044 & 0.229 & 0.703 & 1.174 & 0.791 & 1.276\\
		2 & 500 & 1 & IMSE1 & 1 & -0.027 & 0.297 & 0.703 & 1.485 & 0.791 & 1.614\\
		\rowcolor{Gray}
		2 & 500 & 2 & IMSE1 & -1 & -0.091 & 0.384 & 0.909 & 1.997 & 0.955 & 2.208\\
		\rowcolor{Gray}
		2 & 500 & 2 & IMSE1 & 0 & 0.022 & 0.292 & 0.909 & 1.514 & 0.955 & 1.674\\
		\rowcolor{Gray}
		2 & 500 & 2 & IMSE1 & 1 & 0.025 & 0.364 & 0.909 & 1.923 & 0.955 & 2.125\\
		2 & 500 & 2 & IMSE2 & -1 & -0.193 & 0.374 & 0.812 & 1.562 & 0.912 & 1.784\\
		2 & 500 & 2 & IMSE2 & 0 & 0.042 & 0.254 & 0.812 & 1.190 & 0.912 & 1.359\\
		2 & 500 & 2 & IMSE2 & 1 & 0.087 & 0.323 & 0.812 & 1.509 & 0.912 & 1.723\\
		\midrule
		\rowcolor{Gray}
		2 & 1000 & 1 & US1 & -1 & -0.013 & 0.295 & 0.945 & 1.770 & 0.954 & 1.829\\
		\rowcolor{Gray}
		2 & 1000 & 1 & US1 & 0 & 0.015 & 0.209 & 0.945 & 1.342 & 0.954 & 1.386\\
		\rowcolor{Gray}
		2 & 1000 & 1 & US1 & 1 & -0.014 & 0.284 & 0.945 & 1.717 & 0.954 & 1.774\\
		2 & 1000 & 1 & IMSE1 & -1 & -0.035 & 0.229 & 0.756 & 1.212 & 0.827 & 1.297\\
		2 & 1000 & 1 & IMSE1 & 0 & 0.031 & 0.165 & 0.756 & 0.922 & 0.827 & 0.987\\
		2 & 1000 & 1 & IMSE1 & 1 & -0.021 & 0.219 & 0.756 & 1.177 & 0.827 & 1.259\\
		\rowcolor{Gray}
		2 & 1000 & 2 & IMSE1 & -1 & -0.034 & 0.285 & 0.928 & 1.571 & 0.963 & 1.732\\
		\rowcolor{Gray}
		2 & 1000 & 2 & IMSE1 & 0 & 0.011 & 0.207 & 0.928 & 1.187 & 0.963 & 1.309\\
		\rowcolor{Gray}
		2 & 1000 & 2 & IMSE1 & 1 & 0.018 & 0.268 & 0.928 & 1.522 & 0.963 & 1.678\\
		2 & 1000 & 2 & IMSE2 & -1 & -0.142 & 0.276 & 0.817 & 1.159 & 0.925 & 1.320\\
		2 & 1000 & 2 & IMSE2 & 0 & 0.029 & 0.176 & 0.817 & 0.879 & 0.925 & 1.001\\
		2 & 1000 & 2 & IMSE2 & 1 & 0.078 & 0.233 & 0.817 & 1.125 & 0.925 & 1.281\\
		\bottomrule
	\end{tabular}
	
	\begin{flushleft}
		Note: 
		The gray rows are the simulation results for the LLR and LQR estimation based on the undersmoothing methods.
		
		\bigskip  
		
		Abbreviations: 
		UCB = the uniform confidence band;
		RMSE = the root mean squared error;
		UCP = the (empirical) uniform coverage probability;
		US1 = the rule-of-thumb undersmoothing $\hat h_{\US}$ for the LLR estimation;
		IMSE1 = the IMSE-optimal bandwidth $\hat h_{\mathrm{LL}}$ for the LLR estimation;
		IMSE2 = the IMSE-optimal bandwidth $\hat h_{\mathrm{LQ}}$ for the LQR estimation.
	\end{flushleft}
\end{table}

\begin{table}[p]
	\caption{Baseline Monte Carlo simulation results for CATT: $\mathcal{T} = 4$} \label{table:MC2}
	\centering
	\begin{tabular}[t]{rrrrrrrrrrr}
		\toprule
		\multicolumn{7}{c}{ } & \multicolumn{2}{c}{Analytical UCB} & \multicolumn{2}{c}{Bootstrap UCB} \\
		\cmidrule(l{3pt}r{3pt}){8-9} \cmidrule(l{3pt}r{3pt}){10-11}
		$\mathcal{T}$ & $n$ & $p$ & Bandwidth & $z$ & Bias & RMSE & UCP & Length & UCP & Length\\
		\midrule
		\rowcolor{Gray}
		4 & 500 & 1 & US1 & -1 & -0.015 & 0.415 & 0.923 & 2.347 & 0.974 & 2.745\\
		\rowcolor{Gray}
		4 & 500 & 1 & US1 & 0 & -0.009 & 0.310 & 0.923 & 1.864 & 0.974 & 2.182\\
		\rowcolor{Gray}
		4 & 500 & 1 & US1 & 1 & 0.019 & 0.399 & 0.923 & 2.372 & 0.974 & 2.775\\
		4 & 500 & 1 & IMSE1 & -1 & -0.016 & 0.334 & 0.710 & 1.639 & 0.859 & 1.988\\
		4 & 500 & 1 & IMSE1 & 0 & -0.011 & 0.240 & 0.710 & 1.316 & 0.859 & 1.596\\
		4 & 500 & 1 & IMSE1 & 1 & 0.023 & 0.330 & 0.710 & 1.658 & 0.859 & 2.010\\
		\rowcolor{Gray}
		4 & 500 & 2 & IMSE1 & -1 & -0.073 & 0.414 & 0.909 & 2.148 & 0.972 & 2.666\\
		\rowcolor{Gray}
		4 & 500 & 2 & IMSE1 & 0 & -0.009 & 0.312 & 0.909 & 1.707 & 0.972 & 2.119\\
		\rowcolor{Gray}
		4 & 500 & 2 & IMSE1 & 1 & 0.080 & 0.405 & 0.909 & 2.168 & 0.972 & 2.691\\
		4 & 500 & 2 & IMSE2 & -1 & -0.140 & 0.382 & 0.831 & 1.709 & 0.951 & 2.167\\
		4 & 500 & 2 & IMSE2 & 0 & -0.011 & 0.273 & 0.831 & 1.372 & 0.951 & 1.740\\
		4 & 500 & 2 & IMSE2 & 1 & 0.151 & 0.387 & 0.831 & 1.728 & 0.951 & 2.191\\
		\midrule
		\rowcolor{Gray}
		4 & 1000 & 1 & US1 & -1 & -0.009 & 0.299 & 0.939 & 1.900 & 0.985 & 2.212\\
		\rowcolor{Gray}
		4 & 1000 & 1 & US1 & 0 & -0.005 & 0.246 & 0.939 & 1.506 & 0.985 & 1.753\\
		\rowcolor{Gray}
		4 & 1000 & 1 & US1 & 1 & -0.017 & 0.320 & 0.939 & 1.947 & 0.985 & 2.267\\
		4 & 1000 & 1 & IMSE1 & -1 & -0.008 & 0.234 & 0.755 & 1.293 & 0.856 & 1.540\\
		4 & 1000 & 1 & IMSE1 & 0 & -0.002 & 0.191 & 0.755 & 1.030 & 0.856 & 1.227\\
		4 & 1000 & 1 & IMSE1 & 1 & -0.006 & 0.245 & 0.755 & 1.323 & 0.856 & 1.575\\
		\rowcolor{Gray}
		4 & 1000 & 2 & IMSE1 & -1 & -0.046 & 0.291 & 0.928 & 1.683 & 0.986 & 2.073\\
		\rowcolor{Gray}
		4 & 1000 & 2 & IMSE1 & 0 & -0.002 & 0.243 & 0.928 & 1.333 & 0.986 & 1.643\\
		\rowcolor{Gray}
		4 & 1000 & 2 & IMSE1 & 1 & 0.019 & 0.305 & 0.928 & 1.725 & 0.986 & 2.125\\
		4 & 1000 & 2 & IMSE2 & -1 & -0.120 & 0.269 & 0.810 & 1.270 & 0.949 & 1.600\\
		4 & 1000 & 2 & IMSE2 & 0 & 0.001 & 0.206 & 0.810 & 1.012 & 0.949 & 1.275\\
		4 & 1000 & 2 & IMSE2 & 1 & 0.106 & 0.276 & 0.810 & 1.301 & 0.949 & 1.638\\
		\bottomrule
	\end{tabular}
	
	\begin{flushleft}
		Note: 
		The gray rows are the simulation results for the LLR and LQR estimation based on the undersmoothing methods.
		
		\bigskip  
		
		Abbreviations: 
		UCB = the uniform confidence band;
		RMSE = the root mean squared error;
		UCP = the (empirical) uniform coverage probability;
		US1 = the rule-of-thumb undersmoothing $\hat h_{\US}$ for the LLR estimation;
		IMSE1 = the IMSE-optimal bandwidth $\hat h_{\mathrm{LL}}$ for the LLR estimation;
		IMSE2 = the IMSE-optimal bandwidth $\hat h_{\mathrm{LQ}}$ for the LQR estimation.
	\end{flushleft}
\end{table}

\begin{table}[p]
	\caption{Additional Monte Carlo simulation results for CATT: the number of covariates ($k = 5$)} \label{table:MC3}
	\centering
	\begin{tabular}[t]{rrrrrrrrrrr}
		\toprule
		\multicolumn{7}{c}{ } & \multicolumn{2}{c}{Analytical UCB} & \multicolumn{2}{c}{Bootstrap UCB} \\
		\cmidrule(l{3pt}r{3pt}){8-9} \cmidrule(l{3pt}r{3pt}){10-11}
		$\mathcal{T}$ & $n$ & $p$ & Bandwidth & $z$ & Bias & RMSE & UCP & Length & UCP & Length\\
		\midrule
		\rowcolor{Gray}
		2 & 500 & 1 & US1 & -1 & -0.011 & 0.388 & 0.932 & 2.189 & 0.953 & 2.273\\
		\rowcolor{Gray}
		2 & 500 & 1 & US1 & 0 & 0.006 & 0.267 & 0.932 & 1.670 & 0.953 & 1.733\\
		\rowcolor{Gray}
		2 & 500 & 1 & US1 & 1 & -0.009 & 0.377 & 0.932 & 2.101 & 0.953 & 2.180\\
		2 & 500 & 1 & IMSE1 & -1 & -0.039 & 0.311 & 0.719 & 1.543 & 0.806 & 1.678\\
		2 & 500 & 1 & IMSE1 & 0 & 0.030 & 0.210 & 0.719 & 1.185 & 0.806 & 1.288\\
		2 & 500 & 1 & IMSE1 & 1 & -0.023 & 0.305 & 0.719 & 1.486 & 0.806 & 1.616\\
		\rowcolor{Gray}
		2 & 500 & 2 & IMSE1 & -1 & -0.054 & 0.389 & 0.921 & 2.007 & 0.950 & 2.221\\
		\rowcolor{Gray}
		2 & 500 & 2 & IMSE1 & 0 & 0.003 & 0.270 & 0.921 & 1.528 & 0.950 & 1.691\\
		\rowcolor{Gray}
		2 & 500 & 2 & IMSE1 & 1 & 0.042 & 0.373 & 0.921 & 1.924 & 0.950 & 2.129\\
		2 & 500 & 2 & IMSE2 & -1 & -0.160 & 0.367 & 0.817 & 1.570 & 0.906 & 1.795\\
		2 & 500 & 2 & IMSE2 & 0 & 0.025 & 0.234 & 0.817 & 1.202 & 0.906 & 1.374\\
		2 & 500 & 2 & IMSE2 & 1 & 0.092 & 0.330 & 0.817 & 1.511 & 0.906 & 1.726\\
		\midrule 
		\rowcolor{Gray}
		2 & 1000 & 1 & US1 & -1 & -0.009 & 0.277 & 0.949 & 1.768 & 0.959 & 1.827\\
		\rowcolor{Gray}
		2 & 1000 & 1 & US1 & 0 & 0.012 & 0.220 & 0.949 & 1.345 & 0.959 & 1.390\\
		\rowcolor{Gray}
		2 & 1000 & 1 & US1 & 1 & 0.006 & 0.290 & 0.949 & 1.722 & 0.959 & 1.780\\
		2 & 1000 & 1 & IMSE1 & -1 & -0.037 & 0.217 & 0.750 & 1.212 & 0.830 & 1.299\\
		2 & 1000 & 1 & IMSE1 & 0 & 0.032 & 0.171 & 0.750 & 0.924 & 0.830 & 0.991\\
		2 & 1000 & 1 & IMSE1 & 1 & -0.009 & 0.223 & 0.750 & 1.180 & 0.830 & 1.265\\
		\rowcolor{Gray}
		2 & 1000 & 2 & IMSE1 & -1 & -0.036 & 0.273 & 0.929 & 1.570 & 0.966 & 1.727\\
		\rowcolor{Gray}
		2 & 1000 & 2 & IMSE1 & 0 & 0.009 & 0.216 & 0.929 & 1.190 & 0.966 & 1.309\\
		\rowcolor{Gray}
		2 & 1000 & 2 & IMSE1 & 1 & 0.037 & 0.278 & 0.929 & 1.526 & 0.966 & 1.679\\
		2 & 1000 & 2 & IMSE2 & -1 & -0.143 & 0.268 & 0.830 & 1.163 & 0.922 & 1.324\\
		2 & 1000 & 2 & IMSE2 & 0 & 0.030 & 0.180 & 0.830 & 0.884 & 0.922 & 1.007\\
		2 & 1000 & 2 & IMSE2 & 1 & 0.091 & 0.242 & 0.830 & 1.131 & 0.922 & 1.288\\
		\bottomrule
	\end{tabular}
	
	\begin{flushleft}
		Note: 
		The gray rows are the simulation results for the LLR and LQR estimation based on the undersmoothing methods.
		
		\bigskip  
		
		Abbreviations: 
		UCB = the uniform confidence band;
		RMSE = the root mean squared error;
		UCP = the (empirical) uniform coverage probability;
		US1 = the rule-of-thumb undersmoothing $\hat h_{\US}$ for the LLR estimation;
		IMSE1 = the IMSE-optimal bandwidth $\hat h_{\mathrm{LL}}$ for the LLR estimation;
		IMSE2 = the IMSE-optimal bandwidth $\hat h_{\mathrm{LQ}}$ for the LQR estimation.
	\end{flushleft}
\end{table}

\begin{table}[p]
	\caption{Additional Monte Carlo simulation results for CATT: heteroscedasticity} \label{table:MC4}
	\centering
	\begin{tabular}[t]{rrrrrrrrrrr}
		\toprule
		\multicolumn{7}{c}{ } & \multicolumn{2}{c}{Analytical UCB} & \multicolumn{2}{c}{Bootstrap UCB} \\
		\cmidrule(l{3pt}r{3pt}){8-9} \cmidrule(l{3pt}r{3pt}){10-11}
		$\mathcal{T}$ & $n$ & $p$ & Bandwidth & $z$ & Bias & RMSE & UCP & Length & UCP & Length\\
		\midrule
		\rowcolor{Gray}
		2 & 500 & 1 & US1 & -1 & -0.048 & 0.312 & 0.919 & 1.807 & 0.937 & 1.84\\
		\rowcolor{Gray}
		2 & 500 & 1 & US1 & 0 & 0.028 & 0.319 & 0.919 & 1.827 & 0.937 & 1.86\\
		\rowcolor{Gray}
		2 & 500 & 1 & US1 & 1 & -0.031 & 0.529 & 0.919 & 2.868 & 0.937 & 2.92\\
		2 & 500 & 1 & IMSE1 & -1 & -0.072 & 0.257 & 0.675 & 1.275 & 0.769 & 1.38\\
		2 & 500 & 1 & IMSE1 & 0 & 0.046 & 0.256 & 0.675 & 1.296 & 0.769 & 1.40\\
		2 & 500 & 1 & IMSE1 & 1 & -0.023 & 0.423 & 0.675 & 2.029 & 0.769 & 2.19\\
		\rowcolor{Gray}
		2 & 500 & 2 & IMSE1 & -1 & -0.104 & 0.327 & 0.899 & 1.657 & 0.939 & 1.79\\
		\rowcolor{Gray}
		2 & 500 & 2 & IMSE1 & 0 & 0.025 & 0.326 & 0.899 & 1.675 & 0.939 & 1.81\\
		\rowcolor{Gray}
		2 & 500 & 2 & IMSE1 & 1 & 0.032 & 0.511 & 0.899 & 2.636 & 0.939 & 2.85\\
		2 & 500 & 2 & IMSE2 & -1 & -0.203 & 0.342 & 0.792 & 1.329 & 0.887 & 1.50\\
		2 & 500 & 2 & IMSE2 & 0 & 0.044 & 0.289 & 0.792 & 1.350 & 0.887 & 1.52\\
		2 & 500 & 2 & IMSE2 & 1 & 0.090 & 0.453 & 0.792 & 2.121 & 0.887 & 2.39\\
		\midrule 
		\rowcolor{Gray}
		2 & 1000 & 1 & US1 & -1 & -0.012 & 0.241 & 0.936 & 1.458 & 0.940 & 1.48\\
		\rowcolor{Gray}
		2 & 1000 & 1 & US1 & 0 & 0.017 & 0.232 & 0.936 & 1.470 & 0.940 & 1.49\\
		\rowcolor{Gray}
		2 & 1000 & 1 & US1 & 1 & -0.018 & 0.399 & 0.936 & 2.338 & 0.940 & 2.37\\
		2 & 1000 & 1 & IMSE1 & -1 & -0.037 & 0.190 & 0.739 & 0.997 & 0.800 & 1.06\\
		2 & 1000 & 1 & IMSE1 & 0 & 0.034 & 0.185 & 0.739 & 1.009 & 0.800 & 1.07\\
		2 & 1000 & 1 & IMSE1 & 1 & -0.021 & 0.309 & 0.739 & 1.602 & 0.800 & 1.70\\
		\rowcolor{Gray}
		2 & 1000 & 2 & IMSE1 & -1 & -0.042 & 0.237 & 0.922 & 1.294 & 0.953 & 1.40\\
		\rowcolor{Gray}
		2 & 1000 & 2 & IMSE1 & 0 & 0.013 & 0.231 & 0.922 & 1.303 & 0.953 & 1.41\\
		\rowcolor{Gray}
		2 & 1000 & 2 & IMSE1 & 1 & 0.023 & 0.375 & 0.922 & 2.079 & 0.953 & 2.24\\
		2 & 1000 & 2 & IMSE2 & -1 & -0.152 & 0.252 & 0.786 & 0.978 & 0.902 & 1.10\\
		2 & 1000 & 2 & IMSE2 & 0 & 0.032 & 0.200 & 0.786 & 0.989 & 0.902 & 1.11\\
		2 & 1000 & 2 & IMSE2 & 1 & 0.082 & 0.323 & 0.786 & 1.575 & 0.902 & 1.77\\
		\bottomrule
	\end{tabular}
	
	\begin{flushleft}
		Note: 
		The gray rows are the simulation results for the LLR and LQR estimation based on the undersmoothing methods.
		
		\bigskip  
		
		Abbreviations: 
		UCB = the uniform confidence band;
		RMSE = the root mean squared error;
		UCP = the (empirical) uniform coverage probability;
		US1 = the rule-of-thumb undersmoothing $\hat h_{\US}$ for the LLR estimation;
		IMSE1 = the IMSE-optimal bandwidth $\hat h_{\mathrm{LL}}$ for the LLR estimation;
		IMSE2 = the IMSE-optimal bandwidth $\hat h_{\mathrm{LQ}}$ for the LQR estimation.
	\end{flushleft}
\end{table}

\begin{table}[p]
	\caption{Additional Monte Carlo simulation results for CATT: the linear potential outcome equation} \label{table:MC5}
	\centering
	\begin{tabular}[t]{rrrrrrrrrrr}
		\toprule
		\multicolumn{7}{c}{ } & \multicolumn{2}{c}{Analytical UCB} & \multicolumn{2}{c}{Bootstrap UCB} \\
		\cmidrule(l{3pt}r{3pt}){8-9} \cmidrule(l{3pt}r{3pt}){10-11}
		$\mathcal{T}$ & $n$ & $p$ & Bandwidth & $z$ & Bias & RMSE & UCP & Length & UCP & Length\\
		\midrule
		\rowcolor{Gray}
		2 & 500 & 1 & US1 & -1 & -0.007 & 0.299 & 0.923 & 1.485 & 0.957 & 1.652\\
		\rowcolor{Gray}
		2 & 500 & 1 & US1 & 0 & 0.026 & 0.232 & 0.923 & 1.152 & 0.957 & 1.281\\
		\rowcolor{Gray}
		2 & 500 & 1 & US1 & 1 & 0.004 & 0.305 & 0.923 & 1.475 & 0.957 & 1.639\\
		2 & 500 & 1 & IMSE1 & -1 & 0.023 & 0.244 & 0.851 & 1.043 & 0.954 & 1.283\\
		2 & 500 & 1 & IMSE1 & 0 & 0.048 & 0.192 & 0.851 & 0.811 & 0.954 & 0.998\\
		2 & 500 & 1 & IMSE1 & 1 & 0.026 & 0.257 & 0.851 & 1.038 & 0.954 & 1.277\\
		\rowcolor{Gray}
		2 & 500 & 2 & IMSE1 & -1 & -0.025 & 0.298 & 0.883 & 1.360 & 0.959 & 1.601\\
		\rowcolor{Gray}
		2 & 500 & 2 & IMSE1 & 0 & 0.013 & 0.239 & 0.883 & 1.055 & 0.959 & 1.242\\
		\rowcolor{Gray}
		2 & 500 & 2 & IMSE1 & 1 & -0.008 & 0.295 & 0.883 & 1.348 & 0.959 & 1.586\\
		2 & 500 & 2 & IMSE2 & -1 & -0.024 & 0.285 & 0.865 & 1.295 & 0.947 & 1.540\\
		2 & 500 & 2 & IMSE2 & 0 & 0.009 & 0.230 & 0.865 & 1.005 & 0.947 & 1.195\\
		2 & 500 & 2 & IMSE2 & 1 & -0.004 & 0.284 & 0.865 & 1.284 & 0.947 & 1.526\\
		\midrule 
		\rowcolor{Gray}
		2 & 1000 & 1 & US1 & -1 & 0.015 & 0.220 & 0.930 & 1.117 & 0.960 & 1.229\\
		\rowcolor{Gray}
		2 & 1000 & 1 & US1 & 0 & 0.016 & 0.161 & 0.930 & 0.861 & 0.960 & 0.947\\
		\rowcolor{Gray}
		2 & 1000 & 1 & US1 & 1 & 0.008 & 0.215 & 0.930 & 1.113 & 0.960 & 1.223\\
		2 & 1000 & 1 & IMSE1 & -1 & 0.041 & 0.182 & 0.860 & 0.757 & 0.949 & 0.922\\
		2 & 1000 & 1 & IMSE1 & 0 & 0.038 & 0.135 & 0.860 & 0.585 & 0.949 & 0.712\\
		2 & 1000 & 1 & IMSE1 & 1 & 0.025 & 0.177 & 0.860 & 0.755 & 0.949 & 0.920\\
		\rowcolor{Gray}
		2 & 1000 & 2 & IMSE1 & -1 & -0.002 & 0.209 & 0.898 & 0.986 & 0.961 & 1.158\\
		\rowcolor{Gray}
		2 & 1000 & 2 & IMSE1 & 0 & 0.004 & 0.163 & 0.898 & 0.760 & 0.961 & 0.892\\
		\rowcolor{Gray}
		2 & 1000 & 2 & IMSE1 & 1 & -0.001 & 0.203 & 0.898 & 0.981 & 0.961 & 1.151\\
		2 & 1000 & 2 & IMSE2 & -1 & 0.000 & 0.199 & 0.879 & 0.900 & 0.953 & 1.079\\
		2 & 1000 & 2 & IMSE2 & 0 & 0.001 & 0.155 & 0.879 & 0.694 & 0.953 & 0.831\\
		2 & 1000 & 2 & IMSE2 & 1 & 0.000 & 0.192 & 0.879 & 0.896 & 0.953 & 1.073\\
		\bottomrule
	\end{tabular}
	
	\begin{flushleft}
		Note: 
		The gray rows are the simulation results for the LLR and LQR estimation based on the undersmoothing methods.
		
		\bigskip  
		
		Abbreviations: 
		UCB = the uniform confidence band;
		RMSE = the root mean squared error;
		UCP = the (empirical) uniform coverage probability;
		US1 = the rule-of-thumb undersmoothing $\hat h_{\US}$ for the LLR estimation;
		IMSE1 = the IMSE-optimal bandwidth $\hat h_{\mathrm{LL}}$ for the LLR estimation;
		IMSE2 = the IMSE-optimal bandwidth $\hat h_{\mathrm{LQ}}$ for the LQR estimation.
	\end{flushleft}
\end{table}

\begin{table}[p]
	\caption{Monte Carlo simulation results for $\theta_{\es}(0, z)$: the nonlinear potential outcome equation} \label{table:MC6}
	\centering
	\begin{tabular}[t]{rrrrrrrrrrr}
		\toprule
		\multicolumn{7}{c}{ } & \multicolumn{2}{c}{Analytical UCB} & \multicolumn{2}{c}{Bootstrap UCB} \\
		\cmidrule(l{3pt}r{3pt}){8-9} \cmidrule(l{3pt}r{3pt}){10-11}
		$\mathcal{T}$ & $n$ & $p$ & Bandwidth & $z$ & Bias & RMSE & UCP & Length & UCP & Length\\
		\midrule
		\rowcolor{Gray}
		4 & 500 & 1 & US1 & -1 & -0.008 & 0.294 & 0.954 & 1.731 & 0.968 & 1.791\\
		\rowcolor{Gray}
		4 & 500 & 1 & US1 & 0 & -0.004 & 0.229 & 0.954 & 1.367 & 0.968 & 1.413\\
		\rowcolor{Gray}
		4 & 500 & 1 & US1 & 1 & -0.001 & 0.296 & 0.954 & 1.766 & 0.968 & 1.825\\
		4 & 500 & 1 & IMSE1 & -1 & -0.014 & 0.228 & 0.766 & 1.208 & 0.786 & 1.241\\
		4 & 500 & 1 & IMSE1 & 0 & 0.006 & 0.174 & 0.766 & 0.961 & 0.786 & 0.987\\
		4 & 500 & 1 & IMSE1 & 1 & -0.003 & 0.234 & 0.766 & 1.228 & 0.786 & 1.261\\
		\rowcolor{Gray}
		4 & 500 & 2 & IMSE1 & -1 & -0.033 & 0.292 & 0.940 & 1.606 & 0.970 & 1.733\\
		\rowcolor{Gray}
		4 & 500 & 2 & IMSE1 & 0 & -0.005 & 0.230 & 0.940 & 1.272 & 0.970 & 1.372\\
		\rowcolor{Gray}
		4 & 500 & 2 & IMSE1 & 1 & 0.028 & 0.293 & 0.940 & 1.635 & 0.970 & 1.763\\
		4 & 500 & 2 & IMSE2 & -1 & -0.118 & 0.263 & 0.856 & 1.177 & 0.914 & 1.274\\
		4 & 500 & 2 & IMSE2 & 0 & 0.004 & 0.186 & 0.856 & 0.940 & 0.914 & 1.016\\
		4 & 500 & 2 & IMSE2 & 1 & 0.101 & 0.260 & 0.856 & 1.198 & 0.914 & 1.295\\
		\midrule
		\rowcolor{Gray}
		4 & 1000 & 1 & US1 & -1 & -0.009 & 0.205 & 0.970 & 1.391 & 0.968 & 1.414\\
		\rowcolor{Gray}
		4 & 1000 & 1 & US1 & 0 & 0.007 & 0.168 & 0.970 & 1.095 & 0.968 & 1.113\\
		\rowcolor{Gray}
		4 & 1000 & 1 & US1 & 1 & 0.000 & 0.237 & 0.970 & 1.433 & 0.968 & 1.456\\
		4 & 1000 & 1 & IMSE1 & -1 & -0.014 & 0.157 & 0.776 & 0.945 & 0.792 & 0.961\\
		4 & 1000 & 1 & IMSE1 & 0 & 0.012 & 0.132 & 0.776 & 0.749 & 0.792 & 0.762\\
		4 & 1000 & 1 & IMSE1 & 1 & -0.003 & 0.180 & 0.776 & 0.971 & 0.792 & 0.987\\
		\rowcolor{Gray}
		4 & 1000 & 2 & IMSE1 & -1 & -0.025 & 0.197 & 0.960 & 1.246 & 0.976 & 1.321\\
		\rowcolor{Gray}
		4 & 1000 & 2 & IMSE1 & 0 & 0.005 & 0.165 & 0.960 & 0.983 & 0.976 & 1.043\\
		\rowcolor{Gray}
		4 & 1000 & 2 & IMSE1 & 1 & 0.019 & 0.227 & 0.960 & 1.282 & 0.976 & 1.360\\
		4 & 1000 & 2 & IMSE2 & -1 & -0.103 & 0.187 & 0.848 & 0.866 & 0.884 & 0.927\\
		4 & 1000 & 2 & IMSE2 & 0 & 0.011 & 0.135 & 0.848 & 0.688 & 0.884 & 0.736\\
		4 & 1000 & 2 & IMSE2 & 1 & 0.082 & 0.194 & 0.848 & 0.890 & 0.884 & 0.952\\
		\bottomrule
	\end{tabular}
	
	\begin{flushleft}
		Note: 
		The gray rows are the simulation results for the LLR and LQR estimation based on the undersmoothing methods.
		
		\bigskip  
		
		Abbreviations: 
		UCB = the uniform confidence band;
		RMSE = the root mean squared error;
		UCP = the (empirical) uniform coverage probability;
		US1 = the rule-of-thumb undersmoothing $\hat h_{\US}$ for the LLR estimation;
		IMSE1 = the IMSE-optimal bandwidth $\hat h_{\mathrm{LL}}$ for the LLR estimation;
		IMSE2 = the IMSE-optimal bandwidth $\hat h_{\mathrm{LQ}}$ for the LQR estimation.
	\end{flushleft}
\end{table}

\begin{table}[p]
	\caption{Monte Carlo simulation results for $\theta_{\es}(0, z)$: the linear potential outcome equation} \label{table:MC7}
	\centering
	\begin{tabular}[t]{rrrrrrrrrrr}
		\toprule
		\multicolumn{7}{c}{ } & \multicolumn{2}{c}{Analytical UCB} & \multicolumn{2}{c}{Bootstrap UCB} \\
		\cmidrule(l{3pt}r{3pt}){8-9} \cmidrule(l{3pt}r{3pt}){10-11}
		$\mathcal{T}$ & $n$ & $p$ & Bandwidth & $z$ & Bias & RMSE & UCP & Length & UCP & Length\\
		\midrule
		\rowcolor{Gray}
		4 & 500 & 1 & US1 & -1 & 0.010 & 0.213 & 0.928 & 1.020 & 0.946 & 1.055\\
		\rowcolor{Gray}
		4 & 500 & 1 & US1 & 0 & 0.003 & 0.163 & 0.928 & 0.817 & 0.946 & 0.844\\
		\rowcolor{Gray}
		4 & 500 & 1 & US1 & 1 & 0.008 & 0.222 & 0.928 & 1.053 & 0.946 & 1.088\\
		4 & 500 & 1 & IMSE1 & -1 & 0.022 & 0.175 & 0.858 & 0.712 & 0.878 & 0.743\\
		4 & 500 & 1 & IMSE1 & 0 & 0.014 & 0.131 & 0.858 & 0.571 & 0.878 & 0.595\\
		4 & 500 & 1 & IMSE1 & 1 & 0.016 & 0.184 & 0.858 & 0.734 & 0.878 & 0.765\\
		\rowcolor{Gray}
		4 & 500 & 2 & IMSE1 & -1 & 0.005 & 0.210 & 0.894 & 0.931 & 0.938 & 1.018\\
		\rowcolor{Gray}
		4 & 500 & 2 & IMSE1 & 0 & -0.004 & 0.169 & 0.894 & 0.748 & 0.938 & 0.818\\
		\rowcolor{Gray}
		4 & 500 & 2 & IMSE1 & 1 & 0.003 & 0.213 & 0.894 & 0.961 & 0.938 & 1.049\\
		4 & 500 & 2 & IMSE2 & -1 & 0.006 & 0.205 & 0.888 & 0.918 & 0.932 & 1.005\\
		4 & 500 & 2 & IMSE2 & 0 & -0.005 & 0.165 & 0.888 & 0.738 & 0.932 & 0.808\\
		4 & 500 & 2 & IMSE2 & 1 & 0.004 & 0.208 & 0.888 & 0.947 & 0.932 & 1.036\\
		\midrule 
		\rowcolor{Gray}
		4 & 1000 & 1 & US1 & -1 & 0.003 & 0.141 & 0.944 & 0.746 & 0.948 & 0.763\\
		\rowcolor{Gray}
		4 & 1000 & 1 & US1 & 0 & 0.008 & 0.118 & 0.944 & 0.594 & 0.948 & 0.608\\
		\rowcolor{Gray}
		4 & 1000 & 1 & US1 & 1 & 0.009 & 0.164 & 0.944 & 0.777 & 0.948 & 0.794\\
		4 & 1000 & 1 & IMSE1 & -1 & 0.019 & 0.121 & 0.842 & 0.503 & 0.856 & 0.520\\
		4 & 1000 & 1 & IMSE1 & 0 & 0.018 & 0.098 & 0.842 & 0.402 & 0.856 & 0.415\\
		4 & 1000 & 1 & IMSE1 & 1 & 0.014 & 0.131 & 0.842 & 0.523 & 0.856 & 0.541\\
		\rowcolor{Gray}
		4 & 1000 & 2 & IMSE1 & -1 & -0.002 & 0.138 & 0.902 & 0.656 & 0.942 & 0.712\\
		\rowcolor{Gray}
		4 & 1000 & 2 & IMSE1 & 0 & 0.003 & 0.120 & 0.902 & 0.524 & 0.942 & 0.569\\
		\rowcolor{Gray}
		4 & 1000 & 2 & IMSE1 & 1 & 0.004 & 0.153 & 0.902 & 0.682 & 0.942 & 0.741\\
		4 & 1000 & 2 & IMSE2 & -1 & -0.001 & 0.136 & 0.882 & 0.634 & 0.932 & 0.690\\
		4 & 1000 & 2 & IMSE2 & 0 & 0.002 & 0.118 & 0.882 & 0.506 & 0.932 & 0.551\\
		4 & 1000 & 2 & IMSE2 & 1 & 0.003 & 0.146 & 0.882 & 0.658 & 0.932 & 0.716\\
		\bottomrule
	\end{tabular}
	
	\begin{flushleft}
		Note: 
		The gray rows are the simulation results for the LLR and LQR estimation based on the undersmoothing methods.
		
		\bigskip  
		
		Abbreviations: 
		UCB = the uniform confidence band;
		RMSE = the root mean squared error;
		UCP = the (empirical) uniform coverage probability;
		US1 = the rule-of-thumb undersmoothing $\hat h_{\US}$ for the LLR estimation;
		IMSE1 = the IMSE-optimal bandwidth $\hat h_{\mathrm{LL}}$ for the LLR estimation;
		IMSE2 = the IMSE-optimal bandwidth $\hat h_{\mathrm{LQ}}$ for the LQR estimation.
	\end{flushleft}
\end{table}



\begin{figure}[t]
	\centering
	\begin{minipage}[b]{\hsize}
		\centering
		\includegraphics[width = 0.8\linewidth, bb = 0 0 1080 540]{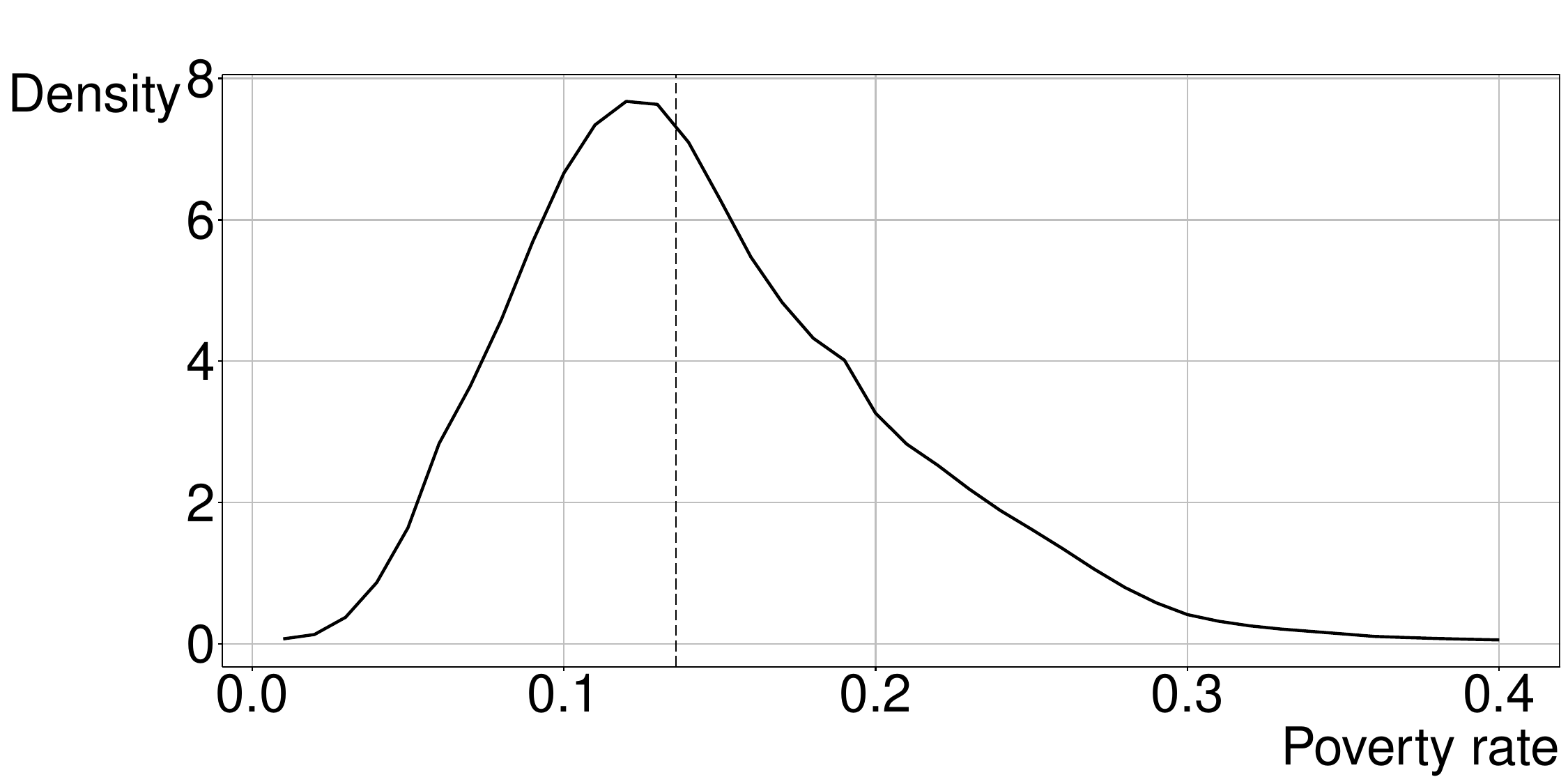}
	\end{minipage}
	\caption{The kernel density of the poverty rate.} \label{fig:density}
\end{figure}

\begin{table}[t]
	\caption{The summary statistics for the pre-treatment variables} \label{table:descriptive}
	(a) The treated groups
	\centering
	\begin{tabular}[t]{lrrrrrr}
		\toprule
		\multicolumn{1}{c}{ } & \multicolumn{2}{c}{Group 2004} & \multicolumn{2}{c}{Group 2006} & \multicolumn{2}{c}{Group 2007} \\
		\cmidrule(l{3pt}r{3pt}){2-3} \cmidrule(l{3pt}r{3pt}){4-5} \cmidrule(l{3pt}r{3pt}){6-7}
		& \multicolumn{1}{c}{Mean} & \multicolumn{1}{c}{SD} & \multicolumn{1}{c}{Mean} & \multicolumn{1}{c}{SD} & \multicolumn{1}{c}{Mean} & \multicolumn{1}{c}{SD} \\
		\midrule
		Midwest & 1 & 0 & 0.704 & 0.458 & 0.483 & 0.500\\
		South & 0 & 0 & 0.296 & 0.458 & 0.301 & 0.459\\
		West & 0 & 0 & 0 & 0 & 0.216 & 0.412\\
		Poverty rate & 0.117 & 0.043 & 0.119 & 0.051 & 0.138 & 0.051\\
		Population (10000s) & 124.087 & 546.636 & 107.631 & 212.113 & 84.142 & 181.244\\
		HS graduates & 0.606 & 0.042 & 0.600 & 0.045 & 0.577 & 0.062\\
		White & 0.924 & 0.083 & 0.901 & 0.110 & 0.885 & 0.132\\
		Median income (1000s) & 36.443 & 8.028 & 34.958 & 7.570 & 33.080 & 8.164\\
		\midrule 
		Observations & \multicolumn{2}{c}{100} & \multicolumn{2}{c}{223} & \multicolumn{2}{c}{584} \\ 
		\bottomrule
	\end{tabular}
	
	\bigskip \bigskip 
	(b) The not-yet-treated (NYT) groups
	\begin{tabular}[t]{lrrrrrr}
		\toprule
		\multicolumn{1}{c}{ } & \multicolumn{2}{c}{NYT 2004} & \multicolumn{2}{c}{NYT 2006} & \multicolumn{2}{c}{NYT 2007} \\
		\cmidrule(l{3pt}r{3pt}){2-3} \cmidrule(l{3pt}r{3pt}){4-5} \cmidrule(l{3pt}r{3pt}){6-7}
		& \multicolumn{1}{c}{Mean} & \multicolumn{1}{c}{SD} & \multicolumn{1}{c}{Mean} & \multicolumn{1}{c}{SD} & \multicolumn{1}{c}{Mean} & \multicolumn{1}{c}{SD} \\
		\midrule
		Midwest & 0.413 & 0.492 & 0.379 & 0.485 & 0.336 & 0.472\\
		South & 0.484 & 0.500 & 0.506 & 0.500 & 0.593 & 0.492\\
		West & 0.103 & 0.304 & 0.115 & 0.319 & 0.072 & 0.258\\
		Poverty rate & 0.148 & 0.062 & 0.152 & 0.062 & 0.157 & 0.065\\
		Population (10000s) & 67.174 & 168.776 & 62.573 & 162.551 & 53.425 & 153.095\\
		HS graduates & 0.564 & 0.074 & 0.560 & 0.075 & 0.553 & 0.079\\
		White & 0.850 & 0.153 & 0.844 & 0.156 & 0.826 & 0.162\\
		Median income (1000s) & 32.521 & 7.679 & 32.244 & 7.644 & 31.889 & 7.387\\
		\midrule 
		Observations & \multicolumn{2}{c}{2184} & \multicolumn{2}{c}{1961} & \multicolumn{2}{c}{1377} \\
		\bottomrule
	\end{tabular}
\end{table}

\begin{figure}[ht]
	\centering
	\begin{minipage}[t]{\textwidth}
		\centering
		\includegraphics[width=\linewidth, bb=0 0 2160 1080]{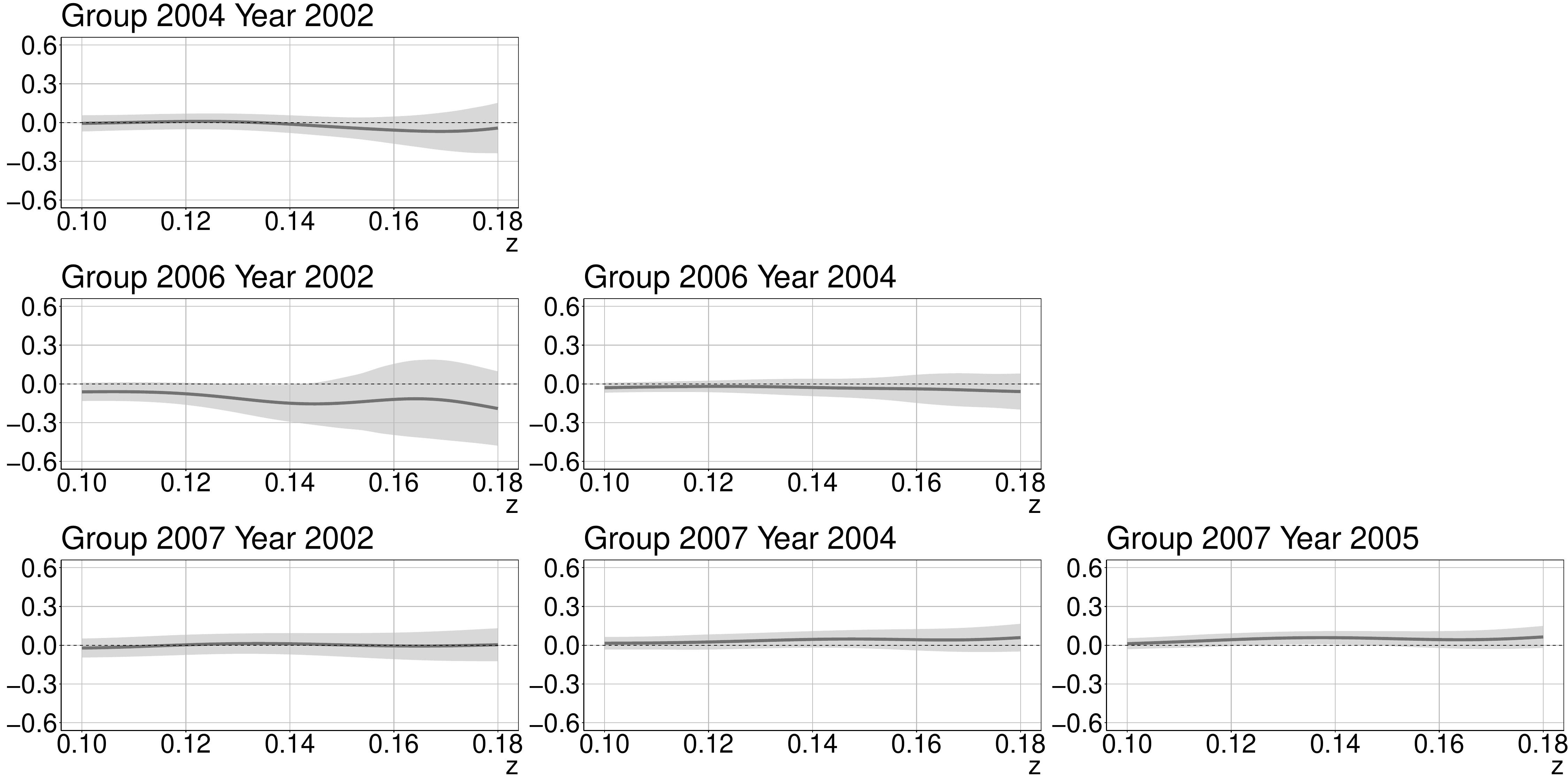}
		\subcaption{CATT}
	\end{minipage}
	
	\vspace{-5em} 
	
	\begin{minipage}[t]{\textwidth}
		\centering
		\includegraphics[width=\linewidth, bb=0 0 2160 1080]{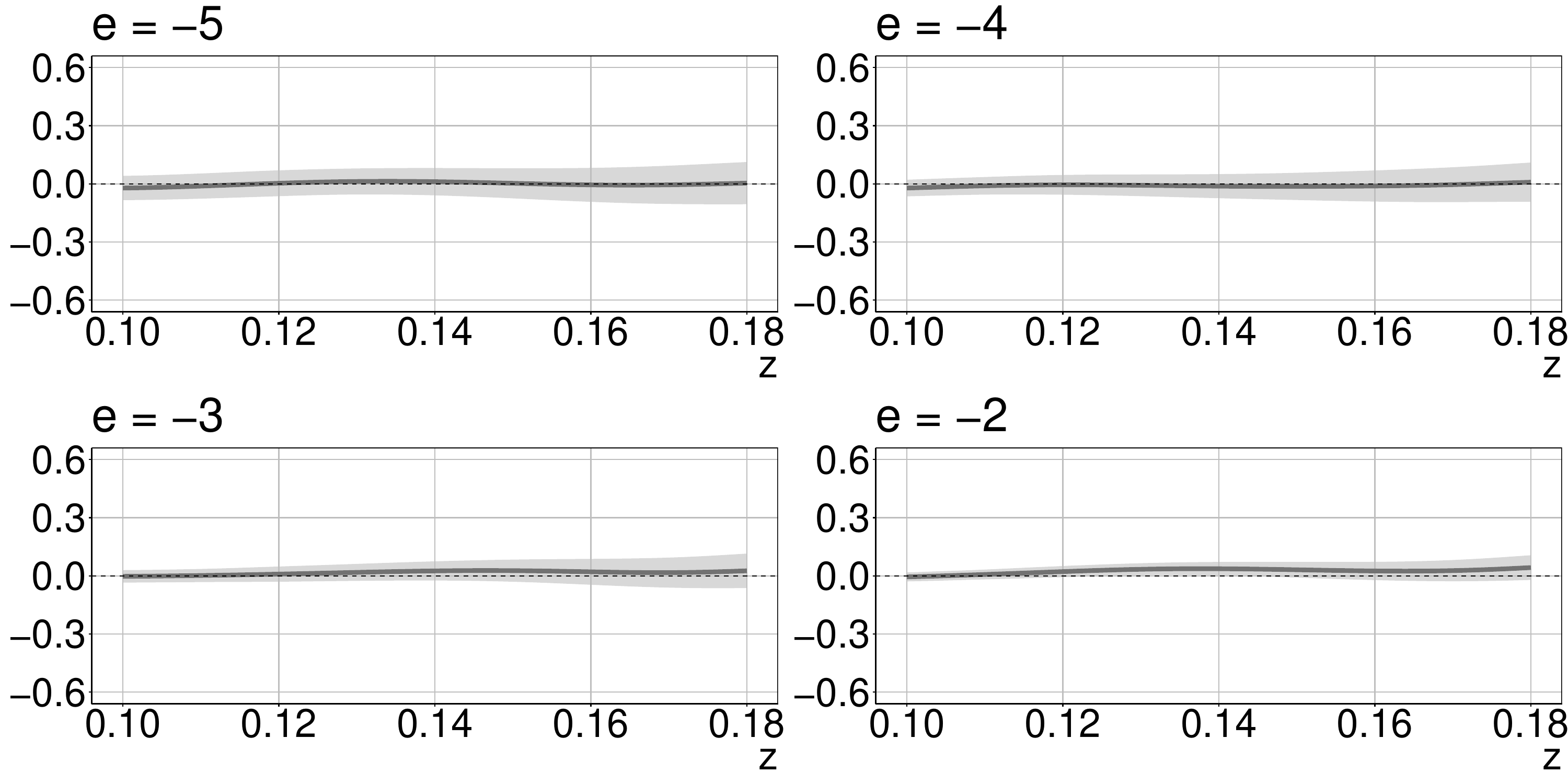}
		\subcaption{The event-study-type conditional average treatment effect}
	\end{minipage}
	
	\caption{The LQR estimates and 95\% uniform confidence bands constructed with weighted/multiplier bootstrapping in the pre-treatment periods.}
	\label{fig:boot-pre}
\end{figure}

\end{document}